\newif\ifdraft \draftfalse 
\newif\ifauthornames \authornamesfalse
\newif \ifnips \nipstrue 
\definecolor[named]{MyGray}{cmyk}{0,0.1,0.1,0.1}
\definecolor{DarkGreen}{rgb}{0.1,0.5,0.1}
\definecolor{DarkRed}{rgb}{0.5,0.1,0.1}
\definecolor{DarkBlue}{rgb}{0.1,0.1,0.5}
\newcommand{\rynote}[1]{\ifdraft \Ryancomment{#1}\else\ignorespaces\fi}
\newcommand{\jenn}[1]{\ifdraft\JWVcomment{#1}\else\ignorespaces\fi}
\newcommand{\mdcomment}[1]{\ifdraft\MDcomment{#1}\else\ignorespaces\fi}
\newcommand{\mdedit}[1]{\ifdraft\MDedit{#1}\else{#1}\ignorespaces\fi}
\newcommand{\ignore}[1]{}
\newcommand\R{\mathbb{R}}
\newcommand{\cA}{\mathcal{A}}
\newcommand{\cB}{\mathcal{B}}
\newcommand{\cD}{\mathcal{D}}
\newcommand{\cG}{\mathcal{G}}
\newcommand{\cI}{\mathcal{I}}
\newcommand{\cL}{\mathcal{L}}
\newcommand{\cM}{\mathcal{M}}
\renewcommand{\vec}[1]{\boldsymbol{#1}}
\newcommand{\nbuyers}{N}
\newcommand{\nsec}{K}
\newcommand{\outcome}{\omega}
\newcommand{\Outcome}{\Omega}
\newcommand{\npayoff}{\phi}
\newcommand{\payoff}{\vec{\phi}}
\newcommand{\Payoff}{\Phi}
\newcommand{\bundle}{r}
\newcommand{\bbundle}{\vec{\bundle}}
\newcommand{\barbundle}{{\bar{\bundle}}}
\newcommand{\barbbundle}{\vec{\bar{\bundle}}}
\newcommand{\barcash}{\bar{c}}
\newcommand{\barbcash}{\bar{\vec{c}}}
\newcommand{\state}{s}
\newcommand{\bstate}{\vec{\state}}
\newcommand{\liq}{b}
\newcommand{\lmsr}{T}
\newcommand{\bprob}{\vec{p}}
\newcommand{\price}{\mu}
\newcommand{\bprice}{\vec{\price}}
\newcommand{\belief}[1]{\tilde\price_{#1}}
\newcommand{\bbelief}[1]{\tilde\bprice_{#1}}
\newcommand{\bprobbelief}[1]{\tilde\bprob_{#1}}
\newcommand{\TRUE}{\textup{true}}
\newcommand{\bpricetruth}{\bprice^\TRUE}
\newcommand{\bprobtruth}{\bprob^\TRUE}
\newcommand{\aggbprice}{\bar\bprice}
\newcommand{\aggprice}[1]{\bar\price_{#1}}
\newcommand{\iterbprice}[1]{\bprice^{#1}}
\newcommand{\iterbbundle}[1]{\bbundle^{#1}}
\newcommand{\eqbprice}{\bprice^\star}
\newcommand{\eqprice}[1]{\price^\star_{#1}}
\newcommand{\eqbundle}{\bundle^\star}
\newcommand{\eqbbundle}{\vec{\bundle}^\star}
\newcommand{\eqcash}{c^\star}
\newcommand{\eqbcash}{\vec{c}^\star}
\newcommand{\btheta}{\vec{\theta}}
\newcommand{\bttheta}{\vec{\tilde\theta}}
\newcommand{\ttheta}{\tilde\theta}
\newcommand{\bdelta}{\vec{\delta}}
\newcommand{\obj}{F}
\newcommand{\Fee}{\bar{F}}
\newcommand{\belem}{\vec{e}}
\newcommand{\optbbundle}{\bbundle^\star}
\newcommand{\dualobj}{\obj^*}
\newcommand{\Span}{\mathrm{span}}
\newcommand{\Neff}{N_{\mathrm{eff}}}
\newcommand{\realtheta}{\btheta^\TRUE}
\newcommand{\ASD}{\texttt{\upshape{ASD}}\xspace}
\newcommand{\SSD}{\texttt{\upshape{SSD}}\xspace}
\newcommand{\BCD}{\ASD}
\newcommand{\SCD}{\SSD}
\newcommand{\LMSR}{\texttt{\upshape{LMSR}}\xspace}
\newcommand{\IND}{\texttt{\upshape{IND}}\xspace}
\DeclareMathOperator{\dom}{dom}
\DeclareMathOperator{\ri}{ri}
\DeclareMathOperator{\cl}{cl}
\DeclareMathOperator{\interior}{int}
\DeclareMathOperator{\aff}{aff}
\DeclareMathOperator{\conv}{conv}
\DeclareMathOperator{\range}{range}
\DeclareMathOperator{\Null}{null}
\DeclareMathOperator{\diag}{diag}
\newcommand{\wo}{\backslash}
\newcommand{\within}{\simeq}
\newcommand{\Rinf}{\R\cup\set{\infty}}
\newcommand{\trans}{^\intercal}
\newcommand{\phtrans}{^{\vphantom{\intercal}}}
\newcommand{\sigmalow}{\sigma_{\textup{low}}}
\newcommand{\sigmahigh}{\sigma_{\textup{high}}}
\renewcommand{\kappa}{\gamma}
\newcommand{\kappalow}{\kappa_{\textup{low}}}
\newcommand{\kappahigh}{\kappa_{\textup{high}}}
\DeclareMathOperator*{\myargmin}{\arg\!\min}
\DeclareMathOperator*{\argmin}{\arg\!\min}
\DeclareMathOperator*{\argmax}{\arg\!\max}
\newcommand{\bg}{\vec{g}}
\newcommand{\bh}{\vec{h}}
\newcommand{\bs}{\vec{s}}
\newcommand{\bq}{\vec{q}}
\newcommand{\one}{\mathbf{1}}
\newcommand{\zero}{\mathbf{0}}
\newcommand{\vdelta}{\vec{\delta}}
\newcommand{\bu}{\vec{u}}
\newcommand{\bv}{\vec{v}}
\newcommand{\ba}{\vec{a}}
\newcommand{\by}{\vec{y}}
\newcommand{\bz}{\vec{z}}
\newcommand{\eps}{\varepsilon}
\newcommand{\veps}{\vec{\eps}}
\newcommand{\bpriceavg}{\hat{\bprice}}
\newcommand{\cind}{\mathbb{I}}
\newcommand{\Sin}{S_{\textup{in}}}
\newcommand{\Sout}{S_{\textup{out}}}
\newcommand{\lambdain}{\lambda_{\textup{in}}}
\newcommand{\lambdaout}{\lambda_{\textup{out}}}
\newcommand\defn{\mathrm{defn}}
\newcommand{\ex}[1]{\mathbb{E}\left[#1\right]}
\newcommand{\hatex}[1]{\mathbb{\hat{E}}\left[#1\right]}
\newcommand{\var}[1]{\mathbb{V}\left(#1\right)}
\newcommand{\hatexNoBrack}{\mathbb{\hat{E}}}
\newcommand{\exNoBrack}{\mathbb{E}}
\newcommand{\varNoBrack}{\mathbb{V}}
\newcommand{\Ex}[2]{\mathbb{E}_{#1}\left[#2\right]}
\newcommand{\ExNoBrack}[1]{\mathbb{E}_{#1}}
\newcommand{\prob}[1]{\mathbb{P}\left\{#1\right\}}
\newcommand{\card}[1]{\lvert#1\rvert}
\newcommand{\set}[1]{\{#1\}}
\newcommand{\bigSet}[1]{\bigl\{#1\bigr\}}
\newcommand{\BigSet}[1]{\Bigl\{#1\Bigr\}}
\newcommand{\braces}[1]{\{#1\}}
\newcommand{\Braces}[1]{\left\{#1\right\}}
\newcommand{\bracks}[1]{[#1]}
\newcommand{\bigBracks}[1]{\bigl[#1\bigr]}
\newcommand{\BigBracks}[1]{\Bigl[#1\Bigr]}
\newcommand{\Bracks}[1]{\left[#1\right]}
\newcommand{\parens}[1]{(#1)}
\newcommand{\Parens}[1]{\left(#1\right)}
\newcommand{\bigParens}[1]{\bigl(#1\bigr)}
\newcommand{\BigParens}[1]{\Bigl(#1\Bigr)}
\newcommand{\given}{\mathbin{\vert}}
\newcommand{\bigGiven}{\mathbin{\bigm\vert}}
\newcommand{\BigGiven}{\mathbin{\Bigm\vert}}
\newcommand{\norm}[1]{\lVert#1\rVert}
\newcommand{\Norm}[1]{\left\lVert#1\right\rVert}
\newcommand{\bigNorm}[1]{\bigl\lVert#1\bigr\rVert}
\newcommand{\BigNorm}[1]{\Bigl\lVert#1\Bigr\rVert}
\newcommand{\KL}[2]{\textup{KL}\left(#1 \Vert #2\right)}
\newcommand{\bigKL}[2]{\textup{KL}\bigParens{#1 \bigm\Vert #2}}
\newcommand{\etaKL}{\eta_\textup{KL}}
\newtheorem{theorem}{Theorem}[section]
\newtheorem{lemma}[theorem]{Lemma}
\newtheorem{conjecture}[theorem]{Conjecture}
\newtheorem{proposition}[theorem]{Proposition}
\theoremstyle{definition}
\newtheorem{definition}[theorem]{Definition}
\newtheorem{example}[theorem]{Example}
\theoremstyle{remark}
\newcommand{\Thm}[1]{Theorem~\ref{thm:#1}}
\newcommand{\Def}[1]{Definition~\ref{def:#1}}
\newcommand{\Lem}[1]{Lemma~\ref{lem:#1}}
\newcommand{\Prop}[1]{Proposition~\ref{prop:#1}}
\newcommand{\Sec}[1]{Section~\ref{sec:#1}}
\newcommand{\App}[1]{Appendix~\ref{app:#1}}
\newcommand{\Eq}[1]{Eq.~\eqref{eq:#1}}
\newcommand{\Eqs}[2]{Eqs.~\eqref{eq:#1} and~\eqref{eq:#2}}
\newcommand{\Fig}[1]{Fig.~\ref{fig:#1}}
\newcommand{\convplus}{convex$^+$\xspace}
\newcommand{\convexityplus}{convexity$^+$\xspace}
\newcommand{\Convplus}{Convex$^+$\xspace}
\newcommand{\squishlist}{
   \begin{list}{$\bullet$}
    { \setlength{\itemsep}{0pt}      \setlength{\parsep}{4pt}
      \setlength{\topsep}{4pt}       \setlength{\partopsep}{0pt}
     \setlength{\leftmargin}{2em} \setlength{\labelwidth}{1.5em}
      \setlength{\labelsep}{0.5em} } }
\newcommand{\squishend}{  \end{list}  }
\newcounter{qcounter}
\newenvironment{qenum}
 {\begin{list}{\arabic{qcounter}.}
 {\usecounter{qcounter} \setlength{\topsep}{0in} \setlength{\partopsep}{0in}
  \setlength{\parsep}{0in} \setlength{\itemsep}{\parskip}
  \setlength{\leftmargin}{0.21in} \setlength{\rightmargin}{0in}
  \setlength{\listparindent}{0.0in} \setlength{\labelwidth}{0.07in}
  \setlength{\labelsep}{0.1in} \setlength{\itemindent}{-0.04in}}}
 {\end{list}}
\title{A Decomposition of Forecast Error in\\ Prediction Markets}
\author{Miroslav Dud\'ik \\
Microsoft Research, New York, NY \\
\texttt{mdudik@microsoft.com}
\And
S\'{e}bastien Lahaie \\
 Google, New York, NY \\
\texttt{slahaie@google.com}
\And
Ryan Rogers \\
University of Pennsylvania, Philadelphia, PA \\
\texttt{rrogers386@gmail.com}
\And
Jennifer Wortman Vaughan \\
Microsoft Research, New York, NY \\
\texttt{jenn@microsoft.com}}
\author[1]{Anonymous Author(s)}
\affil[1]{Unknown Institution}
\begin{document}

\maketitle


\begin{abstract}
We analyze sources of error in prediction market forecasts in order to
bound the difference between a security's price and the ground truth
it estimates. We consider cost-function-based prediction markets in which an
automated market maker adjusts security prices according to the
history of trade. We decompose the forecasting error into three
components: \emph{sampling error}, arising because traders only
possess noisy estimates of ground truth;
\emph{market-maker bias}, resulting from the use of a particular
market maker (i.e., cost function) to facilitate trade; and
\emph{convergence error}, arising because, at any point in time,
market prices may still be in flux. Our goal is to make explicit the
tradeoffs between these error components, influenced by design decisions
such as the functional form of the cost function and the amount of
liquidity in the market. We consider a specific model in which
traders have exponential utility and exponential-family beliefs
representing noisy estimates of ground truth. In this setting,
sampling error vanishes as the number of traders grows, but there is a
tradeoff between the other two components. We provide both upper and
lower bounds on market-maker bias and convergence error, and
demonstrate via numerical simulations that these bounds are tight. Our
results yield new insights into the question of how to set the
market's liquidity parameter and into the forecasting benefits of enforcing
coherent prices across securities.
%
%
\ignore{
We analyze the sources of error in prediction market forecasts. We
focus on cost-function-based prediction markets, in which an automated
market maker stands ready to buy or sell securities at current market
prices that depend on the trading history. Our goal is to understand
and quantify the market's error---that is, the difference between each
security's price and its ground truth value---with an emphasis on the
ways in which this error is influenced by design decisions such as the
functional form of the market maker and the amount of liquidity in the
market. We first decompose the market's error into several components.
\emph{Sampling error} arises because traders do not have perfect
information, but only noisy observations of ground truth.
\emph{Risk-aversion bias} is due to the fact that traders do not
report their beliefs perfectly, but reveal them indirectly through
self-interested trades. \emph{Market-maker bias} stems from the use of
a particular automated market maker to facilitate this trade. Finally,
at any particular point in time, market prices may still be in flux,
leading to an additional \emph{convergence error}. We analyze these
sources of error in the model of Abernethy et al. [2014] in which
traders have exponential-family beliefs and exponential utility. We
first briefly analyze the sampling error and risk-aversion bias under
the assumption of independent trader signals and show that these
errors go to zero as the number of traders grows, though the rate
depends on the variance in the traders' risk attitudes. We next show
that decreasing the market maker's liquidity results in smaller
market-maker bias but slower convergence. In order to provide
meaningful comparisons between different market makers, we provide
both upper and lower bounds on market-maker bias and convergence
error, and demonstrate via numerical simulations that these bounds are
tight. Taken together, our results yield new insight into the
long-studied question of how to optimally set the market maker's
liquidity. Additionally, they shed light on the question of whether
market makers that automatically maintain coherent security prices
produce better predictions than those that price securities
independently or otherwise allow security prices to become incoherent.
} 
%
\ignore{
 We introduce and analyze an error decomposition for prediction market
forecasts in order to characterize the difference between security
prices and ground truth values. We consider cost-function markets in
which an automated market maker adjusts security prices according to a
fixed schedule as trades occur. Our decomposition consists of three
components: \emph{sampling error} arises because traders only possess
noisy estimates of ground truth; \emph{market maker bias} arises from
the use of a particular market maker (i.e., cost function) to
facilitate trade; \emph{convergence error} arises because, at any
point in time, market prices may still be in flux. Our goal is to
understand the tradeoffs between these error components that are
inherent in design decisions such as the functional form of the cost
function and the amount of liquidity in the market. We specifically
consider a model in which traders have exponential utility, as well as
beliefs drawn from an exponential family. In this setting, sampling
error vanishes as the number of traders grows, but there are tradeoffs
between the other components: decreasing the market maker's liquidity
results in smaller market maker bias but slower convergence. We
provide both upper and lower bounds on market maker bias and
convergence error, and demonstrate via numerical simulations that
these bounds are tight. Our results yield new insights into the
question of how to set the market's liquidity parameter, and into the
extent to which markets that enforce coherent prices produce better
predictions than independent markets.
} 

\end{abstract}

\ifnips
\else
\clearpage
\tableofcontents
\clearpage
\fi

\section{Introduction}
\label{sec:intro}

A prediction market is a marketplace in which participants can trade securities with payoffs that depend on the outcomes of future events~\cite{wolfers2004prediction}. Consider the simple setting in which we are interested in predicting the outcome of a political election: whether the incumbent or challenger will win. A prediction market might issue a security that pays out \$1 per share if the incumbent wins, and \$0 otherwise.
The market price $p$ of this security should always lie between 0 and 1, and can be construed as an event probability. If a trader believes that the likelihood of the incumbent winning is greater than $p$, she will buy shares with the expectation of making a profit.
Market prices increase when there is more interest in buying and decrease when there is more interest in selling. By this process, the market aggregates traders' information into a consensus forecast, represented by the market price.
With sufficient activity, prediction markets are competitive with alternative forecasting methods such as polls~\cite{berg2008results}, but while there is a mature literature on sources of error and bias in polls, the impact of prediction market structure on forecast accuracy is still an active area of research~\cite{rothschild2009forecasting}.

We consider prediction markets in which all trades occur through a centralized entity known as a \emph{market maker}. Under this market structure, security prices are dictated by a fixed \emph{cost function} and the current number of outstanding shares~\cite{CP07}. The basic conditions that a cost function should satisfy
to correctly elicit beliefs,
while bounding the market maker's loss,
are now well-understood, chief among them being convexity~\cite{ACV13}. Nonetheless, the class of allowable cost functions remains broad, and the literature so far provides little formal guidance on the specific form of cost function to use in order to achieve good forecast accuracy, including how to set the \emph{liquidity parameter} which controls price responsiveness to trade. In practice, the impact of the liquidity parameter is difficult to quantify a priori, so implementations typically resort to calibrations based on market simulations~\citep{dudik2013combinatorial,slamka2013prediction}.
Prior work also suggests that maintaining coherence among prices of logically related securities has informational advantages~\citep{dudik2013combinatorial}, but there has been little work aimed at understanding why.
%

This paper provides a framework to quantify the impact of the choice of cost function on forecast accuracy. We introduce a decomposition of forecast error, in analogy with the bias-variance decomposition familiar from statistics or the approximation-estimation-optimization decomposition for large-scale machine learning~\citep{bousquet2008tradeoffs}. Our decomposition consists of three components. First, there is the \emph{sampling error} resulting from the fact that the market consists of a finite population of traders, each holding a noisy estimate of ground truth.
Second, there is a \emph{market-maker bias} which stems from the use of a cost function to provide liquidity and induce trade.
Third, there is \emph{convergence error} due to the fact that the market prices may not have fully converged to their equilibrium point.

The central contribution of this paper is a theoretical characterization of the market-maker bias and convergence error, the two components of this decomposition that depend on market structure as defined by the form of
the cost function and level of liquidity.
%
We consider a tractable model of agent behavior, originally studied by \citet{AKLS14}, in which traders have exponential utility functions and beliefs drawn from an exponential family. Under this model it is possible to characterize the market's equilibrium prices in terms of the traders' belief and risk aversion parameters, and thereby quantify the discrepancy between current market prices and ground truth.
To analyze market convergence, we consider the trader dynamics introduced by \citet{FR15},
under which trading can be viewed as randomized block-coordinate descent on
a suitable potential function.

Our analysis is \emph{local} in that the bounds depend on the market equilibrium prices.
This allows us to exactly identify the main asymptotic terms of error. We demonstrate via numerical experiments that these asymptotic bounds are accurate early on and therefore can be used to compare market designs.

We make the following specific contributions:
\begin{qenum}
\item We precisely define the three components of the forecasting error.

\item We show that the market-maker bias equals $cb\pm O(b^2)$ as $b\to 0$, where $b$ is the liquidity parameter, and $c$ is an explicit constant that depends on the cost function and trader beliefs.

\item We show that the convergence error decreases with the number of trades $t$ as $\gamma^t$ with $\gamma=1-\Theta(b)$.
We provide explicit upper and lower bounds on $\gamma$ that depend on the cost function and trader beliefs.  In the process, we
prove a new local convergence bound for block-coordinate descent.

\item We use our explicit formulas for bias and convergence error to compare two common cost functions: independent markets (\IND), under which security prices vary independently, and the logarithmic market scoring rule (\LMSR)~\cite{H03}, which enforces logical relationships between security prices.
    We show that at the same value of the market-maker bias, \IND requires at least half-as-many and at most twice-as-many trades
    as \LMSR to achieve the same convergence error.
\end{qenum}

We consider a specific utility model (exponential utility), but our bias and convergence analysis immediately carry over if we assume that each trader is optimizing a risk measure (rather than an exponential utility function) similar to the setup of~\citet{FR15}. Exponential utility was chosen because it was previously well studied and allowed us to focus on the analysis of the cost function and liquidity.
The role of the liquidity parameter in trading off the bias and convergence error has been informally recognized in the literature~\citep{chen2010new,H03,othman2013practical},
but our precise definition of market-maker bias and explicit formulas for the bias and convergence error are novel.
\citet{AKLS14} provide results that can be used to derive the bias for \LMSR, but not for generic cost functions,
so they do not enable comparison of biases of different costs. \citet{FR15} observe that
the convergence error can be locally bounded as $\gamma^t$, but they only provide an upper bound and do not show how
$\gamma$ is related to the liquidity or cost function. Our analysis establishes both upper and lower bounds
on convergence and relates $\gamma$ explicitly to the liquidity and cost function.
This is necessary for a meaningful
comparison of cost function families.
Thus our framework provides the first meaningful way to compare the error tradeoffs inherent in different choices of cost functions and liquidity levels.

\section{Preliminaries}
\label{sec:prelims}


We use the notation $[N]$ to denote the set $\{1, \dotsc, N\}$.
Given a convex function $f:\R^d\to\Rinf$, its \emph{effective domain}, denoted $\dom f$, is the set of points where $f$ is finite. Whenever $\dom f$ is non-empty, the \emph{conjugate} $f^*:\R^d\to\Rinf$ is defined by
$f^*(\bv)\coloneqq\sup_{\bu\in\R^d} [\bv\trans\bu-f(\bu)]$.
We write $\norm{\cdot}$ for the Euclidean norm.
A centralized mathematical reference is provided in Appendix~\ref{app:math}.\footnote{A longer version of this paper containing the appendix is available on arXiv and the authors' websites.}

\paragraph{Cost-function-based market makers}

We study cost-function-based prediction markets~\citep{ACV13}.
Let $\Outcome$ be a finite set of mutually exclusive and exhaustive states of the world.
A market administrator, known as \emph{market maker},
wishes to elicit information about the likelihood of various states $\omega\in\Omega$, and to that end offers to buy and sell any number of shares of $\nsec$ \emph{securities}.
Securities are associated with coordinates of a payoff function $\payoff: \Outcome \to \R^\nsec$, where each share of the $k${\small th} security is worth $\phi_k(\outcome)$ in the event that the true state of the world is
$\outcome \in \Outcome$.
%
Traders arrive in the market sequentially and trade with the market maker.
%
The market price is fully determined by a convex potential function $C$ called the \emph{cost function}.  In particular, if the market maker has previously sold $s_k \in \R$ shares of each security $k$ and a trader would like to purchase a bundle consisting of $\delta_k \in \R$ shares of each, the trader is charged $C(\bstate + \pmb{\delta}) - C(\bstate)$.
The \emph{instantaneous price} of security $k$ is then $\partial C(\bstate) / \partial s_k$.   Note that negative values of $\delta_k$ are allowed and correspond to the trader (short) selling security $k$.

Let $\cM\coloneqq\conv\set{\payoff(\omega):\:\omega\in\Omega}$ be the convex hull of the set of payoff vectors. It is exactly the set of expectations $\ex{\payoff(\omega)}$ across all possible probability distributions
over $\Omega$, which we call \emph{beliefs}. We refer to elements of $\cM$ as \emph{coherent prices}. \citet{ACV13} characterize the conditions that a cost function must satisfy in order to guarantee important properties such as bounded loss for the market maker and no possibility of arbitrage. To start, we assume only that $C:\R^K\to\R$ is convex and differentiable and that $\cM\subseteq\dom C^*$, which corresponds to the bounded loss property.

\ignore{
Throughout this paper, we repeatedly return to several running examples of cost functions.
The most commonly used cost function is Hanson's logarithmic market scoring rule (LMSR) \citep{H03} which is defined for \emph{complete markets} with $K = |\Omega|$.
}

\begin{example}[Logarithmic Market Scoring Rule: LMSR~\citep{H03}]
Consider a \emph{complete market} with a single security for each outcome worth \$1 if that outcome occurs and \$0 otherwise, i.e., $\Omega=[K]$ and $\npayoff_k(\omega)=\one\braces{k=\omega}$ for all $k$.
The LMSR cost function and instantaneous security prices are given by
\begin{equation}
{\textstyle
   C(\bstate) = \log\left(\sum_{k=1}^K e^{\state_k}\right)
}
\label{eq:LMSR}
\qquad
\text{and}
\qquad
\frac{\partial C(\bstate)}{\partial \state_k} = \frac{e^{\state_k}}{\sum_{\ell=1}^K e^{\state_{\ell}}}, \ \forall k \in [K].
\end{equation}
Its conjugate is the entropy function, $C^*(\bprice)=\sum_k \mu_k\log\mu_k+\cind\set{\bprice\in\Delta_K}$, where $\Delta_K$ is the simplex
in $\R^K$
and $\cind\set{\cdot}$ is the
convex indicator, equal to zero if its argument is true and infinity if false. Thus, in this case $\cM=\Delta_K=\dom C^*$.
\end{example}
Notice that the LMSR security prices are coherent because they always sum to one.
This prevents arbitrage opportunities for traders. Our second running example does not have this property.

\begin{example}[Sum of Independent LMSRs: IND]
Let $\Omega=[K]$ and $\npayoff_k(\omega)=\one\braces{k=\omega}$ for all $k$.
The cost function and instantaneous security prices for the \emph{sum of independent LMSRs} are given by
\begin{equation}
{\textstyle
C(\bstate) = \sum_{k=1}^K \log \left( 1+ e^{\state_k} \right)
}
\qquad
\text{and}
\qquad
\frac{\partial C(\bstate)}{\partial \state_k}
= \frac{e^{\state_k}}{1+e^{\state_k}} , \ \forall k \in [K],
\label{eq:IND}
\end{equation}
$C^*(\bprice)=\sum_k \bracks{\mu_k\log\mu_k+(1-\mu_k)\log(1-\mu_k)}+\cind\set{\bprice\in[0,1]^K}$, $\cM=\Delta_K$, and $\dom C^*=[0,1]^K$.
\end{example}

\ignore{
We note that the associated price function $\nabla\lmsr$ will be \emph{coherent}, meaning that the prices will be in the marginal polytope $\cM = \left\{ \mu \in \R^\nsec : \Payoff \mu \in \Delta_{|\Outcome|}\right\}$.\rynote{This might not be the place to talk about coherent prices.}
}

When choosing a cost function, one important consideration is \emph{liquidity}, that is, how quickly prices change in response to trades. Any cost function $C$ can be viewed as a member of a parametric family of cost functions of the form
$
C_\liq(\bstate)\coloneqq\liq C(\bstate/\liq)
$
across all $\liq > 0$.  With larger values of $\liq$, larger trades are required to move market prices by some fixed amount, and the worst-case loss of the market maker is larger; with smaller values, small purchases can result in big changes to the market price.

\paragraph{Basic model}


In our analysis of error we assume that there exists an unknown true probability distribution $\bprobtruth \in \Delta_{|\Outcome|}$ over the outcome set $\Omega$.  The true expected payoffs of the $K$ market securities are then given by the vector
$
\bpricetruth \coloneqq \Ex{\outcome \sim \bprobtruth}{\payoff(\outcome)}
$.

We assume that there are $\nbuyers$ traders and that each trader $i \in [N]$ has a private belief $\bprobbelief{i}$ over outcomes.
%
We additionally assume that each trader $i$ has a \emph{utility function} $u_i: \R \to \R$ for wealth and would like to maximize expected utility subject to her beliefs.  For now we assume that $u_i$ is differentiable and concave, meaning that each trader is risk averse, though later we focus on exponential utility.  The expected utility of trader $i$ owning a security bundle $\bbundle_i \in \R^K$ and cash $c_i$ is
$
  U_i(\bbundle_i, c_i)\coloneqq\Ex{\outcome \sim \bprobbelief{i}}{u_i\bigParens{c_i + \payoff(\outcome)\cdot \bbundle_i}}
.
$
We assume that each trader begins with zero cash. This is without loss of generality because we could incorporate any initial cash holdings into $u_i$.

\section{A Decomposition of Error}
\label{sec:decomposition}

\ignore{
We are interested in bounding the error of the market's predictions, that is, the difference between the market price of each security and the security's true expected value.  As a first step, we show how this error can be decomposed into several components which we analyze in subsequent sections.

First, the \emph{sampling error} arises because traders do not have perfect information but only noisy observations of ground truth. Even if each trader were to honestly reveal his information and this information were used to optimally infer each security's value (e.g., with a maximum likelihood estimate), there would still be error from noise.  Second, there is what might be called a \emph{risk-aversion effect} since traders do not report their beliefs truthfully but rather reveal them indirectly through self-interested trades.  The magnitude of this error depends on the form of traders' utility functions.

Third, there is a \emph{market-maker bias} term which arises because of the use of a cost-function-based market maker to induce trade. Intuitively, if a market has high liquidity, meaning that prices change only very slowly as securities are bought and sold, then risk averse traders may not be willing to invest the cash that would be needed to move the market prices close to their beliefs. The functional form of the cost function $C$ may also affect bias in less obvious ways.

Finally, there is a \emph{convergence error}, due to the fact that at any particular point in time the market prices may not have fully converged.  The convergence error also depends on $C$ and the liquidity $b$.  For example, if liquidity is too low, traders can move the market prices close to their respective beliefs by making only small purchases, causing market prices to fluctuate as traders with different beliefs enter the market. As the traders' willingness to use their cash decreases, they eventually grow willing to accept prices further from their beliefs, allowing prices to converge. However, it takes many trades to reach this point, since each trade uses up only a small amounts of cash.
} 

In this section, we decompose the market's forecast error into three major components. The first is \emph{sampling error}, which arises because traders have only noisy observations of the ground truth. The second is \emph{market-maker bias}, which arises because the shape of the cost function impacts the traders' willingness to invest. Finally, \emph{convergence error} arises due to the fact that at any particular point in time the market prices may not have fully converged. To formalize our decomposition, we introduce two new notions of equilibrium.

Our first notion of equilibrium, called a \emph{market-clearing equilibrium}, does not assume the existence of a market maker, but rather assumes that traders trade only among themselves, and so no additional securities or cash are available beyond the traders' initial allocations. This equilibrium is described by security prices $\aggbprice\in\R^K$ and allocations $(\barbbundle_i, \barcash_i)$ of security bundles and cash to each trader $i$ such that, given her allocation, no trader wants to buy or sell any bundle of securities at those prices. Trader bundles and cash are summarized as $\barbbundle=(\barbbundle_i)_{i\in[N]}$ and $\barbcash=(\barcash_i)_{i\in[N]}$.


\ignore{
\mdcomment{One of the reviewers suggested the following: \textit{It may be easier to understand the definition of market-clearing and market-maker equilibria if the orders of their definition were reversed. The market-clearing equilibrium definition comes immediately after the definition of the market, but does not require that the traders buy/sell securities through the market, and this took me some time to understand.}
I'm not sure whether we need or should rearrange things. Perhaps we can just emphasize things better (or do nothing).}
}

\begin{definition}[Market-clearing equilibrium]\label{defn:market_clearing}
  A triple $(\barbbundle, \barbcash, \aggbprice)$ is a market-clearing equilibrium if $\sum_{i=1}^\nbuyers \barbbundle_i = \zero$, $\sum_{i=1}^N \barcash_i = 0$, and for all $i \in [\nbuyers]$,
$
\zero \in \argmax_{\vdelta \in \R^K} U_i(\barbbundle_i + \vdelta,\,\barcash_i - \vdelta \cdot \aggbprice) .
$
We call $\aggbprice$ \emph{market-clearing prices} if there exist $\barbbundle$ and $\barbcash$ such that $(\barbbundle, \barbcash, \aggbprice)$ is a market-clearing equilibrium.
Similarly, we call $\barbbundle$ a \emph{market-clearing allocation} if there exists a corresponding equilibrium.
\end{definition}
%

The requirements on $\sum_{i=1}^\nbuyers \barbbundle_i$ and $\sum_{i=1}^\nbuyers \barcash_i$ guarantee that no additional securities or cash have been created. In other words, there exists some set of trades among traders that would lead to the market-clearing allocation, although the definition says nothing about how the equilibrium is reached.
%
%


\ignore{
\jenn{Is there more intuition we could add about what these prices mean? Relationship to equilibrium notions studied in econ/finance?}
}

\ignore{
\mdcomment{In general, the equilibrium is not unique. The equilibria can be obtained as solutions to problems of the form $\min_{\bbundle,\vec{c}}\sum_i \alpha_i U_i(\bbundle_i,c_i)$ where $\alpha_i>0$. Different
choices of $\alpha_i$ may give rise to different equilibria. I don't think that this is worth pointing out. Instead I suggest that we just say that the equilibrium is in general not unique,
but that we will study it for utilities which will yield uniqueness. We could introduce our specific form of $U_i$ earlier in which case we can now make a forward pointer saying that the equilibrium prices are unique.}
}

Since we rely on a market maker to orchestrate trade, our markets generally do not reach the market-clearing equilibrium. Instead, we introduce the notion of \emph{market-maker equilibrium}. This equilibrium is again described by a set of security prices $\eqbprice$ and trader allocations $(\eqbbundle_i, \eqcash_i)$, summarized as $(\eqbbundle,\eqbcash)$, such that no trader wants to trade at these prices given her allocation. The difference is that we now require $\eqbbundle$ and $\eqbcash$ to be reachable via some sequence of trade with the market maker instead of via trade among only the traders, and $\eqbprice$ must be
the market prices after such a sequence of trade.
%
\begin{definition}[Market-maker equilibrium]
  A triple $(\eqbbundle\!, \eqbcash\!, \eqbprice)$ is a market-maker equilibrium for cost function $C_\liq$ if, for the market state $\bstate^\star=\sum_{i=1}^\nbuyers \eqbbundle_i$,
we have
$\sum_{i=1}^\nbuyers \eqcash_i\! = C_\liq(\zero)-C_\liq(\bstate^\star) $,
$\eqbprice\! = \nabla C_\liq(\bstate^\star)$,
and for all $i\in [N]$,
$\zero \in \argmax_{\vdelta \in \R^K} U_i\bigParens{
                       \eqbbundle_i\! + \vdelta,\,
                       \eqcash_i\! - C_\liq(\bstate^\star\! + \vdelta) + C_\liq(\bstate^\star)
                       }
$.
\ignore{
\[
\sum_{i=1}^\nbuyers \eqcash_i\! + C_\liq(\bstate^\star) - C_\liq(\zero) {=} 0
,
\;\;
\eqbprice\! {=} \nabla C_\liq(\bstate^\star)
,
\;\;
\zero \in \argmax_{\vdelta \in \R^K} U_i\bigParens{
                       \eqbbundle_i\! + \vdelta,\,
                       \eqcash_i\! - C_\liq(\bstate^\star\! + \vdelta) + C_\liq(\bstate^\star)
                       }
\ \ \forall i \in [N].
\]
} 
We call $\eqbprice$ \emph{market-maker equilibrium prices} if there exist $\eqbbundle$ and $\eqbcash$ such that $(\eqbbundle\!, \eqbcash\!, \eqbprice)$ is a market-maker equilibrium.
Similarly, we call $\eqbbundle$ a \emph{market-maker equilibrium allocation} if there exists a corresponding equilibrium.
We sometimes write
$\eqbprice(\liq;C)$ to show the dependence of $\eqbprice$ on $C$ and $\liq$.
\end{definition}
%
%

The market-clearing prices $\aggbprice$ and the market-maker equilibrium prices $\eqbprice(\liq;C)$ are not unique in general, but are unique for the specific utility functions that we study in this paper.

\ignore{
\begin{definition}[Market-Maker Equilibrium Prices]
We then say that $\eqbprice(\liq;C)$ is a market equilibrium price for market maker $C_\liq$ as long as there exists bundles $\bbundle = (\bbundle_i)_{i=1}^\nbuyers$ and cash $(c_i)_{i=1}^\nbuyers$
\begin{enumerate}
\item We can write $\eqbprice(\liq;C) = \nabla C_\liq\left(\sum_{i=1}^\nbuyers\bbundle_i\right)$ and $\sum_{i=1}^\nbuyers c_i = C_\liq(\zero) - C_\liq\left(\sum_{i=1}^\nbuyers \bbundle_i\right)$
\item For every security $k \in [\nsec]$ and buyer $i \in [\nbuyers]$, we have
$$
\frac{\partial U_i(\bbundle_i, c_i)}{\partial \bundle_{i,k}} - \eqprice{k}(\liq;C) \frac{\partial U_i(\bbundle_i, c_i)}{\partial c_{i}} = 0
$$
\end{enumerate}
\end{definition}
} 

Using these notions of equilibrium, we can formally define our error components. Sampling error is the difference between the true security values and the market-clearing equilibrium prices.  The bias is the difference between the market-clearing equilibrium prices and the market-maker equilibrium prices.  Finally, the convergence error is the difference between the market-maker equilibrium prices and the market prices $\iterbprice{t}(b;C)$ at a particular round $t$.  Putting this together, we have that
\begin{equation}
\bpricetruth - \iterbprice{t}(b;C) = \underbrace{\bpricetruth - \aggbprice}_{\text{Sampling Error}} + \underbrace{\aggbprice - \eqbprice(\liq;C)}_{\text{Bias}} + \underbrace{\eqbprice(\liq;C) - \iterbprice{t}(b;C)}_{\text{Convergence Error}} .
\label{eq:total_error}
\end{equation}


\ignore{
Note that the sampling error and XXX do not depend on the choice of the cost function family $C$, liquidity parameter $\liq$, or particular dynamics of trade.  We will touch on these only briefly before providing a more detailed analysis of the bias and convergence.
}

\section{The Exponential Trader Model}
\label{sec:exp}
\label{SEC:EXP}



For the remainder of the paper, we work with the exponential trader model introduced by \citet{AKLS14} in which traders have exponential utility functions and exponential-family beliefs.  Under this model, both the market-clearing prices and market-maker equilibrium prices are unique
and can be expressed cleanly in terms of potential functions~\cite{FR15}, yielding a tractable analysis. The results of this section are immediate consequences of prior work~\cite{AKLS14,FR15}, but our equilibrium concepts bring them into a common framework.


We consider a specific \emph{exponential family}~\cite{barndorff1982exponential} of probability distributions over $\Omega$ defined as
$
p(\outcome;\btheta) = e^{\payoff(\outcome)\cdot \btheta -
  \lmsr(\btheta)}$,
where $\btheta \in R^K$ is the \emph{natural parameter} of the distribution,
and
$\lmsr$ is the \emph{log partition function},
$\lmsr(\btheta) \coloneqq \log \left(\sum_{\outcome \in \Outcome}
  e^{\payoff(\outcome)\cdot \btheta}\right)$.
The gradient
$\nabla T(\btheta)$ coincides with the expectation of $\payoff$ under $p(\cdot;\btheta)$, and
$\dom T^*=\conv\set{\payoff(\omega):\:\omega\in\Omega}=\cM$.



\ignore{
For example, we might have \begin{equation}
    \ttheta_{i,k}\sim\textup{Normal}(\theta_k^\TRUE,\sigma^2)
\enspace,
\label{eq:belief_gen}
\end{equation}
where $\sigma^2$ quantifies the error in the agent beliefs.
}

Following \citet{AKLS14}, we assume that each trader $i$ has exponential-family beliefs with natural parameter $\bttheta_i$. From the perspective of trader $i$, the expected payoffs of the $K$ market securities can then be expressed as the vector $\bbelief{i}$ with $\belief{i,k}\coloneqq\sum_{\outcome \in \Outcome} \npayoff_{k}(\outcome) p(\outcome;\bttheta_i)$.

As in \citet{AKLS14}, we also assume that traders are risk averse with exponential utility for wealth, so the utility of trader $i$ for wealth $W$ is
$
u_i(W) = -(1/a_i)e^{-a_i W},
$
where $a_i$ is the the trader's risk aversion coefficient.  We assume that the traders' risk aversion coefficients are fixed.

Using the definitions of the expected utility $U_i$, the exponential family distribution $p(\cdot; \bttheta_i)$, the log partition function $T$, and the exponential utility $u_i$, it is straightforward to show~\cite{AKLS14} that 
\begin{equation}
  U_i(\bbundle_i,c_i)
=-\frac{1}{a_i} e^{-T(\bttheta_i)-a_i c_i}
        \textstyle
        \sum_{\omega \in \Omega} e^{\payoff(\omega) \cdot (\bttheta_i-a_i\bbundle_i)}
        \displaystyle
=-\frac{1}{a_i} e^{T(\bttheta_i-a_i\bbundle_i)-T(\bttheta_i)-a_i c_i}.
\label{eq:expectedutility}
\end{equation}


Under this trader model, we can use the techniques of \citet{FR15} to construct potential functions which yield alternative characterizations of the equilibria as solutions of minimization problems.
%
Consider first a market-clearing equilibrium.
Define
$
  F_i(\bstate)\coloneqq\frac{1}{a_i}T(\bttheta_i+a_i\bstate)
$
for each trader $i$.
From \Eq{expectedutility} we can observe that $-F_i(-\bbundle_i) + c_i$ is a monotone transformation of trader $i$'s utility.
%
\ignore{
\footnote{We could have alternatively defined $F_i(\bstate)\coloneqq (1/a_i)T(\bttheta_i-a_i\bstate)$ so that we could work with $F_i(\bbundle_i)$ instead of $F_i(-\bbundle_i)$. The form we chose turns out to be more convenient for the expression in \Eq{eqbdual:short}. \jenn{Fill in.}\rynote{Added.  This is due to the connection to infimal convolution -- the conjugate of a sum of functions is the sum of the conjugates as long as the arguments add to zero.}}
}
Since each trader's utility is locally maximized at a market-clearing equilibrium, the sum of traders' utilities is also locally maximized, as is $\sum_{i=1}^N (-F_i(-\bbundle_i) + c_i)$.  Since the equilibrium conditions require that $\sum_{i=1}^N c_i = 0$, the security allocation associated with any market-clearing equilibrium must be a local minimum of
%
$\sum_{i=1}^N F_i(-\bbundle_i)$.
This idea is formalized in the following theorem. The proof
follows from an analysis of the KKT conditions of the equilibrium. (See the appendix
for all omitted proofs.)



%
%
\ignore{
We begin by informally deriving potential functions and then prove that they indeed characterize equilibria.

The initial observation is that at an equilibrium the utility of each trader is locally maximized subject to market-clearing constraints. Therefore, one natural form of a potential function is
\[
  -\sum_{i=1}^N U'_i(\bbundle_i,c_i)
\]
where $U'_i$ is a monotone transformation of $U_i$. We consider the following form of $U'_i$
\begin{align*}
  U'_i(\bbundle_i,c_i)
  &\coloneqq-\frac{1}{a_i}\BigParens{T(\bttheta_i-a_i\bbundle_i)-T(\bttheta_i)-a_i c_i}
\\
  &=-\frac{1}{a_i}\BigParens{T(\bttheta_i-a_i\bbundle_i)-T(\bttheta_i)} + c_i
\\
  &=-F_i(-\bbundle_i) + c_i
\enspace,
\end{align*}
where we introduced a new notation for
\[
  F_i(\bstate)\coloneqq\frac{1}{a_i}T(\bttheta_i+a_i\bstate)
\enspace.
\]
While the function $F_i$ represents a monotone transformation of the utility of the $i$th trader excluding cash,
it can be formally interpreted as an LMSR cost function with the liquidity constant $a_i$ and
the initial state $\bttheta_i/a_i$. It will serve an important role in our analysis.

With the above form of $U'_i$, we obtain the following candidate potential
\begin{equation}
\label{eq:potential}
  -\sum_{i=1}^N U'_i(\bbundle_i,c_i)
  =
  \sum_{i=1}^N F_i(-\bbundle_i) - \sum_{i=1}^N c_i
\enspace.
\end{equation}
We incorporate the equilibrium constraints on cash into our potential, so that it only depends on $\bbundle = (\bbundle_i)_{i=1}^\nbuyers$. Namely, in market-clearing equilibrium, we require $\sum_{i=1}^N c_i=0$
and in the market-maker equilibrium, we require $\sum_{i=1}^N c_i=C_b(\zero)-C_b\bigParens{\sum_{i=1}^N\bbundle_i}$. This yields the following two potentials:
\begin{align}
  \Fee(\bbundle) &\coloneqq
  \sum_{i=1}^N F_i(-\bbundle_i)
\\
  F(\bbundle) &\coloneqq
  \sum_{i=1}^N F_i(-\bbundle_i) + C_b\BigParens{\sum_{i=1}^N\bbundle_i}
\enspace.
\end{align}
} 

\begin{theorem}
\label{thm:agg:short}
Under the exponential trader model, a market-clearing equilibrium always exists and market-clearing prices are unique.
Market-clearing allocations and prices are exactly the solutions of the following optimization problems:
\begin{equation}\label{eq:aggbdual:short}
  \bar{\bbundle}\in\argmin_{\bbundle:\:\sum_{i=1}^N \bbundle_i=\zero} \BigBracks{{\textstyle\sum_{i=1}^N F_i(-\bbundle_i)}}
\enspace,
\qquad
  \aggbprice = \argmin_{\bprice\in\R^K} \BigBracks{{\textstyle \sum_{i=1}^N F_i^*(\bprice)}}
\enspace.
\end{equation}
\end{theorem}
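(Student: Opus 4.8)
The plan is to recognize that the conditions defining a market-clearing equilibrium are exactly the Karush--Kuhn--Tucker conditions of the convex program on the left of~\eqref{eq:aggbdual:short}, and that the program on the right is its Lagrangian dual; existence and uniqueness then follow from convex duality together with standard properties of the log-partition function $T$.

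First I would record the structure of $F_i(\bstate)=\tfrac{1}{a_i}T(\bttheta_i+a_i\bstate)$: it is convex and smooth (inherited from $T$), so $\bbundle\mapsto\sum_i F_i(-\bbundle_i)$ is a convex function minimized over the linear subspace $\{\sum_i\bbundle_i=\zero\}$. A change of variables $\btheta=\bttheta_i+a_i\bstate$ gives $F_i^*(\bprice)=\tfrac{1}{a_i}\bigParens{T^*(\bprice)-\bprice\cdot\bttheta_i}$, hence $\sum_i F_i^*(\bprice)=\bigParens{\sum_i a_i^{-1}}T^*(\bprice)-\bprice\cdot\sum_i a_i^{-1}\bttheta_i$, a positive multiple of $T^*$ plus a linear term, with effective domain $\cM$. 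Introducing a multiplier $\bprice\in\R^K$ for the constraint $\sum_i\bbundle_i=\zero$ and substituting $\bz_i=-\bbundle_i$, the dual function is $\inf_{\bbundle}\bigParens{\sum_i F_i(-\bbundle_i)+\bprice\cdot\sum_i\bbundle_i}=\sum_i\inf_{\bz_i}\bigParens{F_i(\bz_i)-\bprice\cdot\bz_i}=-\sum_i F_i^*(\bprice)$, so the dual is precisely $\min_{\bprice}\sum_i F_i^*(\bprice)$. Since the constraint is linear and the primal objective is finite everywhere, strong duality holds with no constraint qualification, and the KKT conditions are necessary and sufficient for primal optimality: $\bbundle$ is primal-optimal iff there is a $\bprice$ with $\sum_i\bbundle_i=\zero$ and $\nabla F_i(-\bbundle_i)=\bprice$ for every $i$, and then $\bprice$ is dual-optimal.

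Next I would match these to the equilibrium definition. Using the paper's observation that $-F_i(-\bbundle_i)+c_i$ is a monotone transformation of $U_i(\bbundle_i,c_i)$, and that this transformed objective is concave in $\vdelta$, the requirement $\zero\in\argmax_{\vdelta}U_i\bigParens{\barbbundle_i+\vdelta,\,\barcash_i-\vdelta\cdot\aggbprice}$ is equivalent to the first-order condition $\nabla F_i(-\barbbundle_i)=\aggbprice$. Therefore $(\barbbundle,\barbcash,\aggbprice)$ is a market-clearing equilibrium iff $\sum_i\barbbundle_i=\zero$, $\sum_i\barcash_i=0$, and $\nabla F_i(-\barbbundle_i)=\aggbprice$ for all $i$; the cash condition decouples (any $\barbcash$ summing to zero works), and what remains is exactly the primal KKT system with multiplier $\aggbprice$. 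Hence market-clearing allocations are exactly primal minimizers, market-clearing prices are exactly dual minimizers, and since $\sum_i F_i^*$ is strictly convex on the affine hull of its domain $\cM$ (a standard property of conjugates of log-partition functions), the dual minimizer -- and thus the market-clearing price -- is unique.

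Finally, for existence I would show the primal attains its infimum: it is bounded below on the feasible subspace since $T(\btheta)\ge\payoff(\outcome_0)\cdot\btheta$ for any fixed $\outcome_0$ gives $\sum_i F_i(-\bbundle_i)\ge\payoff(\outcome_0)\cdot\sum_i a_i^{-1}\bttheta_i$ whenever $\sum_i\bbundle_i=\zero$, and a recession-cone computation then shows that the recession directions of the objective within the feasible subspace form a subspace, so the minimum is attained. Equivalently, one can exhibit an equilibrium directly: take $\aggbprice=\argmin_{\bprice}\sum_i F_i^*(\bprice)$, which lies in $\ri\cM$ because $T^*$ is steep on $\ri\cM$; set $\barbbundle_i=a_i^{-1}\bigParens{\bttheta_i-\nabla T^*(\aggbprice)}$; and verify $\sum_i\barbbundle_i=\zero$ from the dual stationarity $\bigParens{\sum_i a_i^{-1}}\nabla T^*(\aggbprice)=\sum_i a_i^{-1}\bttheta_i$ together with $\nabla F_i(-\barbbundle_i)=\nabla T\bigParens{\nabla T^*(\aggbprice)}=\aggbprice$. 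I expect the main obstacle to be exactly this regularity step -- confirming that the dual optimum lies in the relative interior of $\cM$ (steepness/essential smoothness of $T^*$) and that strict convexity of $T^*$ holds relative to the affine hull of $\cM$ even for a non-minimal exponential family -- since these are the points where the specific structure of the log-partition function, rather than generic convex duality, must be used.
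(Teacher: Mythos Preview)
Your proposal is correct and follows essentially the same route as the paper: reduce the equilibrium conditions to the first-order conditions $\nabla F_i(-\barbbundle_i)=\aggbprice$ and $\sum_i\barbbundle_i=\zero$, recognize these as the optimality conditions of the constrained primal, and obtain the dual $\min_{\bprice}\sum_i F_i^*(\bprice)$ by conjugacy, with uniqueness from strong convexity of $T^*$ (equivalently, Lipschitz $\nabla T$).

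The only substantive difference is in packaging. The paper encodes the constraint as an indicator $g(\bstate)=\cind\{\bstate=\zero\}$ and applies Fenchel's duality theorem (Rockafellar, Cor.~31.2.1 and Thm.~31.3) with $f(\bbundle)=\sum_i F_i(\bbundle_i)$ and $A=(I_K\;\cdots\;I_K)$; once the relative-interior qualifications are checked (trivially here), that theorem delivers attainment of \emph{both} primal and dual and the characterization of optimal pairs in one stroke. Your Lagrangian derivation is equivalent but leaves existence as a separate step. The direct construction you sketch---take $\aggbprice$ as the dual minimizer, set $\barbbundle_i=a_i^{-1}(\bttheta_i-\nabla T^*(\aggbprice))$, and verify feasibility from dual stationarity---is exactly right, and the regularity worry you flag (that $\aggbprice\in\ri\cM$) is resolved in the paper by noting that $\aggbprice=\nabla T(\bar{\btheta})$ for an explicit $\bar{\btheta}$, and gradients of the log-partition function always land in $\ri\cM$. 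So the Fenchel route buys a cleaner one-shot existence argument; yours is slightly more elementary but needs that extra paragraph.
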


Using a similar argument, we can show that the allocation associated with any market-maker equilibrium is a local minimum of the function
  $F(\bbundle) \coloneqq
  \sum_{i=1}^N F_i(-\bbundle_i) + C_b\bigParens{\sum_{i=1}^N\bbundle_i}.$
%


\begin{theorem}
\label{thm:eq:short}
Under the exponential trader model, a market-maker equilibrium always exists and equilibrium prices are unique.
Market-maker equilibrium allocations and prices are exactly the solutions of the following optimization problems:
\begin{equation}\label{eq:eqbdual:short}
  \optbbundle\in\argmin_{\bbundle} F(\bbundle)
\enspace,
\qquad
  \eqbprice = \argmin_{\bprice\in\R^K} \BigBracks{{\textstyle \sum_{i=1}^N F_i^*(\bprice) + bC^*(\bprice)}}
\enspace.
\end{equation}
\end{theorem}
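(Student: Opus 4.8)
The plan is to follow the template of \Thm{agg:short}, now with the unconstrained potential $F(\bbundle)=\sum_{i=1}^N F_i(-\bbundle_i)+C_\liq\bigParens{\sum_{i=1}^N\bbundle_i}$ in the role of $\sum_i F_i(-\bbundle_i)$. The first step is to translate the equilibrium conditions into optimality conditions for $F$. Given any $\bbundle$, put $\bstate^\star=\sum_i\bbundle_i$, $\eqbprice=\nabla C_\liq(\bstate^\star)$, and split the cash so that $\sum_i c_i=C_\liq(\zero)-C_\liq(\bstate^\star)$. By \Eq{expectedutility}, for each trader $i$ the objective $\vdelta\mapsto U_i\bigParens{\bbundle_i+\vdelta,\;c_i-C_\liq(\bstate^\star+\vdelta)+C_\liq(\bstate^\star)}$ is a strictly decreasing function of $F_i\bigParens{-(\bbundle_i+\vdelta)}+C_\liq\bigParens{(\bbundle_i+\vdelta)+\sum_{j\neq i}\bbundle_j}$ up to a $\vdelta$-independent constant (using $F_i(\bx)=T(\bttheta_i+a_i\bx)/a_i$ and $\bstate^\star+\vdelta=(\bbundle_i+\vdelta)+\sum_{j\neq i}\bbundle_j$); in particular this is independent of how the cash is allocated, since adding a constant to wealth merely rescales exponential utility. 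Hence trader $i$ has no profitable deviation iff $\bbundle_i$ minimizes $F$ over the $i$-th block with all other blocks fixed. Since $F$ is convex and differentiable — each $F_i$ is, because the log-partition function $T$ is, and $C_\liq$ is by assumption — simultaneous blockwise optimality is equivalent to $\nabla F(\bbundle)=\zero$, i.e., to $\bbundle$ being a global minimizer of $F$. This proves that the market-maker equilibrium allocations are exactly $\argmin_\bbundle F(\bbundle)$, matching the first problem in \Eq{eqbdual:short}.

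For the dual description, I would compute $F_i^*(\bprice)=\bigParens{T^*(\bprice)-\bprice\trans\bttheta_i}/a_i$, so $\dom F_i^*=\dom T^*=\cM$, together with $C_\liq^*=\liq C^*$. Eliminating the implicit constraint $\bstate=\sum_i\bbundle_i$ via an infimal convolution turns $\min_\bbundle F(\bbundle)$ into a Fenchel problem whose dual is $\min_{\bprice}\bigParens{\sum_i F_i^*(\bprice)+\liq C^*(\bprice)}$; the constraint qualification is trivial since each $F_i$ and $C$ is finite on all of $\R^K$. Strong duality then matches the optimal values and, through the optimality relations, matches the dual minimizers with the prices $\nabla C_\liq\bigParens{\sum_i\bbundle_i}$ induced by primal minimizers. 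For uniqueness, set $\bar a=\sum_i 1/a_i>0$ and $\bar\btheta=\sum_i\bttheta_i/a_i$, so the dual objective is $\bar a\,T^*(\bprice)-\bar\btheta\trans\bprice+\liq C^*(\bprice)$ on the effective domain $\cM$; because $T^*$ is strictly convex on $\ri\cM$ (a standard property of log-partition conjugates) and, as noted below, the minimizer lies in $\ri\cM$, the dual minimizer $\eqbprice$ is unique. Hence the market-maker equilibrium prices are unique and equal the solution of the second problem in \Eq{eqbdual:short}, while the allocation generally is not unique.

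The step I expect to be the main obstacle is \emph{existence} — showing that $F$ attains its minimum, which convexity and finiteness alone do not give. I would argue from the dual: $\sum_i F_i^*(\bprice)+\liq C^*(\bprice)$ is lower semicontinuous with effective domain $\cM=\conv\set{\payoff(\omega):\omega\in\Omega}$, which is compact, so the dual minimum is attained; moreover the entropic term $\bar a\,T^*$ has gradient diverging toward the boundary of $\cM$ while $\liq C^*$ stays finite there (this is where $\cM\subseteq\dom C^*$ is used), which forces the dual minimizer into $\ri\cM$. From such an interior dual optimum $\eqbprice$ one recovers a primal optimum: pick $\bstate^\star$ with $\nabla C_\liq(\bstate^\star)=\eqbprice$ and an allocation $\bbundle$ with $\sum_i\bbundle_i=\bstate^\star$ satisfying each trader's first-order condition, which certifies $F(\bbundle)=-\min_\bprice\bigParens{\sum_i F_i^*(\bprice)+\liq C^*(\bprice)}$ and hence attainment; this produces an equilibrium via the translation in the first paragraph. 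Carrying out this reconstruction carefully — in particular verifying surjectivity of the relevant gradient maps onto $\ri\cM$ — is the real work; the rest parallels \Thm{agg:short}.
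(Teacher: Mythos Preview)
Your proposal is correct and tracks the paper's proof closely: both translate the equilibrium conditions into the first-order conditions $\nabla F_i(-\bbundle_i)=\nabla C_\liq(\bstate^\star)$ for all $i$ (equivalently, $\nabla F(\bbundle)=\zero$), set up the Fenchel dual $\min_{\bprice}\bigl[\sum_i F_i^*(\bprice)+\liq C^*(\bprice)\bigr]$, and read off uniqueness from convexity properties of $T^*$.

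The one substantive difference is in how existence is handled, and here you make it harder than necessary. The paper invokes a \emph{two-sided} Fenchel duality theorem (its \Thm{fenchel:duality}, Rockafellar's Corollary~31.2.1/Theorem~31.3) that, under a primal and a dual constraint qualification, guarantees attainment of \emph{both} the primal and the dual and furnishes the optimality relations linking them. The primal CQ is what you check (``each $F_i$ and $C$ finite on $\R^K$''); the dual CQ asks for some $\bprice\in\ri(\dom C^*)$ with $\bprice\in\ri\cM$, which is immediate in the paper's setup. With both CQs in hand, primal attainment and the identification $\eqbprice=\nabla C_\liq(\bstate^\star)$ come for free---no compactness argument for the dual, no gradient-diverging-at-the-boundary argument to force $\eqbprice\in\ri\cM$, and no reconstruction of a primal optimizer by hand. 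Your route works, but the ``real work'' you anticipate (surjectivity of gradient maps) is exactly what the two-sided Fenchel theorem packages away. For uniqueness, the paper uses strong convexity of $F_i^*$ (inherited from the Lipschitz gradient of the log-partition function $T$), which is slightly cleaner than your strict-convexity-plus-interior argument since it does not require first locating $\eqbprice$ in $\ri\cM$.
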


\paragraph{Sampling error}

We finish this section with an analysis of the first component of error identified
in \Sec{decomposition}: the sampling error.
We begin by deriving a more explicit form of market-clearing prices:
%
\begin{theorem}
\label{thm:eqprice-char:NIPS}
  Under the exponential trader model, the unique market-clearing
  equilibrium prices can be written as
  $\aggbprice = \Ex{\bar{\btheta}}{\payoff(\omega)}$,
  where
  $\bar{\btheta}\coloneqq\bigParens{\sum_{i=1}^\nbuyers \bttheta_i/a_i}/\bigParens{\sum_{i=1}^\nbuyers 1/a_i}$
  is the risk-aversion-weighted average belief
  and $\mathbb{E}_{\bar{\btheta}}$ is the expectation under $p(\cdot;\bar{\btheta})$.
\end{theorem}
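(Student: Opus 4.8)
The plan is to work from the dual characterization of market-clearing prices supplied by \Thm{agg:short}, namely $\aggbprice = \argmin_{\bprice\in\R^K} \bigBracks{\sum_{i=1}^\nbuyers F_i^*(\bprice)}$ with $F_i(\bstate)\coloneqq\frac{1}{a_i}T(\bttheta_i+a_i\bstate)$, and to evaluate this minimization in closed form. First I would compute the conjugate $F_i^*$. Substituting $\by=\bttheta_i+a_i\bstate$ in the definition of the conjugate and factoring out $1/a_i>0$ gives $F_i^*(\bprice)=\sup_{\by}\bigBracks{\frac{1}{a_i}\bprice\cdot(\by-\bttheta_i)-\frac{1}{a_i}T(\by)}=\frac{1}{a_i}\bigParens{T^*(\bprice)-\bprice\cdot\bttheta_i}$, using $T^*(\bprice)=\sup_{\by}[\bprice\cdot\by-T(\by)]$. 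Summing over traders yields $\sum_{i=1}^\nbuyers F_i^*(\bprice)=\bigParens{\sum_{i=1}^\nbuyers \tfrac{1}{a_i}}\bigParens{T^*(\bprice)-\bprice\cdot\bar{\btheta}}$, where $\bar{\btheta}=\bigParens{\sum_i \bttheta_i/a_i}/\bigParens{\sum_i 1/a_i}$ is exactly the risk-aversion-weighted average belief in the statement.

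Since $\sum_i 1/a_i>0$, minimizing $\sum_i F_i^*$ is equivalent to minimizing $g(\bprice)\coloneqq T^*(\bprice)-\bprice\cdot\bar{\btheta}$. Because $\Omega$ is finite, $T$ is finite, convex, and differentiable on all of $\R^K$, hence closed, so $T^{**}=T$ and the Fenchel--Young inequality gives $g(\bprice)\ge -T(\bar{\btheta})$ with equality iff $\bar{\btheta}\in\partial T^*(\bprice)$, equivalently $\bprice\in\partial T(\bar{\btheta})=\set{\nabla T(\bar{\btheta})}$. Thus $g$ is minimized uniquely at $\bprice=\nabla T(\bar{\btheta})$, which lies in $\ri\cM=\ri(\dom T^*)$. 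Therefore $\aggbprice=\nabla T(\bar{\btheta})$, and since $\nabla T(\bar{\btheta})$ equals the expectation of $\payoff$ under $p(\cdot;\bar{\btheta})$ (as noted when introducing the exponential family), we conclude $\aggbprice=\Ex{\bar{\btheta}}{\payoff(\outcome)}$. Uniqueness of the market-clearing prices is already guaranteed by \Thm{agg:short}, so no separate argument is needed for it.

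The computation is essentially routine; the only step requiring a little care is the Fenchel-duality argument that pins down the minimizer of $g$ as $\nabla T(\bar{\btheta})$ rather than an unattained infimum "at the boundary." This is handled by the differentiability of $T$ on all of $\R^K$, which makes $\partial T(\bar{\btheta})$ a singleton and makes $\nabla T$ a bijection onto $\ri(\dom T^*)$, guaranteeing the minimum is attained at an explicit interior point. I do not anticipate any substantive obstacle beyond keeping the conjugate bookkeeping and effective-domain conditions straight.
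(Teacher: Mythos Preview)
Your proposal is correct and essentially identical to the paper's proof: both compute $F_i^*(\bprice)=\tfrac{1}{a_i}\bigParens{T^*(\bprice)-\bttheta_i\cdot\bprice}$, sum to obtain $\bigParens{\sum_i 1/a_i}\bigParens{T^*(\bprice)-\bprice\cdot\bar{\btheta}}$, and then identify the minimizer via the conjugacy relation $\bar{\btheta}\in\partial T^*(\aggbprice)\iff\aggbprice=\nabla T(\bar{\btheta})$. The only cosmetic difference is that the paper phrases the last step as a first-order optimality condition on the subdifferential, whereas you invoke Fenchel--Young directly.
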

The sampling error arises because the beliefs $\bttheta_i$ are only
noisy signals of the ground truth. From \Thm{eqprice-char:NIPS} we see that this
error may be compounded by the weighting
according to risk aversions, which can skew the prices.
To obtain a concrete bound on the error term $\norm{\bpricetruth - \aggbprice}$,
we need to make some assumptions about risk aversion coefficients, the
true distribution of the outcome, and how this distribution is related to trader beliefs.
For instance, suppose risk aversion coefficients are bounded both from below and above,
the true outcome is drawn from an exponential-family distribution with natural parameter $\btheta^\TRUE$,
and the beliefs $\bttheta_i$ are independent samples with mean $\btheta^\TRUE$ and a bounded covariance matrix.
Under these assumptions, one can show using standard concentration bounds that with high probability, $\norm{\bpricetruth - \aggbprice}=O(\sqrt{1/N})$ as $N\to\infty$. In other words,
market-clearing prices approach the ground truth as the number of traders increases.
In \App{sampling} we make the dependence on risk aversion and belief noise more explicit.
The analysis of other information
structures (e.g., biased or correlated beliefs) is beyond the scope of this paper; instead, 
we focus
on the two error components that depend on the market design.


\ifnips
\section{Market-maker Bias}
\label{sec:bias}
\label{SEC:BIAS}

We now analyze the market-maker bias---the difference between the marker-maker equilibrium prices $\eqbprice$ and market-clearing prices $\aggbprice$.
We first state a global bound that depends on the liquidity $b$ and cost function $C$, but not on trader beliefs,
%
%
and show that $\eqbprice\to\aggbprice$ with the rate $O(b)$ as $b\to 0$.
The proof builds on Theorems~\ref{thm:agg:short}
and~\ref{thm:eq:short} and uses the facts that $C^*$ is bounded on
$\cM$ (by our assumptions on $C$), and conjugates $F_i^*$ are strongly convex on $\cM$ (from properties of the log partition function).
\begin{theorem}[Global Bias Bound]
\label{thm:bias:global}
Under the exponential trader model, for any $C$, there exists a constant $c$ such that
$\norm{\eqbprice(b;C)-\aggbprice}\le cb$ for all $b\ge 0$.
\end{theorem}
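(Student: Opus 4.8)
The plan is to work entirely in the dual (price) formulation supplied by Theorems~\ref{thm:agg:short} and~\ref{thm:eq:short}. Write $G(\bprice)\coloneqq\sum_{i=1}^N F_i^*(\bprice)$, so that $\aggbprice=\argmin_{\bprice}G(\bprice)$ and $\eqbprice(b;C)=\argmin_{\bprice}\bigBracks{G(\bprice)+bC^*(\bprice)}$. From $F_i(\bstate)=\tfrac1{a_i}T(\bttheta_i+a_i\bstate)$ a one-line change of variables gives $F_i^*(\bprice)=\tfrac1{a_i}\bigParens{T^*(\bprice)-\bprice\cdot\bttheta_i}$, hence $G(\bprice)=\bigParens{\sum_{i=1}^N a_i^{-1}}T^*(\bprice)-\bprice\cdot\bigParens{\sum_{i=1}^N\bttheta_i/a_i}$ and $\dom G=\dom T^*=\cM$. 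I will use two structural facts about $G$: (i) since $\nabla^2 T(\btheta)$ is the covariance of $\payoff(\omega)$ under $p(\cdot;\btheta)$, it satisfies $\nabla^2 T(\btheta)\preceq R^2 I$ with $R\coloneqq\max_\omega\norm{\payoff(\omega)}$, so $T^*$---and therefore $G$---is $\sigma$-strongly convex on $\cM$ with $\sigma\coloneqq R^{-2}\sum_{i=1}^N a_i^{-1}>0$; and (ii) $G$ is differentiable on the relative interior of $\cM$, a standard property of the conjugate log partition function. Also, since $\aggbprice,\eqbprice\in\cM$ and $\cM$ is compact, $\norm{\eqbprice-\aggbprice}\le\mathrm{diam}(\cM)$ always, which already yields a bound of the form $cb$ once $b$ is bounded away from $0$; the content is therefore in the regime $b\to0$.

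First I would establish a crude rate, good enough to localize $\eqbprice$ near $\aggbprice$. Let $M\coloneqq\sup_{\cM}C^*-\inf_{\cM}C^*$, which is finite because $C^*$ is bounded on $\cM$ by our assumptions on $C$. Optimality of $\eqbprice$ for $G+bC^*$ gives $G(\eqbprice)-G(\aggbprice)\le b\bigParens{C^*(\aggbprice)-C^*(\eqbprice)}\le bM$, while $\sigma$-strong convexity of $G$ at its minimizer gives $G(\eqbprice)-G(\aggbprice)\ge\tfrac{\sigma}{2}\norm{\eqbprice-\aggbprice}^2$. Hence $\norm{\eqbprice-\aggbprice}\le\sqrt{2bM/\sigma}$. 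This is only $O(\sqrt b)$, but it suffices to guarantee that $\eqbprice$ enters any prescribed neighborhood of $\aggbprice$ once $b$ is small.

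Next, for the linear bound: by Theorem~\ref{thm:eqprice-char:NIPS}, $\aggbprice=\Ex{\bar{\btheta}}{\payoff(\omega)}$ is a mean parameter of the exponential family at a finite natural parameter, so $\aggbprice$ lies in the relative interior of $\cM$; under our assumptions on $C$ (in particular for both \IND and \LMSR) it also lies in the relative interior of $\dom C^*$, so $C^*$ is $L$-Lipschitz on some neighborhood $B$ of $\aggbprice$ within $\aff\cM$. Choose $b_0>0$ so that the estimate of the previous step forces $\eqbprice\in B$ (in particular $\eqbprice\in\ri\cM$) for all $b\le b_0$. For such $b$, working relative to $\aff\cM$, the first-order optimality conditions at the two relative-interior minimizers read $\nabla G(\aggbprice)=\zero$ and $\nabla G(\eqbprice)=-b\bg$ for some $\bg\in\partial C^*(\eqbprice)$ with $\norm{\bg}\le L$. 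Monotonicity of $\nabla G$ (a consequence of $\sigma$-strong convexity) then gives
\[
\sigma\norm{\eqbprice-\aggbprice}^2\;\le\;\bigParens{\nabla G(\eqbprice)-\nabla G(\aggbprice)}\cdot\bigParens{\eqbprice-\aggbprice}\;=\;-b\,\bg\cdot\bigParens{\eqbprice-\aggbprice}\;\le\;bL\norm{\eqbprice-\aggbprice},
\]
so $\norm{\eqbprice-\aggbprice}\le(L/\sigma)\,b$ for $b\le b_0$. Together with the diameter bound for $b>b_0$ this proves the claim with $c\coloneqq\max\bigSet{L/\sigma,\ \mathrm{diam}(\cM)/b_0}$.

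The analytic heart is the one-line monotonicity estimate in the display; everything around it is boundary bookkeeping, which I expect to be the actual obstacle. Concretely, one must verify that $\aggbprice$---and, for small $b$, $\eqbprice$---lies in $\ri\cM$ (where $G$ is differentiable and strongly convex) and in $\ri(\dom C^*)$ (where $C^*$ is locally Lipschitz). This is precisely why the crude $O(\sqrt b)$ step is needed first: a convex function merely bounded on $\cM$ can have unbounded subgradients near $\bd\cM$, so $\partial C^*(\eqbprice)$ can only be controlled once $\eqbprice$ has been certified to sit well inside the domain. Patching the small-$b$ and large-$b$ regimes then completes the proof.
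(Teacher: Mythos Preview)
Your argument is correct, but the paper's proof is shorter because it anchors the subgradient of $C^*$ at the \emph{fixed} point $\aggbprice$ rather than at the moving point $\eqbprice$. Concretely, the paper picks any $\bu\in\partial C^*(\aggbprice)$ (which exists since $\aggbprice\in\ri\cM\subseteq\ri\dom C^*$), so that $C^*(\bprice)\ge C^*(\aggbprice)+\bu\trans(\bprice-\aggbprice)$ holds for all $\bprice\in\cM$, and combines this with the quadratic lower bound $G(\bprice)\ge G(\aggbprice)+\tfrac{\sigma}{2}\norm{\bprice-\aggbprice}^2$ and the optimality inequality $G(\aggbprice)+bC^*(\aggbprice)\ge G(\eqbprice)+bC^*(\eqbprice)$. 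This yields $\tfrac{\sigma}{2}\norm{\eqbprice-\aggbprice}^2\le -b\,\bu\trans(\eqbprice-\aggbprice)\le b\norm{\bu}\norm{\eqbprice-\aggbprice}$, hence $\norm{\eqbprice-\aggbprice}\le(2\norm{\bu}/\sigma)\,b$ for \emph{all} $b\ge 0$, with no need for a preliminary $O(\sqrt{b})$ localization step, no need to verify $\eqbprice\in\ri\cM$, and no case split between small and large $b$.

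Your route---taking the subgradient at $\eqbprice$ and invoking monotonicity of $\nabla G$---is the natural ``compare first-order conditions'' reflex, and it does work, but it forces you to control $\partial C^*(\eqbprice)$, which in turn forces the $\sqrt{b}$ detour. The paper's choice sidesteps all of that boundary bookkeeping by exploiting that a \emph{single} subgradient of $C^*$ at $\aggbprice$ already gives a global affine minorant. What your approach buys is perhaps slightly more robustness: it would adapt more easily if one only had local (rather than global) strong convexity of $G$, since your monotonicity step is essentially local. But under the hypotheses here the paper's one-shot argument is both simpler and yields an explicit constant $c=2\norm{\bu}/\sigma$ that does not depend on the auxiliary quantities $M$, $b_0$, or $\mathrm{diam}(\cM)$.
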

%

This result makes use of strong convexity constants
that are valid over the entire set $\cM$, which can be overly conservative when $\eqbprice$ is close to $\aggbprice$.
Furthermore, it gives us only
an upper bound, which cannot be used to compare different cost function families. In the rest of
this section we pursue a tighter local analysis, based on the properties of $F^*_i$ and $C^*$ at $\aggbprice$.
Our local analysis requires assumptions
that go beyond convexity and differentiability of the cost function.
We call the class
of functions that satisfy these assumptions \emph{\convplus functions}. (See \App{convplus}
for their complete treatment and a more general definition than provided here.)
These functions are related to functions of \emph{Legendre type} (see Sec.~26 of \citet{Rockafellar70}).
 Informally, they are smooth functions
that are strictly convex along directions in a certain space (the \emph{gradient space})
and linear in orthogonal directions.
%
For cost functions,
strict convexity means that prices change in response to arbitrarily small trades, while the linear directions correspond to
bundles with constant payoffs, whose prices are therefore fixed.
%


\begin{definition}
\label{def:Gf:main}
Let $f:\R^d\to\R$ be differentiable and convex.
Its \emph{gradient space}
is the linear space parallel to the affine hull of its gradients,
denoted as
$
  \cG(f)\coloneqq\Span\set{\nabla f(\bu)-\nabla f(\bu')\!:\!\bu,\bu'\!\in\R^d}
$.
\end{definition}

\begin{definition}
\label{def:convplus:main}
We say that a convex function $f:\R^d\to\R$ is \emph{\convplus} if
it has continuous third derivatives and $\range(\nabla^2 f(\bu))=\cG(f)$ for all $\bu\in\R^d$.
\end{definition}

It can be checked that if $P$ is a projection on $\cG(f)$ then there exists
some $\ba$ 
such that $f(\bu)=f(P\bu)+\ba\trans\bu$, so
$f$ is up to a linear term fully described by its values
on $\cG(f)$.
The condition on the range of the Hessian
ensures that $f$ is strictly convex over $\cG(f)$, so its gradient map is invertible
over $\cG(f)$.
This means that the Hessian 
can be expressed as a function
of the gradient, i.e., there exists a matrix-valued function $H_f$ such that $\nabla^2 f(\bu)=H_f(\nabla f(u))$
(see \Prop{H:exists}).
The cost functions $C$ for both the LMSR and the sum of independent LMSRs (IND) are \convplus.

\begin{example}[LMSR as a \convplus function]
For LMSR, the gradient space of $C$ is parallel to the simplex: $\cG(C)=\set{\bu:\one\trans\bu=0}$.
The gradients of $C$ are points in the relative interior of the simplex. Given such a point $\bprice=\nabla C(\bstate)$, the corresponding
Hessian is $\nabla^2 C(\bstate)=H_C(\bprice)=(\diag_{k\in[K]} \mu_k)-\bprice\bprice\trans$, where $\diag_{k\in[K]}\mu_k$ denotes the diagonal matrix
with values $\mu_k$ on the diagonal.
The null space of $H_C(\bprice)$ is
$\set{c\one:c\in\R}$, so
$C$ is linear in the all-ones direction (buying one share of each security always has cost one), but strictly convex in directions from $\cG(C)$.
\end{example}

\begin{example}[IND as a \convplus function]
For IND, the gradient space is $\R^K$ and the gradients are the points in $(0,1)^K$. In this case,
$H_C(\bprice)=\diag_k [\mu_k(1-\mu_k)]$.
This matrix has full rank.
\end{example}


Our next theorem shows that
for an appropriate vector $\bu$, which depends on $\aggbprice$ and $C$, we have
$
  \eqbprice(\liq;C)=\aggbprice+b\bu + \veps_b,
$
where $\norm{\veps_b}=O(b^2)$.
Here, the $O(\cdot)$ is taken as $b\to 0$, so the error term $\veps_b$ goes to zero faster than the
term $b\bu$,
which we call the \emph{asymptotic bias}.
Our analysis is \emph{local} in the sense that the constants hiding within $O(\cdot)$ may depend
on $\aggbprice$.
This analysis fully uncovers the
main asymptotic term
and therefore allows comparison of cost families.
In our experiments, we show that the asymptotic bias is an
accurate estimate of the bias even for moderately large values of $b$.

\begin{theorem}[Local Bias Bound]
\label{thm:bias:local}
Assume that the cost function $C$ is \convplus. Then
\[
\textstyle
  \eqbprice(\liq;C)=\aggbprice-b(\bar{a}/N) H_T(\aggbprice)\partial C^*(\aggbprice)
  +\veps_b
\enspace,
\quad
  \text{where $\norm{\veps_b}=O(b^2)$.}
\]
In the expression above, $\bar{a}=N/(\sum_{i=1}^N 1/a_i)$ is the harmonic mean of risk-aversion coefficients
and $H_T(\aggbprice)\partial C^*(\aggbprice)$ is guaranteed to consist of a single point even when $\partial C^*(\aggbprice)$ is a set.
\end{theorem}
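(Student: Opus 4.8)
The plan is to view $\eqbprice(b;C)$ as a smooth perturbation of $\aggbprice$ governed by the parameter $b$ and to extract its first-order term. By \Thm{agg:short} and \Thm{eq:short} (see \Eqs{aggbdual:short}{eqbdual:short}), $\aggbprice$ and $\eqbprice(b;C)$ are the unique minimizers over $\bprice$ of $G(\bprice)\coloneqq\sum_{i=1}^N F_i^*(\bprice)$ and of $G(\bprice)+bC^*(\bprice)$, respectively. Since $F_i(\bstate)=\tfrac1{a_i}T(\bttheta_i+a_i\bstate)$, a direct conjugation gives $F_i^*(\bprice)=\tfrac1{a_i}\bigParens{T^*(\bprice)-\bttheta_i\trans\bprice}$, so writing $A\coloneqq\sum_{i=1}^N 1/a_i$ and $\bar{\btheta}\coloneqq\bigParens{\sum_{i=1}^N\bttheta_i/a_i}/A$ we obtain $G(\bprice)=A\bigParens{T^*(\bprice)-\bar{\btheta}\trans\bprice}$, hence $\nabla G(\bprice)=A\bigParens{\nabla T^*(\bprice)-\bar{\btheta}}$ and $\nabla^2 G=A\nabla^2 T^*$. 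In particular $\nabla G(\aggbprice)=0$ recovers $\aggbprice=\nabla T(\bar{\btheta})$ from \Thm{eqprice-char:NIPS}.

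All relevant points and directions live in the gradient space $\cG(T)$, the linear space parallel to $\aff\cM=\aff(\dom T^*)$. Since $\dom G=\dom T^*=\cM$ with $\cM$ compact, $G$ is steep at the relative boundary of $\cM$ (essential smoothness of the log-partition conjugate $T^*$), and $C^*$ is bounded on $\cM$ by our assumptions on $C$, both $\aggbprice$ and $\eqbprice(b;C)$ lie in $\ri\cM$. On a neighborhood of $\aggbprice$ inside $\aff\cM$, the \convplus assumption on $C$ together with the corresponding properties of the log-partition function (via \Prop{H:exists} and the \convplus calculus of \App{convplus}) make $G$ three-times continuously differentiable and strongly convex and make $C^*$ continuously differentiable with locally Lipschitz derivative. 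Next I would write the first-order optimality condition for $\eqbprice=\eqbprice(b;C)$ projected onto $\cG(T)$: there is a selection $\bg_b\in\partial C^*(\eqbprice)$ with $\nabla G(\eqbprice)+b\,\bg_b=0$. Taylor-expanding $\nabla G$ about $\aggbprice$ and using $\nabla G(\aggbprice)=0$ gives $\nabla^2 G(\aggbprice)\,(\eqbprice-\aggbprice)+b\,\bg_b=O\bigParens{\norm{\eqbprice-\aggbprice}^2}$. The a priori global bound $\norm{\eqbprice-\aggbprice}=O(b)$ from \Thm{bias:global} makes the right-hand side $O(b^2)$ and also lets me replace $\bg_b$ by a fixed selection $\bg\in\partial C^*(\aggbprice)$ at the cost of another $O(b^2)$. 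Inverting $\nabla^2 G(\aggbprice)$ on $\cG(T)$ then yields $\eqbprice-\aggbprice=-b\,[\nabla^2 G(\aggbprice)]^{-1}\bg+\veps_b$ with $\norm{\veps_b}=O(b^2)$.

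It remains to identify the constant. Because $\aggbprice=\nabla T(\bar{\btheta})$ and $T$ is \convplus with $\nabla^2 T(\bar{\btheta})=H_T(\aggbprice)$ (\Prop{H:exists}), conjugacy gives that $\nabla^2 T^*(\aggbprice)$ and $H_T(\aggbprice)$ are mutually inverse maps on $\cG(T)$; hence $[\nabla^2 G(\aggbprice)]^{-1}=\tfrac1A H_T(\aggbprice)$ as operators on $\cG(T)$. Using $\cM\subseteq\dom C^*$ we have $\aff\cM\subseteq\aff(\dom C^*)$ and therefore $\cG(T)\subseteq\cG(C)$, so the ambiguity in $\partial C^*(\aggbprice)$ lies in $\cG(C)^\perp\subseteq\cG(T)^\perp=\Null\bigParens{H_T(\aggbprice)}$; this is precisely why $H_T(\aggbprice)\,\partial C^*(\aggbprice)$ is a single point, equal to $H_T(\aggbprice)\,\bg$ for every selection $\bg$. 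Combining with the harmonic-mean identity $1/A=\bar{a}/N$ gives $\eqbprice(b;C)=\aggbprice-b(\bar{a}/N)\,H_T(\aggbprice)\,\partial C^*(\aggbprice)+\veps_b$ with $\norm{\veps_b}=O(b^2)$, as claimed.

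The routine part is the first-order Taylor bookkeeping; the delicate part is making it legitimate despite the degeneracies of the problem: $\cM$ and possibly $\dom C^*$ are lower-dimensional in $\R^K$, the functions $T$ and $C$ are only \convplus rather than strictly convex, and $C^*$ need not be differentiable at $\aggbprice$ in all of $\R^K$. I would handle these uniformly by restricting throughout to the gradient space $\cG(T)$ and invoking the \convplus machinery of \App{convplus} (existence of the Hessian map $H_f$, invertibility of the Hessian on the gradient space, essential smoothness of log-partition conjugates). I expect the step requiring the most care to be verifying that the Taylor remainder is genuinely $O(b^2)$ uniformly, which relies on the local Lipschitzness of $\nabla^2 G$ together with the a priori $O(b)$ bound from \Thm{bias:global}.
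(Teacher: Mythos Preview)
Your proposal is correct and follows essentially the same route as the paper's proof: reduce to the dual characterization from \Thm{agg:short} and \Thm{eq:short}, restrict to the gradient space $\cG(T)$, Taylor-expand the first-order optimality condition at $\aggbprice$, use \Thm{bias:global} to make the remainder $O(b^2)$, and invert $\nabla^2 G(\aggbprice)=(N/\bar a)H_T^+(\aggbprice)$ on $\cG(T)$ to obtain the linear term. The only cosmetic difference is that the paper introduces explicit \convplus surrogates $G_i$ and $R$ (from \Prop{convplus:conj}) for $F_i^*$ and $C^*$ before Taylor-expanding both, whereas you compute $G$ directly via the conjugation identity $F_i^*(\bprice)=\tfrac1{a_i}(T^*(\bprice)-\bttheta_i\trans\bprice)$ and absorb the $C^*$ contribution by swapping $\bg_b$ for $\bg$; both lead to the same $O(b^2)$ bookkeeping.
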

The theorem is proved by a careful application of Taylor's Theorem and crucially uses properties of conjugates of \convplus functions, which we derive in \App{convplus}.
It gives us a formula to calculate the asymptotic bias for any cost function for a particular value of $\aggbprice$, or evaluate the worst-case bias against some set of possible market-clearing prices. It also constitutes an important step in comparing cost function
families. To compare the convergence error of two costs $C$ and $C'$ in the next section, we require that their liquidities $b$ and $b'$
be set so that they have (approximately) the same bias, i.e.,
$\Norm{\eqbprice(\liq';C') - \aggbprice}\approx\Norm{\eqbprice(\liq;C) - \aggbprice}$. \Thm{bias:local} tells us that this can be achieved by the linear rule $\liq'=\liq/\eta$ where
$\eta=\norm{H_T(\aggbprice)\partial {C'}^*(\aggbprice)}\,/\,\norm{H_T(\aggbprice)\partial C^*(\aggbprice)}$. For $C=\LMSR$ and $C'=\IND$, we prove that the corresponding $\eta\in[1,2]$. Equivalently, this means that for the same value of $\liq$ the asymptotic bias of \IND is at least as large as that of \LMSR, but no more than twice as large:
\begin{theorem}
\label{thm:bias:two}
For any $\aggbprice$ there exists $\eta\in [1,2]$ such that for all $b$,
$\norm{\eqbprice(\liq/\eta;\IND) - \aggbprice}
=
\norm{\eqbprice(\liq;\LMSR) - \aggbprice}
\pm O(b^2)$.
For this same $\eta$, also
$\norm{\eqbprice(\liq;\IND) - \aggbprice}
=
\eta\norm{\eqbprice(\liq;\LMSR) - \aggbprice}
\pm O(b^2)$.
\end{theorem}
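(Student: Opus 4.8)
The plan is to extract everything from the local bias formula of \Thm{bias:local}. For a cost $C$ write $v_C\coloneqq H_T(\aggbprice)\,\partial C^*(\aggbprice)$, which \Thm{bias:local} guarantees is a single vector, so that $\norm{\eqbprice(b;C)-\aggbprice}=(b\bar a/N)\norm{v_C}\pm O(b^2)$ as $b\to0$ (the $O(b^2)$ absorbs $\veps_b$ via the reverse triangle inequality, using $b\bar a/N>0$). Define $\eta\coloneqq\norm{v_{\IND}}/\norm{v_{\LMSR}}$ if $v_{\LMSR}\neq\zero$, and $\eta\coloneqq1$ otherwise (in the latter case both biases are $O(b^2)$ and both asserted identities are trivial). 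Granting the bound $\eta\in[1,2]$ — the substantive content — the two statements are bookkeeping: $\norm{\eqbprice(b;\IND)-\aggbprice}=(b\bar a/N)\norm{v_{\IND}}\pm O(b^2)=\eta\,(b\bar a/N)\norm{v_{\LMSR}}\pm O(b^2)=\eta\norm{\eqbprice(b;\LMSR)-\aggbprice}\pm O(b^2)$; and, substituting $b\mapsto b/\eta$ (legal since $\eta\in[1,2]$, so $O((b/\eta)^2)=O(b^2)$), $\norm{\eqbprice(b/\eta;\IND)-\aggbprice}=(b/\eta)(\bar a/N)\norm{v_{\IND}}\pm O(b^2)=(b\bar a/N)\norm{v_{\LMSR}}\pm O(b^2)=\norm{\eqbprice(b;\LMSR)-\aggbprice}\pm O(b^2)$.

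It remains to show $\norm{v_{\LMSR}}\le\norm{v_{\IND}}\le2\norm{v_{\LMSR}}$ for every $\aggbprice$ in the relative interior of $\cM=\Delta_K$ (where $K\ge2$, so each $\bar\mu_k\in(0,1)$). First I would make the two vectors explicit. Both costs live on the complete market with $\Omega=[K]$ and $\npayoff_k(\omega)=\one\{k=\omega\}$, so the log partition function $T$ of the trader exponential family is the LMSR cost and $H_T(\aggbprice)=\diag(\aggbprice)-\aggbprice\aggbprice\trans$. From the conjugates in the two examples, $\partial C_{\LMSR}^*(\aggbprice)=\{(\log\bar\mu_k)_k+c\one:c\in\R\}$ (the additive $1$ being a multiple of $\one$) and $\partial C_{\IND}^*(\aggbprice)=\{(\logit\bar\mu_k)_k\}$. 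Since $H_T(\aggbprice)\one=\zero$, applying $H_T(\aggbprice)$ kills the $c\one$ ambiguity (this is the single-point claim), and using $\logit t=\log t-\log(1-t)$ we get $v_{\LMSR}=H_T(\aggbprice)\ell$ and $v_{\IND}=v_{\LMSR}+H_T(\aggbprice)r$, where $\ell_k\coloneqq\log\bar\mu_k$ and $r_k\coloneqq-\log(1-\bar\mu_k)$.

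The bounds on $\eta$ then reduce to two inequalities: (a) $\norm{H_T(\aggbprice)r}\le\norm{H_T(\aggbprice)\ell}$, and (b) $\langle H_T(\aggbprice)\ell,\,H_T(\aggbprice)r\rangle\ge0$. Indeed, (a) with the triangle inequality gives $\norm{v_{\IND}}\le\norm{v_{\LMSR}}+\norm{H_T(\aggbprice)r}\le2\norm{v_{\LMSR}}$, and (b) gives $\norm{v_{\IND}}^2=\norm{v_{\LMSR}}^2+2\langle v_{\LMSR},H_T(\aggbprice)r\rangle+\norm{H_T(\aggbprice)r}^2\ge\norm{v_{\LMSR}}^2$. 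To prove (a)--(b) I would use two facts. The first is geometric: for all $k,m$,
\[
  |\ell_k-\ell_m|\ge|r_k-r_m|,
\]
which for $\bar\mu_k\ge\bar\mu_m$ reduces to $\bar\mu_k(1-\bar\mu_k)\ge\bar\mu_m(1-\bar\mu_m)$, true because $\bar\mu_k(1-\bar\mu_k)-\bar\mu_m(1-\bar\mu_m)=(\bar\mu_k-\bar\mu_m)(1-\bar\mu_k-\bar\mu_m)$ and $\bar\mu_k+\bar\mu_m\le\sum_j\bar\mu_j=1$. The second is an algebraic identity: with $s\coloneqq\sum_k\bar\mu_k^2$ and $R$ the distribution on $[K]$ with $R(k)\propto\bar\mu_k^2$, a direct computation from $(H_T(\aggbprice)x)_k=\bar\mu_k(x_k-\sum_j\bar\mu_jx_j)$ gives
\[
  \langle H_T(\aggbprice)x,\,H_T(\aggbprice)y\rangle=s\,\mathrm{Cov}_R(x,y)+\tfrac1s\,c_xc_y,
  \qquad c_x\coloneqq\textstyle\sum_k\bar\mu_k^2x_k-s\sum_j\bar\mu_jx_j,
\]
for all $x,y\in\R^K$ (here $c_x$ is the covariance, under $k\sim\aggbprice$, of $k\mapsto\bar\mu_k$ and $k\mapsto x_k$). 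Since $\ell$ and $r$ are both strictly increasing in $\bar\mu_k$, the displayed geometric fact together with the pair-difference forms $\mathrm{Cov}_Q(x,y)=\tfrac12\sum_{k,m}Q_kQ_m(x_k-x_m)(y_k-y_m)$ and $\mathrm{Var}_Q(x)=\tfrac12\sum_{k,m}Q_kQ_m(x_k-x_m)^2$ yields (termwise in the double sums) $0\le c_r\le c_\ell$ and $\mathrm{Var}_R(r)\le\mathrm{Var}_R(\ell)$, hence (a); and $\mathrm{Cov}_R(\ell,r)\ge0$ (Chebyshev's association inequality, $\ell,r$ comonotone), hence (b).

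The main obstacle is the geometric inequality $|\ell_k-\ell_m|\ge|r_k-r_m|$: it orients the entire comparison and relies essentially on $\aggbprice$ lying on the probability simplex (for arbitrary $\bar\mu_k,\bar\mu_m\in(0,1)$ with $\bar\mu_k+\bar\mu_m>1$ it fails). Once it is in hand, the covariance identity, the reduction to $\eta\in[1,2]$, and the $O(b^2)$ error bookkeeping are all routine. (It is worth noting that the naive coordinatewise estimate $|r_k-\bar r|\le|\ell_k-\bar\ell|$ on the centered vectors can fail at interior coordinates, so (a) genuinely needs the two-term decomposition above rather than a termwise argument.)
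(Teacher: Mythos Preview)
Your proof is correct and takes a genuinely different route from the paper's.

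Both proofs define $\eta=\norm{H\bstate^{\IND}}/\norm{H\bstate^{\LMSR}}$ from \Thm{bias:local} and reduce to showing $\eta\in[1,2]$, and both rest on the same underlying geometric fact---that for any two coordinates $k,m$ of a probability vector, $\bar\mu_k+\bar\mu_m\le 1$ forces $|\log\bar\mu_k-\log\bar\mu_m|\ge|\log(1-\bar\mu_m)-\log(1-\bar\mu_k)|$. The divergence is in how this fact is exploited. The paper (Appendix~\ref{app:bias:two}) works with the full vectors $\bstate^{\LMSR}$ and $\bstate^{\IND}$, sorts $\aggbprice$, expands both in the ``staircase'' basis $\bz_k=(\one\{j\le k\})_j$, and then proves the entrywise positivity $\bz_k\trans H^2\bz_\ell\phtrans\ge 0$ by a somewhat intricate direct computation (Step~3 of \Lem{two:bias}); the factor-of-two difference bound on consecutive gaps then yields $\bs\trans H^2\bs\le\bv\trans H^2\bv\le 4\bs\trans H^2\bs$. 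You instead split $\bstate^{\IND}=\ell+r$ additively, reduce $\eta\in[1,2]$ to the two estimates $\norm{Hr}\le\norm{H\ell}$ and $\langle H\ell,Hr\rangle\ge 0$, and obtain both from the probabilistic identity $\langle Hx,Hy\rangle=s\,\mathrm{Cov}_R(x,y)+s^{-1}c_xc_y$ together with comonotonicity of $\ell$ and $r$ in $\bar\mu_k$ and the termwise pair-difference bound. Your argument is shorter and more transparently probabilistic, and avoids the explicit $\bz_k\trans H^2\bz_\ell$ calculation entirely. The paper's route, on the other hand, is packaged as a reusable lemma (\Lem{two:bias}) that simultaneously handles the quadratic form $\bs\trans H\bs$, which is what drives the KL-divergence analogue proved in the appendix; your decomposition does not immediately deliver that companion statement, though it is not needed for the theorem as stated.
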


\Thm{bias:local} also captures an intuitive relationship which can guide the market maker in adjusting the market liquidity $b$ as the number of traders $N$ and their risk aversion coefficients $a_i$ vary. In particular, holding $\aggbprice$
and the cost function fixed, we can maintain the same amount of bias by setting $b\propto N/\bar{a}$. Note that $1/a_i$ plays the role of the budget of trader $i$ in the sense that at fixed prices, the trader will spend an amount of cash proportional to $1/a_i$. Thus $N/\bar{a}=\sum_i (1/a_i)$ corresponds to the total amount of available cash among the traders in the market. Similarly, the market maker's worst-case loss, amounting to the market maker's cash, is proportional to $b$, so setting $b\propto\sum_i (1/a_i)$ is natural.


\else
\input{bias}
\fi
\ifnips
\section{Convergence Error}
\label{sec:convergence}
\label{SEC:CONVERGENCE}

We now study the convergence error, namely the difference between the prices $\bprice^t$ at round $t$ and the market-maker equilibrium prices $\eqbprice$.
To do so,
we must posit a model of how the traders interact with the market.
Following \citet{FR15}, we assume that in each round, a trader
$i\in[N]$, chosen uniformly at random, buys a
bundle $\bdelta\in\R^\nsec$ that optimizes her utility given the
current market state $\bstate$ and her existing security and cash
allocations, $\bbundle_i$ and $c_i$.
The resulting updates of the allocation vector $\bbundle=(\bbundle_i)_{i=1}^N$ correspond to randomized block-coordinate descent on the potential function $F(\bbundle)$ with blocks $\bbundle_i$
(see \App{trader:dynamics} and~\citet{FR15}).
We refer to this model as the \emph{all-security (trader) dynamics} (\ASD).\footnote{%
In \App{conv}, we also analyze the \emph{single-security (trader) dynamics} (\SSD), in which
a randomly chosen trader randomly picks a single security to trade,
corresponding to randomized coordinate descent on $F$.}
We apply and extend the analysis of block-coordinate descent to
this setting.  We focus on \convplus functions and conduct local
convergence analysis around the minimizer of $F$. Our
experiments
demonstrate that the local analysis
accurately estimates the convergence rate.

Let $\bbundle^\star$ denote an arbitrary minimizer of $F$ and let $F^\star$ be the minimum value of $F$.
Also, let $\bbundle^t$ denote the allocation vector and $\bprice^t$ the market price vector after the $t${\small th} trade.
Instead of directly analyzing the convergence error $\norm{\bprice^t-\eqbprice}$,
we bound the suboptimality $F(\bbundle^t)-F^\star$ since $\norm{\bprice^t-\eqbprice}^2=\Theta\parens{F(\bbundle^t)-F^\star}$
for \convplus costs $C$ under \ASD (see \App{conv:ASD}).

\Convplus functions are locally strongly convex
and have a Lipschitz-continuous gradient, so the standard analysis of block-coordinate descent~\citep{Nest12,FR15} implies linear convergence,
i.e., $\ex{F(\bbundle^t)}-F^\star\le O(\gamma^t)$ for some $\gamma<1$,
where the expectation is under the randomness of the algorithm.
We refine the standard analysis by (1) proving not only upper, but also lower bounds on the convergence rate, and (2) proving an explicit dependence of $\gamma$
on the cost function $C$ and the liquidity $b$. These two refinements
are crucial for comparison of cost families, as we demonstrate with the
comparison of \LMSR and \IND.
We begin by formally defining bounds on local convergence of any randomized
iterative algorithm that minimizes a function $F(\bbundle)$ via a sequence of iterates $\bbundle^t$.


\begin{definition}
We say that $\kappahigh$ is an \emph{upper bound on the local convergence rate} of an algorithm
if, with probability 1 under the randomness of the algorithm,
the algorithm
reaches an iteration $t_0$ such that for some $c>0$ and all $t\ge t_0$,
$
        \ex{F(\bbundle^t)\bigGiven\bbundle^{t_0}}-F^\star\le c\kappahigh^{t-t_0}.
$
%
We say that $\kappalow$ is a \emph{lower bound on the local convergence rate} if $\kappahigh\ge\kappalow$ holds
for all upper bounds $\kappahigh$.
\end{definition}

To state explicit bounds, we use the notation $D\coloneqq\diag_{i\in[N]} a_i$ and $P\coloneqq I_N-\one\one\trans/N$, where $I_N$ is the $N\times N$ identity matrix
and $\one$ is the all-ones vector.
We write $M^+$ for the pseudoinverse of a matrix $M$ and $\lambda_{\min}(M)$ and $\lambda_{\max}(M)$ for its smallest and largest positive eigenvalues.

\begin{theorem}[Local Convergence Bound]
\label{thm:conv:short}
Assume that $C$ is \convplus.
Let $H_T\coloneqq H_T(\aggbprice)$ and $H_C\coloneqq H_C(\aggbprice)$.
For the all-securities dynamics, the local convergence rate is bounded between
%
\begin{align*}
  \kappahigh^{\ASD} & = 1-\tfrac{2b}{\nbuyers}\cdot\lambda_{\min}(PDP)\cdot\lambda_{\min}\bigParens{\;H_T^{1/2} H_C^+ H_T^{1/2}\;}+O(b^2)\enspace,
\\
  \kappalow^{\ASD} & = 1-\tfrac{2b}{\nbuyers}\cdot\lambda_{\max}(PDP)\cdot\lambda_{\max}\bigParens{\;H_T^{1/2} H_C^+ H_T^{1/2}\;}-O(b^2)
\enspace.
\end{align*}
\end{theorem}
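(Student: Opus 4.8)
The plan is to reduce the statement to a local analysis of randomized block-coordinate descent (BCD) on the potential $F(\bbundle)=\sum_{i=1}^N F_i(-\bbundle_i)+C_b\bigParens{\sum_i\bbundle_i}$ around a minimizer $\bbundle^\star$, and then to carry out that analysis by sharpening the standard BCD theory of \citet{Nest12,FR15} so that it yields \emph{matching} upper and lower bounds with explicit dependence on $b$, $C$, $D$, and $\aggbprice$. First I would record the reduction: by \Thm{eq:short} and the trader-dynamics/BCD correspondence (\App{trader:dynamics}, \citet{FR15}), the \ASD iterates $\bbundle^t$ are exactly randomized \emph{exact} block minimization of $F$ with block $i\in[N]$ chosen uniformly, and since $C$ is \convplus, $F$ is $C^2$ near $\bbundle^\star$ with Hessian positive definite modulo the common linear (null) directions, and $\norm{\bprice^t-\eqbprice}^2=\Theta\bigParens{F(\bbundle^t)-F^\star}$ along \ASD (\App{conv:ASD}); so it suffices to bound the rate at which $F(\bbundle^t)-F^\star$ decays.

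Second, I would compute $A\coloneqq\nabla^2F(\bbundle^\star)$. The key simplification comes from the equilibrium conditions of \Thm{eq:short}: they force $\nabla F_i(-\bbundle_i^\star)=\nabla C_b(\bstate^\star)=\eqbprice$ for every trader $i$, so $\nabla^2F_i(-\bbundle_i^\star)=a_i H_T(\eqbprice)$ --- the same matrix up to the scalar $a_i$ --- while $\nabla^2C_b(\bstate^\star)=\tfrac1b H_C(\eqbprice)$. Hence $A$ has block $(i,j)$ equal to $\delta_{ij}a_iH_T(\eqbprice)+\tfrac1bH_C(\eqbprice)$, i.e.\ $A=D\otimes H_T(\eqbprice)+\tfrac1b\,\oneonetrans{N}\otimes H_C(\eqbprice)$. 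By \Thm{bias:global}, $\eqbprice=\aggbprice+O(b)$, so one may replace $H_T(\eqbprice),H_C(\eqbprice)$ by $H_T\coloneqq H_T(\aggbprice)$ and $H_C\coloneqq H_C(\aggbprice)$ at the cost of an $O(b)$ perturbation of $A$, which will become $O(b^2)$ in the final rate; also $\cG(T)\subseteq\cG(C)$ (a consequence of $\cM\subseteq\dom C^*$), so $\range H_T\subseteq\range H_C$.

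Third, the convergence analysis proper. For the local quadratic model $\tfrac12\bz\trans A\bz$, one exact-block step on block $i$ decreases $F$ by $\tfrac12\bg_i\trans A_{ii}^{-1}\bg_i$ where $\bg=A\bz$, so $\ex{F(\bbundle^{t+1})-F^\star\bigGiven\bbundle^t}=\bigParens{F(\bbundle^t)-F^\star}\bigParens{1-\tfrac1N\phi(\bg^t)}$ with $\phi(\bg)=\bg\trans\bigParens{\diag_i A_{ii}^{-1}}\bg/\bg\trans A^{-1}\bg$, and $\phi(\bg)$ lies between the extremal positive eigenvalues of $A^{1/2}\bigParens{\diag_i A_{ii}^{-1}}A^{1/2}$. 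This already isolates the right structure: since $\range H_T\subseteq\range H_C$ and $A_{ii}=\tfrac1b\bigParens{H_C+a_ibH_T}$, a Neumann expansion on $\range H_C$ gives $A_{ii}^{-1}=bH_C^+ +O(b^2)$, hence $\diag_i A_{ii}^{-1}=b\,(I_N\otimes H_C^+)+O(b^2)$; conjugating $A$ by $I_N\otimes H_C^{+1/2}$ yields $D\otimes\bigParens{H_C^{+1/2}H_TH_C^{+1/2}}+\tfrac1b\,\oneonetrans{N}\otimes\Pi_C$ where $\Pi_C$ projects onto $\range H_C$. The $\tfrac1b$ term lives on the fast ``consensus'' subspace $\Span\set{\one_N}\otimes\range H_C$; on its orthogonal complement --- the slow subspace --- the compression is exactly $PDP\otimes\bigParens{H_C^{+1/2}H_TH_C^{+1/2}}$, whose positive eigenvalues are $\lambda(PDP)\cdot\lambda\bigParens{H_T^{1/2}H_C^+H_T^{1/2}}$ (the two matrices $H_C^{+1/2}H_TH_C^{+1/2}$ and $H_T^{1/2}H_C^+H_T^{1/2}$ share nonzero spectrum). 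The refinement over the naive one-step bound --- the step that makes the upper and lower bounds tight with the stated constants --- is to run the Nesterov-type argument \emph{restricted to the slow subspace}, using that the fast component of the error is contracted by a factor $\to0$ and is replenished only in proportion to the slow component, so asymptotically $F(\bbundle^t)-F^\star$ tracks the slow component and its per-step contraction is governed by these extremal eigenvalues; the lower bound $\kappalow^{\ASD}$ follows by exhibiting an error direction (a slowest eigenvector of the compressed operator) along which the expected decrease cannot exceed this, ruling out any better upper bound. The discrepancy between $F$ and its quadratic model is controlled by the continuous third derivatives guaranteed by \convplus-ness, contributing the $O(b^2)$ corrections.

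The main obstacle is the lower bound together with this two-timescale structure: unlike the upper bound, ruling out faster-than-$\kappalow$ convergence is not standard for BCD, and it requires showing that the fast consensus mode, although it dominates the Hessian, does not improve the asymptotic rate, so that one may legitimately reduce the whole analysis to the slow subspace and read off sharp constants there. Getting the constants --- not merely the $\Theta(b)$ order --- to coincide on both sides, and propagating the $\eqbprice\to\aggbprice$ substitution and the quadratic-model error as genuine $O(b^2)$ terms, is the delicate part; a secondary technical nuisance is the pseudoinverse bookkeeping forced by the common null directions of \convplus functions, which must be quotiented out consistently throughout.
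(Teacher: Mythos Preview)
Your setup --- the reduction to BCD on $F$, the Hessian $\nabla^2F(\bbundle^\star)=D\otimes H_T(\eqbprice)+\tfrac1b\,\oneonetrans{N}\otimes H_C(\eqbprice)$, the substitution $\eqbprice\to\aggbprice$, and the fast/slow two-timescale picture --- is correct and matches the paper. The gap is in how the factor~$2$ actually falls out. Bounding the one-step ratio $\phi(\bg)$ via the eigenvalues of $A^{1/2}\bigParens{\diag_i A_{ii}^{-1}}A^{1/2}$, even for gradients in the slow subspace, yields only $\sigmalow\approx b\,\lambda_{\min}(PDP)\,\lambda_{\min}(H_T^{1/2}H_C^+H_T^{1/2})$; the paper in fact records exactly this weaker bound as a separate (non-lagged) result. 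The missing half is the energy that each step pumps from the slow mode into the fast consensus mode and then releases one step later. Capturing it via your two-timescale idea would require tracking the slow component of the \emph{error} $h=A^{1/2}(\bbundle-\bbundle^\star)$ (not of the gradient) and showing $\exNoBrack\bigl[\norm{h_\perp^{t+1}}^2\bigr]=\bigParens{1-\tfrac{2b}{N}\lambda(\cdot)+O(b^2)}\exNoBrack\bigl[\norm{h_\perp^t}^2\bigr]+O(b^2)\exNoBrack\bigl[\norm{h_\perp^{t-1}}^2\bigr]$; that works, but it is not ``Nesterov restricted to the slow subspace,'' and for the lower bound it needs the full second-moment recursion rather than a single slow eigenvector, since the BCD updates do not preserve the slow subspace.

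The paper's mechanism is different and avoids this recursion. It first proves a generic BCD sandwich: conditioning on $\bbundle^{t_0}$ with $t\ge t_0+\ell$, the ratio $\bigParens{\exNoBrack[F(\bbundle^{t+1})\mid\bbundle^{t_0}]-F^\star}\big/\bigParens{\exNoBrack[F(\bbundle^t)\mid\bbundle^{t_0}]-F^\star}$ lies between $1-\sigmahigh/N$ and $1-\sigmalow/N$, where $\sigmalow,\sigmahigh$ bound the \emph{expected} curvature ratio $\exNoBrack\bigl[(\bg^t)\trans\cD(H^\star)^+\bg^t\bigr]\big/\exNoBrack\bigl[(\bg^t)\trans(H^\star)^+\bg^t\bigr]$. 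The ASD-specific step is to take $\ell=2$: after any update the gradient blocks are pairwise differences $\bg_j=\bprice_0-\bprice_j$ (since $\bprice_0=\bprice_{i^\star}$ for the just-updated trader), and averaging the next step's numerator over the random choice of $i$ gives, up to $O(b)$ perturbations controlled by a lemma showing $\norm{\bprice_i'-\bprice_i}\le cb\norm{\bprice_i-\bprice_0}$, the quantity $\tfrac1N\sum_{i,j}(\bprice_i-\bprice_j)\trans H_C^+(\bprice_i-\bprice_j)=2\sum_i(\bprice_i-\bpriceavg)\trans H_C^+(\bprice_i-\bpriceavg)$. This elementary pairwise-difference identity is where the~$2$ enters; the same computation bounds the ratio from above, so $\sigmahigh$ and hence the matching $\kappalow^{\ASD}$ come out of the same machinery without any eigenvector construction.
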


In our proof, we first establish both lower and upper bounds on convergence of a generic block-coordinate descent that extend the results of \citet{Nest12}.
We then analyze the behavior of the algorithm
for the specific structure of our objective
to obtain explicit lower and upper bounds. Our bounds prove linear convergence with the rate $\gamma=1-\Theta(b)$. Since
the convergence gets worse as $b\to 0$, there is a trade-off with the bias, which decreases as $b\to 0$.

Theorems~\ref {thm:bias:local} and~\ref{thm:conv:short} enable systematic quantitative comparisons of cost families. For simplicity, assume that $N\ge 2$ and all risk aversions
are $a$, so $\lambda_{\min}(PDP)=\lambda_{\max}(PDP)=a$. To compare
convergence rates of two costs $C$ and $C'$, we need to control for bias. As discussed after \Thm{bias:local}, their
biases are (asymptotically) equal if their liquidities are linearly related as $b'=b/\eta$ for a suitable $\eta$. \Thm{conv:short} then states that $C'_{b'}$ requires
(asymptotically) at most a factor of $\rho$ as many trades as $C_b$ to achieve the same convergence error, where
$  \rho\coloneqq
\eta\cdot
\lambda_{\max}\parens{H_T^{1/2} H_C^+ H_T^{1/2}} /
\lambda_{\min}\parens{H_T^{1/2} H_{C'}^+ H_T^{1/2}}.$
\ignore{
\[
  \rho\coloneqq
\frac{b\lambda_{\max}\parens{H_T^{1/2} H_C^+ H_T^{1/2}}}
     {b'\lambda_{\min}\parens{H_T^{1/2} H_{C'}^+ H_T^{1/2}}}
= \eta\cdot
\frac{\lambda_{\max}\parens{H_T^{1/2} H_C^+ H_T^{1/2}}}
     {\lambda_{\min}\parens{H_T^{1/2} H_{C'}^+ H_T^{1/2}}}
\enspace.
\]
} 
Similarly, $C_b$ requires at most a factor of $\rho'$ as many trades as $C'_{b'}$, with $\rho'$ defined symmetrically to~$\rho$. For $C=\LMSR$ and $C'=\IND$,
we can show that $\rho\le 2$ and $\rho'\le 2$, yielding the following result:

\begin{theorem}
\label{thm:conv:two}
Assume that $N\ge 2$ and all risk aversions are equal to $a$.
Consider running \LMSR with liquidity $b$ and \IND with liquidity $b'=b/\eta$ such that their asymptotic biases are equal. Denote the
iterates of the two runs of the market as $\bprice^t_\LMSR$ and $\bprice^t_\IND$ and the respective market-maker equilibria
as $\eqbprice_\LMSR$ and $\eqbprice_\IND$.
Then, with probability 1, there exist $t_0$ and $t_1\ge t_0$ such that for all $t\ge t_1$ and sufficiently small $b$
\[
   \ExNoBrack{t_0}\bigBracks{\bigNorm{\bprice^{2t(1+\eps)}_\LMSR-\eqbprice_\LMSR}^2}
   \le
   \ExNoBrack{t_0}\bigBracks{\bigNorm{\bprice^t_\IND-\eqbprice_\IND}^2}
   \le
   \ExNoBrack{t_0}\bigBracks{\bigNorm{\bprice^{(t/2)(1-\eps)}_\LMSR-\eqbprice_\LMSR}^2}
\enspace,
\]
where $\eps=O(b)$ and $\exNoBrack_{t_0}[\cdot]=\exNoBrack[\cdot\given\bbundle^{t_0}]$ conditions on the $t_0${\small th} iterate
of a given run.
\end{theorem}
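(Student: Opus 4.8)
The plan is to compare the suboptimalities $F(\bbundle^t)-F^\star$ of the two markets rather than the prices directly, using $\norm{\bprice^t-\eqbprice}^2=\Theta\bigParens{F(\bbundle^t)-F^\star}$ for \convplus costs under \ASD (\App{conv:ASD}); the multiplicative constants hidden here and throughout are harmless, since the target inequality carries an $\eps=O(b)$ slack in the exponents and permits an arbitrarily late starting iterate $t_1$. Both \LMSR and \IND are complete markets ($\Outcome=[K]$, $\npayoff_k(\omega)=\one\braces{k=\omega}$), so the log partition function $T$ equals the \LMSR cost; hence for \LMSR we have $H_C=H_T$, so $H_T^{1/2}H_C^{+}H_T^{1/2}=H_T^{1/2}H_T^{+}H_T^{1/2}$ is the orthogonal projector onto $\cG(C)=\range(H_T)$, whose positive eigenvalues are all $1$. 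Since $N\ge 2$ and every risk aversion equals $a$, $\lambda_{\min}(PDP)=\lambda_{\max}(PDP)=a$, so the two rates in \Thm{conv:short} coincide and \LMSR converges at the single rate $\gamma_\LMSR\coloneqq 1-\tfrac{2ab}{N}+O(b^2)$; the analysis underlying \Thm{conv:short} in fact yields a two-sided estimate $\ExNoBrack{t_0}\bigBracks{F(\bbundle^t_\LMSR)-F^\star_\LMSR}=\Theta\bigParens{\gamma_\LMSR^{t-t_0}}$.

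For \IND the Hessian of the cost at $\aggbprice$ is $H_C^{\IND}=\diag_k\bracks{\mu_k(1-\mu_k)}$, which is nonsingular; put $M\coloneqq H_T^{1/2}(H_C^{\IND})^{-1}H_T^{1/2}$, whose positive eigenvalues are the positive solutions $\lambda$ of $H_T\bx=\lambda H_C^{\IND}\bx$. Running \IND with liquidity $b'=b/\eta$ (which equalizes its asymptotic bias with that of \LMSR, by the discussion after \Thm{bias:local}), \Thm{conv:short} gives $\kappahigh^{\IND}=1-\tfrac{2ab'}{N}\lambda_{\min}(M)+O(b^2)$ and $\kappalow^{\IND}=1-\tfrac{2ab'}{N}\lambda_{\max}(M)-O(b^2)$, and the refined analysis gives matching two-sided bounds $c_1(\kappalow^{\IND})^{t-t_0}\le\ExNoBrack{t_0}\bigBracks{F(\bbundle^t_\IND)-F^\star_\IND}\le c_2(\kappahigh^{\IND})^{t-t_0}$ past some almost-surely finite $t_0$.

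The heart of the argument is two spectral bounds. (i) $\lambda_{\min}(M)\ge1$: the smallest positive eigenvalue of $M$ is the minimum of $\bx\trans H_T\bx\big/\bx\trans H_C^{\IND}\bx$ over $\bx$ with $\bx\trans H_C^{\IND}\one=0$, i.e.\ $\sum_k\mu_k(1-\mu_k)x_k=0$; for such $\bx$ we have $\sum_k\mu_k x_k=\sum_k\mu_k^2 x_k$, so by Cauchy--Schwarz $\bigParens{\sum_k\mu_k x_k}^2=\bigParens{\sum_k\mu_k\cdot\mu_k x_k}^2\le\bigParens{\sum_k\mu_k}\bigParens{\sum_k\mu_k^3 x_k^2}\le\sum_k\mu_k^2 x_k^2$, which is exactly $\bx\trans H_T\bx\ge\bx\trans H_C^{\IND}\bx$. (ii) $\lambda_{\max}(M)\le2$: this says $\bx\trans H_T\bx\le2\,\bx\trans H_C^{\IND}\bx$ for all $\bx$, equivalently $2\sum_k\mu_k^2 x_k^2\le\sum_k\mu_k x_k^2+\bigParens{\sum_k\mu_k x_k}^2$, equivalently (substituting $x_k\mapsto x_k/\sqrt{\mu_k}$) positive semidefiniteness of $I_K-2\diag_k\mu_k+\bq\bq\trans$ where $\bq\coloneqq(\sqrt{\mu_k})_{k\in[K]}$; if $\max_k\mu_k\le\tfrac12$ this is immediate, and otherwise there is a unique $j$ with $\mu_j>\tfrac12$, and completing the square in coordinate $j$ followed by Cauchy--Schwarz in the remaining coordinates lower-bounds the form by $2\sum_{k\ne j}x_k^2\bigParens{\sum_{\ell\ne j}\mu_\ell-\mu_k}\ge0$. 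With $\eta\in[1,2]$ from \Thm{bias:two}, (i) and (ii) give $\rho\coloneqq\eta/\lambda_{\min}(M)\le2$ and $\rho'\coloneqq\lambda_{\max}(M)/\eta\le2$, the ratios from the discussion after \Thm{conv:short} (here $\lambda_{\min}$ and $\lambda_{\max}$ of the \LMSR factor both equal $1$).

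It remains to chain the rates. Taking logarithms and using $\abs{\log\kappahigh^{\IND}}=\tfrac{2ab}{N}\tfrac{\lambda_{\min}(M)}{\eta}(1+O(b))$, $\abs{\log\kappalow^{\IND}}=\tfrac{2ab}{N}\tfrac{\lambda_{\max}(M)}{\eta}(1+O(b))$, and $\abs{\log\gamma_\LMSR}=\tfrac{2ab}{N}(1+O(b))$, the right inequality at $s=(t/2)(1-\eps)$ reduces, after dividing by $\tfrac{2ab}{N}$, to $\tfrac{\lambda_{\min}(M)}{\eta}(1+O(b))-\tfrac{1-\eps}{2}>0$ for all large $t$, which holds for a suitable $\eps=\Theta(b)$ because $\lambda_{\min}(M)/\eta\ge\tfrac12$ by $\rho\le2$; the residual $O(1)$ additive terms (from $t_0$, from $c_1,c_2$, and from the $\Theta$'s) are absorbed by demanding $t\ge t_1$. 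The left inequality at $2t(1+\eps)$ is symmetric, reducing to $2(1+\eps)-\tfrac{\lambda_{\max}(M)}{\eta}(1+O(b))>0$, which holds by $\rho'\le2$. I expect the main obstacle to be the lower-bound half of \Thm{conv:short} used here---$\ExNoBrack{t_0}\bigBracks{F(\bbundle^t_\IND)-F^\star_\IND}\ge c_1(\kappalow^{\IND})^{t-t_0}$ requires a matching \emph{lower} bound on the per-round progress of randomized block-coordinate descent near the minimizer, which is less standard than the usual upper bounds and is exactly the refinement of \citet{Nest12} announced before \Thm{conv:short}---together with the semidefiniteness computation in (ii).
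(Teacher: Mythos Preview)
Your overall architecture matches the paper's exactly: reduce to the potential suboptimality via \Thm{errconnect:ASD}, invoke \Thm{conv:short} (and its refinement \Prop{t:t0}) to get two-sided geometric bounds on $\ExNoBrack{t_0}[F(\bbundle^t)-F^\star]$ for each market, establish the spectral sandwich $1\le\lambda_{\min}(M)\le\lambda_{\max}(M)\le2$ for $M=H_T^{1/2}(H_C^{\IND})^{-1}H_T^{1/2}$, and then chain the exponents using $\eta\in[1,2]$. Your $\eps$-management is sketchier than the paper's but the ideas are the same and the details go through.

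The interesting divergence is in the spectral lemma itself (the paper's \Lem{two:conv}). For the lower bound $\lambda_{\min}(M)\ge1$, your argument---use the constraint $\sum_k\mu_k(1-\mu_k)x_k=0$ to rewrite $\sum_k\mu_k x_k=\sum_k\mu_k^2 x_k$, then apply Cauchy--Schwarz and $\mu_k\le1$---is dramatically shorter than the paper's, which singles out one coordinate via the constraint and then carries out a page-long algebraic reduction to an explicit nonnegative sum (their Eqs.~\eqref{eq:twoc:1}--\eqref{eq:twoc:6}). Your route is cleaner and worth keeping. For the upper bound $\lambda_{\max}(M)\le2$, the situation is reversed: the paper derives the one-line identity
\[
2\bv\trans D\bv-\bv\trans H\bv=\tfrac12\sum_{j\ne k}\mu_j\mu_k(v_j+v_k)^2\ge0,
\]
valid for all $\bv$ with no constraint and no case split, whereas your argument reparametrizes to $I-2\diag\bprice+\bq\bq\trans$ and then separates into the cases $\max_k\mu_k\le\tfrac12$ and $\max_k\mu_k>\tfrac12$ with a Schur-complement/Cauchy--Schwarz computation. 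Both are correct; the paper's identity is the more economical option here.
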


This result means that \LMSR and \IND are roughly equivalent (up to a factor of two) in terms of the number of trades required to achieve a given accuracy. This is somewhat surprising as this implies that maintaining price coherence
does not offer strong informational advantages (at least when traders are individually coherent, as assumed here). However, while there is little difference between the two costs in terms of accuracy, there is a difference in terms of the worst-case loss.
For $K$ securities, the worst-case loss of \LMSR with the liquidity $b$ is $b\log K$, and the worst-case loss of \IND with the liquidity $b'$ is $b'K\log2$. If liquidities are chosen as in \Thm{conv:two}, so that $b'$ is up to a factor-of-two smaller than $b$, then the worst-case loss of \IND is at least $(bK/2)\log 2$, which is always worse than the \LMSR's loss of $b\log K$, and the ratio of the two losses increases as $K$ grows.

When all risk aversion coefficients
are equal to some constant $a$, then the dependence of \Thm{conv:short} on the number of traders $N$ and their risk aversion is similar to the dependence in \Thm{bias:local}. \mdedit{For instance, to guarantee that $\gamma$ stays below a certain level for varying $N$ and $a$ requires $b=\Omega(N/a)$.}

\else
\input{convergence}
\fi
\ifnips
\section{Numerical Experiments}
\label{sec:experiments}

\ignore{
We have analyzed the dependence of the market-maker bias and convergence error on the functional form of the cost function $C$ and the liquidity parameter $b$, providing upper and lower bounds.
}

We evaluate the tightness of our theoretical bounds via numerical simulation.
We consider a complete market over $K = 5$ securities and simulate $N = 10$ traders
with risk aversion coefficients equal to~1. These values of $N$ and $K$ are large enough to demonstrate the tightness of our results, but small enough that simulations are tractable.  While our theory comprehensively covers heterogeneous risk aversions and the dependence on the number of traders and securities, we have chosen to keep these values fixed, so that we can more cleanly explore the impact of liquidity and number of trades.
We consider the two most commonly studied cost functions: \LMSR and \IND.
We fix the ground-truth
natural parameter $\btheta^\TRUE$ and independently sample the belief $\bttheta_i$ of each trader from $\textup{Normal}(\btheta^\TRUE,\sigma^2I_\nsec)$, with $\sigma=5$.  We consider a \emph{single-peaked} ground truth distribution with $\theta^\TRUE_1 = \log(1- \nu(K-1))$ and $\theta^\TRUE_k = \log\nu$ for $k \neq 1$, with $\nu = 0.02$. Trading is simulated according to the all-security dynamics (\ASD) as described at the start of Section~\ref{sec:convergence}.  In Appendix~\ref{app:experimentsNIPS}, we show qualitatively similar results using a uniform ground truth distribution and single-security dynamics (\SSD).

We first examine the tradeoff that arises between market-maker bias and convergence error as the liquidity parameter is adjusted.
\Fig{experiments} (left) shows the combined bias and convergence error,
$\norm{\iterbprice{t}-\aggbprice}$, as a function of liquidity and the number of trades $t$ (indicated by the color of the line)
for the two cost functions, averaged over twenty random trading sequences.
The minimum point on each curve tells us the optimal value of the liquidity parameter $b$ for the particular cost function and particular number of trades. When the market is run for a short time, larger values of $b$ lead to lower error.  On the other hand, smaller values of $b$ are preferable as the number of trades grows, with the combined error approaching 0 for small $b$.
%

In \Fig{experiments} (center) we plot the bias $\norm{\eqbprice(\liq;C) - \aggbprice}$ as a function of $b$ for both \LMSR and \IND. We compare this with the theoretical approximation
$
\norm{\eqbprice(\liq;C)-\aggbprice} \approx b(\bar{a}/N)\norm{H_T(\aggbprice)\partial C^*(\aggbprice)}
$
from \Thm{bias:local}.  Although \Thm{bias:local} only gives an asymptotic guarantee as $\liq \to 0$, the approximation is fairly accurate even for moderate values of $b$. In agreement with \Thm{bias:two},
the bias of \IND is higher than that of \LMSR at any fixed value of $b$, but by no more than a factor of two.

\begin{figure}[t!]
\begin{subfigure}{.32\textwidth}
  \centering
  \includegraphics[width=\linewidth]{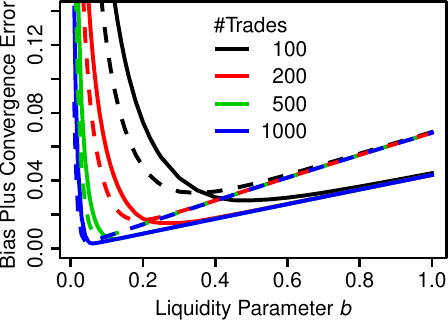}
\end{subfigure}%
\hfill
\begin{subfigure}{.32\textwidth}
  \centering
    \includegraphics[width=\linewidth]{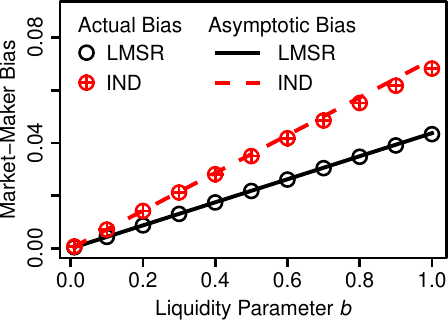}
\end{subfigure}%
\hfill
\begin{subfigure}{.32\textwidth}
  \centering
  \includegraphics[width=\linewidth]{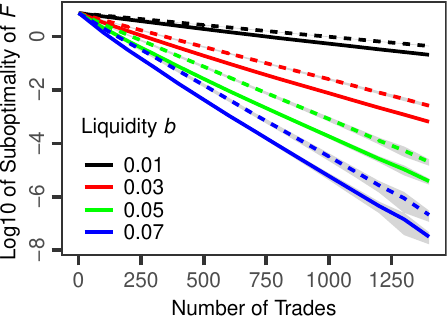}
\end{subfigure}%
\caption{(Left) The tradeoff between market-maker bias and convergence. Solid lines are for \LMSR, dashed for \IND, the color indicates the number of trades.
         (Center) Market-maker bias as a function of $b$.
         (Right) Convergence in the objective. Shading indicates 95\% confidence based on 20 trading sequences.}
\label{fig:experiments}%
\vspace{-8pt}
\end{figure}

In \Fig{experiments} (right) we plot the log of
$\hatexNoBrack[F(\bbundle^t)]-F^\star$ as a function of the number of trades $t$
for our two cost functions and several liquidity levels. Even for small $t$ the curves are close to linear, showing that the local linear convergence rate kicks in essentially from the start of trade in our simulations. In other words, there exist some $\hat{c}$ and $\hat{\gamma}$ such that, empirically, we have
$
\hatexNoBrack[F(\bbundle^t)]-F^\star \approx \hat{c} \hat{\gamma}^t
$,
or equivalently,
$
\log(\hatexNoBrack[F(\bbundle^t)]-F^\star) \approx \log\hat{c} +t\log\hat{\gamma}
$.
Plugging the belief values into \Thm{conv:short}, the slope of the curve for \LMSR should be $\log_{10}\hat{\gamma}\approx -0.087b$ for sufficiently small $b$, and the slope for \IND should be between $-0.088b$ and $-0.164b$. In \App{experimentsNIPS}, we verify that this is the case.



\else
\input{experiments}
\fi
\section{Conclusion}

\ignore{
We have proposed a decomposition of forecasting error in prediction
markets into three components: \emph{sampling error},
\emph{market-maker bias}, and \emph{convergence error}.  We have
derived explicit local bounds on bias and convergence error and
illustrated in numerical experiments that these bounds are tight and
that the local regime comes into effect very early on.  
Our theoretical framework is to our knowledge the first to provide a
meaningful way to quantitatively evaluate error tradeoffs inherent in
different choices of cost functions and liquidity levels.
}

Our theoretical framework provides a meaningful way to quantitatively
evaluate the error tradeoffs inherent in different choices of cost
functions and liquidity levels. We find, for example, that to maintain
a fixed amount of bias, one should set the liquidity parameter $b$
proportional to a measure of the amount of cash that traders are
willing to spend. We also find that, although the LMSR maintains
coherent prices while IND does not, the two are equivalent up to a
factor of two in terms of the number of trades required to reach any
fixed accuracy, though LMSR has lower worst-case loss.

We have assumed that traders' beliefs are
individually coherent. Experimental evidence suggests that LMSR might
have additional informational advantages over IND when traders'
beliefs are incoherent or each trader is informed about only a subset
of events~\cite{OTL14}. We touch on this in
Appendix~\ref{app:incoherent}, but leave a full exploration of the
impact of different assumptions on trader beliefs to future work.

\newpage
\bibliographystyle{plainnat}
\bibliography{./refs}

\begin{thebibliography}{19}
\providecommand{\natexlab}[1]{#1}
\providecommand{\url}[1]{\texttt{#1}}
\expandafter\ifx\csname urlstyle\endcsname\relax
  \providecommand{\doi}[1]{doi: #1}\else
  \providecommand{\doi}{doi: \begingroup \urlstyle{rm}\Url}\fi

\bibitem[Abernethy et~al.(2013)Abernethy, Chen, and Vaughan]{ACV13}
Jacob Abernethy, Yiling Chen, and Jennifer~Wortman Vaughan.
\newblock Efficient market making via convex optimization, and a connection to
  online learning.
\newblock \emph{ACM Transactions on Economics and Computation}, 1\penalty0
  (2):\penalty0 Article 12, 2013.

\bibitem[Abernethy et~al.(2014)Abernethy, Kutty, Lahaie, and Sami]{AKLS14}
Jacob Abernethy, Sindhu Kutty, S\'{e}bastien Lahaie, and Rahul Sami.
\newblock Information aggregation in exponential family markets.
\newblock In \emph{Proceedings of the 15th ACM Conference on Economics and
  Computation (EC)}, 2014.

\bibitem[Barndorff-Nielsen(1982)]{barndorff1982exponential}
Ole Barndorff-Nielsen.
\newblock \emph{Exponential Families}.
\newblock Wiley Online Library, 1982.

\bibitem[Berg et~al.(2008)Berg, Forsythe, Nelson, and Rietz]{berg2008results}
Joyce Berg, Robert Forsythe, Forrest Nelson, and Thomas Rietz.
\newblock Results from a dozen years of election futures markets research.
\newblock \emph{Handbook of experimental economics results}, 1:\penalty0
  742--751, 2008.

\bibitem[Bousquet and Bottou(2008)]{bousquet2008tradeoffs}
Olivier Bousquet and L{\'e}on Bottou.
\newblock The tradeoffs of large scale learning.
\newblock In \emph{Advances in Neural Information Processing Systems (NIPS)},
  2008.

\bibitem[Chen and Pennock(2007)]{CP07}
Yiling Chen and David~M. Pennock.
\newblock A utility framework for bounded-loss market makers.
\newblock In \emph{Proceedings of the 23rd Conference on Uncertainty in
  Artificial Intelligence (UAI)}, 2007.

\bibitem[Chen and Vaughan(2010)]{chen2010new}
Yiling Chen and Jennifer~Wortman Vaughan.
\newblock A new understanding of prediction markets via no-regret learning.
\newblock In \emph{Proceedings of the 11th ACM Conference on Electronic
  Commerce (EC)}, 2010.

\bibitem[Dud{\'i}k et~al.(2013)Dud{\'i}k, Lahaie, Pennock, and
  Rothschild]{dudik2013combinatorial}
Miroslav Dud{\'i}k, S{\'e}bastien Lahaie, David~M. Pennock, and David
  Rothschild.
\newblock A combinatorial prediction market for the {US} elections.
\newblock In \emph{Proceedings of the 14th ACM Conference on Electronic
  Commerce (EC)}, 2013.

\bibitem[Frongillo and Reid(2015)]{FR15}
Rafael Frongillo and Mark~D. Reid.
\newblock Convergence analysis of prediction markets via randomized subspace
  descent.
\newblock In \emph{Advances in Neural Information Processing Systems (NIPS)},
  2015.

\bibitem[Hanson(2003)]{H03}
Robin Hanson.
\newblock Combinatorial information market design.
\newblock \emph{Information Systems Frontiers}, 5\penalty0 (1):\penalty0
  105--119, 2003.

\bibitem[Nesterov(2012)]{Nest12}
Yu. Nesterov.
\newblock Efficiency of coordinate descent methods on huge-scale optimization
  problems.
\newblock \emph{SIAM Journal on Optimization}, 22\penalty0 (2):\penalty0
  341--362, 2012.

\bibitem[Olson et~al.(2015)Olson, Twardy, and Laskey]{OTL14}
Kenneth~C. Olson, Charles~R. Twardy, and Kathryn~B. Laskey.
\newblock Accuracy of simulated flat, combinatorial, and penalized prediction
  markets.
\newblock Presented at Collective Intelligence, 2015.

\bibitem[Othman et~al.(2013)Othman, Pennock, Reeves, and
  Sandholm]{othman2013practical}
Abraham Othman, David~M Pennock, Daniel~M Reeves, and Tuomas Sandholm.
\newblock A practical liquidity-sensitive automated market maker.
\newblock \emph{ACM Transactions on Economics and Computation}, 1\penalty0
  (3):\penalty0 14, 2013.

\bibitem[Petersen and Pedersen(2012)]{PetersenPe12}
Kaare~Brandt Petersen and Michael~Syskind Pedersen.
\newblock The matrix cookbook.
\newblock Technical Report, Technical University of Denmark, Nov 2012.

\bibitem[Rockafellar(1970)]{Rockafellar70}
R.~Tyrrell Rockafellar.
\newblock \emph{Convex analysis}.
\newblock Princeton University Press, 1970.

\bibitem[Rockafellar and Wets(2009)]{RockafellarWe09}
R.~Tyrrell Rockafellar and Roger J-B Wets.
\newblock \emph{Variational analysis}.
\newblock Springer-Verlag, 2009.

\bibitem[Rothschild(2009)]{rothschild2009forecasting}
David Rothschild.
\newblock Forecasting elections: comparing prediction markets, polls, and their
  biases.
\newblock \emph{Public Opinion Quarterly}, 73\penalty0 (5):\penalty0 895--916,
  2009.

\bibitem[Slamka et~al.(2013)Slamka, Skiera, and Spann]{slamka2013prediction}
Christian Slamka, Bernd Skiera, and Martin Spann.
\newblock Prediction market performance and market liquidity: A comparison of
  automated market makers.
\newblock \emph{IEEE Transactions on Engineering Management}, 60\penalty0
  (1):\penalty0 169--185, 2013.

\bibitem[Wolfers and Zitzewitz(2004)]{wolfers2004prediction}
Justin Wolfers and Eric Zitzewitz.
\newblock Prediction markets.
\newblock \emph{The Journal of Economic Perspectives}, 18\penalty0
  (2):\penalty0 107--126, 2004.

\end{thebibliography}


%
\newpage
\appendix
\section{Mathematical Background}
\label{app:math}

\subsection{Vectors, Matrices, Intervals}
\label{app:matrices}

Vectors are denoted by small boldface italics $\bu,\bprice,\dotsc$, matrices by large
italics $A,B,P,\dotsc$.
The norm $\norm{\cdot}$ denotes the standard Euclidean norm for vectors,
and the operator norm for matrices, i.e., $\norm{A}=\sup_{\bu:\:\norm{\bu}=1}\norm{A\bu}$. The \emph{range
of a matrix}, denoted $\range(\cdot)$, is the span of its columns. For a symmetric matrix $A$, its
pseudoinverse, denoted $A^+$, is the unique symmetric matrix with $\range(A^+)=\range(A)$ such that
$A^+A=AA^+=P$ where $P$ is the projection on $\range(A)$.

For symmetric matrices $A$ and $B$, we write $A\preceq B$ to denote that $B-A$ is positive-semidefinite. We
use the notation $A\within B\pm C$ to denote $B-C\preceq A\preceq B+C$.
For scalars, we similarly write $a\within b\pm c$ to mean $a\in[b-c,b+c]$.

We use $\lambda_{\min}(A)$
and $\lambda_{\max}(A)$ to denote the smallest and the largest positive eigenvalue of a symmetric
positive-semidefinite matrix. We also write $\lambda_{\min}(A,B)$ and $\lambda_{\max}(A,B)$ for the generalized
minimum and maximum eigenvalues defined as follows whenever $B\ne\zero$:
\[
   \lambda_{\min}(A,B)\coloneqq\min_{\bu\in\range(B)\wo\set{\zero}} \frac{\bu\trans A\bu}{\bu\trans B\bu}
\enspace,
\qquad
   \lambda_{\max}(A,B)\coloneqq\max_{\bu\in\range(B)\wo\set{\zero}} \frac{\bu\trans A\bu}{\bu\trans B\bu}
\enspace.
\]
The following formulas follow by substituting $\bv=B^{1/2}\bu$:
\begin{gather*}
   \lambda_{\min}(A,B)=\lambda_{\min}\bigParens{\,(B^{1/2})^+A(B^{1/2})^+\,}
\quad
   \text{if $\range(A)\subseteq\range(B)$,}
\\
   \lambda_{\max}(A,B)=\lambda_{\max}\bigParens{\,(B^{1/2})^+A(B^{1/2})^+\,}
\quad
   \text{if $\range(A)\supseteq\range(B)$.}
\end{gather*}
By Eq.~(180) of Petersen and Pedersen \citep{PetersenPe12}, we also have for any matrix $A$ (not necessarily square):
\begin{equation}
\label{eq:lambda:commute}
  \lambda_{\min}(AA\trans)=\lambda_{\min}(A\trans A)
\enspace,
\quad
  \lambda_{\max}(AA\trans)=\lambda_{\max}(A\trans A)
\enspace.
\end{equation}

The following two results relate $A$ and its pseudoinverse $A^+$:
\begin{proposition}
\label{prop:inv:ineq}
Let $A$ and $B$ be symmetric matrices such that $\range(A)=\range(B)$. Then $A\preceq B$ if and only if $B^+\preceq A^+$.
\end{proposition}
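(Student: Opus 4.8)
The plan is to reduce the statement to the classical fact that inversion reverses the Loewner order on positive definite matrices, using the shared range only to deal with the common null space. Throughout I take $A$ and $B$ to be positive semidefinite: this is the only case we use, and some such hypothesis is genuinely needed, since for indefinite matrices the claim fails (e.g.\ $-I\preceq I$ on $\R$, yet $I\not\preceq -I$, and their ranges agree). Observe next that the assertion is symmetric under exchanging the pair $(A,B)$ with $(B^+,A^+)$: their ranges still coincide because $\range(A^+)=\range(A)=\range(B)=\range(B^+)$, and $(A^+)^+=A$. Hence it suffices to prove the single implication $A\preceq B\ \Longrightarrow\ B^+\preceq A^+$, and applying it to $(B^+,A^+)$ gives the converse.

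Set $V\coloneqq\range(A)=\range(B)$ and let $P$ be the orthogonal projection onto $V$. Since $A$ and $B$ are symmetric with null space $V^\perp$, for every $\bu$ we have $\bu\trans A\bu=(P\bu)\trans A(P\bu)$ and likewise for $B$; therefore $A\preceq B$ on $\R^d$ is equivalent to $\tilde A\preceq\tilde B$, where $\tilde A\coloneqq A|_V$ and $\tilde B\coloneqq B|_V$ are viewed as operators on $V$, and these restrictions are positive definite (hence invertible) on $V$. Moreover $A^+$ vanishes on $V^\perp$ and restricts to $\tilde A^{-1}$ on $V$, and similarly for $B^+$, so the target inequality $B^+\preceq A^+$ is equivalent to $\tilde B^{-1}\preceq\tilde A^{-1}$. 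This reduces the proposition to: if $\tilde A\preceq\tilde B$ with both positive definite, then $\tilde B^{-1}\preceq\tilde A^{-1}$.

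For this last step, conjugate $\tilde A\preceq\tilde B$ by $\tilde A^{-1/2}$ (congruence preserves $\preceq$) to get $I\preceq M$ where $M\coloneqq\tilde A^{-1/2}\tilde B\tilde A^{-1/2}$; since $M$ is positive definite with all eigenvalues at least $1$, the spectral theorem gives $M^{-1}\preceq I$; conjugating this by $\tilde A^{-1/2}$ and using $M^{-1}=\tilde A^{1/2}\tilde B^{-1}\tilde A^{1/2}$ yields exactly $\tilde B^{-1}\preceq\tilde A^{-1}$. The one point that deserves care — and the main (mild) obstacle — is the reduction to the subspace $V$: the identities $A^+|_V=(A|_V)^{-1}$ together with the equivalence of $A\preceq B$ with its restriction to $V$ both hinge on $A$ and $B$ having the \emph{same} range, and indeed without such a hypothesis the pseudoinverse does not respect the Loewner order, as the indefinite example above already shows.
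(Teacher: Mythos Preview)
Your proof is correct. Both your argument and the paper's are built on the same core move—congruence by a square root of $A$ to reduce $A\preceq B$ to $I\preceq M$—but the packaging differs. The paper works entirely through generalized eigenvalues: it rewrites $A\preceq B$ as $\lambda_{\min}(B,A)\ge 1$, then uses the identity $\lambda_{\min}(XX\trans)=\lambda_{\min}(X\trans X)$ to pass from $(A^{1/2})^+B(A^{1/2})^+$ to $B^{1/2}A^+B^{1/2}$, landing at $\lambda_{\min}(A^+,B^+)\ge 1$. You instead restrict explicitly to the common range $V$, where both operators are positive definite, and invoke the classical order-reversal for inverses directly via the spectral theorem. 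Your route is arguably more elementary, since it avoids the eigenvalue-commutativity lemma; the paper's route has the advantage of being a single chain of equivalences that handles both directions at once without invoking the symmetry $(A,B)\leftrightarrow(B^+,A^+)$.

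One point worth flagging: you correctly observe that positive semidefiniteness is required and supply a counterexample for the indefinite case. The paper's \emph{statement} omits this hypothesis, but its proof (like yours) uses $A^{1/2}$ and the generalized-eigenvalue formulas that are only stated for positive-semidefinite matrices, so the assumption is implicit there as well. Your making it explicit is an improvement in rigor.
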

\begin{proof}
Assume that $\range(A)=\range(B)=\cL$. Then we have the following equivalences
\begin{align*}
A\preceq B
&\text{\ \ iff\ \ }
 \bu\trans A\bu\le\bu\trans B\bu
 \text{ for all $\bu\in\cL\wo\set{\zero}$}
\\
&\text{\ \ iff\ \ }
 1\le\min_{\bu\in\cL\wo\set{\zero}} \frac{\bu\trans B\bu}{\bu\trans A\bu}
\\
&\text{\ \ iff\ \ }
 1\le\lambda_{\min}(B,A)
   =\lambda_{\min}\bigParens{(A^{1/2})^+B(A^{1/2})^+}
\\
&\hphantom{\text{\ \ iff\ \ }
 1\le\lambda_{\min}(B,A)}
   =\lambda_{\min}\bigParens{B^{1/2} A^+ B^{1/2}}
   =\lambda_{\min}(A^+,B^+)
\tag*{(by Eq.~\ref{eq:lambda:commute})}
\\
&\text{\ \ iff\ \ }
 1\le\min_{\bu\in\cL\wo\set{\zero}} \frac{\bu\trans A^+\bu}{\bu\trans B^+\bu}
\\
\tag*{\qedhere}
&\text{\ \ iff\ \ }
 B^+\preceq A^+
\end{align*}
\end{proof}

\begin{proposition}[Blockwise Inversion]
\label{prop:woodbury}
Let $A,B,C\in\R^{d\times d}$, where $A$ and $C$ are symmetric, and let $S=C-B\trans A^+ B$. Assume the following conditions hold:
\begin{itemize}[noitemsep]
\item $\range(A)\cap\range(C)=\set{\zero}$.
\item $\range(B)\subseteq\range(A)$.
\item $\range(B\trans)\subseteq\range(C)$.
\item $\range(S)=\range(C)$.
\end{itemize}
Then
\[
  (A+B+B\trans+C)^+
  =A^+
   +
   (I-A^+ B)S^+(I-A^+ B)\trans
\enspace.
\]
\end{proposition}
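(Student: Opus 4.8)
The plan is to verify the candidate formula $M^+ = A^+ + (I-A^+B)S^+(I-A^+B)\trans$ directly from the defining characterization of the pseudoinverse recalled in Appendix~\ref{app:matrices}: writing $M\coloneqq A+B+B\trans+C$, I must show that the proposed matrix $M^+$ is symmetric, that $\range(M^+)=\range(M)$, and that $M^+M=MM^+$ equals the orthogonal projection onto $\range(M)$. Symmetry is immediate since $A^+$, $S^+$ are symmetric (as $A$, $C$, hence $S$, are) and the outer factors are transposes of each other. The main work is the range identity and the projection identity.

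First I would nail down the range structure. Let $\cL_A\coloneqq\range(A)$ and $\cL_C\coloneqq\range(C)$. The hypotheses give $\cL_A\cap\cL_C=\set{\zero}$, $\range(B)\subseteq\cL_A$, $\range(B\trans)\subseteq\cL_C$, and $\range(S)=\cL_C$. I claim $\range(M)=\cL_A\oplus\cL_C$ (a direct, not necessarily orthogonal, sum). The inclusion $\range(M)\subseteq\cL_A+\cL_C$ is clear term by term. For the reverse, I would use that $A$ acts invertibly on $\cL_A$ and $S$ acts invertibly on $\cL_C$, together with the fact that $P_A\coloneqq AA^+$ projects onto $\cL_A$ (so $A^+B$ lands in $\cL_A$ and $(I-A^+B)$ reduces to the identity on the $\cL_C$ part after applying the relevant projections). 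Concretely, I would check that $M^+$ has range contained in $\cL_A+\cL_C$ and that the composite $M M^+$ restricted appropriately recovers vectors in both $\cL_A$ and $\cL_C$; the disjointness $\cL_A\cap\cL_C=\set{\zero}$ upgrades the sum to a direct sum and pins down $\range(M)$ exactly.

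Next I would compute $MM^+$ explicitly and show it is the projection onto $\cL_A\oplus\cL_C$. Expanding, $MM^+ = (A+B+B\trans+C)\bigl(A^+ + (I-A^+B)S^+(I-A^+B)\trans\bigr)$, and I would group terms using the identities $AA^+=P_A$, $A^+B=A^+B$ with $P_AB=B$ (since $\range(B)\subseteq\cL_A$), $C\,C^+ = P_C$, and crucially $S=C-B\trans A^+B$ so that $(C + B\trans)(I-A^+B) - (\text{correction}) = S$ on the appropriate subspace. The key algebraic cancellation is that the cross terms involving $B$ and $B\trans$ conspire with the Schur-complement definition of $S$ to collapse the second summand's contribution to exactly $P_C$ minus an overlap that is killed by $\range(B\trans)\subseteq\cL_C$ and $\range(B)\subseteq\cL_A$. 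I would then observe $MM^+$ is symmetric (from the computation) and idempotent, with range $\cL_A\oplus\cL_C=\range(M)$, which forces $MM^+$ to be the orthogonal projection onto $\range(M)$; the identity $M^+M = MM^+$ follows by the symmetric role of the formula or by transposing. Finally, $\range(M^+)=\range(M)$ follows since $M^+ = M^+(MM^+) $ has range inside $\range(M^+M)\subseteq\range(M^+)$ and conversely.

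The main obstacle I anticipate is the bookkeeping in the cross-term cancellation: because we are working with pseudoinverses rather than genuine inverses, every manipulation like ``$A^{-1}A=I$'' must be replaced by ``$A^+A=P_A$'', and one must repeatedly invoke the four range hypotheses to know when a projection acts as the identity on the vector at hand. I would manage this by fixing an orthogonal decomposition of the ambient space adapted to $\cL_A$ and $\cL_C$ (e.g., writing vectors in block form relative to $\cL_A$, $\cL_C$, and the orthogonal complement of $\cL_A+\cL_C$) and checking the identity block by block — on that complement everything vanishes, on $\cL_A$ and $\cL_C$ the computation reduces to a finite-dimensional invertible Schur-complement identity, which is the classical $2\times2$ block inverse formula. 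This reduces the pseudoinverse statement to the well-known invertible case plus routine verification that the formula respects the range constraints.
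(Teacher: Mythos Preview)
The paper's proof is essentially your final paragraph and nothing more: choose matrices $U,V$ whose columns are orthonormal bases of $\range(A)$ and $\range(C)$, represent $M=A+B+B\trans+C$ as the block matrix
\[
\begin{pmatrix} U\trans A U & U\trans B V \\ V\trans B\trans U & V\trans C V \end{pmatrix},
\]
invoke the standard invertible $2\times2$ block-inverse (Schur complement) formula from the Matrix Cookbook, and transform back. The direct verification of the pseudoinverse axioms that occupies most of your plan is not carried out in the paper.

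There is, however, a real gap, and your own parenthetical ``a direct, not necessarily orthogonal, sum'' points right at it. Both the coordinate reduction and your direct verification need $\cL_A\perp\cL_C$, not merely $\cL_A\cap\cL_C=\set{\zero}$: the reduction needs the stacked basis $[\,U\mid V\,]$ to have orthonormal columns so that pseudoinverses transform correctly under the change of basis, and your argument needs $MM^+$ to come out symmetric. Under trivial intersection alone the stated formula is in fact false---take $d=2$, $B=0$, $A=e_1e_1\trans$, and $C=vv\trans$ with $v=(1,1)\trans$: all four hypotheses hold, yet $(A+C)^{-1}=\bigl(\begin{smallmatrix}1&-1\\-1&2\end{smallmatrix}\bigr)$ while $A^++C^+=\bigl(\begin{smallmatrix}5/4&1/4\\1/4&1/4\end{smallmatrix}\bigr)$. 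In the paper's sole application of the proposition (the proof of \Thm{conv:1}) the two ranges \emph{are} orthogonal, so nothing downstream is affected; but your plan to push the cross-term cancellation through in a genuinely non-orthogonal setting cannot succeed, and the step ``observe $MM^+$ is symmetric'' is exactly where it would break.
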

\begin{proof}
This follows by the formula in Section 9.1.3 of Petersen and Pedersen \citep{PetersenPe12} for the block matrix
\[
\begin{pmatrix}
    U\trans AU & U\trans BV
\\  V\trans BU & V\trans CV
  \end{pmatrix}
\]
where $U$ and $V$ are matrices with an orthonormal basis of $\range(A)$ and $\range(C)$,
respectively.
\end{proof}

The final result of this section relates the optimization of a quadratic form with a matrix $A$
to the quadratic form with the matrix $A^+$:

\begin{proposition}
\label{prop:pseudo}
Let $A\in\R^{d\times d}$ be a symmetric positive-semidefinite matrix and let $\bu\in\range(A)$. Then
\[
  \min_{\bdelta\in\R^d} \Parens{\bdelta\trans\bu+\frac12\bdelta\trans A\bdelta}
  =-\frac12\bu\trans A^+\bu
\enspace.
\]
\end{proposition}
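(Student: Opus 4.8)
The plan is to exploit convexity. Set $g(\bdelta)\coloneqq\bdelta\trans\bu+\tfrac12\bdelta\trans A\bdelta$; since $A\succeq\zero$ this is a convex quadratic, so any stationary point is a global minimizer. First I would compute $\nabla g(\bdelta)=\bu+A\bdelta$ and set it to zero, yielding the linear system $A\bdelta=-\bu$. Because $\bu\in\range(A)$ this system is consistent, and a particular solution is $\bdelta^\star\coloneqq-A^+\bu$: indeed $A\bdelta^\star=-AA^+\bu=-P\bu=-\bu$, using that $AA^+=P$ is the orthogonal projection onto $\range(A)$ (by the definition of the pseudoinverse recalled in Appendix~\ref{app:matrices}) and that $P\bu=\bu$ since $\bu\in\range(A)$.

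Next I would substitute $\bdelta^\star$ back into $g$ and simplify using the standard identities $A^+=(A^+)\trans$ and $A^+AA^+=A^+$:
\[
  g(\bdelta^\star)=-(A^+\bu)\trans\bu+\tfrac12(A^+\bu)\trans A(A^+\bu)
  =-\bu\trans A^+\bu+\tfrac12\bu\trans A^+AA^+\bu
  =-\tfrac12\bu\trans A^+\bu,
\]
which is exactly the claimed value.

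Finally I would confirm that the infimum is attained at $\bdelta^\star$ and is not $-\infty$. Writing a general point as $\bdelta=\bdelta^\star+\bn$ with $\bn\in\R^d$ and decomposing $\bn$ along $\range(A)$ and $\Null(A)$, note that since $A$ is symmetric $\Null(A)=\range(A)^\perp$; hence for the $\Null(A)$-component both its quadratic contribution and its inner product with $\bu\in\range(A)$ vanish, so $g$ is constant along $\Null(A)$ and strictly convex on $\range(A)$. Thus $\bdelta^\star$ is a global minimizer and $\min_{\bdelta}g(\bdelta)=-\tfrac12\bu\trans A^+\bu$. There is no genuinely hard step here; the only point that needs a moment's care is the orthogonality $\Null(A)=\range(A)^\perp$, which is what guarantees the problem is bounded below and makes the pseudoinverse the right object in the answer.
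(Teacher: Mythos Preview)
Your argument is correct. The paper's proof is a one-line appeal to convex conjugacy of quadratic functions (citing page~108 of Rockafellar), whereas you give a direct, self-contained computation: find a stationary point of the convex quadratic via $\nabla g=\zero$, verify solvability using $\bu\in\range(A)$ and $AA^+=P$, evaluate $g$ at $\bdelta^\star=-A^+\bu$, and then check boundedness below by decomposing along $\range(A)\oplus\Null(A)$. The conjugacy route is shorter if one already has the relevant fact at hand (the conjugate of $\tfrac12\bdelta\trans A\bdelta$ on $\range(A)$ is $\tfrac12\bu\trans A^+\bu$), but your approach is more elementary and makes explicit exactly where the hypothesis $\bu\in\range(A)$ is used---both to solve $A\bdelta=-\bu$ and to ensure the null-space directions contribute nothing, so the infimum is finite. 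Either argument is fine for this proposition.
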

\begin{proof}
The result follows from convex conjugacy of quadratic convex functions, see page 108 of Rockafellar \citep{Rockafellar70}.
\end{proof}

\subsection{Convex Analysis}
\label{app:convex}

If $f:\R^d\to\Rinf$ then the \emph{epigraph} of $f$ is the set of points in $\R^d\times\R$ that lie on or above the
graph of $f$. The function $f$ is called \emph{convex} if its epigraph is convex. It is called \emph{closed} if its epigraph is closed,
and \emph{proper} if it is not identically equal to $\infty$.
The \emph{effective domain} of $f$ is the set of
points where it is finite, denoted $\dom f$.
The convex hull of a set $S$, written
$\conv S$, is the smallest convex set containing $S$. For instance,
the simplex in $\R^d$ is the convex hull of the vectors of standard
basis.

An \emph{affine subspace} of
$\R^d$ is any set that can be written as $\cA=\set{\ba+\bu:\:\bu\in
  \cL}$ for some fixed vector $\ba$ where $\cL$ is a linear subspace of $\R^d$. We refer to $\cL$
as the linear space \emph{parallel to $\cA$}. The \emph{affine hull}
of a non-empty set $S$, denoted $\aff S$, is the smallest affine set that contains $S$. The \emph{relative interior} of a set $S$,
denoted $\ri S$, is the interior of $S$ under the topology of its affine hull. The \emph{relative boundary} of a set $S$ consists of points in the
closure of $S$
that are not in its relative interior, $(\cl S)\wo(\ri S)$. For instance, the affine hull of the simplex consists
of vectors $\bu$ such that $\bu\trans\one=1$,
where $\one$ is the all-ones vector. The parallel linear space is $\set{\bu:\:\bu\trans\one=0}$. The simplex
has an empty interior, but its relative interior consists of points $\set{\bu\in(0,1)^d:\:\bu\trans\one=1}$. The relative boundary of the simplex consists
of those points in the simplex which have at least one coordinate equal to zero.

For a convex $f:\R^d\to\Rinf$, the \emph{subdifferential} of $f$ at $\bu$
is defined as $\partial f(\bu)\coloneqq\set{\bv\in\R^d:\:f(\bu')\ge f(\bu)+\bv\trans(\bu'-\bu),\,\forall\bu'\in\R^d}$. Any convex $f:\R^d\to\Rinf$
is \emph{subdifferentiable}, i.e., its subdifferential is non-empty, at all points in $\ri\dom f$. If the subdifferential is a singleton,
it coincides with the gradient.
Given a proper convex function $f:\R^d\to\Rinf$, we define its \emph{convex conjugate} $f^*:\R^d\to\Rinf$ by
$f^*(\bprice)\coloneqq \sup_{\bu\in\R^d}[\bu\trans\bprice - f(\bu)]$. For a closed proper convex function $f$, its conjugate is also closed, proper and convex,
and the following statements are equivalent:
\[
\bprice\in\partial f(\bu)
\quad
\text{iff}
\quad
\bu\in\partial f^*(\bu)
\quad
\text{iff}
\quad
f(\bu)+f^*(\bprice)=\bu\trans\bprice
\enspace.
\]

We will use the following variant of a duality result
known as Fenchel's duality:
\begin{theorem}[Fenchel's duality]
\label{thm:fenchel:duality}
Let $f:\R^d\to\R$ and $g:\R^K\to\Rinf$ be closed proper convex functions and $A\in\R^{K\times d}$. Assume that
there exists some $\bu\in\R^d$ such that $A\bu\in\ri(\dom g)$ and some $\bprice\in\ri(\dom g^*)$ such that $A\trans\bprice\in\ri(\dom f^*)$.
Then
\[
   \inf_{\bu\in\R^d}
   \Bracks{f(-\bu) + g(A\bu)}
   =
   \sup_{\bprice\in\R^K}
   \Bracks{-f^*(A\trans\bprice) - g^*(\bprice)}
\enspace
\]
and both the supremum and the infimum are attained. Vectors $\hat{\bu}$ and $\hat{\bprice}$ are their respective
solutions if and only if $A\trans\hat{\bprice}\in\partial f(-\hat{\bu})$ and $\hat{\bprice}\in\partial g(A\hat{\bu})$.
\end{theorem}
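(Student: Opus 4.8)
\textbf{Weak duality and optimality conditions.} This statement is a variant of the classical Fenchel duality theorem (see Theorem~31.1 and Corollary~31.2.1 of Rockafellar \citep{Rockafellar70}), and the plan is to split it into (i)~weak duality together with the optimality conditions, handled here, and (ii)~strong duality together with attainment of both extrema, handled below. For~(i), fix $\bu\in\R^d$ and $\bprice\in\R^K$ and apply the Fenchel--Young inequality (the equivalence recorded in \App{convex}) to $f$ at $(-\bu,\,A\trans\bprice)$ and to $g$ at $(A\bu,\,\bprice)$:
\[
  f(-\bu)+f^*(A\trans\bprice)\ge-\bu\trans A\trans\bprice
\enspace,\qquad
  g(A\bu)+g^*(\bprice)\ge\bu\trans A\trans\bprice
\enspace.
\]
Adding these cancels the right-hand sides and yields $f(-\bu)+g(A\bu)\ge-f^*(A\trans\bprice)-g^*(\bprice)$; taking the infimum over $\bu$ and the supremum over $\bprice$ gives the inequality ``$\ge$'' of the theorem. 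Moreover, equality holds at a pair $(\hat\bu,\hat\bprice)$ iff both Fenchel--Young inequalities are tight, i.e., iff $A\trans\hat\bprice\in\partial f(-\hat\bu)$ and $\hat\bprice\in\partial g(A\hat\bu)$. Once part~(ii) shows that the two optimal values coincide and are finite, a feasible pair attains this common value on both sides exactly when the duality gap vanishes at it, so the displayed subgradient condition characterizes the primal--dual optimal pairs.

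\textbf{Strong duality and attainment.} I would use a perturbation-function argument. Define $v:\R^K\to\Rinf$ by $v(\bz)\coloneqq\inf_{\bu}\Bracks{f(-\bu)+g(A\bu+\bz)}$; it is convex, being a partial infimum of the jointly convex map $(\bu,\bz)\mapsto f(-\bu)+g(A\bu+\bz)$, and $v(\zero)$ equals the primal optimum. A change of variables $\by=A\bu+\bz$ followed by $\bv=-\bu$ shows $v^*(\bprice)=f^*(A\trans\bprice)+g^*(\bprice)$, so the dual optimum equals $\sup_{\bprice}\Bracks{-v^*(\bprice)}=v^{**}(\zero)=(\cl v)(\zero)\le v(\zero)$. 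The primal constraint qualification (a $\bu_0$ with $A\bu_0\in\ri(\dom g)$; here $-\bu_0\in\dom f=\R^d$ automatically) forces $\zero\in\ri(\dom v)$, while the dual constraint qualification, fed into the same Fenchel--Young bound applied to $f(-\bu)+g(A\bu+\bz)$, shows $v>-\infty$ everywhere; hence $v$ is proper and subdifferentiable at $\zero$, giving $v(\zero)=(\cl v)(\zero)$ (no gap) and $\partial v(\zero)\ne\emptyset$ (a dual maximizer). Running the argument on the dual problem, whose perturbation function is governed by the dual constraint qualification, yields a primal minimizer. Equivalently, one can set $\phi(\bu)\coloneqq f(-\bu)$ and $h\coloneqq-g$ (concave), note $\dom\phi^*=-\dom f^*$ and that the concave conjugate of $h$ has domain $-\dom g^*$, and quote Rockafellar's Corollary~31.2.1 directly after checking that its two constraint qualifications become exactly the hypotheses assumed here.

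\textbf{Main obstacle.} The only real work is the relative-interior bookkeeping in part~(ii): showing that the primal constraint qualification forces $\zero\in\ri(\dom v)$ (which is what upgrades ``$\partial v(\zero)\ne\emptyset$'' into ``no gap plus dual attainment''), verifying that $v$ is proper, and arranging the dual perturbation function so that the symmetric argument delivers primal attainment. Given a clean off-the-shelf statement such as Rockafellar's Corollary~31.2.1 these collapse to a sign-matching exercise, which is how I would ultimately present the result.
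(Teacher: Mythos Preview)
Your proposal is correct and ultimately lands in the same place as the paper: the paper's entire proof is a one-line citation to Corollary~31.2.1 and Theorem~31.3 of Rockafellar~\citep{Rockafellar70}, which is exactly the off-the-shelf route you describe at the end. Your sketch supplies the perturbation-function and Fenchel--Young details that the paper omits, and your derivation of the optimality conditions via tightness in Fenchel--Young is essentially what Rockafellar's Theorem~31.3 packages; so the two approaches coincide, with yours being a fleshed-out version of the same citation.
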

\begin{proof}
The results follows from Corollary 31.2.1 and Theorem 31.3 of Rockafellar \citep{Rockafellar70}.
\end{proof}

\subsection{\Convplus Functions}
\label{app:convplus}

Throughout the paper we work with functions that satisfy additional assumptions beyond convexity. We refer to them as
\emph{\convplus functions}. They have a close relationship to functions of \emph{Legendre type} (see Sec.~26 of \citet{Rockafellar70}).
Before we define \convplus functions,
we introduce the \emph{gradient space}, which plays a role in their structure.

\begin{definition}
\label{def:grad:space}
Let $f:\R^d\to\Rinf$ be differentiable on the interior of its domain $D\coloneqq\interior\dom f$.
Its \emph{gradient space}, denoted $\cG(f)$, is the linear space parallel to the affine hull
of the set of its gradients,
\[
  \cG(f)\coloneqq\Span\set{\nabla f(\bu)-\nabla f(\bu'):\:\bu,\bu'\in D}
\enspace.
\]
\end{definition}

\begin{definition}
\label{def:convplus}
Let $f:\R^d\to\Rinf$ and $D\coloneqq\interior\dom f$.
We say that $f$ is \emph{\convplus} if it satisfies the following conditions:
\begin{enumerate}[noitemsep]
\item $f$ is closed and convex.
\item $D$ is non-empty.
\item $f$ has continuous third derivatives on $D$.
\item $\range(\nabla^2 f(\bu))=\cG(f)$ for all $\bu\in D$.
\item $\lim_{t\to\infty}\norm{\nabla f(\bu_t)}=\infty$ whenever $\bu_1,\bu_2,\dotsc$ is
a sequence in $D$ converging to a boundary point of $D$.
\end{enumerate}
\end{definition}

\begin{proposition}
\label{prop:convplus}
Let $f:\R^d\to\Rinf$ be \convplus, $D\coloneqq\interior\dom f$. Let $P$ be the projection on $\cG(f)$ and $\ba$ the unique point in $\cA\cap\cG(f)^\perp$, where $\cA$ is the affine hull
of the set of gradients of $f$.
Then the following statements hold:
\begin{enumerate}[noitemsep]
\item $f(\bu)=f(P\bu)+\ba\trans\bu$.
\item $\nabla f(\bu)=\nabla f(P\bu)$.
\item $\nabla^2 f(\bu)=\nabla^2 f(P\bu)$.
\item $f$ is strictly convex on $D\cap\cG(f)$.
\item $\nabla f$ is one-to-one on $D\cap\cG(f)$.
\end{enumerate}
\end{proposition}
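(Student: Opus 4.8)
The plan is to first pin down the structure forced by the \convplus conditions, then show that $D:=\interior\dom f$ is invariant under translations parallel to $\cG(f)^\perp$, and finally read off the five parts. Write $\cG:=\cG(f)$ and let $P$ be the orthogonal projection on $\cG$. I would start from the elementary observation that the affine hull $\cA$ of $\set{\nabla f(\bu):\bu\in D}$ is parallel to $\cG$, hence meets the complementary subspace $\cG^\perp$ in exactly one point; since $I-P$ annihilates $\cG$, that point is $(I-P)\nabla f(\bu_0)$ for \emph{any} $\bu_0\in D$, which identifies $\ba$ and shows that $\nabla f(\bu)-\ba\in\cG$, equivalently $P\nabla f(\bu)=\nabla f(\bu)-\ba$, for every $\bu\in D$.

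\textbf{The crux (domain invariance).} I would next prove $D+\cG^\perp=D$; I expect this to be the main obstacle. Fix $\bu_0\in D$ and $\bv\in\cG^\perp$, and let $I$ be the (open) interval of $t$ with $\bu_0+t\bv\in D$. On $I$ we have $\tfrac{d}{dt}\nabla f(\bu_0+t\bv)=\nabla^2 f(\bu_0+t\bv)\,\bv=\zero$, because condition~4 gives $\range\nabla^2 f(\cdot)=\cG$ and hence, by symmetry of the Hessian, $\nabla^2 f(\cdot)$ kills $\cG^\perp\ni\bv$; thus $\nabla f$ is \emph{constant} along the direction $\bv$ throughout $D$. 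If $\sup I$ were finite, the endpoint $\bu_0+(\sup I)\bv$ would be a boundary point of $D$ approached from within $D$, so condition~5 would force $\norm{\nabla f}$ to blow up along the approach, contradicting constancy; likewise $\inf I=-\infty$. Hence $\bu_0+\R\bv\subseteq D$, and since $\bu_0,\bv$ were arbitrary, $D+\cG^\perp=D$; consequently $\cl D$ is also $\cG^\perp$-invariant and $\ri\dom f=D$ (as $\dom f$ is full-dimensional). The delicate point is that it is condition~4, not merely the weaker fact $\nabla f(\cdot)\in\cA$, that makes the gradient \emph{exactly} constant along $\cG^\perp$, which is precisely what lets condition~5 apply.

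\textbf{Part~1.} Put $h:=f-\ba\trans(\cdot)$, a closed convex function with $\nabla h=P\nabla f\in\cG$ on $D$; integrating $\nabla h$ along segments (which remain in the convex set $D$) gives $h(\bu)=h(\bu')$ whenever $\bu,\bu'\in D$ and $\bu-\bu'\in\cG^\perp$, in particular $h(\bu)=h(P\bu)$ for all $\bu\in D$ (legitimate since $P\bu\in D$ by invariance). To extend this to $\dom f\subseteq\cl D$, I would fix $\bu'\in\cl D$ and $\bu_0\in D$: for every $s\in[0,1)$ the point $\bu_s:=(1-s)\bu_0+s\bu'$ lies in $D$, as does $P\bu_s$, and $h(\bu_s)=h(P\bu_s)$ by the invariance just established; letting $s\to1^-$ and using the standard fact that a closed convex function equals the limit of its values along the half-open segment from a relative-interior point (see Section~7 of \citet{Rockafellar70}), the left side tends to $h(\bu')$ and the right side to $h(P\bu')$, so $h(\bu')=h(P\bu')$. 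For $\bu\notin\cl D$, invariance of $\cl D$ forces $P\bu\notin\cl D$, so both $f(\bu)$ and $f(P\bu)$ are $+\infty$. In all cases, rearranging $h(\bu)=h(P\bu)$ and using $\ba\trans P\bu=0$ yields $f(\bu)=f(P\bu)+\ba\trans\bu$.

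\textbf{Parts~2--5.} Parts~2 and~3 follow by differentiating part~1 on $D$: $\nabla f(\bu)=P\nabla f(P\bu)+\ba=\nabla f(P\bu)$ (using $P\bu\in D$ and $P\nabla f(P\bu)=\nabla f(P\bu)-\ba$), and differentiating once more, $\nabla^2 f(\bu)=P\,\nabla^2 f(P\bu)\,P=\nabla^2 f(P\bu)$, where the last equality holds because $\nabla^2 f(P\bu)$ has null space $\cG^\perp$ and hence absorbs $P$ on either side. For part~4, distinct points $\bu,\bu'\in D\cap\cG$ are joined by a segment contained in $D\cap\cG$ (both sets being convex), along which the one-dimensional restriction $\phi(\lambda):=f\bigParens{(1-\lambda)\bu+\lambda\bu'}$ has strictly positive second derivative $(\bu-\bu')\trans\nabla^2 f(\cdot)(\bu-\bu')>0$ at each point, since $\bu-\bu'\in\cG\wo\set{\zero}$ does not lie in $\Null\nabla^2 f(\cdot)=\cG^\perp$; hence $f$ is strictly convex on $D\cap\cG$. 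Finally, part~5 is the standard consequence that a differentiable function that is strictly convex on a convex set has an injective gradient there: if $\nabla f(\bu)=\nabla f(\bu')$ for distinct $\bu,\bu'\in D\cap\cG$, adding the two gradient inequalities $f(\bu')\ge f(\bu)+\nabla f(\bu)\trans(\bu'-\bu)$ and $f(\bu)\ge f(\bu')+\nabla f(\bu')\trans(\bu-\bu')$ forces equality in both, whence $f$ is affine on $[\bu,\bu']\subseteq D\cap\cG$, contradicting part~4.
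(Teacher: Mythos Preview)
Your proof is correct and follows essentially the same architecture as the paper's: establish that $D$ is invariant under translations in $\cG(f)^\perp$ by combining the Hessian-range condition with the boundary blow-up condition, then read off parts~1--5. The only minor difference is that you obtain constancy of $\nabla f$ along $\cG^\perp$ directly from $\nabla^2 f(\cdot)\bv=\zero$, whereas the paper first applies the Mean Value Theorem to $f$ and then differentiates in a neighborhood; you are also more explicit than the paper about extending part~1 from $D$ to all of $\R^d$ via closedness.
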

\begin{proof}
We prove the first statement by the Mean Value Theorem. First, since $f$ is differentiable on an open convex $D$, it
must be actually continuously differentiable on $D$ (by Corollary 25.5.1 of \citet{Rockafellar70}),
so the Mean Value Theorem can be applied on $D$. Let $\bu\in D$ and $\bv\in\cG(f)^\perp$. Then for any $\bu'=\bu+t\bv\in D$,
we have, for some $\bar{\bu}$ on the line segment connecting $\bu$ and $\bu'$,
\begin{equation}
\label{eq:cplus1:D}
  f(\bu')=f(\bu)+[\nabla f(\bar{\bu})]\trans t\bv=f(\bu)+t\ba\trans\bv
\end{equation}
where the second equality follows because $\bv\perp\cG(f)$. We argue that the entire line $\set{\bu+t\bv:t\in\R}$ must be contained in $D$. For contradiction, assume
it intersects the boundary of $D$ at $\bu^\star=\bu+t^\star\bv$, and say $t^\star>0$. Consider an increasing sequence $0=t_1,t_2,\dotsc$ converging to $t^\star$.
\Eq{cplus1:D} holds for $\bu'$ replaced by $\bu_i=\bu+t_i\bv$ as well as points $\tilde{\bu}_i=\tilde{\bu}+t_i\bv$,
where $\tilde{\bu}$ is in
a small enough neighborhood of $\bu$ along directions in $\cG(f)$. This means that
$\nabla f(\bu_i)=P\nabla f(\bu)+\ba=\nabla f(\bu)$. However, this is not possible for \convplus functions, because the norms of their gradients go to $\infty$ towards
the boundary. Similar argument holds for $t^\star<0$. This means that the entire line $\set{\bu+t\bv:t\in\R}$ must be in $D$. This holds for arbitrary $\bu\in D$,
so $D$ can be written as $D=D_0+\cG(f)^\perp$ where $D_0\subseteq\cG(f)$. \Eq{cplus1:D} now implies that statement (1) holds over $D$. Since
$f$ is closed, the statement also holds over $\dom f$, which then necessarily has form $\dom f=S_0+\cG(f)^\perp$ where $S_0\subseteq\cG(f)$. Therefore, statement (1) also
holds for $\bu\not\in\dom f$.

The remaining statements are more straightforward.
To prove the second statement, note that for any $\bu'\in D$, its gradient can be decomposed as $\nabla f(\bu')=P\nabla f(\bu')+\ba$. Thus,
$\nabla f(\bu)= P\nabla f(P\bu)+\ba=\nabla f(P\bu)$.
For the third statement, we have $\nabla^2 f(\bu)=P[\nabla^2 f(P\bu)]P=\nabla^2 f(P\bu)$, because
$\range(\nabla^2 f(P\bu))=\cG(f)$.
The fourth statement is equivalent to the fifth statement and they follow because $\range(\nabla^2 f(P\bu))=\cG(f)$.
\end{proof}

This proposition immediately implies that the Hessian of $f$ can be expressed as a function of the gradient of $f$. We will denote such
a function $H_f$:

\begin{proposition}
\label{prop:H:exists}
Let $f:\R^d\to\Rinf$ be \convplus and $D\coloneqq\interior\dom f$. Let $D'\coloneqq\set{\nabla f(\bu):\:\bu\in D}$ be the set of its gradients. Then there exists
a map $H_f:D'\to\R^{d\times d}$ such that $\nabla^2 f(\bu)=H_f(\nabla f(\bu))$ for all $\bu\in D$.
\end{proposition}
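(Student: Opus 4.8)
The plan is to read the result off from the structural facts about \convplus functions collected in \Prop{convplus}. The point is that, although $\nabla f$ is highly non-injective on $D$ (it is constant along every direction in $\cG(f)^\perp$), its restriction to $D\cap\cG(f)$ is a bijection onto $D'$, while both $\nabla f$ and $\nabla^2 f$ depend on $\bu$ only through $P\bu$, the projection of $\bu$ onto $\cG(f)$. So the natural definition is $H_f(\bprice)\coloneqq\nabla^2 f(\bu)$ for an arbitrary $\bu\in D$ with $\nabla f(\bu)=\bprice$, and the only thing to verify is that this does not depend on the choice of $\bu$.

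First I would note that in the proof of \Prop{convplus} it is established that $D=D_0+\cG(f)^\perp$ for some $D_0\subseteq\cG(f)$; hence $P$ maps $D$ into $D_0\subseteq D$, so $P\bu\in D\cap\cG(f)$ whenever $\bu\in D$. (If one prefers not to reuse the internals of that proof, the same conclusion follows from the identity $f(\bu)=f(P\bu)+\ba\trans\bu$ of \Prop{convplus}(1) together with the openness and convexity of $D$: the coset $\bu+\cG(f)^\perp$ lies in $\dom f$ whenever it meets $D$, and a small-ball argument then places $P\bu$ in $\interior\dom f=D$.) Next, given $\bprice\in D'$, take any $\bu,\bu'\in D$ with $\nabla f(\bu)=\nabla f(\bu')=\bprice$. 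By \Prop{convplus}(2), $\nabla f(P\bu)=\nabla f(\bu)=\bprice=\nabla f(\bu')=\nabla f(P\bu')$; since $P\bu,P\bu'\in D\cap\cG(f)$ and $\nabla f$ is one-to-one there by \Prop{convplus}(5), we conclude $P\bu=P\bu'$. Then \Prop{convplus}(3) yields $\nabla^2 f(\bu)=\nabla^2 f(P\bu)=\nabla^2 f(P\bu')=\nabla^2 f(\bu')$, so $H_f$ is well defined on $D'$, and $\nabla^2 f(\bu)=H_f(\nabla f(\bu))$ holds for all $\bu\in D$ by construction.

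There is no substantive obstacle here; the only delicate point is ensuring that $P\bu$ stays inside $D$, so that \Prop{convplus}(3) and \Prop{convplus}(5) may legitimately be applied at $P\bu$, and this is exactly what the product structure of $D$ from the proof of \Prop{convplus} provides.
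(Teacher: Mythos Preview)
Your proof is correct and follows essentially the same route as the paper: both use \Prop{convplus}(2), (3), and (5) to show that $\nabla^2 f(\bu)$ depends on $\bu$ only through $\nabla f(\bu)$, with the paper defining $H_f$ via the explicit inverse $\bh:D'\to D_0$ and you instead checking well-definedness directly. You are in fact slightly more careful than the paper in explicitly verifying that $P\bu\in D$ so that \Prop{convplus}(3) and (5) apply at $P\bu$.
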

\begin{proof}
Let $D_0=D\cap\cG(f)$. \Prop{convplus} implies that $\nabla f$ is a bijection from $D_0$ to $D'$. Denoting its inverse from $D'$ to $D_0$ as $\bh$, we can then
define the map $H_f$ via $H_f(\bprice)=\nabla^2 f(\bh(\bprice))$. Now, for any $\bu\in D$, we have
\begin{equation}
\tag*{\qedhere}
   \nabla^2 f(\bu)=\nabla^2 f(P\bu)=\nabla^2 f(\bh(\nabla f(P\bu)))=H_f(\nabla f(P\bu))=H_f(\nabla f(\bu))
\enspace.
\end{equation}
\end{proof}

\begin{proposition}
\label{prop:convplus:conj}
Let $f:\R^d\to\Rinf$ be a \convplus function and let $\cA=\aff\dom f^*$. Then there exists a \convplus function $g:\R^d\to\Rinf$ such that
the following hold:
\begin{enumerate}[noitemsep]
\item $g$ agrees with $f^*$ on $\cA$.
\item $\cG(g)$ is parallel to $\cA$, or equivalently $\cG(g)=\cG(f)$.
\item For $\bprice\in\ri\dom f^*$: $\partial f^*(\bprice)=\nabla g(\bprice)+\cG(f)^\perp$.
\item For $\bprice\in\ri\dom f^*$: $\nabla^2 g(\bprice)=H_f^+(\bprice)$.
\end{enumerate}
\end{proposition}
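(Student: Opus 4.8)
The plan is to build $g$ explicitly as an \emph{affine straightening} of $f^*$ along the affine hull of its domain. Write $\cG\coloneqq\cG(f)$, let $P$ be the orthogonal projection onto $\cG$, and let $\ba\in\cG^\perp$ be the vector supplied by \Prop{convplus}, so that $f(\bu)=f(P\bu)+\ba\trans\bu$ and $\nabla f(\bu)=\nabla f(P\bu)$ for all $\bu$. Let $f|_\cG$ denote the restriction of $f$ to $\cG$, viewed as a function on the Euclidean space $\cG$, and let $h$ be its conjugate, again computed within $\cG$. I would define $g(\bprice)\coloneqq f^*(P\bprice+\ba)$, which one checks equals $h(P\bprice)$; the composition form $g=h\circ P$ is the more convenient one. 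The first task is to locate $\dom f^*$: substituting $f(\bu)=f(P\bu)+\ba\trans\bu$ into the definition of $f^*(\bprice)$ and splitting $\bu$ into its $\cG$- and $\cG^\perp$-components, the supremum factors, and the $\cG^\perp$-part equals $0$ when $(I-P)\bprice=\ba$ and $+\infty$ otherwise; hence $\dom f^*\subseteq\ba+\cG$ and $f^*(\bprice)=h(P\bprice)$ on $\ba+\cG$. Since $f|_\cG$ is \convplus on $\cG$ (\Prop{convplus} gives strict convexity and essential smoothness, and \Def{convplus} supplies $C^3$-smoothness with nonsingular Hessian on the interior of its domain), $\dom h$ has nonempty interior in $\cG$, so $\cA\coloneqq\aff\dom f^*=\ba+\cG$ is parallel to $\cG$. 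For $\bprice\in\cA$ we have $P\bprice+\ba=\bprice$, so $g$ agrees with $f^*$ on $\cA$, which is property (1).

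Next I would invoke the theory of functions of Legendre type (Sec.~26 of \citet{Rockafellar70}): as $f|_\cG$ is of Legendre type, so is $h$, and $\nabla h$ is the inverse of $\nabla(f|_\cG)$ as a bijection between the interiors of their domains. Differentiating the identity $\nabla h\bigParens{\nabla(f|_\cG)(\bu_G)}=\bu_G$ and using that $\nabla(f|_\cG)$ is $C^2$ with invertible derivative $\nabla^2(f|_\cG)$ shows $h$ is $C^3$ on the interior of its domain with $\nabla^2 h=\bigParens{\nabla^2(f|_\cG)}^{-1}$. Pulling this back through $P$: $g=h\circ P$ is closed and convex, its domain is $(\dom h)+\cG^\perp$ with nonempty interior, $g$ is $C^3$ there, $\nabla g(\bprice)=\nabla h(P\bprice)\in\cG$, and $\nabla^2 g(\bprice)=P\,\nabla^2 h(P\bprice)\,P$ has range exactly $\cG$. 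Essential smoothness of $h$ forces $\norm{\nabla h(P\bprice_t)}\to\infty$ as $\bprice_t$ tends to the boundary of $\dom g$, hence the same for $\norm{\nabla g(\bprice_t)}$. This checks every condition in \Def{convplus}, so $g$ is \convplus, and $\cG(g)=\Span\set{\nabla g(\bprice)-\nabla g(\bprice'):\bprice,\bprice'}=\cG=\cG(f)$, which together with $\cA=\ba+\cG$ gives property (2).

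For property (3), fix $\bprice\in\ri\dom f^*$, equivalently $P\bprice$ in the interior of $\dom h$ within $\cG$. Since $f$ is differentiable, conjugacy gives $\bu\in\partial f^*(\bprice)$ iff $\nabla f(\bu)=\bprice$; writing $\nabla f(\bu)=\ba+\nabla(f|_\cG)(P\bu)$, this forces $(I-P)\bprice=\ba$ (already known) together with $\nabla(f|_\cG)(P\bu)=P\bprice$, i.e.\ $P\bu=\nabla h(P\bprice)$, while $(I-P)\bu$ ranges over all of $\cG^\perp$. Hence $\partial f^*(\bprice)=\nabla h(P\bprice)+\cG^\perp=\nabla g(\bprice)+\cG(f)^\perp$. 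For property (4), set $\bu_G\coloneqq\nabla h(P\bprice)$, so that $\nabla f(\bu_G)=\bprice$ and $\nabla^2 g(\bprice)=P\bigParens{\nabla^2(f|_\cG)(\bu_G)}^{-1}P$. Because $\range\bigParens{\nabla^2 f(\bu_G)}=\cG$ by \Def{convplus}, the Hessian $\nabla^2 f(\bu_G)$ is invertible on $\cG$ and vanishes on $\cG^\perp$, so $P\bigParens{\nabla^2(f|_\cG)(\bu_G)}^{-1}P$ is precisely the pseudoinverse $\bigParens{\nabla^2 f(\bu_G)}^+$. Finally, $\nabla^2 f(\bu_G)=H_f\bigParens{\nabla f(\bu_G)}=H_f(\bprice)$ by \Prop{H:exists}, so $\nabla^2 g(\bprice)=H_f^+(\bprice)$, which is property (4).

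The main obstacle is the second step: one must carefully justify that the conjugate $h$ of the restricted function $f|_\cG$ is itself \convplus on $\cG$ --- in particular that it is $C^3$ with Hessian equal to the inverse of the Hessian of $f|_\cG$ --- which is where the Legendre-type machinery and the inverse function theorem come in, and one must keep the bookkeeping of the projection $P$ and the affine shift by $\ba$ consistent so that identities proved inside $\cG$ transfer correctly to statements in $\R^d$. The remaining steps are routine manipulations with convex conjugacy together with the structural facts already recorded in \Prop{convplus} and \Prop{H:exists}.
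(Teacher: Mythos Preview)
Your proposal is correct and follows essentially the same route as the paper. The paper chooses an orthonormal basis $A\in\R^{d\times d_0}$ of $\cG(f)$, sets $f_0(\bv)\coloneqq f(A\bv)$, shows $f_0$ is of Legendre type, takes its conjugate $f_0^*$, and defines $g(\bprice)\coloneqq f_0^*(A\trans\bprice)$; your $f|_\cG$ and $h$ are exactly $f_0$ and $f_0^*$ transported through the isometry $A$, and your $g=h\circ P$ coincides with the paper's $g$. Both proofs hinge on the same two ingredients you flagged: Rockafellar's Theorem~26.5 (the conjugate of a Legendre-type function is Legendre-type) and the inverse function theorem to obtain $C^3$-smoothness and the Hessian identity $\nabla^2 h=(\nabla^2(f|_\cG))^{-1}$. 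The remaining verifications of properties (1)--(4) match step for step; the paper just carries the explicit $A$, $A\trans$ bookkeeping where you use $P$.
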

\begin{proof}
We begin by representing $f$ via a function of Legendre type and then rely on the properties of such functions to obtain $g$.
In particular, we note that by \Prop{convplus}, the function $f$ is defined by its values on $\cG(f)$, except for a linear term,
and so we construct a function $f_0$ that is a transformation of $f$ on $\cG(f)$ and is of Legendre type.

To start, let $D=\interior\dom f$, and $D_0=D\cap\cG(f)$.
Assume that $\cG(f)$ is $d_0$ dimensional $d_0\le d$, and let $A\in\R^{d\times d_0}$ be a matrix whose columns form
an orthonormal basis of $\cG(f)$. Then $A\trans A=I_{d_0}$, i.e., $A\trans$ is the left inverse of $A$. The matrix
$A$ is an injective linear map from $\R^{d_0}\to\R^d$, but it is also a bijection from $\R^{d_0}$ to $\cG(f)$.
We also have $AA\trans=P$,
where $P$ is the projection on $\cG(f)$.
Define the function $f_0:\R^{d_0}\to\Rinf$ as
\[
  f_0(\bv)\coloneqq f(A\bv)
\enspace.
\]
Let $S=\interior\dom f_0$, so $S=A\trans D_0$.
We have
\[
  \nabla f_0(\bv)=A\trans\nabla f(A\bv),
\;\;
  \nabla^2 f_0(\bv)=A\trans\nabla^2 f(A\bv)A,
\;\;
  \nabla^3 f_0(\bv)[\cdot,\cdot,\cdot]=\nabla^3 f(A\bv)[A(\cdot),A(\cdot),A(\cdot)],
\]
so in particular $f_0$ has continuous third derivatives over $S$.
We next argue that $f_0$ is of \emph{Legendre type} in the sense of Rockafellar \citep{Rockafellar70}, page 258. For that we need to check that it satisfies the following conditions:
\begin{itemize}
\item[(a)]\emph{$S$ is non-empty.}\\
This follows, because $D_0$ is non-empty and $D_0\subseteq\cG(f)$. Now, $A\trans$, as a linear map, is a bijection from $\cG(f)$ to $\R^{d_0}$,
so the set $S=A\trans D_0$ is also non-empty.

\item[(b)]\emph{$f_0$ is differentiable throughout $S$.}\\
Similarly to the previous property, this holds, because $f$ is differentiable throughout $D_0$.

\item[(c)]\emph{$\lim_{t\to\infty}\norm{\nabla f_0(\bv_t)}=\infty$ whenever $\bv_1,\bv_2,\dotsc$ is
a sequence in $S$ converging to a boundary point of $S$.}\\
If $\bu_1,\bu_2,\dotsc$ is any sequence in $D_0$ converging to the relative boundary of $D_0$, then
this point is on the boundary of $D$ and therefore $\norm{\nabla f(\bu_t)}\to\infty$, because $f$ is \convplus. Now, suppose
we are given a sequence $\bv_1,\bv_2,\dotsc$ in $S$ converging to a boundary point of $S$. Then $\bu_t=A\bv_t$ is exactly a sequence in $D_0$ converging to the relative boundary of $D_0$,
so $\norm{\nabla f(A\bv_t)}\to\infty$. Since $\nabla f(A\bv_t)\in\cG(f)$ and the row space of $A\trans$ coincides with $\cG(f)$, we also have
$\norm{\nabla f_0(\bv_t)}=\norm{A\trans\nabla f(A\bv_t)}\to\infty$.

\item[(d)]\emph{$f_0$ is strictly convex on $S$.}\\
Since $f_0$ has a continuous Hessian on $S$, it suffices to show that its Hessian is full rank, i.e.,
its range is $\R^{d_0}$. For any $\bv\in S$, we have $\range(\nabla^2 f_0(\bv))=\range(A\trans\nabla^2 f(A\bv)A)$, which must be $\R^{d_0}$, because $\range(\nabla^2 f(A\bv))=\cG(f)$
and $A$ is a bijection from $\R^{d_0}$ to $\cG(f)$.
\end{itemize}

We now express $f$ in terms of $f_0$.
Note that by conjugacy, the affine hull of gradients of $f$ coincides with $\cA=\aff\dom f^*$.
By \Prop{convplus}, the function $f$ is defined by its values on $\cG(f)$, except for the linear term, described by the unique $\ba\in\cA\cap\cG^\perp$. We defined $f_0$ to
exactly represent $f$ on $\cG(f)$, so for any $\bu'\in\cG(f)$, we have $f_0(A\trans\bu')=f(AA\trans\bu')=f(\bu')$, because $AA\trans=P$. Thus, for any $\bu\in\R^d$ we have,
by \Prop{convplus},
\begin{equation}
\label{eq:f:as:f0}
  f(\bu)=f(P\bu)+\ba\trans\bu=f_0(A\trans P\bu)+\ba\trans\bu=f_0(A\trans\bu)+\ba\trans\bu
\enspace,
\end{equation}
because the row space of $A\trans$ coincides with $\cG(f)$.
This relationship between $f$ and $f_0$ implies the following relationship for their conjugates (by Theorems~12.3 and~16.3 of \citet{Rockafellar70}):
\begin{align}
  f^*(\bprice)
  &=\inf_{\by\in\R^d:\:A\by=\bprice-\ba} f^*_0(\by)
\enspace,
\end{align}
where the infimum of an empty set is $\infty$. The linear map $A$ is injective and $\range(A)=\cG(f)$, so the linear system
$A\bv=\bprice-\ba$ has a single solution $\bv=A\trans(\bprice-\ba)$ when $(\bprice-\ba)\in\cG(f)$, and
no solutions when $(\bprice-\ba)\not\in\cG(f)$. Therefore,
\[
  f^*(\bprice)=
\begin{cases}
  f^*_0\bigParens{A\trans(\bprice-\ba)}
&\text{if $\bprice\in\cA$,}
\\
  \infty
&\text{if $\bprice\not\in\cA$.}
\end{cases}
\]

Since $f_0$ is of Legendre type, so is its conjugate $f_0^*$ (see Theorem~26.5 of \citet{Rockafellar70}), which means that $f_0^*$ with $S^*\coloneqq\interior\dom f_0^*$
satisfies the properties (a)--(d).
The function $g$ is constructed as follows:
\[
  g(\bprice)\coloneqq f^*_0\bigParens{A\trans(\bprice-\ba)}=f^*_0(A\trans\bprice)
\enspace,
\]
where the equality follows because $\ba\in\cG(f)^\perp$.
This differs from the expression for $f^*$ in that it does not equal to $\infty$ outside $\cA$. Before we argue that $g$ is \convplus, we analyze
the gradient, Hessian and third derivatives of $f^*_0$. From the properties of the conjugates, we know that $\nabla f^*_0$ is the inverse of $\nabla f_0$, so for all $\by\in S^*$,
\[
  \nabla f^*_0(\by)=[\nabla f_0]^{-1}(\by)
\enspace.
\]
Since $\nabla f_0$ is continuously differentiable and its derivative (i.e., the Hessian of $f_0$) is an invertible matrix, the inverse of $\nabla f_0$ is also
continuously differentiable and its derivative (i.e., the Hessian of $f_0^*$) is
\begin{equation}
\label{eq:f0:inv}
  \nabla^2 f^*_0(\by)
  = \bigBracks{\nabla^2 f_0\bigParens{[\nabla f_0]^{-1}(\by)}}^{-1}
  = \bigBracks{\nabla^2 f_0\bigParens{\nabla f_0^*(\by)}}^{-1}
\enspace.
\end{equation}
In particular note that $\nabla^2 f^*_0(\by)$ is also an invertible matrix. We next
show that $f_0^*$ has a continuous third derivative,
by using the chain rule to argue that $\nabla^2 f^*_0$ is
continuously differentiable. This follows, because $\nabla^2 f^*_0$ is the composition of (i) the matrix inversion $[\cdot]^{-1}$,
(ii) the Hessian map $\nabla^2 f_0$, and (iii) the conjugate gradient $\nabla f_0^*$, and all of them are continuously differentiable
at points where the respective derivatives are taken; specifically, the matrix inversion is taken at an invertible
matrix $\nabla^2 f_0\bigParens{\nabla f_0^*(\by)}$, the Hessian at $\nabla f_0^*(\by)\in S$, and the conjugate gradient at
$\by\in S^*$.

Now we can show that $g$ is \convplus. From the definition of $g$,
we have
\begin{equation}
\label{eq:g:1:2}
  \nabla g(\bprice)=A\nabla f^*_0(A\trans\bprice),
\quad
  \nabla^2 g(\bprice)=A\nabla f^*_0(A\trans\bprice)A\trans,
\end{equation}
and
\begin{equation}
\label{eq:g:3}
  \nabla^3 g(\bprice)[\cdot,\cdot,\cdot]=\nabla^3 f^*_0(A\trans\bprice)[A\trans(\cdot),A\trans(\cdot),A\trans(\cdot)].
\end{equation}
Let $D^*\coloneqq\interior\dom g$. From the definition of $g$, we have $D^*=AS^*+\cG(f)^\perp$. Note that $\cG(f_0^*)=\R^{n_0}$, because $\dom f_0$ has a non-empty
interior. Therefore, by \Eq{g:1:2}, $\cG(g)=\range(A)=\cG(f)$. We next verify that $g$ satisfies properties (1)--(5) of \convexityplus, relying on
the properties (a)--(d) satisfied by $f_0^*$ and $S^*$:
\begin{enumerate}[noitemsep]
\item \emph{$g$ is closed and convex.}\\
This follows, because $f_0^*$ is closed and convex.
\item \emph{$D^*$ is non-empty.}\\
This follows, since $f_0^*$ satisfies (a), so $S^*$ is non-empty, and so is $D^*$.
\item \emph{$g$ has continuous third derivatives on $D^*$.}\\
This follows by \Eq{g:3}, because $f_0^*$ has continuous third derivatives on $S^*$.
\item \emph{$\range(\nabla^2 g(\bprice))=\cG(g)$ for all $\bprice\in D^*$.}\\
Since the Hessians of $f_0^*$ are full rank, \Eq{g:1:2} implies
that $\range(\nabla^2 g(\bprice))=\range(A)=\cG(g)$.
\item \emph{$\lim_{t\to\infty}\norm{\nabla g(\bprice_t)}=\infty$ whenever $\bprice_1,\bprice_2,\dotsc$ is
a sequence in $D^*$ converging to a boundary point of $D^*$.}\\
If $\bprice_t$ converges to a point on the border of $D^*=AS^*+\cG(f)^\perp$, then the sequence of points
$\by_t=A\trans\bprice_t$ converges to the border of $S^*$. Since $f_0^*$ satisfies property~(c), this means that
$\norm{\nabla f_0^*(\by_t)}\to\infty$. And since $A$ is injective, also
$\norm{\nabla g(\bprice_t)}=\norm{A\nabla f^*_0(\by_t)}\to\infty$.
\end{enumerate}

We now prove that $g$ has the properties stated in the theorem:
\begin{enumerate}[noitemsep]
\item \emph{$g$ agrees with $f^*$ on $\cA$.}\\
Immediate from the definition of $g$.
\item \emph{$\cG(g)$ is parallel to $\cA$, or equivalently $\cG(g)=\cG(f)$.}\\
As we already argued, \Eq{g:1:2} and the fact that $\cG(f_0^*)=\R^{n_0}$ imply that $\cG(g)=\range(A)=\cG(f)$.
\item \emph{For $\bprice\in\ri\dom f^*$: $\partial f^*(\bprice)=\nabla g(\bprice)+\cG(f)^\perp$.}\\
Note that $\bprice\in\cA$, so $(\bprice-\ba)\in\cG(f)$. The statement follows by the following chain of equivalences
\begin{align*}
\bu\in\partial f^*(\bprice)
&\text{\ \ iff\ \ }
   \nabla f(\bu)=\bprice
\\
&\text{\ \ iff\ \ }
   A\nabla f_0(A\trans\bu)+\ba=\bprice
\tag*{\text{(by Eq.~\ref{eq:f:as:f0})}}
\\
&\text{\ \ iff\ \ }
   \nabla f_0(A\trans\bu)=A\trans(\bprice-\ba)=A\trans\bprice
\tag*{\text{(because $(\bprice-\ba)\in\cG(f)=\range(A)$)}}
\\
&\text{\ \ iff\ \ }
   \nabla f_0^*(A\trans\bprice)=A\trans\bu
\\
&\text{\ \ iff\ \ }
   A\nabla f_0^*(A\trans\bprice)=AA\trans\bu=P\bu
\tag*{\text{(because $A$ is injective)}}
\\
&\text{\ \ iff\ \ }
   \nabla g(\bprice)=P\bu
\tag*{\text{(by Eq.~\ref{eq:g:1:2})}}
\\
&\text{\ \ iff\ \ }
   \bu\in\nabla g(\bprice)+\cG(f)^\perp
\tag*{\text{(because $\nabla g(\bprice)\in\range(A)=\cG(f)$)}}
\end{align*}
\item \emph{For $\bprice\in\ri\dom f^*$: $\nabla^2 g(\bprice)=H_f^+(\bprice)$.}\\
From \Eqs{g:1:2}{f0:inv}, we have
\begin{align*}
   \nabla^2 g(\bprice)
&= A[\nabla^2 f_0^*(A\trans\bprice)]A\trans
\\
&= A\bigBracks{\nabla^2 f_0\bigParens{\nabla f_0^*(A\trans\bprice)}}^{-1}A\trans
\enspace.
\end{align*}
By \Prop{H:exists}, $H_f(\bprice)=\nabla^2 f(\bu)$ for any $\bu$ such that $\nabla f(\bu)=\bprice$,
which is equivalent to $\bu\in\partial f^*(\bprice)$. Above, we have shown that $\nabla g(\bprice)\in\partial f^*(\bprice)$,
so $H_f(\bprice)=\nabla^2 f(\nabla g(\bprice))$. We continue the derivation of $H_f(\bprice)$ using
\Eqs{f:as:f0}{g:1:2}:
\begin{align*}
    H_f(\bprice)
&= \nabla^2 f(\nabla g(\bprice))
\\
&= A\bigBracks{\nabla^2 f_0\bigParens{A\trans\nabla g(\bprice)}}A\trans
\\
&= A\bigBracks{\nabla^2 f_0\bigParens{A\trans A\nabla f_0^*(A\trans\bprice)}}A\trans
\\
&= A\bigBracks{\nabla^2 f_0\bigParens{\nabla f_0^*(A\trans\bprice)}}A\trans
\enspace,
\end{align*}
where the last equation follows, because $A\trans A=I_{d_0}$. Since $\cG(g)=\cG(f)$,
the ranges of $\nabla^2 g(\bprice)$ and $H_f(\bprice)$ coincide with $\cG(f)$. From
the above derivations of $\nabla^2 g(\bprice)$ and $H_f(\bprice)$, we also have
\[
  [\nabla^2 g(\bprice)] H_f(\bprice) = H_f(\bprice)[\nabla^2 g(\bprice)] = AA\trans = P
\enspace,
\]
so indeed $\nabla^2 g(\bprice)=H_f^+(\bprice)$.\hfill\qedhere
\end{enumerate}
\end{proof}

Thanks to the continuity of third derivatives of $f$ and the continuity of second derivatives of $g$ from \Prop{convplus:conj}, we can
easily prove a local Lipschitz property for $H_f$:
\begin{proposition}
\label{prop:H:Lipschitz}
Let $f:\R^d\to\Rinf$ be \convplus and $D'$ be the set of its gradients (necessarily open within $\aff D'$).
Let $\bprice\in D'$ and let $B$ be a closed ball (in $\aff D'$) centered at $\bprice$ and fully contained in $D'$.
Then there exists a constant $c$ such that for all $\bprice'\in B$
\[
  H_f(\bprice')\within\bigParens{1\pm c\norm{\bprice'-\bprice}}H_f(\bprice)
\enspace.
\]
\end{proposition}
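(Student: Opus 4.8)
The plan is to prove that $H_f$ is locally Lipschitz on $D'$ (viewed inside $\aff D'$ with the induced metric), and then to upgrade the resulting operator-norm estimate into the Loewner-order estimate $\within$ by using that $H_f(\bprice)$ is positive definite on the gradient space $\cG(f)$. First I would express $H_f$ as a composition of smooth maps. By the proof of \Prop{H:exists}, $H_f(\bprice'')=\nabla^2 f(\bh(\bprice''))$ for $\bprice''\in D'$, where $\bh$ is the inverse of $\nabla f$ restricted to $D\cap\cG(f)$ (with $D\coloneqq\interior\dom f$). By \Prop{convplus}, $\range(\nabla^2 f(\bu))=\cG(f)$ for all $\bu\in D$, so this restricted gradient map is a $C^2$ map whose derivative (as a map on the subspace $\cG(f)$) is everywhere invertible; by the inverse function theorem its inverse $\bh$ is $C^1$ on $D'$, so $H_f=\nabla^2 f\circ\bh$ is $C^1$ on $D'$. (Alternatively, $\bh=\nabla g$ for the conjugate-type function $g$ produced by \Prop{convplus:conj}, which is $C^2$ since $g$ has continuous third derivatives.)

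Next I would extract uniform Lipschitz constants on the compact ball $B$. Since $\bh$ is $C^1$ and $B$ is compact and convex, $L_1\coloneqq\max_{\bprice''\in B}\norm{D\bh(\bprice'')}<\infty$ and $\norm{\bh(\bprice')-\bh(\bprice)}\le L_1\norm{\bprice'-\bprice}$ for all $\bprice'\in B$. The image $\bh(B)$ is a compact subset of the open convex set $D$, so $K\coloneqq\conv(\bh(B))$ is compact and $K\subseteq D$; since $f$ has continuous third derivatives on $D$, $L_2\coloneqq\max_{\bu\in K}\norm{\nabla^3 f(\bu)}<\infty$ and $\norm{\nabla^2 f(\bu')-\nabla^2 f(\bu)}\le L_2\norm{\bu'-\bu}$ for all $\bu,\bu'\in K$. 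Composing these bounds gives, with $L\coloneqq L_1 L_2$,
\[
  \norm{H_f(\bprice')-H_f(\bprice)}\le L\norm{\bprice'-\bprice}\qquad\text{for all }\bprice'\in B.
\]

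Finally I would pass to the Loewner order. By \Prop{convplus}, $H_f(\bprice)$, $H_f(\bprice')$, and hence $M\coloneqq H_f(\bprice')-H_f(\bprice)$ are symmetric with range contained in $\cG(f)$ and kernel containing $\cG(f)^\perp$; letting $P$ be the orthogonal projection onto $\cG(f)$, we have $\bv\trans M\bv=(P\bv)\trans M(P\bv)$ for every $\bv$, so $-\norm{M}P\preceq M\preceq\norm{M}P$. Moreover $H_f(\bprice)$ is positive semidefinite with range exactly $\cG(f)$, so its smallest positive eigenvalue $\lambda\coloneqq\lambda_{\min}(H_f(\bprice))$ is strictly positive and $\lambda P\preceq H_f(\bprice)$, i.e.\ $P\preceq\lambda^{-1}H_f(\bprice)$ (both sides vanish on $\cG(f)^\perp$). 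Combining with the Lipschitz bound,
\[
  -\,\frac{L\norm{\bprice'-\bprice}}{\lambda}\,H_f(\bprice)\preceq M\preceq\frac{L\norm{\bprice'-\bprice}}{\lambda}\,H_f(\bprice),
\]
which is exactly $H_f(\bprice')\within\bigParens{1\pm c\norm{\bprice'-\bprice}}H_f(\bprice)$ with $c\coloneqq L/\lambda_{\min}(H_f(\bprice))$.

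There is no genuinely hard step here; the main points requiring care are (i) certifying that the inverse-gradient map inherits $C^1$-smoothness — handled by the inverse function theorem together with $\range(\nabla^2 f)=\cG(f)$, or directly by \Prop{convplus:conj} — and (ii) the bookkeeping needed to convert an operator-norm bound into a Loewner bound \emph{relative to $H_f(\bprice)$} rather than the identity, which forces the whole argument to be carried out inside the subspace $\cG(f)$ and relies on the strict positivity of $\lambda_{\min}(H_f(\bprice))$. Compactness of $B$, and of $\conv(\bh(B))$ inside the open set $D$, is what makes all of $L_1$, $L_2$, and hence $c$ uniform.
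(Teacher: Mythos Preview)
Your proof is correct and follows essentially the same route as the paper: express $H_f$ as $\nabla^2 f$ composed with the inverse gradient map (the paper uses $\nabla g$ from \Prop{convplus:conj}, which is your alternative), obtain a Lipschitz bound on $H_f$ over the compact ball $B$ via compactness and continuity of the relevant derivatives, and then convert the operator-norm bound into the Loewner estimate using $P\preceq\lambda_{\min}(H_f(\bprice))^{-1}H_f(\bprice)$. Your treatment is slightly more careful than the paper's in one spot—you pass to the convex hull $K=\conv(\bh(B))\subseteq D$ before applying the mean value theorem to $\nabla^2 f$, whereas the paper simply asserts Lipschitz continuity on the (not necessarily convex) compact image $\bh(B)$.
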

\begin{proof}
Let $g$ be the function from \Prop{convplus:conj}. Similarly as we argued in the proof of \Prop{convplus:conj},
we can write
\[
   H_f(\bprice)=\nabla^2 f(\nabla g(\bprice))
\enspace,
\]
because $\nabla g(\bprice)\in\partial f^*(\bprice)$, and thus
$\nabla f(\nabla g(\bprice))=\bprice$. Since, $g$ has continuous second derivatives and $f$ has
continuous third derivatives, the Mean Value Theorem implies that $\nabla g$ is Lipschitz continuous on any compact
subset of $D'$, and $\nabla^2 f$ is Lipschitz continuous on any compact subset of $\interior\dom f$. Since $B$
is compact, and so is its image under $\nabla g$ by continuity of $\nabla g$, we obtain
that both $\nabla g$ and $\nabla^2 f$ are Lipschitz on required sets, and so $H_f$ is also Lipschitz within $B$, with
some constant $L$, i.e.,
\[
  \norm{H_f(\bprice')-H_f(\bprice)}\le L\norm{\bprice'-\bprice}
\enspace.
\]
Since $\range(H_f(\bprice'))=\range(H_f(\bprice))=\cG(f)$, this implies that
\[
  H_f(\bprice')\within H_f(\bprice)\pm L\norm{\bprice'-\bprice}P
\enspace,
\]
where $P$ is the projection on $\cG(f)$. Since $\range(H_f(\bprice))=\cG(f)$, we have $P\preceq\sigma^{-1}H_f(\bprice)$
where $\sigma=\lambda_{\min}(H_f(\bprice))$ is the smallest positive eigenvalue of $H_f(\bprice)$. Thus, we have
\begin{equation}
\tag*{\qedhere}
  H_f(\bprice')\within \bigParens{1\pm L\sigma^{-1}\norm{\bprice'-\bprice}} H_f(\bprice)
\enspace.
\end{equation}
\end{proof}

\subsection{Lipschitz Gradients and Strong Convexity}

In addition to (or instead of) \convexityplus, some of our results require Lipschitz gradients or, dually, strong convexity.
To be precise, we say that a differentiable function $f:\R^d\to\R$ has a Lipschitz gradient
if there exists a constant $L$ such that $\norm{\nabla f(\bu)-\nabla f(\bv)}\le L\norm{\bu-\bv}$ for all $\bu,\bv\in\R^d$.
If $f$ is twice differentiable, it suffices to check that $\nabla^2 f(\bu)\preceq L I_d$ for all $\bu$,
where $I_d\in\R^{d\times d}$ is the identity.

We say that $f$ is strongly convex with the strong convexity constant $\sigma$
if
\[
  f(\bv)\ge f(\bu)+\bg\trans(\bv-\bu)+\frac12\sigma\norm{\bv-\bu}^2
\enspace,
\]
for all $\bv,\bu\in\R^d$ and $\bg\in\partial f(\bu)$. A standard convex analysis
result states that if $f:\R^d\to\R$ has a gradient with Lipschitz constant $L$ then $f^*$ is strongly convex with the strong convexity
constant $\sigma=1/L$ (see Prop.~12.60 of \citet{RockafellarWe09}). 
\section{Proofs and Additional Results for \Sec{exp}}
\label{app:exp}

\subsection{Proof of \Thm{agg:short}}

We prove a more explicit version of the theorem:
\begin{theorem}
\label{thm:agg:app}
Under the exponential trader model, $(\bar{\bbundle},\bar{\vec{c}}, \aggbprice)$ is a market-clearing equilibrium if and only if
\[
  \bar{\bbundle}\in\argmin_{\bbundle:\:\sum_{i=1}^N \bbundle_i=\zero} \sum_{i=1}^N F_i(-\bbundle_i)
 ,
\qquad
  \sum_{i=1}^N \bar{c}_i=0
 ,
\qquad
  \text{and}
\qquad
  \aggbprice=\nabla T(\bttheta_i-a_i\bar{\bbundle}_i) \ \ \forall i \in [N]
 .
\]
A market-clearing equilibrium always exists. Furthermore, for any market-clearing equilibrium, the equilibrium prices are unique solutions of the following dual problem:
\begin{equation}
\notag
  \aggbprice = \argmin_{\bprice\in\R^K} \BigBracks{\sum_i F_i^*(\bprice)}
\enspace.
\end{equation}
\end{theorem}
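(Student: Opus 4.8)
The plan is to identify the market-clearing conditions with the optimality conditions of a Fenchel dual pair and invoke \Thm{fenchel:duality}.

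\emph{Step 1 (reduce each trader's best response to a gradient equation).} Fix a candidate triple $(\barbbundle,\barbcash,\aggbprice)$ with $\sum_i\barbbundle_i=\zero$ and $\sum_i\barcash_i=0$. Using the closed form \Eq{expectedutility}, for each $i$ we have $U_i(\barbbundle_i+\vdelta,\,\barcash_i-\vdelta\cdot\aggbprice)=-\tfrac1{a_i}\exp\bigParens{\psi_i(\vdelta)-T(\bttheta_i)-a_i\barcash_i}$, where $\psi_i(\vdelta)\coloneqq T(\bttheta_i-a_i\barbbundle_i-a_i\vdelta)+a_i\,\vdelta\cdot\aggbprice$ is convex in $\vdelta$ (a convex composition plus a linear term). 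Since $-\tfrac1{a_i}e^{(\cdot)}$ is strictly decreasing, $\zero$ maximizes trader $i$'s objective iff $\zero$ minimizes $\psi_i$, i.e.\ iff $\nabla\psi_i(\zero)=\zero$, which reads $\aggbprice=\nabla T(\bttheta_i-a_i\barbbundle_i)$. Since $\nabla F_i(\bstate)=\nabla T(\bttheta_i+a_i\bstate)$, the market-clearing conditions are thus equivalent to: $\sum_i\barbbundle_i=\zero$, $\sum_i\barcash_i=0$, and $\aggbprice=\nabla F_i(-\barbbundle_i)$ for all $i$.

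\emph{Step 2 (set up Fenchel duality).} Take $\bu=\bbundle\in\R^{NK}$, $f(\bbundle)\coloneqq\sum_{i=1}^N F_i(\bbundle_i)$, the linear map $A\bbundle\coloneqq\sum_{i=1}^N\bbundle_i\in\R^K$, and $g\coloneqq\cind\set{\,\cdot=\zero\,}$, so that $\inf_{\bbundle}\bigBracks{f(-\bbundle)+g(A\bbundle)}$ is exactly the primal problem $\min_{\bbundle:\,\sum_i\bbundle_i=\zero}\sum_i F_i(-\bbundle_i)$. Since $T$ is finite and convex on $\R^K$, each $F_i$ and hence $f$ is finite and convex (so closed), and $g$ is closed proper convex. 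One computes $A\trans\bprice=(\bprice,\dots,\bprice)$, $f^*\bigParens{(\bprice,\dots,\bprice)}=\sum_i F_i^*(\bprice)$, $g^*\equiv0$, and $F_i^*(\bprice)=\tfrac1{a_i}\bigParens{T^*(\bprice)-\bprice\cdot\bttheta_i}$ with $\dom F_i^*=\dom T^*=\cM$; hence the Fenchel dual is $\sup_\bprice\bigBracks{-\sum_i F_i^*(\bprice)}$. The qualification hypotheses of \Thm{fenchel:duality} hold: $A\zero=\zero\in\ri\dom g$, and picking any $\bprice\in\ri\cM$ (non-empty, as $\cM$ is a non-empty convex set) gives $\bprice\in\ri\dom g^*=\R^K$ and $A\trans\bprice\in(\ri\cM)^N=\ri\dom f^*$.

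\emph{Step 3 (read off existence and the characterization).} By \Thm{fenchel:duality}, both optima are attained, and $(\hat\bbundle,\hat\bprice)$ are respective solutions iff $A\trans\hat\bprice\in\partial f(-\hat\bbundle)$ and $\hat\bprice\in\partial g(A\hat\bbundle)$. Differentiability of $F_i$ makes the first condition $\hat\bprice=\nabla F_i(-\hat\bbundle_i)$ for all $i$, and since $\partial g(\bz)$ is the normal cone of $\set{\zero}$ at $\bz$ (all of $\R^K$ when $\bz=\zero$, empty otherwise) the second condition is $\sum_i\hat\bbundle_i=\zero$. By Step 1 these are exactly the market-clearing equilibrium conditions, with any $\barbcash$ satisfying $\sum_i\barcash_i=0$ completing the triple; this simultaneously yields existence of a market-clearing equilibrium (attainment), identifies the primal and dual problems with those in the statement, and shows that a triple is an equilibrium iff $\barbbundle$ solves the primal, $\sum_i\barcash_i=0$, and $\aggbprice=\nabla T(\bttheta_i-a_i\barbbundle_i)$.

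\emph{Step 4 (uniqueness of prices) and main obstacle.} Every equilibrium price minimizes $\sum_i F_i^*(\bprice)=\bigParens{\sum_i 1/a_i}T^*(\bprice)-\bigParens{\sum_i\bttheta_i/a_i}\cdot\bprice$, which differs from $\bigParens{\sum_i 1/a_i}T^*$ only by a linear term; since $T^*$ is strongly convex on $\cM$ (equivalently each $F_i^*$ is strongly convex on $\cM$ — a property of the log partition function used again in \Sec{bias}), $\sum_i F_i^*$ is strictly convex on $\cM$, so its minimizer — which exists by Step 3 — is unique, and hence so is $\aggbprice$. The only delicate part of the argument is getting the Fenchel pairing right: realizing the equality constraint as the indicator $g$, computing the conjugates $F_i^*$ so that $\dom F_i^*=\cM$ and the relative-interior qualification reduces to $\ri\cM\neq\emptyset$, and then matching the abstract optimality conditions $A\trans\hat\bprice\in\partial f(-\hat\bbundle)$, $\hat\bprice\in\partial g(A\hat\bbundle)$ with the per-trader best-response equations obtained after substituting the exponential-utility form; attainment, the explicit dual, and uniqueness then follow routinely.
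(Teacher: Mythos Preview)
Your proof is correct and follows essentially the same route as the paper: reduce each trader's best-response condition to $\aggbprice=\nabla F_i(-\barbbundle_i)$ via the explicit exponential-utility form, cast the constrained primal as $f(-\bbundle)+g(A\bbundle)$ with $g=\cind\set{\cdot=\zero}$, apply \Thm{fenchel:duality} to obtain existence and the KKT characterization, and finish with uniqueness of $\aggbprice$ from strong convexity of $F_i^*$ on $\cM$. Your treatment is in fact slightly more careful in places (e.g., explicitly computing $F_i^*$ and invoking $\ri\cM$ rather than $\cM$ in the qualification check).
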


\begin{proof}
We first express the market-clearing equilibrium definition using the trader potential functions $F_i$ instead of trader utilities.
Since $[-F_i(-\bbundle_i)+c_i]$ is a monotone one-to-one transformation of the utility $U_i(\bbundle_i,c_i)$, we get the following equivalences
\begin{align*}
\zero \in \argmax_{\vdelta \in \R^K} U_i(\barbbundle_i + \vdelta,\,\barcash_i - \vdelta \cdot \aggbprice)
&\text{\ \ iff\ \ }
    \zero \in \argmin_{\vdelta \in \R^K} \bigBracks{F_i(-\barbbundle_i -\vdelta) -\barcash_i + \vdelta \cdot \aggbprice}
\\
&\text{\ \ iff\ \ }
    \nabla F_i(-\barbbundle_i)=\aggbprice
\enspace,
\end{align*}
where the last step follows by setting the gradient of the objective to zero at $\vdelta=\zero$. Thus,
we have that $(\barbbundle, \barbcash, \aggbprice)$ is a market-clearing equilibrium iff
\begin{equation}
\label{eq:agg:F}
  \sum_{i=1}^\nbuyers \barbbundle_i = \zero,
\quad
  \sum_{i=1}^N \barcash_i = 0,
\quad
   \nabla F_i(-\barbbundle_i)=\aggbprice
   \text{ for all $i \in [\nbuyers]$}.
\end{equation}

We now analyze the minimization of the potential $\sum_i F_i(-\bbundle_i)$ subject to the market clearing
constraint $\sum_i\bbundle_i=\zero$.
We express this constraint using the convex indicator function $\cind\set{\cdot}$,
which equals zero if its argument is true and $\infty$ when its false. We also introduce the matrix $A\in\R^{K\times NK}$
with the block structure $A\coloneqq(I_K\;I_K\;\dotsb\;I_K)$ where $I_K$ is the $K\times K$ identity matrix. Thus,
$A$ implements the summation over the blocks $\bbundle_i$, since $A\bbundle=\sum_i\bbundle_i$. With this notation,
the potential minimization problem can be written as
\begin{equation}
\label{eq:agg:primal}
   \min_{\bbundle\in\R^{NK}}
   \BigBracks{
      \underbrace{\sum_{i=1}^N F_i(-\bbundle_i)}_{f(-\bbundle)}
      +
      \underbrace{\vphantom{\sum_{i=1}^N}
                  \cind\set{A\bbundle=0}}_{g(A\bbundle)}
   }
\enspace,
\end{equation}
where we introduced the functions $f(\bbundle)=\sum_i F_i(\bbundle_i)$ and $g(\bstate)=\cind\set{\bstate=\zero}$ for $\bstate\in\R^K$. Now,
if certain conditions are satisfied, we can apply Fenchel's duality (\Thm{fenchel:duality}) and obtain that the value of the primal~\eqref{eq:agg:primal} equals
the value of the following dual problem
\[
   \max_{\bprice\in\R^K}
   \BigBracks{
     -f^*(A\trans\bprice)
     -g^*(\bprice)
   }
\enspace,
\]
which is equivalent to
\begin{equation}
\label{eq:agg:dual}
   \max_{\bprice\in\R^K}
   \BigBracks{
     -\sum_{i=1}^N F_i^*(\bprice)
   }
\enspace,
\end{equation}
because $g^*(\bprice)=0$ and $f^*(\by)=\sum_i F_i^*(\by_i)$, for $\by\in\R^{NK}$, so $f^*(A\trans\bprice)=\sum_i F_i^*(\bprice)$.
It remains to verify that the preconditions of \Thm{fenchel:duality} are satisfied.
First, we need to check that there exists $\bbundle$ such that $A\bbundle\in\ri(\dom g)$. Since $\ri(\dom g)=\set{\zero}$, the vector $\bbundle=\zero$ satisfies this.
We also need to check that there exist $\bprice$ such that $A\trans\bprice\in\ri(\dom f^*)$, which for our choices of $A$ and $f$ is equivalent to
$\bprice\in\ri(\dom F_i^*)$ for all $i\in[N]$. Since $\dom F_i^*=\dom T^*=\cM$, any $\bprice\in\cM$ satisfies this. Thus, conclusions of \Thm{fenchel:duality} hold.

The conclusions state that both the primal and the dual are attained, and $\hat{\bbundle}$ and $\hat{\bprice}$ are their solutions if and only if
$A\trans\hat{\bprice}=\nabla f(-\hat{\bbundle})$ and $\hat{\bprice}\in\partial g(A\hat{\bbundle})$. From the definitions of $A$ and $f$, the first condition is
equivalent to
\[
   \hat{\bprice}=\nabla F(-\hat{\bbundle}_i)=\nabla T(\bttheta_i-a_i\hat{\bbundle}_i)
\enspace.
\]
The second condition is by conjugacy equivalent to $A\hat{\bbundle}=\nabla g(\hat{\bprice})=\zero$, i.e.,
\[
  \sum_{i=1}^N \hat{\bbundle}_i = \zero
\enspace.
\]
This establishes that $\hat{\bbundle}$ and $\hat{\bprice}$ are solutions to the primal~\eqref{eq:agg:primal} and dual~\eqref{eq:agg:dual}, if and only
if they satisfy the conditions in~\eqref{eq:agg:F}, i.e., if and only if they form a market-clearing equilibrium. This proves the theorem except
for the uniqueness of the equilibrium prices $\aggbprice$. The uniqueness follows from the fact that $\aggbprice$ minimizes $\sum_i F_i^*(\bprice)$,
and the functions $F_i^*$ are strongly convex on their domain~$\cM$, which in turn follows because the functions $F_i$ have Lipschitz gradients (a property
they inherit from the log partition function $T$).
\end{proof}

\subsection{Proof of \Thm{eq:short}}

We prove a more explicit version of the theorem:
\begin{theorem}
\label{thm:eq:app}
Under the exponential trader model, $(\optbbundle,\vec{c}^\star, \eqbprice)$ is a market-maker equilibrium for cost function $C_b$ if and only if,
for the market state $\bstate^\star=\sum_{i=1}^\nbuyers \eqbbundle_i$,
\[
  \optbbundle\in\argmin_{\bbundle} F(\bbundle)
 ,
\quad
  \sum_{i=1}^N c^\star_i=C_b(\zero)-C_b(\bstate^\star)
 ,
\quad
  \text{and}
\quad
  \eqbprice=\nabla C_b(\bstate^\star)=\nabla T(\bttheta_i-a_i\optbbundle_i) \ \forall i \in [N]
 .
\]
A market-maker equilibrium always exists. Furthermore, for any market-maker equilibrium, the equilibrium prices are unique solutions of the following dual problem:
\begin{equation}
\notag
  \eqbprice = \argmin_{\bprice\in\R^K} \BigBracks{\sum_i F_i^*(\bprice) + bC^*(\bprice)}
\enspace.
\end{equation}
\end{theorem}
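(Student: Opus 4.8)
The plan is to mirror the proof of \Thm{agg:app}, replacing the market-clearing constraint by the market-maker bookkeeping. The first step is to re-express the equilibrium conditions using the trader potentials $F_i$. Since $-F_i(-\bbundle_i)+c_i$ is a monotone one-to-one transformation of $U_i(\bbundle_i,c_i)$ by \Eq{expectedutility}, and $\nabla F_i(-\bbundle_i)=\nabla T(\bttheta_i-a_i\bbundle_i)$, the trader condition $\zero\in\argmax_{\vdelta}U_i\bigParens{\eqbbundle_i+\vdelta,\;\eqcash_i-C_b(\bstate^\star+\vdelta)+C_b(\bstate^\star)}$ becomes $\zero\in\argmin_{\vdelta}\bigBracks{F_i(-\eqbbundle_i-\vdelta)+C_b(\bstate^\star+\vdelta)}$. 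This objective is convex and differentiable in $\vdelta$, so the condition holds iff its gradient vanishes at $\vdelta=\zero$, i.e.\ $\nabla F_i(-\eqbbundle_i)=\nabla C_b(\bstate^\star)$. Using $\bstate^\star=\sum_i\eqbbundle_i$ and the cash constraint $\sum_i\eqcash_i=C_b(\zero)-C_b(\bstate^\star)$, we conclude that $(\optbbundle,\vec c^\star,\eqbprice)$ is a market-maker equilibrium iff $\nabla F_i(-\optbundle_i)=\nabla C_b\bigParens{\sum_j\optbundle_j}$ for all $i$, $\eqbprice=\nabla C_b(\bstate^\star)$, and $\sum_i c^\star_i=C_b(\zero)-C_b(\bstate^\star)$.

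The primal characterization then follows from the observation that $-\nabla_{\bbundle_i}F=\nabla F_i(-\bbundle_i)-\nabla C_b\bigParens{\sum_j\bbundle_j}$, so the first set of conditions above is exactly the stationarity condition of $F(\bbundle)=\sum_i F_i(-\bbundle_i)+C_b\bigParens{\sum_i\bbundle_i}$; since $F$ is convex (a sum of convex functions), stationarity is equivalent to global minimality, giving $\optbbundle\in\argmin_{\bbundle}F(\bbundle)$. Conversely, any minimizer $\optbbundle$ of $F$, together with $\bstate^\star=\sum_i\optbundle_i$, $\eqbprice=\nabla C_b(\bstate^\star)$, and any cash vector summing to $C_b(\zero)-C_b(\bstate^\star)$, gives a market-maker equilibrium; hence existence reduces to $F$ attaining its minimum, which will come out of the Fenchel argument.

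For the dual, I would invoke \Thm{fenchel:duality} with $f(\bbundle)=\sum_i F_i(\bbundle_i)$, the block summation matrix $A=(I_K\;\cdots\;I_K)$, and $g(\bstate)=C_b(\bstate)$ in place of the indicator $\cind\set{\bstate=\zero}$ used in \Thm{agg:app}. Then the primal $\min_{\bbundle}\bigBracks{f(-\bbundle)+g(A\bbundle)}$ is exactly $\min_{\bbundle}F(\bbundle)$, and the dual is $\max_{\bprice}\bigBracks{-f^*(A\trans\bprice)-g^*(\bprice)}=\max_{\bprice}\bigBracks{-\sum_i F_i^*(\bprice)-C_b^*(\bprice)}$, using block separability of $f^*$. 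A one-line computation with $C_b(\bstate)=bC(\bstate/b)$ gives $C_b^*(\bprice)=bC^*(\bprice)$, so the dual reads $\eqbprice=\argmin_{\bprice}\bigBracks{\sum_i F_i^*(\bprice)+bC^*(\bprice)}$. The complementarity conditions of \Thm{fenchel:duality} ($A\trans\hat\bprice=\nabla f(-\hat\bbundle)$ and $\hat\bprice\in\partial g(A\hat\bbundle)$) translate to $\hat\bprice=\nabla T(\bttheta_i-a_i\hat\bbundle_i)$ for all $i$ and $\hat\bprice=\nabla C_b\bigParens{\sum_i\hat\bbundle_i}$, matching the equilibrium conditions, and ``infimum and supremum attained'' gives existence. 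Uniqueness of $\eqbprice$ follows as in \Thm{agg:app}: $\sum_i F_i^*$ is strongly convex on $\cM$ (inherited from the Lipschitz gradient of $T$), and since $\cM\subseteq\dom C^*$ the whole dual objective is strongly convex on $\cM$, so its minimizer is unique.

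I expect the only genuine obstacle — beyond routine bookkeeping with the cash equation and the identity $C_b^*=bC^*$ — to be checking the constraint qualification of \Thm{fenchel:duality} now that $g=C_b$ has full domain while $g^*=bC^*$ may have domain strictly larger than $\cM=\dom F_i^*$: one needs (i) a $\bbundle$ with $A\bbundle\in\ri(\dom g)$, which is immediate since $\dom C_b=\R^K$ (take $\bbundle=\zero$), and (ii) a single $\bprice$ lying simultaneously in $\ri(\dom C^*)$ and in $\ri\cM$ (so that $A\trans\bprice\in\ri(\dom f^*)$); this holds under the standing assumption $\cM\subseteq\dom C^*$ together with the structure of the log-partition conjugate $T^*$, whose relative interior consists of the strictly-supported expectations. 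Once this parallel with \Thm{agg:app} is in place, the remaining steps are mechanical.
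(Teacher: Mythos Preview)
Your proposal is correct and follows essentially the same route as the paper: rewrite the equilibrium conditions via the trader potentials $F_i$, then apply \Thm{fenchel:duality} with $f(\bbundle)=\sum_i F_i(\bbundle_i)$, $A=(I_K\;\cdots\;I_K)$, and $g=C_b$ (in place of the indicator used for \Thm{agg:app}), compute $C_b^*=bC^*$, read off the complementarity conditions, and deduce uniqueness from strong convexity of $\sum_i F_i^*$ on $\cM$. Your treatment of the constraint qualification is in fact slightly more careful than the paper's, which only verifies $A\trans\bprice\in\ri(\dom f^*)$ and does not explicitly address $\bprice\in\ri(\dom g^*)=\ri(\dom C^*)$.
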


\begin{proof}
We proceed similarly to the proof of \Thm{agg:app}. We first express the market-maker equilibrium definition using trader potentials
instead of trader utilities:
\begin{align*}
&\zero \in \argmax_{\vdelta \in \R^K} U_i\BigParens{
                       \eqbbundle_i + \vdelta,\,
                       \eqcash_i - C_\liq(\bstate^\star + \vdelta) + C_\liq(\bstate^\star)
                       }
\\
&\qquad\text{iff\ \ }
    \zero \in \argmin_{\vdelta \in \R^K} \bigBracks{
                       F_i(-\eqbbundle_i -\vdelta) -\eqcash_i + C_\liq(\bstate^\star + \vdelta) - C_\liq(\bstate^\star)
    }
\\
&\qquad\text{iff\ \ }
    \nabla F_i(-\eqbbundle_i)=\nabla C_\liq(\bstate^\star)
\enspace.
\end{align*}
Thus, we have that $(\eqbbundle, \eqbcash, \eqbprice)$ is a market-maker equilibrium iff,
for the market state $\bstate^\star=\sum_{i=1}^\nbuyers \eqbbundle_i$,
\begin{equation}
\label{eq:eq:F}
   \sum_{i=1}^\nbuyers \eqcash_i + C_\liq(\bstate^\star) - C_\liq(\zero) = 0,
\quad
   \eqbprice = \nabla C_\liq(\bstate^\star),
\quad
   \nabla F_i(-\barbbundle_i)=\nabla C_\liq(\bstate^\star)
   \text{ for all $i \in [\nbuyers]$}.
\end{equation}

We next use Fenchel's duality to
analyze minimization of the potential $F(\bbundle)=\sum_i F_i(-\bbundle_i)+C_b(\sum_i\bbundle_i)$.
We again define $A\coloneqq(I_K\;I_K\;\dotsb\;I_K)$ and
$f(\bbundle)\coloneqq\sum_i F_i(\bbundle_i)$, but in this case set $g(\bstate)=C_b(\bstate)$. Therefore,
by \Thm{fenchel:duality}, we obtain the following correspondence between the primal and the dual:
\begin{align}
\label{eq:eq:primal}
   \min_{\bbundle\in\R^{NK}}
   \BigBracks{
      \sum_{i=1}^N F_i(-\bbundle_i)
      +
      C_b(A\bbundle)
   }
&=
   \min_{\bbundle\in\R^{NK}}
   \BigBracks{
      f(-\bbundle)
      +
      g(A\bbundle)
   }
\\
\notag
&=
   \max_{\bprice\in\R^K}
   \BigBracks{
     -f^*(A\trans\bprice)
     -g^*(\bprice)
   }
\\
\label{eq:eq:dual}
&=
   \max_{\bprice\in\R^K}
   \BigBracks{
     -\sum_{i=1}^N F_i^*(\bprice)
     -bC^*(\bprice)
   }
\enspace,
\end{align}
where we used the fact that $g(\bbundle)=b C(\bbundle/b)$ and therefore $g^*(\bprice)=b C^*(\bbundle)$
(this is immediate from the definition of conjugate).
It remains to check the preconditions of \Thm{fenchel:duality}.
First, we need to check that there exists $\bbundle$ such that $A\bbundle\in\ri(\dom g)$. Since $\ri(\dom C_b)=\R^K$, this is vacuous and any vector $\bbundle$ satisfies this.
We also need to check that there exist $\bprice$ such that $A\trans\bprice\in\ri(\dom f^*)$, which is equivalent to
$\bprice\in\ri(\dom F_i^*)$ for all $i\in[N]$. Since $\dom F_i^*=\dom T^*=\cM$, any $\bprice\in\cM$ satisfies this. Thus, conclusions of \Thm{fenchel:duality} hold.

The conclusions state that both the primal and the dual are attained, and $\hat{\bbundle}$ and $\hat{\bprice}$ are their solutions if and only if
$A\trans\hat{\bprice}=\nabla f(-\hat{\bbundle})$ and $\hat{\bprice}=\nabla g(A\hat{\bbundle})$. As in the proof of \Thm{agg:app}, for our $A$ and $f$, the first condition is
equivalent to
\[
   \hat{\bprice}=\nabla F(-\hat{\bbundle}_i)=\nabla T(\bttheta_i-a_i\hat{\bbundle}_i)
\enspace.
\]
The second condition, $g(\bstate)=C_b(\bstate)$, is
\[
  \hat{\bprice}=\nabla C_b\bigParens{\sum_{i=1}^N \hat{\bbundle}_i}
\enspace.
\]
This establishes that $\hat{\bbundle}$ and $\hat{\bprice}$ are solutions to the primal~\eqref{eq:eq:primal} and dual~\eqref{eq:eq:dual}, if and only
if they satisfy the conditions in~\eqref{eq:eq:F}, i.e., if and only if they form a market-maker equilibrium. It remains to show
that $\eqbprice$ is unique. As before this follows by strong convexity of $F_i^*$ and the fact that $\eqbprice$ minimizes $\sum_i F_i^*(\bprice)+bC^*(\bprice)$.
\end{proof}

\subsection{Proof of \Thm{eqprice-char:NIPS}}

By \Thm{agg:short}, $\aggbprice = \argmin_{\bprice\in\R^K} [\sum_i F_i^*(\bprice)]$ and from the first-order optimality
\[
  \zero\in\partial\BigBracks{\sum_i F_i^*(\aggbprice)}
\enspace.
\]
Since $F_i(\bstate)=\frac{1}{a_i}T(\bttheta_i+a_i\bstate)$, the properties of the conjugates (Theorems~12.3 and~16.1 of~\citet{Rockafellar70})
yield
\[
  F_i^*(\bprice)=\frac{1}{a_i}\bigParens{ T^*(\bprice)-\bttheta_i\cdot\bprice }
\enspace.
\]
Thus,
\begin{align*}
 \sum_i F_i^*(\aggbprice)
&=
   \BigBracks{\sum_i 1/a_i}T^*(\bprice)
-
   \BigBracks{\sum_i \bttheta_i/a_i}\cdot\bprice
\\
&\implies
 \partial\BigBracks{\sum_i F_i^*(\aggbprice)}
=
   \BigBracks{\sum_i 1/a_i}\partial T^*(\bprice)
-
   \BigBracks{\sum_i \bttheta_i/a_i}
\enspace.
\end{align*}
Therefore, $\zero\in\partial[\sum_i F_i^*(\aggbprice)]$ iff
\[
   \frac{\sum_i \bttheta_i/a_i}
        {\sum_i 1/a_i}
   \in
   \partial T^*(\aggbprice)
\enspace,
\]
which is equivalent to
\[
 \aggbprice =
 \nabla \lmsr\Parens{
 \frac{\sum_i \bttheta_i/a_i}
      {\sum_i 1/a_i}
 }
 =
 \nabla \lmsr(\bar{\btheta})
 =
 \Ex{\bar{\btheta}}{\payoff(\omega)}
\enspace,
\]
where $\bar{\btheta}\coloneqq\bigParens{\sum_i \bttheta_i/a_i}/\bigParens{\sum_i 1/a_i}$
and the last equality follows from the properties of the log partition function.

\subsection{Sampling Error}
\label{app:sampling}

The market's forecasting ability is fundamentally limited by the information present
among the population of traders, and the traders' risk attitudes in
communicating their information via trades. In this appendix, we
quantify these sources of error by analyzing the discrepancy
$\norm{\bpricetruth - \aggbprice}$ between the true expected
security values and the market-clearing equilibrium prices.

The characterization of $\aggbprice$ in \Thm{eqprice-char:NIPS} reflects
two possible sources of error. First, the beliefs
$\bttheta_i$ are typically noisy signals of the ground truth. Second,
beliefs are weighted according to risk aversions $a_i$, which can skew
the prices. To formalize the latter
concept, we write
%
$$\Neff = \frac{\Parens{\sum_i a_i^{-1}}^2}{\Parens{\sum_i a_i^{-2}}}$$
to denote the \emph{effective sample size} of the weighted average.
When risk aversion coefficients are equal across agents, we have $\Neff = N$, and
when one agent has much smaller risk aversion than the others,
$\Neff \rightarrow 1$. As the next result shows, the magnitude of the
sampling error depends on the effective sample size as it relates to
the number of securities and the variance in trader beliefs.
\begin{theorem}\label{thm:sampling-error}
  Under the exponential trader model, assume that the beliefs
  $\bttheta_i$ are drawn independently for each trader $i \in [N]$
  with mean $\exNoBrack[\bttheta_i] = \btheta^\TRUE$ and covariance
  $\varNoBrack(\bttheta_i) \preceq \sigma^2 I_K$ for some $\sigma^2 \geq 0$.
  For any $\delta \in (0,1)$, the market-clearing prices $\aggbprice$
  satisfy, with probability at least $1 - \delta$,
$
\norm{\aggbprice - \bpricetruth} \leq
O\left(\, \sigma \sqrt{K/(\Neff\,\delta)} \,\right).
$
Furthermore, assuming that each $a_i$ lies in a bounded range
$[a_{\min}, a_{\max}]$ where $a_{\min}, a_{\max} > 0$,
we have that $\Neff \rightarrow \infty$ as $N \rightarrow \infty$.
\label{thm:Neff}
\end{theorem}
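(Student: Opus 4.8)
The plan is to reduce the bound to a statement about natural parameters and then apply a routine second-moment argument.

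First I would invoke \Thm{eqprice-char:NIPS} to write $\aggbprice=\nabla T(\bar{\btheta})$, where $\bar{\btheta}=\bigParens{\sum_i \bttheta_i/a_i}\big/\bigParens{\sum_i 1/a_i}$, and use the standing assumption from the discussion preceding the statement that the ground truth is itself an exponential-family distribution with natural parameter $\btheta^\TRUE$, so that $\bpricetruth=\nabla T(\btheta^\TRUE)$. The key structural fact I would establish is that $\nabla T$ is globally Lipschitz with a constant $R^2$ depending only on $\payoff$: its Jacobian $\nabla^2 T(\btheta)$ is the covariance matrix of $\payoff(\omega)$ under $p(\cdot;\btheta)$, and since $\Omega$ is finite the vectors $\payoff(\omega)$ lie in a Euclidean ball of radius $R\coloneqq\max_\omega\norm{\payoff(\omega)}$, whence $\norm{\nabla^2 T(\btheta)}\le R^2$ for every $\btheta$. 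This gives $\norm{\aggbprice-\bpricetruth}\le R^2\norm{\bar{\btheta}-\btheta^\TRUE}$, reducing everything to a bound on the weighted average of the belief errors.

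Next I would bound $\norm{\bar{\btheta}-\btheta^\TRUE}$ in the second moment. Writing $w_i\coloneqq a_i^{-1}/\sum_j a_j^{-1}$, so that $\sum_i w_i=1$, we have $\bar{\btheta}-\btheta^\TRUE=\sum_i w_i(\bttheta_i-\btheta^\TRUE)$, a sum of independent mean-zero vectors with $\exNoBrack\bigBracks{\norm{\bttheta_i-\btheta^\TRUE}^2}=\mathrm{Tr}\,\varNoBrack(\bttheta_i)\le\sigma^2 K$. Independence kills the cross terms, so $\exNoBrack\bigBracks{\norm{\bar{\btheta}-\btheta^\TRUE}^2}=\sum_i w_i^2\,\exNoBrack\bigBracks{\norm{\bttheta_i-\btheta^\TRUE}^2}\le\sigma^2 K\sum_i w_i^2$, and $\sum_i w_i^2=\bigParens{\sum_i a_i^{-2}}\big/\bigParens{\sum_i a_i^{-1}}^2=1/\Neff$ by definition of $\Neff$. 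Applying Markov's inequality to the nonnegative random variable $\norm{\bar{\btheta}-\btheta^\TRUE}^2$ then yields, with probability at least $1-\delta$, $\norm{\bar{\btheta}-\btheta^\TRUE}\le\sigma\sqrt{K/(\Neff\,\delta)}$; combined with the Lipschitz bound this gives $\norm{\aggbprice-\bpricetruth}\le R^2\sigma\sqrt{K/(\Neff\,\delta)}=O\bigParens{\sigma\sqrt{K/(\Neff\,\delta)}}$, with the hidden constant depending only on $\payoff$.

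For the final claim, when $a_i\in[a_{\min},a_{\max}]$ I would simply bound $\sum_i a_i^{-1}\ge N/a_{\max}$ and $\sum_i a_i^{-2}\le N/a_{\min}^2$, so that $\Neff\ge (N/a_{\max})^2/(N/a_{\min}^2)=N(a_{\min}/a_{\max})^2\to\infty$ as $N\to\infty$. I do not expect a serious obstacle: the only point that needs care is the uniform bound on $\nabla^2 T$—equivalently, that the covariance of the bounded sufficient statistic $\payoff(\omega)$ is bounded uniformly over $\btheta$—together with the bookkeeping that this constant is absorbed into the $O(\cdot)$; the remainder is the standard weighted-independent-sum variance computation followed by Markov's inequality.
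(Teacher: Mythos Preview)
Your proposal is correct and follows essentially the same approach as the paper: express $\aggbprice=\nabla T(\bar{\btheta})$ via \Thm{eqprice-char:NIPS}, bound the second moment of the weighted average $\bar{\btheta}-\btheta^\TRUE$ using independence (yielding $\sigma^2 K/\Neff$), apply Markov to the squared norm (which the paper calls the multidimensional Chebyshev inequality), and then invoke Lipschitz continuity of $\nabla T$; your bound $\Neff\ge N(a_{\min}/a_{\max})^2$ is exactly the paper's. The only difference is cosmetic: you make the Lipschitz constant $R^2$ explicit via the bounded-covariance argument, whereas the paper simply cites Lipschitz continuity of $\nabla T$.
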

\begin{proof}
  We write $w_i = a_i^{-1} / (\sum_j a_j^{-1})$ for the weights in the
  average. Note that $\Neff = (\sum_i w_i^2)^{-1}$. By the fact that beliefs
  are independent, we have:
  $$
  \ex{\sum_{i=1}^N w_i \bttheta_i}
  = \btheta^\TRUE
  \enspace \mbox{and} \enspace
  \var{\sum_{i=1}^N w_i \bttheta_i} \preceq \Neff^{-1} \sigma^2 I_K.
  $$
  By applying the multidimensional version of Chebyshev's inequality,
  we therefore have
$$
\mathrm{Pr}\Parens{\,
\left\|\, \sum_{i=1}^N w_i \bttheta_i - \btheta^\TRUE \,\right\| > t \,}
\leq \frac{K\sigma^2}{\Neff\, t^2}.
$$
The result then follows from the fact that
$\aggbprice=\nabla\lmsr(\sum_i w_i \bttheta_i)$ by \Thm{eqprice-char:NIPS}, the fact
that $\bpricetruth = \nabla \lmsr(\btheta^\TRUE)$, and the Lipschitz
continuity of $\nabla \lmsr$.

For the final claim, we have
\begin{equation}
\tag*{\qedhere}
\Neff = \frac{\Parens{\sum_i a_i^{-1}}^2}{\Parens{\sum_i a_i^{-2}}}
\geq \frac{\Parens{\sum_i a_{\max}^{-1}}^2}{\Parens{\sum_i
    a_{\min}^{-2}}}
= N (a_{\min} / a_{\max})^2.
\end{equation}
\end{proof}
%
%
Theorem~\ref{thm:sampling-error}
implies that as the number of traders grows large, the market prices
$\aggbprice$ converge to the ground truth $\bpricetruth$ in
probability. It is important to note that this relies on
\Thm{eqprice-char:NIPS}, which is an artifact of exponential
utility---for general utilities, market-clearing prices may not be
consistent in the statistical sense, and this extra discrepancy would
need to be quantified in the error decomposition.

For a finite number of traders, the bound in
Theorem~\ref{thm:sampling-error} increases with the belief variance
and the number of securities, as one would expect. It decreases with
the effective sample size: the information incorporated into
market-clearing prices improves when risk aversions are more uniform, and when the number of agents increases.

\ignore{
We stress that
the distributional assumptions of the theorem are only needed to bound
the sampling error, and are not required in the remainder of the
paper, which deals with sources of error once traders' beliefs have
been drawn and fixed.
}

\section{Proofs and Additional Results for \Sec{bias}}
\label{app:bias}

\subsection{Proof of \Thm{bias:global}}
\label{app:bias:global}

Since gradients $\nabla F_i$ are Lipschitz, the functions $F_i^*$ are \emph{strongly convex} over their domain, which is $\cM$. Therefore, their
sum $G(\bprice)\coloneqq\sum_i F_i^*(\bprice)$ is also strongly convex on $\cM$ with some strong convexity constant $\sigma$. Since $\aggbprice\in\ri\cM$,
$G$ is subdifferentiable at $\aggbprice$, and since $\aggbprice$ minimizes $G$, any element $\bg\in\partial G(\aggbprice)$ satisfies $\bg\trans(\bprice-\aggbprice)=0$
for all $\bprice\in\cM$. This together with strong convexity yields the lower bound
$
   \sum_i F_i^*(\bprice)\ge \sum_i F_i^*(\aggbprice)+\frac12\sigma\norm{\bprice-\aggbprice}^2
$.
At the same time, $C^*(\bprice)$ is bounded below by a linear function of the form $C^*(\aggbprice)+\bu\trans(\bprice-\aggbprice)$, because $C^*$ is subdifferentiable at $\aggbprice$ since $\aggbprice\in\ri\cM\subseteq\ri\dom C^*$.

Now from the optimality of $\eqbprice$ and the lower bounds on $\sum_i F_i^*(\bprice)$ and $C^*(\bprice)$, we have
\begin{align*}
  \sum_i F_i^*(\aggbprice) + \liq C^*(\aggbprice)
& \ge
  \sum_i F_i^*(\eqbprice) + \liq C^*(\eqbprice)
\\
& \ge
  \bigParens{\sum_i F_i^*(\aggbprice)+\frac12\sigma\norm{\eqbprice-\aggbprice}^2}
  + \liq\bigParens{C^*(\aggbprice)+\bu\trans(\eqbprice-\aggbprice)}
\\
& \implies
  \norm{\eqbprice-\aggbprice}^2
  \leq -\frac{2\liq}{\sigma} \bu\trans(\eqbprice-\aggbprice)
\\
& \implies
  \norm{\eqbprice-\aggbprice} \leq \frac{2\liq}{\sigma} \norm{\bu}
\enspace.
\end{align*}

\subsection{A Remark on Partial and Incoherent Beliefs}
\label{app:incoherent}

The proof of \Thm{bias:global} crucially relies on the fact that $\dom(\sum_i F_i^*)=\cM\subseteq\dom C^*$, i.e., that the cost function $C$ does not force any additional constraints on $\bprice$ beyond those already represented by the trader potentials $F_i$. This is natural in our setting, because trader utilities restrict the equilibrium prices to lie in the smallest set including all coherent price vectors, $\cM$. This property of the trader utilities means that the traders would be always willing to trade if the prices were outside the set $\cM$. If the trader utilities did not have this property, for instance, if each trader was interested in only a few securities, or their beliefs were incoherent, then this result might not hold. In such a setting, we might end up with $\aggbprice\not\in\dom C^*$. At best, we could then show that
\[
   \lim_{b \to 0} \eqbprice(\liq;C) = \myargmin_{\bprice\in\dom C^*} \Bracks{\sum_{i=1}^\nbuyers \dualobj_i(\bprice) }.
\]
This, of course, agrees with \Thm{bias:global} for our specific setting when the restriction to $\dom C^*$ creates no
additional constraints, because $\dom(\sum_i F_i^*)\subseteq\dom C^*$.

\subsection{Proof of \Thm{bias:local}}
\label{app:bias:local}

The proof will proceed by analyzing the Taylor expansion of the dual objective characterizing~$\eqbprice$.
%
%
However, the functions $F_i^*$ and $C^*$ might not be differentiable in the standard sense, because their domains
might have empty interiors (and only non-empty relative interiors). Fortunately, $F_i$ and $C$ are \convplus,
so by \Prop{convplus:conj} there exist \convplus functions $G_i$ and $R$ that coincide with $F^*_i$ and $C^*$ on
$\aff\dom F^*_i$ and $\aff\dom C^*$. These functions are three times continuously differentiable, which
is what we need to obtain the third order Taylor expansion.

Note that $\dom F^*_i=\cM$ for all $i\in[N]$. Let $\cA$ denote the affine hull of $\cM=\dom F^*_i$.
Since $\cM\subseteq\dom C^*$, the dual in~\Eq{eqbdual:short} is equivalent to
\begin{equation}
\label{eq:bias:eq:app}
       \eqbprice=\myargmin_{\bprice\in\cA} \Bracks{\sum_{i=1}^N G_i(\bprice) + bR(\bprice)}
\enspace.
\end{equation}
Note that $\aggbprice\in\ri\cM$, because by the definition $\aggbprice=\nabla T(\bar{\bstate})$ for some $\bar{\bstate}$ and the
gradients of $T$ are in $\ri\cM$. Thus,
functions $G_i$ and $R$ are differentiable and have Hessians at $\aggbprice$ (by \convexityplus). We apply
the Taylor expansion at $\aggbprice$ to analyze the value of the objective at $\eqbprice$.
Let $G(\bprice)\coloneqq\sum_i G_i(\bprice)$. By Mean Value Theorem, we have
\begin{align*}
     \nabla G(\eqbprice)&=\nabla G(\aggbprice)+\nabla^2 G(\bprice_G)(\aggbprice-\eqbprice)
\\
     \nabla R(\eqbprice)&=\nabla R(\aggbprice)+\nabla^2 R(\bprice_R)(\aggbprice-\eqbprice)
\end{align*}
for some $\bprice_G$ and $\bprice_R$ on the line segment connecting $\eqbprice$ with $\aggbprice$.
By \Thm{bias:global}, $\norm{\eqbprice-\aggbprice}=O(b)$ as $b\to 0$, and thus also $\norm{\bprice_G-\aggbprice}=O(b)$ and $\norm{\bprice_R-\aggbprice}=O(b)$.
By the continuity of third derivatives of $G$ and $R$ in the neighborhood of $\aggbprice$, the Hessians of $G$ and $R$ are Lipschitz in some neighborhood of $\aggbprice$,
which means that
\begin{align*}
  \nabla^2 G(\bprice_G)&=\nabla^2 G(\aggbprice) + \Delta_G
\\
  \nabla^2 R(\bprice_R)&=\nabla^2 R(\aggbprice) + \Delta_R
\end{align*}
where $\Delta_G$ and $\Delta_R$ are matrices with
$\norm{\Delta_G}=O(\norm{\bprice_G-\aggbprice})=O(b)$ and $\norm{\Delta_R}=O(\norm{\bprice_R-\aggbprice})=O(b)$.

We next calculate Hessians $\nabla^2 G(\aggbprice)$ and $\nabla^2 R(\aggbprice)$. First, note that
\[
  \nabla^2 G(\bprice)=\sum_i \nabla^2 G_i(\bprice)
\enspace,
\]
and by \Prop{convplus:conj}, we have
\[
  \nabla^2 G_i(\bprice)=H_{F_i}^+(\bprice)=(1/a_i) H_T^+(\bprice)
\enspace,
\]
where the last equality follows, because $H_{F_i}(\bprice)=a_i H_T(\bprice)$ from the definition of $F_i$. Thus,
\[
  \nabla^2 G(\bprice)=\sum_i (1/a_i) H_T^+(\bprice)=(N/\bar{a}) H_T^+(\bprice)
\enspace.
\]
We also have
\[
  \nabla^2 R(\bprice)
  = H_C^+(\bprice)
\enspace.
\]

By \Prop{convplus:conj}, the affine space $\cA$ is parallel to $\cG(T)$.
From the optimality of $\eqbprice$ in \eqref{eq:bias:eq:app}, we have $\bigParens{ \nabla G(\eqbprice) + \liq \nabla R(\eqbprice) } \perp \cG(T)$.
Thus, writing $P$ for the projection on $\cG(T)$, we obtain
\begin{align}
\notag
\zero
&=P \BigParens{ \nabla G(\eqbprice) + \liq \nabla R(\eqbprice) }
\\
\label{eq:bias:app1}
&= P \nabla G(\aggbprice) + P \nabla^2 G(\aggbprice)(\eqbprice - \aggbprice) + \underbrace{P\Delta_G(\eqbprice - \aggbprice)}_{\veps_G}
   + b P \nabla R(\aggbprice) + \underbrace{b P \bigBracks{\nabla^2 R(\aggbprice)+\Delta_R}(\eqbprice - \aggbprice)}_{\veps_R}
\\
\label{eq:bias:app2}
&=  \nabla^2 G(\aggbprice)(\eqbprice - \aggbprice)
   + b P \nabla R(\aggbprice) + \veps_G+\veps_R
\enspace.
\end{align}
In \Eq{bias:app1}, the terms $\veps_G$ and $\veps_R$ have norms $O(b^2)$, because $\norm{\eqbprice-\aggbprice}$, $\norm{\Delta_G}$ and $\norm{\Delta_R}$ are all at most $O(b)$.
In \Eq{bias:app2}, we use that $P\nabla G(\aggbprice)=\zero$ by optimality of $\aggbprice$. We also use that $P\nabla^2 G(\aggbprice)=\nabla^2 G(\aggbprice)$, because $\range(\nabla^2 G(\aggbprice))=\cG(T)$.

Since $(\eqbprice - \aggbprice)\in\cG(T)$, multiplying \Eq{bias:app2} by $[\nabla^2 G(\aggbprice)]^+$, we obtain
\begin{align*}
  \eqbprice - \aggbprice
  +
  \underbrace{[\nabla^2 G(\aggbprice)]^+(\veps_G+\veps_R)}_{\veps}
&=
  -b[\nabla^2 G(\aggbprice)]^+ P\nabla R(\aggbprice)
\\[-6pt]
&=
  -b(\bar{a}/N) H_T(\aggbprice) P\nabla R(\aggbprice)
\\
&=
  -b(\bar{a}/N) H_T(\aggbprice) \partial C^*(\aggbprice)
\enspace,
\end{align*}
where the last step follows, because $H_T(\aggbprice)P=H_T(\aggbprice)$, and by \Prop{convplus:conj}, $\partial C^*(\aggbprice)=\nabla R(\aggbprice)+\cG^\perp(C)$, and $\cG^\perp(C)\subseteq\cG^\perp(T)$.
The theorem now follows by noting that $\norm{\veps}=O(b^2)$.

\subsection{Proof of \Thm{bias:two}}
\label{app:bias:two}

We prove a slightly stronger statement that holds not only for the bias measured under the Euclidean norm,
but also when it is measured by the KL divergence. We use a more compact notation $\bprice^\IND(\liq)$ and $\bprice^\LMSR(\liq)$ for
$\eqbprice(\liq;\IND)$ and $\eqbprice(\liq;\LMSR)$.

\begin{theorem}
For any $\aggbprice$ there exist $\eta\in[1,2]$ and $\etaKL\in[1,2]$ such that for all $b$
\begin{align}
\label{eq:two:LS}
  \Norm{\bprice^\IND(\liq/\eta) - \aggbprice}
  &=
  \Norm{\bprice^\LMSR(\liq) - \aggbprice}
  +O(b^2)
\enspace,
\\
\label{eq:two:KL:1}
  \bigKL{\bprice^\IND(\liq/\etaKL)}{\aggbprice}
  &=
  \bigKL{\bprice^\LMSR(\liq)}{\aggbprice}
  +O(b^3)
\enspace,
\\
\label{eq:two:KL:2}
  \bigKL{\aggbprice}{\bprice^\IND(\liq/\etaKL)}
  &=
  \bigKL{\aggbprice}{\bprice^\LMSR(\liq)}
  +O(b^3)
\enspace.
\end{align}
For these same $\eta$ and $\etaKL$, we also have, for all $b$,
\begin{align}
\label{eq:two:LS:equiv}
  \norm{\bprice^\IND(\liq) - \aggbprice}
  &=
  \eta\norm{\bprice^\LMSR(\liq) - \aggbprice}
\pm O(b^2)
\enspace,
\\
\label{eq:two:KL:1:equiv}
  \bigKL{\bprice^\IND(\liq)}{\aggbprice}
  &=
  \etaKL^2\bigKL{\bprice^\LMSR(\liq)}{\aggbprice}
  +O(b^3)
\enspace,
\\
\label{eq:two:KL:2:equiv}
  \bigKL{\aggbprice}{\bprice^\IND(\liq)}
  &=
  \etaKL^2\bigKL{\aggbprice}{\bprice^\LMSR(\liq)}
  +O(b^3)
\enspace.
\end{align}
\end{theorem}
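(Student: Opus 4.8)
The plan is to reduce everything to the Local Bias Bound (\Thm{bias:local}) together with explicit formulas for the conjugates of \LMSR and \IND. Write $\mu\coloneqq\aggbprice\in\ri\cM$. For a complete market the log-partition Hessian is $H_T(\mu)=\diag_{k\in[K]}\mu_k-\mu\mu\trans$, whose null space is exactly $\Span\set{\one}$; moreover $\partial C^*_\LMSR(\mu)=\log\mu+\Span\set{\one}$ (the gradient of the negative entropy, modulo the null direction), while $\partial C^*_\IND(\mu)=\set{\logit\mu}$, the single point with $(\logit\mu)_k=\log\bigParens{\mu_k/(1-\mu_k)}$, since $\mu$ lies in the interior of $[0,1]^K$. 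Because $H_T(\mu)\one=\zero$, \Thm{bias:local} gives, as $\liq\to0$,
\[
  \eqbprice(\liq;\LMSR)=\mu-\tfrac{\liq\bar{a}}{N}\bv_\LMSR+O(\liq^2),
  \qquad
  \eqbprice(\liq;\IND)=\mu-\tfrac{\liq\bar{a}}{N}\bv_\IND+O(\liq^2),
\]
where $\bv_\LMSR\coloneqq H_T(\mu)\log\mu$ and $\bv_\IND\coloneqq H_T(\mu)\logit\mu$ are both well defined and $\bv_\IND=\bv_\LMSR+H_T(\mu)\vec{\beta}$ with $\beta_k\coloneqq-\log(1-\mu_k)\ge0$. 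Note $\bv_\LMSR=\zero$ iff $\log\mu\in\Span\set{\one}$ iff $\mu$ is uniform, in which case $\bv_\IND=\zero$ as well.

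The Euclidean identities then follow almost immediately: by the triangle inequality $\Norm{\eqbprice(\liq;C)-\mu}=\tfrac{\liq\bar{a}}{N}\norm{\bv_C}+O(\liq^2)$ for $C\in\set{\LMSR,\IND}$, so putting $\eta\coloneqq\norm{\bv_\IND}/\norm{\bv_\LMSR}$ (and $\eta\coloneqq1$ in the degenerate case $\mu$ uniform) yields \eqref{eq:two:LS:equiv}, and \eqref{eq:two:LS} follows by replacing $\liq$ with $\liq/\eta$ in the \IND expansion. Everything thus reduces to showing $\eta\in[1,2]$.

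For the KL identities, both $\eqbprice(\liq;C)$ and $\mu$ lie in $\ri\cM$ for small $\liq$, so $\KL{\cdot}{\cdot}$ between them is the Bregman divergence generated by the negative-entropy function $T^*$, whose Hessian on $\cG(T)$ at $\mu$ is $H_T^+(\mu)$ by \Prop{convplus:conj}. Since $\eqbprice(\liq;C)-\mu\in\cG(T)$ has norm $O(\liq)$, a third-order Taylor expansion gives $\bigKL{\eqbprice(\liq;C)}{\mu}=\bigKL{\mu}{\eqbprice(\liq;C)}+O(\liq^3)=\tfrac{\liq^2\bar{a}^2}{2N^2}\,V_C+O(\liq^3)$, where $V_C\coloneqq\bv_C\trans H_T^+(\mu)\bv_C=\bg_C\trans H_T(\mu)\bg_C$ (using $\bv_C=H_T(\mu)\bg_C$ with $\bg_C$ the conjugate gradient at $\mu$) is the variance of $(g_C)_k$ under the distribution $\mu$; concretely $V_\LMSR=\textup{Var}_\mu(\log\mu)$ and $V_\IND=\textup{Var}_\mu(\logit\mu)$. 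Setting $\etaKL\coloneqq(V_\IND/V_\LMSR)^{1/2}$ (with $\etaKL\coloneqq1$ when $\mu$ is uniform), the four identities \eqref{eq:two:KL:1}, \eqref{eq:two:KL:2}, \eqref{eq:two:KL:1:equiv}, \eqref{eq:two:KL:2:equiv} follow exactly as in the Euclidean case; the \emph{same} $\etaKL$ serves both orderings of the divergence because they have the same leading quadratic term. It remains to show $\etaKL\in[1,2]$.

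\textbf{The main obstacle} is the two-sided bound $\eta,\etaKL\in[1,2]$. For $\etaKL$, using $\logit\mu=\log\mu-\log(\one-\mu)$, one has $\etaKL^2=1-2\,\textup{Cov}_\mu\bigParens{\log\mu,\log(\one-\mu)}/\textup{Var}_\mu(\log\mu)+\textup{Var}_\mu(\log(\one-\mu))/\textup{Var}_\mu(\log\mu)$. The lower bound $\etaKL\ge1$ holds because $\mu_k\mapsto\log\mu_k$ is increasing while $\mu_k\mapsto\log(1-\mu_k)$ is decreasing, so the covariance is $\le0$ by Chebyshev's association inequality. The upper bound $\etaKL\le2$ reduces, after $\abs{\textup{Cov}_\mu}\le\sqrt{\textup{Var}_\mu(\log\mu)\,\textup{Var}_\mu(\log(\one-\mu))}$, to $\textup{Var}_\mu(\log(\one-\mu))\le\textup{Var}_\mu(\log\mu)$; I would prove this from the pointwise comparison $\abs{\log(1-\mu_j)-\log(1-\mu_k)}\le\abs{\log\mu_j-\log\mu_k}$, which holds because $\sum_\ell\mu_\ell=1$ forces at most one coordinate to exceed $1/2$ and hence $\mu_j(1-\mu_j)\ge\mu_k(1-\mu_k)$ whenever $\mu_j\ge\mu_k$, combined with the identity $\textup{Var}_\mu(f)=\tfrac12\sum_{j,k}\mu_j\mu_k(f_j-f_k)^2$. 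For the Euclidean $\eta$ the same circle of ideas applies, but the relevant quadratic form is the $\mu_k^2$-weighted one $\bv\trans H_T(\mu)^2\bv=\sum_k\mu_k^2(v_k-\Ex{\mu}{\bv})^2$, whose weights no longer match a probability measure, so the clean covariance/variance steps become a more delicate elementary inequality; establishing the upper bound there is where the real work lies. The constant $2$ is tight: for $K=2$, $\log(\one-\mu)$ is just $\log\mu$ with its two coordinates swapped, so $\textup{Var}_\mu(\log(\one-\mu))=\textup{Var}_\mu(\log\mu)$ and $\textup{Cov}_\mu=-\textup{Var}_\mu(\log\mu)$, giving $\eta=\etaKL=2$. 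Granting $\eta,\etaKL\in[1,2]$ completes the proof.
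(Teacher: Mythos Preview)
Your reduction to the Local Bias Bound and the identification of $\eta^2=(\bstate^\IND)\trans H^2\bstate^\IND\big/(\bstate^\LMSR)\trans H^2\bstate^\LMSR$ and $\etaKL^2=(\bstate^\IND)\trans H\bstate^\IND\big/(\bstate^\LMSR)\trans H\bstate^\LMSR$ (where $H\coloneqq H_T(\aggbprice)$) match the paper exactly, as does the second-order Taylor expansion of the KL divergences. Your proof that $\etaKL\in[1,2]$ is correct and genuinely different from the paper's: you use Chebyshev's association inequality for the lower bound, and for the upper bound the pointwise comparison $\abs{\log(1-\aggprice{j})-\log(1-\aggprice{k})}\le\abs{\log\aggprice{j}-\log\aggprice{k}}$ (valid because $\aggprice{j}+\aggprice{k}\le 1$ forces $\aggprice{j}(1-\aggprice{j})\ge\aggprice{k}(1-\aggprice{k})$ when $\aggprice{j}\ge\aggprice{k}$) together with Cauchy--Schwarz. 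This is cleaner than what the paper does for the $H$-quadratic form.

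The real gap is $\eta\in[1,2]$, which you flag but do not close. Your covariance machinery does not transfer to $H^2$: even the lower bound would need $\sum_k\aggprice{k}^2(A_k-\bar A)(C_k-\bar C)\le 0$ with $A=\log\aggbprice$, $C=\log(\one-\aggbprice)$, and $\bar{(\cdot)}$ the $\aggbprice$-mean, and Chebyshev-type inequalities require the averaging weights to match the centering weights, which fails here. The paper's route is different and handles $H$ and $H^2$ simultaneously. After sorting $\aggbprice$, it writes sorted vectors in the ``staircase'' basis $\bz_k$ (ones in the first $k$ slots), uses $H\bz_K=\zero$, and observes that $s^\LMSR_k-s^\LMSR_{k+1}\le s^\IND_k-s^\IND_{k+1}\le 2(s^\LMSR_k-s^\LMSR_{k+1})$. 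It then proves $\bz_k\trans H\bz_\ell\ge 0$ (immediate) and $\bz_k\trans H^2\bz_\ell\ge 0$ (a nontrivial direct calculation with the partial sums $\sum_{j\le k}\aggprice{j}$ and $\sum_{j\le k}\aggprice{j}^2$). With all cross terms nonnegative, the bound $a_k\le b_k\le 2a_k$ on consecutive differences immediately yields $\bs\trans M\bs\le\bv\trans M\bv\le 4\,\bs\trans M\bs$ for $M\in\set{H,H^2}$. That positivity of $\bz_k\trans H^2\bz_\ell$ is the missing ingredient in your sketch.
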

\begin{proof}
Without loss of generality, we assume that
the coordinates of $\aggbprice$ are sorted in the non-increasing order, i.e.,
$\aggprice{1}\ge\aggprice{2}\ge\dotsb\ge\aggprice{K}$.

Our proof is based on \Thm{bias:local}, which states that
\begin{equation}
\label{eq:bias:remind}
  \eqbprice(\liq;C)-\aggbprice=b\Parens{-\frac{\bar{a}}{N} H_T(\aggbprice)\partial C^*(\aggbprice)}
  +\veps_b
\enspace,
\quad
  \text{where $\norm{\veps_b}=O(b^2)$.}
\end{equation}
Let $H\coloneqq H_T(\aggbprice)=\parens{\diag_{k\in[K]}\aggprice{k}}-\aggbprice\aggbprice\trans$, and let $\bstate^\LMSR$ and $\bstate^\IND$ denote arbitrary elements of $\partial C^*(\aggbprice)$
for the costs $\LMSR$ and $\IND$, respectively. Then \Eq{bias:remind} yields
\[
  \bigNorm{\bprice^\IND(\liq)-\aggbprice}=b\frac{\bar{a}}{N}\cdot\norm{H\bstate^\IND} + O(b^2)
\enspace,
\quad
  \bigNorm{\bprice^\LMSR(\liq)-\aggbprice}=b\frac{\bar{a}}{N}\cdot\norm{H\bstate^\LMSR} + O(b^2)
\enspace,
\]
so \Eqs{two:LS}{two:LS:equiv} follow by setting
\begin{equation}
\label{eq:two:def:eta}
  \eta\coloneqq\frac{\norm{H\bstate^\IND}}{\norm{H\bstate^\LMSR}}=\Parens{\frac{(\bstate^\IND)\trans H^2\bstate^\IND}{(\bstate^\LMSR)\trans H^2\bstate^\LMSR}}^{1/2}
\end{equation}
and it remains to prove that $\eta\in[1,2]$. (We do so below.)

For KL divergence results, we begin by using the fact that all entries of $\aggbprice$ are positive so both $f_1(\bprice)\coloneqq\KL{\bprice}{\aggbprice}$ and
$f_2(\bprice)\coloneqq\KL{\aggbprice}{\bprice}$ have bounded and continuous third derivatives in a sufficiently small neighborhood of $\aggbprice$. Therefore,
by Taylor's theorem, we obtain in this neighborhood
\begin{align*}
  \KL{\bprice}{\aggbprice}=f_1(\bprice)
  &=\underbrace{f_1(\aggbprice)
   +\nabla f_1(\aggbprice)\trans(\bprice-\aggbprice)}_{=0}
   +(\bprice-\aggbprice)\trans\nabla^2 f_1(\aggbprice)\trans(\bprice-\aggbprice)
   +O(\norm{\bprice-\aggbprice}^3)
\\
  \KL{\aggbprice}{\bprice}=f_2(\bprice)
  &=\underbrace{f_2(\aggbprice)
   +\nabla f_2(\aggbprice)\trans(\bprice-\aggbprice)}_{=0}
   +(\bprice-\aggbprice)\trans\nabla^2 f_2(\aggbprice)\trans(\bprice-\aggbprice)
   +O(\norm{\bprice-\aggbprice}^3)
\enspace.
\end{align*}
By direct calculation, $\nabla^2 f_1(\aggbprice)=\nabla^2 f_2(\aggbprice)=\diag_{k\in[K]} (\aggprice{k})^{-1}\eqqcolon M$.
Now, by \Thm{bias:global}, we have $\norm{\eqbprice(\liq;C)-\aggbprice}=O(b)$, and so we obtain
\begin{align}
\label{eq:KL:1:Taylor}
  \bigKL{\eqbprice(\liq;C)}{\aggbprice}
  &=\bigParens{\eqbprice(\liq;C)-\aggbprice}\trans
    M
    \bigParens{\eqbprice(\liq;C)-\aggbprice}
    +O(b^3)
\\
\label{eq:KL:2:Taylor}
  \bigKL{\aggbprice}{\eqbprice(\liq;C)}
  &=\bigParens{\eqbprice(\liq;C)-\aggbprice}\trans
    M
    \bigParens{\eqbprice(\liq;C)-\aggbprice}
    +O(b^3)
\enspace.
\end{align}
We next invoke \Eq{bias:remind}, but before we do so, note that since $H=\parens{\diag_{k\in[K]}\aggprice{k}}-\aggbprice\aggbprice\trans$
and $M=\diag_{k\in[K]} (\aggprice{k})^{-1}$, we have $HMH=H$. Now, invoking \Eq{bias:remind} and plugging it into \Eq{KL:1:Taylor},
we obtain
\begin{align*}
  \bigKL{\bprice^\IND(\liq)}{\aggbprice}
&=b^2\frac{\bar{a}^2}{N^2}\cdot(\bstate^\IND)\trans H\bstate^\IND + O(b^3)
\\
  \bigKL{\bprice^\LMSR(\liq)}{\aggbprice}
&=b^2\frac{\bar{a}^2}{N^2}\cdot(\bstate^\LMSR)\trans H\bstate^\LMSR + O(b^3)
\end{align*}
and similarly for $\KL{\aggbprice}{\cdot}$. Therefore Eqs.~\eqref{eq:two:KL:1}, \eqref{eq:two:KL:2}, \eqref{eq:two:KL:1:equiv} and~\eqref{eq:two:KL:2:equiv} follow by setting
\begin{equation}
\label{eq:two:def:eta:KL}
  \etaKL\coloneqq\Parens{
     \frac{
      (\bstate^\IND)\trans H\bstate^\IND}{
      (\bstate^\LMSR)\trans H\bstate^\LMSR}
  }^{1/2}
\end{equation}
and it remains to prove that $\etaKL\in[1,2]$.

In the remainder of the proof, we show that $\eta$ and $\etaKL$ defined in \Eqs{two:def:eta}{two:def:eta:KL} are in $[1,2]$. We proceed
by \Lem{two:bias} (see below), which shows that $1\le (\bv\trans H\bv)/(\bstate\trans H\bstate)\le 4$ and $1\le(\bv H^2\bv)/(\bstate H^2\bstate)\le 4$
for any sorted vectors $\bstate$ and $\bv$, whose differences between consecutive coordinates are within a factor-of-two of each other.
We only need to show that $\bstate=\bstate^\LMSR$ and $\bv=\bstate^\IND$ satisfy this condition.

Recall that $\bstate^\LMSR$ and $\bstate^\IND$ can be chosen as arbitrary elements of $\partial C^*(\aggbprice)$
for the costs $\LMSR$ and $\IND$. From the properties of conjugates, $\bstate\in\partial C^*(\aggbprice)$ iff $\nabla C(\bstate)=\aggbprice$, so
we can obtain $\bstate^\LMSR$ and $\bstate^\IND$ by inverting the gradients of \LMSR and \IND:
\[
\textstyle
   s^\LMSR_k=\log \aggprice{k}
\enspace,
\quad
   s^\IND_k=\log\Parens{\frac{\aggprice{k}}{1-\aggprice{k}}}
\quad
\text{for all $k\in[K]$.}
\]
Note that both $s^\LMSR_k$ and $s^\IND_k$ are monotone transformations of $\aggprice{k}$, and since $\aggbprice$ is sorted,
so must be $\bstate^\LMSR$ and $\bstate^\IND$. We next show that the differences between the consecutive
coordinates of $\bstate^\LMSR$ and $\bstate^\IND$ are within a factor two of each other. For any $k\in[K-1]$, we have
\[
  s^\LMSR_k-s^\LMSR_{k+1}
  =
  \log\Parens{\frac{ \aggprice{k} }{ \aggprice{k+1} }}
\]
and we also have
\begin{align*}
  s^\IND_k-s^\IND_{k+1}
  &=
  \log\Parens{\frac{ \aggprice{k} }{ \aggprice{k+1} }\cdot \frac{ 1-\aggprice{k+1} }{ 1-\aggprice{k} }}
\\
  &=
  \log\Parens{\frac{ \aggprice{k} }{ \aggprice{k+1} }\cdot \frac{ c_k+\aggprice{k} }{ c_k+\aggprice{k+1} }}
  =
  \log\Parens{\frac{ \aggprice{k} }{ \aggprice{k+1} }} + \log\Parens{\frac{ c_k+\aggprice{k} }{ c_k+\aggprice{k+1} }}
\enspace,
\end{align*}
where $c_k\coloneqq 1-\aggprice{k}-\aggprice{k+1}\ge 0$. Since $\aggprice{k}\ge\aggprice{k+1}$, we therefore have
\[
  0\le\log\Parens{\frac{ c_k+\aggprice{k} }{ c_k+\aggprice{k+1} }}
   \le\log\Parens{\frac{ \aggprice{k} }{ \aggprice{k+1} }}
\]
and therefore
\[
  s^\LMSR_k-s^\LMSR_{k+1}
  \le
  s^\IND_k-s^\IND_{k+1}
  \le
  2(s^\LMSR_k-s^\LMSR_{k+1})
\enspace.
\]
Thus, \Lem{two:bias} with $\bstate=\bstate^\LMSR$ and $\bv=\bstate^\IND$ and $H=H_C(\aggbprice)$ proves that indeed
$\eta\in[1,2]$ and $\etaKL\in[1,2]$.
\end{proof}

\begin{lemma}
\label{lem:two:bias}
Let $\bprice\in\R^K$ be a sorted probability vector, i.e.,
$\price_1\ge\price_2\ge\dotsb\ge\price_K$, and let $H=\parens{\diag_{k\in[K]}\price_k}-\bprice\bprice\trans$ be the covariance
matrix of the associated multinomial distribution. Let $\bs$, $\bv$ be sorted vectors in $\R^K$, i.e.,
$s_1\ge\dotsb\ge s_K$ and $v_1\ge\dotsb\ge v_K$, such that $s_k-s_{k+1}\le v_k-v_{k+1}\le 2(s_k-s_{k+1})$. Then the following
two statements hold
\[
  \bs\trans H\bs
  \le \bv\trans H\bv
  \le 4\cdot\bs\trans H\bs
\enspace,
\quad
  \bs\trans H^2\bs
  \le \bv\trans H^2\bv
  \le 4\cdot\bs\trans H^2\bs
\enspace.
\]
\end{lemma}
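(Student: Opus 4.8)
The plan is to reduce both pairs of inequalities to a single clean fact: after rewriting the quadratic forms in a ``staircase'' basis adapted to the sorted order, all the coefficients that appear are nonnegative, so the hypothesis $0\le s_k-s_{k+1}\le v_k-v_{k+1}\le 2(s_k-s_{k+1})$ applies coordinatewise.

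First I would use $H\one=\zero$ (hence $H^2\one=\zero$): adding a constant vector changes neither $\bs\trans H\bs$ nor $\bs\trans H^2\bs$, so $\bs$ may be replaced by $\bs-s_K\one$ and $\bv$ by $\bv-v_K\one$. Set $d_k\coloneqq s_k-s_{k+1}\ge 0$ and $e_k\coloneqq v_k-v_{k+1}\ge 0$ for $k\in[K-1]$, and let $\bbe_k\in\R^K$ be the indicator of $\{1,\dots,k\}$. One checks directly that $\bs-s_K\one=\sum_{k=1}^{K-1}d_k\bbe_k$ and $\bv-v_K\one=\sum_{k=1}^{K-1}e_k\bbe_k$, so that for $\nu\in\{1,2\}$,
\[
  \bs\trans H^\nu\bs=\sum_{k,l}d_kd_l\,M^{(\nu)}_{kl},\qquad
  \bv\trans H^\nu\bv=\sum_{k,l}e_ke_l\,M^{(\nu)}_{kl},\qquad
  M^{(\nu)}_{kl}\coloneqq\bbe_k\trans H^\nu\bbe_l .
\]
The whole argument then hinges on showing $M^{(\nu)}_{kl}\ge 0$ for all $k,l$ and $\nu\in\{1,2\}$. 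Granting this, since $0\le d_k\le e_k\le 2d_k$ each summand satisfies $d_kd_l\,M^{(\nu)}_{kl}\le e_ke_l\,M^{(\nu)}_{kl}\le 4\,d_kd_l\,M^{(\nu)}_{kl}$, and summing gives $\bs\trans H^\nu\bs\le \bv\trans H^\nu\bv\le 4\,\bs\trans H^\nu\bs$, which is exactly the claim.

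For $\nu=1$ the positivity is easy: writing $\pi_m\coloneqq\price_1+\dots+\price_m$ and using $(H\ba)_j=\price_j(a_j-\bprice\trans\ba)$, one gets $M^{(1)}_{kl}=\pi_{\min(k,l)}\bigl(1-\pi_{\max(k,l)}\bigr)\ge 0$, with no use of sortedness. The hard part will be $\nu=2$, since $M^{(2)}_{kl}=(H\bbe_k)\trans(H\bbe_l)$ has genuinely negative contributions. From $(H\bbe_k)_j=\price_j(1-\pi_k)$ for $j\le k$ and $(H\bbe_k)_j=-\price_j\pi_k$ for $j>k$, splitting $\sum_j(H\bbe_k)_j(H\bbe_l)_j$ for $k\le l$ over the ranges $j\le k$, $k<j\le l$, $j>l$ yields
\[
  M^{(2)}_{kl}=(1-\pi_k)(1-\pi_l)\!\!\sum_{j\le k}\!\price_j^2\;-\;\pi_k(1-\pi_l)\!\!\sum_{k<j\le l}\!\!\price_j^2\;+\;\pi_k\pi_l\!\!\sum_{j>l}\!\price_j^2,
\]
where only the middle term is negative. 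This is exactly where the sorted order of $\bprice$ must enter: the $\bprice$-weighted mean of $\price_j$ over $\{k+1,\dots,l\}$ is at most $\price_{k+1}$, while the $\bprice$-weighted mean of $\price_j$ over $\{1,\dots,k\}$ is at least $\price_k\ge\price_{k+1}$, i.e.
\[
  \pi_k\!\!\sum_{k<j\le l}\!\!\price_j^2\;\le\;(\pi_l-\pi_k)\!\!\sum_{j\le k}\!\price_j^2
\]
(trivially so when $\pi_l=\pi_k$, as then the middle sum vanishes). Multiplying this by $(1-\pi_l)\ge 0$ and substituting into the negative term collapses the bound to $M^{(2)}_{kl}\ge(1-\pi_l)^2\sum_{j\le k}\price_j^2+\pi_k\pi_l\sum_{j>l}\price_j^2\ge 0$; the case $k>l$ follows by symmetry of $M^{(2)}$. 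This finishes the plan, the sorted-mean comparison being the one genuinely nontrivial step; everything else is the basis change and bookkeeping of nonnegative sums.
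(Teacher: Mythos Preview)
Your proof is correct and follows essentially the same approach as the paper's: both decompose $\bs$ and $\bv$ in the staircase basis $\bbe_k=\one\{1,\dots,k\}$, reduce to showing $\bbe_k\trans H^\nu\bbe_l\ge 0$, and handle $\nu=2$ via the sortedness-based inequality that $\sum_{j\le k}\mu_j^2/\pi_k$ is nonincreasing in $k$ (your sorted-mean comparison $\pi_k\sum_{k<j\le l}\mu_j^2\le(\pi_l-\pi_k)\sum_{j\le k}\mu_j^2$ is exactly this). The only cosmetic difference is that you compute $M^{(2)}_{kl}$ as $(H\bbe_k)\trans(H\bbe_l)$ and isolate the single negative term directly, whereas the paper expands $H^2$ first and reaches the same key inequality after two preliminary bounds; your route is slightly more streamlined but not substantively different.
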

\begin{proof}
The proof proceeds in several steps.

\paragraph{Step 1: Decomposition of $\bs$ and $\bv$ into an alternate basis.}

We begin by rewriting $\bs$ and $\bv$ in the basis consisting of vectors $\bz_k\in\R^K$, for $k=1,\dotsc,K$, where
each $\bz_k$ has ones on positions $1$ through $k$, and zeros on the remaining positions, i.e., $z_{kj}=\one\braces{j\le k}$.
Let
\[
   a_k\coloneqq s_k-s_{k+1}
\quad
\text{and}
\quad
   b_k\coloneqq v_k-v_{k+1}
\quad
\text{for }
   k=1,\dotsc,K-1.
\]
The vectors $\bs$ and $\bv$ can then be written as
\begin{equation}
\label{eq:sv:basis}
   \bs = s_K\bz_K+\sum_{k=1}^{K-1} a_k\bz_k
\enspace,
\quad
   \bv = v_K\bz_K+\sum_{k=1}^{K-1} b_k\bz_k
\end{equation}
where
\begin{equation}
\label{eq:ak:bk}
   0\le a_k\le b_k\le 2 a_k
\enspace.
\end{equation}
From the definition of $H$, we
have $H\bz_K=0$ and so
\begin{equation}
\label{eq:drop:zK}
  \bs\trans H\bs
  = (\bs')\trans H\bs'
\enspace,
\quad
  \bv\trans H\bv = (\bv')\trans H\bv'
\enspace,
\quad
  \bs\trans H^2\bs = (\bs')\trans H^2\bs'
\enspace,
\quad
  \bv\trans H^2\bv = (\bv')\trans H^2\bv'
\enspace,
\end{equation}
where $\bs'$ and $\bv'$ exclude the basis element $\bz_K$, i.e.,
\begin{equation}
\label{eq:svprime:basis}
   \bs' = \sum_{k=1}^{K-1} a_k\bz_k
\enspace,
\quad
   \bv' = \sum_{k=1}^{K-1} b_k\bz_k
\enspace.
\end{equation}
Therefore, we have the decomposition
\begin{equation}
\label{eq:sHs}
  \bs\trans H\bs
  = (\bs')\trans H\bs'
  = \Parens{\sum_{k=1}^{K-1} a_k\bz_k}\trans H\Parens{\sum_{\ell=1}^{K-1} a_\ell\bz_\ell}
  = \sum_{k,\ell\in[K-1]} a_k a_\ell (\bz_k\trans H\bz_\ell\phtrans)
\enspace,
\end{equation}
and similar decompositions for $\bv\trans H\bv$, $\bs\trans H^2\bs$, and $\bv\trans H^2\bv$.

In the remainder of the proof we show that for all $k,\ell\in[K-1]$, we have
$\bz_k\trans H\bz_\ell\phtrans\ge 0$ and $\bz_k\trans H^2\bz_\ell\phtrans\ge 0$, which together with the decomposition
in \Eq{sHs} and similar decompositions implied by \Eq{drop:zK} will imply the theorem, thanks to the fact
that $0\le a_k\le b_k\le 2 a_k$.

\paragraph{Step 2: $\bz_k\trans H\bz_\ell\ge 0$.}

Fix $k,\ell\in[K]$ and assume $k\le\ell$. Then
%
\begin{align}
\notag
  \bz_k\trans H \bz_\ell\phtrans
&=
  \sum_{j\in[K]} \mu_j z_{kj}z_{\ell j}
  -
  (\bz_k\trans\bprice)(\bprice\trans\bz_\ell\phtrans)
\\
\notag
&=
  \underbrace{
  \bigParens{\sum_{j\le k}\mu_j}
  }_{\mu_k^{[1]}}
  -
  \underbrace{
  \bigParens{\sum_{j\le k}\mu_j}
  }_{\mu_k^{[1]}}
  \underbrace{
  \bigParens{\sum_{j\le\ell}\mu_j}
  }_{\mu_\ell^{[1]}}
\\
\label{eq:two:0}
&=
  \mu_k^{[1]}\bigParens{1-\mu_\ell^{[1]}}
\ge 0
\enspace.
\end{align}
Above, we introduced notation for partial sums
\[
  \mu_k^{[d]}\coloneqq\sum_{j=1}^k \mu_j^d
\enspace,
\]
and used the fact that $\mu_\ell^{[1]}\le 1$, because entries of $\bprice$ are non-negative and sum to one. This proves
that $\bz_k\trans H\bz_\ell\phtrans\ge 0$ when $k\le\ell$. The case $k\ge\ell$ follows by symmetry of $H$.

\paragraph{Step 3: $\bz_k\trans H^2\bz_\ell\ge 0$.}

Again, let $k,\ell\in[K]$ and $k\le\ell$.
First note that
\[
  H^2=(\diag_{j\in[K]}\mu_j^2) - \sum_{j\in[K]}\mu_j^2\belem_j\bprice\trans -\sum_{j\in[K]}\mu_j^2\bprice\belem_j\trans + \mu_K^{[2]}\bprice\bprice\trans
\]
where $\belem_j$ is the $j$th vector of the standard basis and we used our partial sum notation to substitute $\mu_K^{[2]}$ for $\norm{\bprice}^2$.
Thus, we can write
\begin{align}
\notag
  \bz_k\trans H^2 \bz_\ell\phtrans
&=
  \mu_k^{[2]}
  -
  \mu_k^{[2]}\mu_\ell^{[1]}
  -
  \mu_k^{[1]}\mu_\ell^{[2]}
  +
  \mu_K^{[2]}\mu_k^{[1]}\mu_\ell^{[1]}
\\
\label{eq:two:1}
&\ge
  \mu_k^{[2]}
  -
  \mu_k^{[2]}\mu_\ell^{[1]}
  -
  \mu_k^{[1]}\mu_\ell^{[2]}
  +
  \mu_\ell^{[2]}\mu_k^{[1]}\mu_\ell^{[1]}
\\
\notag
&=
  \Parens{\mu_k^{[2]}-\mu_\ell^{[2]}\mu_k^{[1]}}
  \Parens{1-\mu_\ell^{[1]}}
\\
\label{eq:two:2}
&\ge
  \Parens{\mu_k^{[2]}\mu_\ell^{[1]}-\mu_\ell^{[2]}\mu_k^{[1]}}
  \Parens{1-\mu_\ell^{[1]}}
\\
\label{eq:two:3}
&=
  \mu_k^{[1]}\mu_\ell^{[1]}
  \Parens{
    \frac{\mu_k^{[2]}}{\mu_k^{[1]}}
  -
    \frac{\mu_\ell^{[2]}}{\mu_\ell^{[1]}}
  }
  \Parens{1-\mu_\ell^{[1]}}
  \ge 0
\enspace.
\end{align}
In \Eq{two:1}, we used the bound $\mu_K^{[2]}\ge\mu_\ell^{[2]}$. In \Eq{two:2}, we used that
$0\le\mu_\ell^{[1]}\le 1$, so $\mu_k^{[2]}(1-\mu_\ell^{[1]})\ge\mu_k^{[2]}\mu_\ell^{[1]}(1-\mu_\ell^{[1]})$. The final inequality
uses the fact that $\mu_\ell^{[1]}\le 1$ and $\mu_k^{[2]}/\mu_k^{[1]}\ge \mu_\ell^{[2]}/\mu_\ell^{[1]}$. The
latter clearly holds if $\ell=k$ or $\mu_\ell^{[1]}=\mu_k^{[1]}$ (in which case $\mu_j=0$ for $k<j\le\ell$ and thus
$\mu_\ell^{[2]}=\mu_k^{[2]})$. We now argue that $\mu_k^{[2]}/\mu_k^{[1]}\ge \mu_\ell^{[2]}/\mu_\ell^{[1]}$ also
holds when $k<\ell$ and $\mu_{k+1}>0$. We
introduce the interval sum notation
\[
  \mu_{k+1:\ell}^{[d]}\coloneqq\sum_{j=k+1}^\ell \mu_j^d=\mu_\ell^{[d]}-\mu_k^{[d]}
\enspace.
\]
We begin by writing
$\mu_\ell^{[2]}/\mu_\ell^{[1]}$ as the following convex combination
\begin{align}
\notag
  \frac{\mu_\ell^{[2]}}
       {\mu_\ell^{[1]}}
=
  \frac{\mu_k^{[2]}+\mu_{k+1:\ell}^{[2]}}
       {\mu_\ell^{[1]}}
&=\frac{\mu_k^{[1]}}
       {\mu_\ell^{[1]}}
       \cdot
  \frac{\mu_k^{[2]}}
       {\mu_k^{[1]}}
  +
  \frac{\mu_{k+1:\ell}^{[1]}}
       {\mu_\ell^{[1]}}
       \cdot
  \frac{\mu_{k+1:\ell}^{[2]}}
       {\mu_{k+1:\ell}^{[1]}}
\\
\label{eq:two:4}
&=\lambda
       \cdot
  \frac{\mu_k^{[2]}}
       {\mu_k^{[1]}}
  +
  (1-\lambda)
       \cdot
  \frac{\mu_{k+1:\ell}^{[2]}}
       {\mu_{k+1:\ell}^{[1]}}
\enspace,
\end{align}
where we write $\lambda\coloneqq\mu_k^{[1]}/\mu_\ell^{[1]}$.
The expressions weighted in \Eq{two:4} by $\lambda$ and $1-\lambda$
can be viewed as weighted averages of $\mu_j$, with the weights also equal to $\mu_j$. Since $\bprice$ is sorted, we have
\[
  \frac{\mu_k^{[2]}}
       {\mu_k^{[1]}}
  =\sum_{j=1}^k \frac{\mu_j}{\mu_k^{[1]}}\cdot\mu_j
  \ge \mu_k
\qquad
\text{and}
\qquad
  \frac{\mu_{k+1:\ell}^{[2]}}
       {\mu_{k+1:\ell}^{[1]}}
  =\sum_{j=k+1}^\ell \frac{\mu_j}{\mu_{k+1:\ell}^{[1]}}\cdot\mu_j
  \le \mu_{k+1}
\enspace,
\]
so
\[
  \frac{\mu_k^{[2]}}
       {\mu_k^{[1]}}
  \ge \mu_k
  \ge \mu_{k+1}
  \ge
  \frac{\mu_{k+1:\ell}^{[2]}}
       {\mu_{k+1:\ell}^{[1]}}
\enspace.
\]
Plugging this back into \Eq{two:4}, we obtain
\[
   \frac{\mu_\ell^{[2]}}
        {\mu_\ell^{[1]}}
   =
   \lambda
       \cdot
  \frac{\mu_k^{[2]}}
       {\mu_k^{[1]}}
  +
  (1-\lambda)
       \cdot
  \frac{\mu_{k+1:\ell}^{[2]}}
       {\mu_{k+1:\ell}^{[1]}}
  \le
  \frac{\mu_k^{[2]}}
       {\mu_k^{[1]}}
\enspace,
\]
finishing the proof of \Eq{two:3}, showing that $\bz_k\trans H^2\bz_\ell\phtrans\ge 0$ when $k\le\ell$. The case $k\ge\ell$
again follows by symmetry of $H^2$.

\paragraph{Step 4: Putting it all together.} Let $M$ be either the matrix $H$ or $H^2$. Since in both cases
$\bz_k\trans M\bz_\ell\phtrans\ge 0$, the inequalities $0\le a_k\le b_k\le 2 a_k$
imply that
\[
   \sum_{k,\ell\in[K-1]} a_k a_\ell (\bz_k\trans M\bz_\ell\phtrans)
\le
   \sum_{k,\ell\in[K-1]} b_k b_\ell (\bz_k\trans M\bz_\ell\phtrans)
\le
   4\cdot\sum_{k,\ell\in[K-1]} a_k a_\ell (\bz_k\trans M\bz_\ell\phtrans)
\enspace.
\]
This is by the decomposition in \Eq{sHs} and analogous decompositions
for $\bv\trans H\bv$, $\bs\trans H^2\bs$, and $\bv\trans H^2\bv$ equivalent to
\[
   \bs\trans M\bs
\le
   \bv\trans M\bv
\le
   4\cdot
   \bs\trans M\bs
\enspace,
\]
proving the lemma.
\end{proof}

\section{Proofs and Additional Results for \Sec{convergence}}
\label{app:conv}

\subsection{Trader Dynamics}
\label{app:trader:dynamics}

To study convergence properties of the market, we need to posit a model of how the traders arrive in the market and which securities they buy or sell, as a function of their current holdings of securities $\bbundle_i$ and cash $c_i$,
and the current market state $\bstate$. We refer to such a model as \emph{trader dynamics}.
We consider two simple trader dynamics:
\begin{itemize}
\item \emph{All-securities dynamics} (\ASD). In each round, a trader $i\in[N]$ is chosen uniformly at random. This trader then buys a bundle $\bdelta\in\R^\nsec$ which optimizes her utility, i.e., if the current state of the market is $\bstate$ and the current cash and security allocations of the trader are $c_i$ and $\bbundle_i$, then the trader picks $\bdelta$ maximizing $U_i(\bbundle_i+\bdelta,\,c_i-C_b(\bstate+\bdelta)+C_b(\bstate))$.
\item \emph{Single-security dynamics} (\SSD). In each round, a trader $i\in[N]$ is chosen uniformly at random and this trader picks a security $k\in[K]$ uniformly at random. The trader then buys a quantity $\delta\in\R$ of
the $k$th security to optimize her utility. Let $\belem_k$ be the $k$th vector of the standard basis. Then the trader picks $\delta$ maximizing $U_i(\bbundle_i+\delta\belem_k,\,c_i-C_b(\bstate+\delta\belem_k)+C_b(\bstate))$.
\end{itemize}
The all-securities dynamics has been recently studied by \citet{FR15}. This model assumes that traders are able to calculate the optimal bundle over all securities, which may not be a realistic assumption when the number of securities is large. The single-security dynamics, in which we only assume that traders can optimize over a single security at a time, is more appropriate for computationally limited traders.

We formalize both dynamics in a unified analysis via \emph{blocks}.
Specifically, we assume that the coordinates $[N]\times[K]$ are partitioned into blocks $\alpha\in\cA$, where $\alpha\subseteq[N]\times[K]$,
such that
$\biguplus_{\alpha\in\cA} \alpha=[N]\times[K]$. Blocks are disjoint subsets of coordinates of the overall allocation vector $\bbundle\in\R^{NK}$. We further assume that each block $\alpha\in\cA$ is fully contained
within coordinates corresponding to some trader $i$. Thus each block $\alpha$ can be written as $\set{i}\times\beta$ for some $i\in[N]$ and $\beta\subseteq[K]$.
For \ASD, we have $N$ blocks $\cA=\bigSet{ \set{i}\times[K]:\:i\in[N]}$. For \SSD,
we have $NK$ blocks $\cA=\bigSet{ \set{i}\times\set{k}:\: i\in[N],k\in[K] }$.

Let $E_\alpha\in\R^{NK\times\card{\alpha}}$ be the \emph{embedding matrix} for the block $\alpha$. It maps $\card{\alpha}$-dimensional vectors
$\bu\in\R^{\card{\alpha}}$ to vectors $\bv\in\R^{NK}$ that are zero everywhere except for the block $\alpha$, where they coincide with $\bu$.
The range of $E_\alpha$ is exactly the set of vectors that are zero outside the block $\alpha$.
Its transpose $E_\alpha\trans$ projects vectors from $\R^{NK}$ to $\R^{\card{\alpha}}$ by removing all coordinates outside the block~$\alpha$.
For any subset $\beta\subseteq[K]$, we similarly
define the embedding matrices $E_\beta\in\R^{K\times\card{\beta}}$.
%

The next theorem shows that optimizing utility, as in \ASD or \SSD, corresponds to the greedy optimization of the potential function $F$ along the coordinates
of the corresponding block:
%
\begin{theorem} Let $\alpha=\set{i}\times\beta$, where $\beta\subseteq[K]$, be a block of coordinates controlled by trader $i$. Assume that trader $i$ has security bundle $\bbundle_i$ and $c_i$ units of cash, and the current market state is~$\bstate$. Then \[
   \argmax_{\bdelta\in\range(E_\beta)} U_i(\bbundle_i+\bdelta,\,c_i-C_b(\bstate+\bdelta)+C_b(\bstate))
   =
   \argmin_{\bdelta\in\range(E_\beta)} F(\bbundle_{-i},\,\bbundle_i+\bdelta)
\enspace,
\]
where $\bbundle_{-i}$ denotes the concatenation of $\bbundle_j$ across $j\ne i$.
\end{theorem}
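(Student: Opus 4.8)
The plan is to reduce the utility maximization on the left-hand side to a minimization of the $\bdelta$-dependent part of $F$, using the observation (recorded after \Eq{expectedutility}) that $-F_i(-\bbundle_i)+c_i$ is a monotone transformation of $U_i(\bbundle_i,c_i)$. Concretely, from \Eq{expectedutility} together with the identity $T(\bttheta_i-a_i\bbundle_i)=a_iF_i(-\bbundle_i)$, which follows directly from the definition $F_i(\bstate)=\tfrac1{a_i}T(\bttheta_i+a_i\bstate)$, I would first rewrite, for any bundle $\bbundle_i'$ and cash $c_i'$,
\[
  U_i(\bbundle_i',c_i')=-\tfrac1{a_i}\,e^{-T(\bttheta_i)}\,e^{-a_i\left(-F_i(-\bbundle_i')+c_i'\right)}.
\]
Since $a_i>0$, the outer map $x\mapsto-\tfrac1{a_i}e^{-T(\bttheta_i)}e^{-a_ix}$ is strictly increasing, so the $\argmax$ over $\bdelta$ of $U_i(\bbundle_i+\bdelta,\,c_i')$ coincides with the $\argmax$ over $\bdelta$ of $-F_i(-\bbundle_i-\bdelta)+c_i'$; this equality of optimal sets holds whether or not the maximizer is attained or unique, which is all the theorem requires.

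Next I would substitute the cash update $c_i'=c_i-C_b(\bstate+\bdelta)+C_b(\bstate)$ and discard the $\bdelta$-independent constant $c_i+C_b(\bstate)$, so that maximizing utility over $\bdelta\in\range(E_\beta)$ becomes minimizing $F_i(-\bbundle_i-\bdelta)+C_b(\bstate+\bdelta)$ over the same set. Finally I would invoke the invariant that along the trader dynamics the market state equals the summed allocation, $\bstate=\sum_{j=1}^N\bbundle_j$ (the state starts at $\zero$ with all $\bbundle_j=\zero$, and each trade adds the purchased bundle to both the buyer's holdings and to the state), so that $\bstate+\bdelta=\sum_{j\ne i}\bbundle_j+(\bbundle_i+\bdelta)$ and
\[
  F(\bbundle_{-i},\,\bbundle_i+\bdelta)=\sum_{j\ne i}F_j(-\bbundle_j)+F_i(-\bbundle_i-\bdelta)+C_b\BigParens{\textstyle\sum_{j\ne i}\bbundle_j+\bbundle_i+\bdelta}.
\]
Dropping the $\bdelta$-independent sum $\sum_{j\ne i}F_j(-\bbundle_j)$ leaves exactly $F_i(-\bbundle_i-\bdelta)+C_b(\bstate+\bdelta)$, matching the expression from the previous paragraph; combining the two reductions yields the claimed equality of $\arg$ sets.

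There is no deep obstacle here — the argument is essentially bookkeeping. The one place to be careful is the monotone-transformation step: I want to phrase it purely as composition with a strictly increasing function, so that the equivalence of the two optimization problems needs no hypothesis about existence or uniqueness of optimizers. A second minor point worth stating explicitly is the market-state invariant $\bstate=\sum_j\bbundle_j$, since the right-hand side $F(\bbundle_{-i},\bbundle_i+\bdelta)$ encodes the market state through $\sum_j\bbundle_j$, and the identity only makes sense once this consistency with $\bstate$ is noted; in the convergence analysis this invariant is maintained by construction of \ASD and \SSD.
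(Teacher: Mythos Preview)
Your proposal is correct and follows exactly the reasoning the paper has in mind; the paper's own proof is the single line ``The proof is immediate from the definition of the potential $F$,'' and you have simply unpacked that immediacy carefully, including the monotone-transformation step and the market-state invariant $\bstate=\sum_j\bbundle_j$.
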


\begin{proof}
The proof is immediate from the definition of the potential $F$.
\end{proof}

\subsection{Relationship between the Suboptimality of Potential and the Convergence Error}
\label{app:subopt}

In this appendix, we relate the suboptimality of the objective to several other quantities used in
analysis of convergence error. We begin by defining these quantities.

Given an allocation vector $\bbundle\in R^{NK}$, the associated market price (the gradient of the cost)
will be denoted $\bprice_0(\bbundle)$ and the gradients of trader potentials will be denoted $\bprice_i(\bbundle)$:
\begin{align*}
  \bprice_0(\bbundle)
  &\coloneqq
  \nabla C_b\bigParens{\textstyle\sum_{i=1}^N \bbundle_i}
  =
  \nabla C\bigParens{\textstyle\sum_{i=1}^N \bbundle_i/b}
\\
  \bprice_i(\bbundle)
  &\coloneqq
  \nabla F_i(-\bbundle_i)
  =
  \nabla T(\bttheta_{i}-a_i\bbundle_i)
\qquad
\text{for $i\in[N]$}
\end{align*}
where $T$ is the log partition function.
As in the body of the paper, let $\bbundle^\star$ denote an arbitrary minimizer of $F$ and let $F^\star$ denote the minimum value of $F$.
From \Thm{eq:app}, at any equilibrium allocation $\bbundle^\star$,
\[
  \bprice_0(\bbundle^\star)=\eqbprice
  \qquad
  \text{and}
  \qquad
  \bprice_i(\bbundle^\star)=\eqbprice\text{ for all $i\in[N]$.}
\]
Let $\bbundle^t$ denote the allocation vector after the $t$th trade, and $\bprice^t\coloneqq\bprice_0(\bbundle^t)$ be the corresponding market price.
We next show that to bound the convergence error $\norm{\bprice_0(\bbundle^t)-\eqbprice}$ it suffices to bound the suboptimality of the current objective value, $F(\bbundle^t)-F^\star$. In fact,
we show that the suboptimality $F(\bbundle^t)-F^\star$ simultaneously also bounds $\norm{\bprice_i(\bbundle^t)-\eqbprice}$, which can be viewed as a measure of
suboptimality of individual traders and will be used in our later analysis. We first prove this result when $C$ has a Lipschitz-continuous gradient and then for the case when $C$ is \convplus. 
\begin{theorem}
\label{thm:errconnect:L}
If $\nabla C$ has the Lipschitz constant $L_C$ then for any $\bbundle\in\R^{NK}$
\[
  F(\bbundle)-F^\star
  \ge
\frac{b}{2L_C}\cdot
  \norm{\bprice_0(\bbundle)-\eqbprice}^2
  +
\frac{1}{2L_T}\cdot
  \sum_{i=1}^N
  \frac{1}{a_i}\norm{\bprice_i(\bbundle)-\eqbprice}^2
\enspace,
\]
where $L_T$ is the Lipschitz constant of $\nabla T$.
\end{theorem}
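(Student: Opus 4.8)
The plan is to start from the decomposition $F(\bbundle)=\sum_{i=1}^N h_i(\bbundle_i)+C_b\bigParens{\sum_{i=1}^N\bbundle_i}$, where $h_i(\bbundle_i)\coloneqq F_i(-\bbundle_i)=\tfrac{1}{a_i}T(\bttheta_i-a_i\bbundle_i)$. By the chain rule, $\nabla h_i(\bbundle_i)=-\nabla T(\bttheta_i-a_i\bbundle_i)=-\bprice_i(\bbundle)$, so $\norm{\nabla h_i(\bbundle_i)-\nabla h_i(\bbundle_i')}\le a_i L_T\norm{\bbundle_i-\bbundle_i'}$ using $a_i>0$; that is, each convex $h_i$ has an $a_i L_T$-Lipschitz gradient. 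Likewise $\nabla C_b(\bstate)=\nabla C(\bstate/b)$, so $C_b$ has an $(L_C/b)$-Lipschitz gradient, and its gradient at $\bstate=\sum_i\bbundle_i$ equals $\bprice_0(\bbundle)$. Finally, by \Thm{eq:app}, every minimizer $\bbundle^\star$ of $F$ is a market-maker equilibrium allocation, hence $\bprice_0(\bbundle^\star)=\bprice_i(\bbundle^\star)=\eqbprice$ for all $i$.

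The tool I would use is the standard strengthening of the first-order convexity inequality for functions with Lipschitz gradient: if $g:\R^d\to\R$ is convex with $L$-Lipschitz gradient, then $g(\by)\ge g(\bx)+\nabla g(\bx)\trans(\by-\bx)+\tfrac{1}{2L}\norm{\nabla g(\by)-\nabla g(\bx)}^2$ for all $\bx,\by$ (this follows by applying the descent lemma to $\bz\mapsto g(\bz)-\nabla g(\bx)\trans\bz$ at the point $\by-\tfrac1L$ times its gradient). I would apply it twice: first to each $h_i$ with $\bx=\bbundle_i^\star$, $\by=\bbundle_i$, $L=a_i L_T$, using $\nabla h_i(\bbundle_i^\star)=-\eqbprice$, which yields
\[
  h_i(\bbundle_i)\ge h_i(\bbundle_i^\star)-\eqbprice\trans(\bbundle_i-\bbundle_i^\star)+\tfrac{1}{2a_i L_T}\norm{\bprice_i(\bbundle)-\eqbprice}^2 ;
\]
and second to $C_b$ on $\R^K$ with $\bx=\sum_i\bbundle_i^\star$, $\by=\sum_i\bbundle_i$, $L=L_C/b$, using $\nabla C_b(\sum_i\bbundle_i^\star)=\eqbprice$, which yields
\[
  C_b\bigParens{\textstyle\sum_i\bbundle_i}\ge C_b\bigParens{\textstyle\sum_i\bbundle_i^\star}+\eqbprice\trans\bigParens{\textstyle\sum_i(\bbundle_i-\bbundle_i^\star)}+\tfrac{b}{2L_C}\norm{\bprice_0(\bbundle)-\eqbprice}^2 .
\]

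Summing the $N$ inequalities of the first type and the second, the linear terms cancel exactly (the contributions $-\eqbprice\trans(\bbundle_i-\bbundle_i^\star)$ sum to $-\eqbprice\trans\sum_i(\bbundle_i-\bbundle_i^\star)$, which is cancelled by the linear term in the $C_b$ inequality), while $\sum_i h_i(\bbundle_i^\star)+C_b(\sum_i\bbundle_i^\star)=F^\star$. This gives precisely the claimed bound. The argument is essentially bookkeeping; the only points needing care are the chain-rule signs defining $\bprice_i$ and $\bprice_0$, the fact that an $L_T$-Lipschitz $\nabla T$ and an $L_C$-Lipschitz $\nabla C$ transfer to $a_iL_T$- and $(L_C/b)$-Lipschitz gradients of $h_i$ and $C_b$ respectively, and recognizing that the cancellation of the linear terms is exactly the first-order optimality condition $\nabla F(\bbundle^\star)=\zero$ encoded by \Thm{eq:app}.
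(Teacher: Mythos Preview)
Your proof is correct. It takes a different route from the paper's: the paper works in the dual, using that an $L$-Lipschitz gradient on a convex $g$ implies $1/L$-strong convexity of $g^*$, then expresses $F(\bbundle)-F^\star$ via Fenchel duality as a sum of Bregman-type terms $F_i^*(\eqbprice)-F_i^*(\bprice_i)-(-\bbundle_i)\trans(\eqbprice-\bprice_i)$ and $C_b^*(\eqbprice)-C_b^*(\bprice_0)-\bstate\trans(\eqbprice-\bprice_0)$, and lower-bounds each by the strong-convexity quadratic. You stay entirely in the primal, applying the co-coercivity inequality $g(\by)\ge g(\bx)+\nabla g(\bx)\trans(\by-\bx)+\tfrac{1}{2L}\norm{\nabla g(\by)-\nabla g(\bx)}^2$ directly to each $h_i$ and to $C_b$ at the optimizer $\bbundle^\star$, with the linear terms cancelling by first-order optimality. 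The two arguments are mathematically equivalent---your co-coercivity inequality is precisely strong convexity of the conjugate rewritten in primal variables---but your version is more self-contained (no explicit conjugates or duality identity needed), while the paper's version makes the Bregman-divergence structure of the suboptimality explicit, which it then reuses in the \convplus variant (\Thm{errconnect:convplus}).
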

\begin{proof}
First note that since $\nabla C$ and $\nabla T$ have Lipschitz constants $L_C$ and $L_T$, their conjugates
are strongly convex with constants $1/L_C$ and $1/L_T$. Further, by the properties of conjugates (see Theorems~12.3 and~16.1 of~\citet{Rockafellar70}),
the definitions of $F_i$ and $C_b$ yield
\[
  F_i^*(\bprice)=\frac{1}{a_i}\bigParens{ T^*(\bprice)-\bttheta_i\cdot\bprice }
\enspace,
\quad
  C_b^*(\bprice)=bC^*(\bprice)
\enspace,
\]
and so $F_i^*$ and $C_b^*$ are strongly convex, respectively, with constants $1/(a_iL_T)$ and $b/L_C$.

We now invoke the duality result of \Thm{eq:app} to prove our theorem.
Specifically, from \Eqs{eq:primal}{eq:dual}, we have
\[
   F(\bbundle^\star)
   =
     -\sum_{i=1}^N F_i^*(\eqbprice)
     -C_b^*(\eqbprice)
\enspace.
\]
Therefore, for any $\bbundle$, we have
\begin{align}
\notag
  F(\bbundle)-F^\star
&=\sum_{i=1}^N F_i(-\bbundle_i)
  +
  C_b\bigParens{\sum_i\bbundle_i}
  +\sum_{i=1}^N F_i^*(\eqbprice)
  +C_b^*(\eqbprice)
\\
\label{eq:bregman:1}
&=\sum_{i=1}^N
  \BigBracks{
     F_i(-\bbundle_i)
     + F_i^*(\eqbprice)
     + \bbundle_i\trans\eqbprice
  }
  +
  \BigBracks{
     C_b\bigParens{\sum_i\bbundle_i}
     + C_b^*(\eqbprice)
     - \bigParens{\sum_i\bbundle_i}\trans\eqbprice
  }.
\end{align}
Using conjugacy and strong convexity, we next show that the terms in the brackets can be lower-bounded by quadratic functions.

Let $\bstate\coloneqq\sum_i\bbundle_i$. Since $\bprice_i(\bbundle)=\nabla F_i(-\bbundle_i)$ and $\bprice_0(\bbundle)=\nabla C_b(\bstate)$,
\begin{align*}
   (-\bbundle_i)\in\partial F_i^*(\bprice_i(\bbundle))
&\text{\ \ \ and\ \ \ }
   F_i(-\bbundle_i)=-\bbundle_i\trans\bprice_i(\bbundle)-F_i^*(\bprice_i(\bbundle))
\\
   \bstate\in\partial C_b^*(\bprice_0(\bbundle))
&\text{\ \ \ and\ \ \ }
   C_b(\bprice_0(\bbundle))
   =\bstate\trans\bprice_0(\bbundle)-C_b^*(\bprice_0(\bbundle))
\enspace.
\end{align*}
Using these identities and invoking the strong convexity of $F_i$ and $C_b$, we thus obtain
\begin{align*}
     F_i(-\bbundle_i)
     + F_i^*(\eqbprice)
     + \bbundle_i\trans\eqbprice
&=   F_i^*(\eqbprice)
     - (-\bbundle_i)\trans
       \bigParens{\eqbprice-\bprice_i(\bbundle)}
     - F_i^*(\bprice_i(\bbundle))
\\
&\ge \frac{1}{2 a_i L_T}\norm{\bprice_i(\bbundle^t)-\eqbprice}^2
\\
     C_b(\bstate)
     + C_b^*(\eqbprice)
     - \bstate\trans\eqbprice
&=
     C_b^*(\eqbprice)
     - \bstate\trans
       \bigParens{\eqbprice-\bprice_0(\bbundle)}
     - C_b^*(\bprice_0(\bbundle))
\\
&\ge \frac{b}{2 L_C}\norm{\bprice_0(\bbundle)-\eqbprice}^2
\enspace.
\end{align*}
The theorem now follows by applying these lower
bounds in \Eq{bregman:1}.
\end{proof}

\begin{theorem}
\label{thm:errconnect:convplus}
If $C$ is \convplus then there exist $\eps>0$ and $c>0$ such that if $F(\bbundle)-F^\star\le\eps$ then
\[
  F(\bbundle)-F^\star
  \ge
  c\Bracks{b
  \norm{\bprice_0(\bbundle)-\eqbprice}^2
  +
  \sum_{i=1}^N
  \frac{1}{a_i}\norm{\bprice_i(\bbundle)-\eqbprice}^2
  }
\enspace.
\]
\end{theorem}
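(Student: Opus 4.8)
The plan is to follow the proof of \Thm{errconnect:L} step for step, replacing the one ingredient that is no longer available --- global strong convexity of $C_b^*$, which there came from a global Lipschitz bound on $\nabla C$ --- by a \emph{local} argument that exploits the \convexityplus structure of $C$. First I would reuse the Bregman decomposition \Eq{bregman:1} verbatim: writing $\bstate\coloneqq\sum_{i=1}^N\bbundle_i$ and using the duality identity $F^\star=-\sum_i F_i^*(\eqbprice)-C_b^*(\eqbprice)$ from \Thm{eq:app},
\[
  F(\bbundle)-F^\star
  =\sum_{i=1}^N\BigBracks{F_i(-\bbundle_i)+F_i^*(\eqbprice)+\bbundle_i\trans\eqbprice}
   +\BigBracks{C_b(\bstate)+C_b^*(\eqbprice)-\bstate\trans\eqbprice}
\enspace,
\]
where every bracket is nonnegative. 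The trader brackets require no localization: since $\nabla T$ is globally Lipschitz (its Hessian is the covariance of the bounded random vector $\payoff(\omega)$), each $F_i^*$ is $1/(a_iL_T)$-strongly convex on $\cM$, and the same computation as in the proof of \Thm{errconnect:L} lower-bounds the $i$th bracket by $\tfrac{1}{2a_iL_T}\norm{\bprice_i(\bbundle)-\eqbprice}^2$.

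The real work is in the cost bracket. Since $\eqbprice=\nabla C_b(\bstate^\star)$ for some state $\bstate^\star$, this bracket equals $h_b(\bstate)-\min h_b$, where $h_b(\bstate)\coloneqq C_b(\bstate)-\bstate\trans\eqbprice$ is convex. The crucial step is to show that $h_b$ has sublevel sets that are compact modulo $\cG(C)^\perp$ and shrink appropriately. By \Prop{convplus} (applied to $C$), $h_b$ depends only on the projection of $\bstate$ onto $\cG(C)$; restricted to $\cG(C)$ its recession function is $\bu\mapsto\sup_{\bv\in\dom C^*}\bv\trans\bu-\bu\trans\eqbprice$, which is strictly positive for every nonzero $\bu\in\cG(C)$ precisely because $\eqbprice\in\ri\dom C^*$ (indeed $\eqbprice$ is a gradient of $C$). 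Hence $S_\delta\coloneqq\Set{\bstate:h_b(\bstate)\le\min h_b+\delta}=K_\delta+\cG(C)^\perp$ for a compact $K_\delta\subseteq\cG(C)$, and the compact sets $\nabla C_b(S_\delta)$ decrease to $\set{\eqbprice}$ as $\delta\downarrow 0$. Consequently, given any ball $B$ around $\eqbprice$ lying inside the set of gradients of $C$, one can pick $\delta>0$ with $\nabla C_b(S_\delta)\subseteq B$; then choosing $\eps\le\delta$, the hypothesis $F(\bbundle)-F^\star\le\eps$ forces $\bstate\in S_\delta$ and therefore $\bprice_0(\bbundle)\in B$.

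With $\bprice_0(\bbundle)$ confined to a small neighborhood of $\eqbprice$, the quadratic lower bound follows from a second-order Taylor expansion. By \Prop{convplus:conj} there is a \convplus function $R$ agreeing with $C^*$ on $\aff\dom C^*$ with $\cG(R)=\cG(C)$ and $\nabla^2 R(\bprice)=H_C^+(\bprice)$ on $\ri\dom C^*$; this Hessian is continuous (\Prop{H:Lipschitz}) and positive definite on $\cG(C)$ at $\eqbprice$. Since $C_b^*=bC^*$, the cost bracket equals $b\,D_R(\eqbprice\,\|\,\bprice_0(\bbundle))=\tfrac b2(\eqbprice-\bprice_0)\trans\nabla^2 R(\bar\bprice)(\eqbprice-\bprice_0)$ for some $\bar\bprice$ on the segment between the two prices (hence in $B$); shrinking $B$ so that $\nabla^2 R\succeq\tfrac12 H_C^+(\eqbprice)$ on $\cG(C)$ throughout $B$, and using $\eqbprice-\bprice_0\in\cG(C)$, this is at least $c_0\,b\,\norm{\bprice_0(\bbundle)-\eqbprice}^2$ with $c_0=\tfrac14\lambda_{\min}\bigParens{H_C^+(\eqbprice)}$. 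Adding the trader and cost bounds and taking $c=\min\set{c_0,\,1/(2L_T)}$ gives the claim.

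The main obstacle is the middle paragraph: turning ``$F(\bbundle)-F^\star$ small'' into ``$\bprice_0(\bbundle)$ close to $\eqbprice$''. This is exactly where the \convexityplus hypothesis is essential --- it forces $h_b$ to be linear in the null directions and genuinely curved along $\cG(C)$, while $\eqbprice\in\ri\dom C^*$ makes the relevant recession function strictly positive, yielding the compact sublevel sets. The rest is a routine adaptation of \Thm{errconnect:L} together with Taylor's theorem; one should additionally verify that $\eps$ and $c$ depend only on $C$, $b$ and the risk aversions (and, if later needed, can be taken uniform as $b\to0$ by controlling $\eqbprice$).
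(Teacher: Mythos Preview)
Your proposal is correct and shares the paper's overall architecture: the same Bregman decomposition of $F(\bbundle)-F^\star$ (Eq.~\eqref{eq:bregman:1}/\eqref{eq:dual:connect}), followed by a localization step to place $\bprice_0(\bbundle)$ (and, in the paper, also $\bprice_i(\bbundle)$) near $\eqbprice$, and finally a second-order Taylor bound via $H_C^+(\eqbprice)$ using \Prop{convplus:conj} and \Prop{H:Lipschitz}.

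The difference is in how the localization is carried out. The paper works entirely on the dual side and treats all $N{+}1$ brackets uniformly: it passes to the \convplus extensions $G_i$ and $R$ of $F_i^*$ and $C^*$, observes that each bracket is a Bregman divergence $D_{G_i}(\eqbprice\,\|\,\bprice_i)$ or $D_R(\eqbprice\,\|\,\bprice_0)$, and uses strict convexity of $G_i$ and $R$ on $\cM$ near $\eqbprice$ to conclude that small suboptimality forces all $\bprice_i$ and $\bprice_0$ to be close to $\eqbprice$. You instead split the two cases: for the trader brackets you keep the global argument of \Thm{errconnect:L} (valid because $\nabla T$ is globally Lipschitz), and you localize only the cost bracket via a primal-side recession-function argument showing that the sublevel sets of $h_b(\bstate)=C_b(\bstate)-\bstate\trans\eqbprice$ are compact modulo $\cG(C)^\perp$. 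Your route is a bit more hands-on and makes the role of $\eqbprice\in\ri\dom C^*$ explicit; the paper's route is more symmetric and avoids the recession-function detour by staying on the dual side throughout. Either way, the Taylor step and the final constant are the same.
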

\begin{proof}
The proof begins similarly to proof of \Thm{errconnect:L}, by establishing the identity
\begin{align}
\notag
  F(\bbundle)-F^\star
&=\sum_{i=1}^N \BigBracks{
     F_i^*(\eqbprice)
     - (-\bbundle_i)\trans
       \bigParens{\eqbprice-\bprice_i(\bbundle)}
     - F_i^*(\bprice_i(\bbundle))
  }
\\
\label{eq:dual:connect}
&\qquad{}
  +
  \BigBracks{
  C_b^*(\eqbprice)
     - \bstate\trans
       \bigParens{\eqbprice-\bprice_0(\bbundle)}
     - C_b^*(\bprice_0(\bbundle))
  }
\enspace.
\end{align}
Let $G_i$ be the \convplus functions that match $F^*_i$ on $\dom F^*_i$ and $R$ be the \convplus function
that matches $C^*$ on $\dom C^*$, and that have the additional properties outlined in \Prop{convplus:conj}. Thus,
\begin{align}
\notag
  F(\bbundle)-F^\star
&=\sum_{i=1}^N \BigBracks{
     G_i(\eqbprice)
     - [\nabla G_i(\bprice_i(\bbundle))]\trans
       \bigParens{\eqbprice-\bprice_i(\bbundle)}
     - G_i(\bprice_i(\bbundle))
  }
\\
\label{eq:bregman:convplus}
&\qquad{}
  +
  b
  \BigBracks{
  R(\eqbprice)
     - [\nabla R(\bprice_0(\bbundle))]\trans
       \bigParens{\eqbprice-\bprice_0(\bbundle)}
     - R(\bprice_0(\bbundle))
  }
\enspace.
\end{align}
By \convexityplus, functions $G_i$ are strictly convex on $\cM=\dom F_i^*$ and $R$ is
also strictly convex on $\cM\subseteq\dom C^*$. Thus, $G_i$ and $R$ are strictly
convex in a neighborhood of $\eqbprice$ (in $\aff\cM$), which implies that the
terms on the right-hand side of \Eq{bregman:convplus} are strictly convex in $\bprice_i(\bbundle)$
and $\bprice_0(\bbundle)$. Since they are minimized at $\eqbprice$, we obtain
$\bprice_i(\bbundle)\to\eqbprice$ and $\bprice_0(\bbundle)\to\eqbprice$ as $F(\bbundle)-F^\star\to 0$.
Thus, by picking a sufficiently small $\eps$, we can guarantee that $\bprice_i(\bbundle)$ and
$\bprice_0(\bbundle)$ are arbitrarily close to $\eqbprice$ whenever $F(\bbundle)-F^\star\le\eps$.
From Taylor's theorem, and the fact that $\nabla^2 G_i\equiv H_{F_i}^+\equiv(1/a_i)H_T^+$ and
$\nabla^2 R\equiv H_C^+$,
we obtain
\begin{align*}
     G_i(\eqbprice)
     - [\nabla G_i(\bprice_i(\bbundle))]\trans
       \bigParens{\eqbprice-\bprice_i(\bbundle)}
     - G_i(\bprice_i(\bbundle))
&=
     \frac{1}{2a_i}\bigParens{\eqbprice-\bprice_i(\bbundle)}\trans
     H_T^+(\bar{\bprice}_i)\bigParens{\eqbprice-\bprice_i(\bbundle)}
\\   
  R(\eqbprice)
     - [\nabla R(\bprice_0(\bbundle))]\trans
       \bigParens{\eqbprice-\bprice_0(\bbundle)}
     - R(\bprice_0(\bbundle))
&=
     \frac{1}{2}\bigParens{\eqbprice-\bprice_0(\bbundle)}\trans
     H_C^+(\bar{\bprice}_0)\bigParens{\eqbprice-\bprice_0(\bbundle)}
.
\end{align*}
Now, envoking \Prop{H:Lipschitz} for \convplus
functions $T$ and $C$, we obtain that for a sufficiently small $\eps$ we have that if $F(\bbundle)-F^\star\le\eps$,
then
\[
  H_T^+(\bar{\bprice}_i)
  \within\Parens{1\pm \frac12}H_T^+(\eqbprice)
\enspace,
\quad
  H_C^+(\bar{\bprice}_0)
  \within\Parens{1\pm \frac12}H_C^+(\eqbprice)
\enspace.
\]
Plugging this back into \Eq{bregman:convplus}, we obtain
\begin{align}
\notag
  F(\bbundle)-F^\star
&\within\Parens{1\pm\frac12}\sum_{i=1}^N
  \frac{1}{2a_i}\bigParens{\eqbprice-\bprice_i(\bbundle)}\trans
     H_T^+(\eqbprice)\bigParens{\eqbprice-\bprice_i(\bbundle)}
\\
\label{eq:subopt:convplus}
&\qquad{}
  +
  b\Parens{1\pm\frac12}
  \frac{1}{2}\bigParens{\eqbprice-\bprice_0(\bbundle)}\trans
     H_C^+(\eqbprice)\bigParens{\eqbprice-\bprice_0(\bbundle)}
\enspace.
\end{align}
The theorem now follows by noting that the ranges of matrices
$H_T(\eqbprice)$ and $H_C(\eqbprice)$ include all directions $\bprice-\bprice'$ where $\bprice,\bprice'\in\cM$, because
$\cM=\dom T^*$ and $\cM\subseteq\dom C^*$.
\end{proof}

\subsection{Local Convergence Rate of Block-Coordinate Descent}

In this section, we consider general unconstrained convex minimization, but use the same notation as in the rest of the paper.
The key difference from the standard analysis of Nesterov \citep{Nest12} is the focus on the local convergence in the neighborhood of the solution, rather than global convergence.
This analysis is not specific to our setting, and may be of independent interest.


We consider the optimization problem
\[
   \min_{\bbundle\in\R^{NK}} F(\bbundle)
\enspace,
\]
where $F:\R^{NK}\to\R$ is a differentiable convex function bounded below. We are given a set of blocks $\alpha\in\cA$, which partition the coordinates $[N]\times[K]$. The \emph{block-coordinate descent} algorithm sets the initial iterate $\bbundle^0=\zero$, and in each iteration chooses an index $\alpha\in\cA$ uniformly at random and fully optimizes the objective over the coordinates in block $\alpha$: given the current iterate $\bbundle^t$, the new iterate is $\bbundle^{t+1}=\Psi_\alpha(\bbundle^t)$ where
\begin{equation}
\label{eq:update}
  \Psi_\alpha(\bbundle) \coloneqq  \myargmin_{\bbundle' \in \bbundle + \range(E_\alpha)} \obj(\bbundle')
\end{equation}
and $E_\alpha$ is the embedding matrix for the block $\alpha$ as introduced in \App{trader:dynamics}.

\citet{Nest12} shows that when the optimization objective is strongly convex,
it is possible to
achieve the linear convergence rate of the form
$
  \ex{F(\bbundle^t)}-F^\star\le c\kappa^{t}
$
for some constants $c>0$ and $\kappa<1$.
%
%
While the objective in our setting is not globally strongly convex, it is
strongly convex locally, so the optimization eventually stays within a region where the strong convexity constant is bounded away from zero, yielding
a linear convergence rate. Recall from \Sec{convergence} that $\kappahigh$ is an upper bound on the local convergence rate of an algorithm if
the algorithm with probability 1 reaches an iteration $t_0$ such that for some $c>0$ and all $t\ge t_0$,
\[
  \ex{F(\bbundle^t)\bigGiven\bbundle^{t_0}}-F^\star\le c\kappahigh^{t-t_0}
\enspace.
\]
Also, $\kappalow$ is a lower bound on the local convergence rate of an algorithm if $\kappahigh\ge\kappalow$ holds
for all upper bounds $\kappahigh$.

Our local convergence-rate results are based on various local
properties of $F$, by which we mean properties that hold on some
\emph{proper level set} of $F$, in the sense of the following definition:
\begin{definition}[Level Set]
\label{def:level:set}
For a given $\lambda\in\R$, the level set $S(F,\lambda)$ of a function $F$ is defined as the set of points $\bbundle$ where $F$ is at most $\lambda$:
\[
  S(F,\lambda)\coloneqq\set{\bbundle:\:F(\bbundle)\le\lambda}
\enspace.
\]
The level set $S(F,\lambda)$ is called \emph{proper} if $\lambda>F^\star$.
\end{definition}

Our first result requires that the objective
be locally strongly convex and smooth in the sense of the following definition:
\begin{definition}[Strong Convexity and Smoothness for Matrix Seminorms]
Let $A$ and $B$ be symmetric positive-semidefinite matrices.
We say that a differentiable function $\obj$ is strongly convex on a set $S$ with respect to $A$ and smooth on $S$ with
respect to $B$
if
\[
   \frac12\bdelta\trans A\bdelta \le F(\bbundle+\bdelta)-F(\bbundle)-\bdelta\trans\nabla F(\bbundle) \le \frac12\bdelta\trans B\bdelta
\]
whenever $\bbundle\in S$ and $\bbundle+\bdelta\in S$.
\end{definition}

To state the theorem, we introduce additional notation.
For a matrix $M\in\R^{NK\times NK}$, let $M_{\alpha,\alpha}$ denote the block consisting of rows and columns in~$\alpha$,
i.e., $M_{\alpha,\alpha}\coloneqq E_\alpha\trans M E_\alpha$. And let $\cD:\R^{NK\times NK}\to\R^{NK\times NK}$ be the
operation of retaining only the block diagonal of a matrix, i.e., $\cD(M)\coloneqq\diag_{\alpha\in\cA} M_{\alpha,\alpha}$.
Recall that $M^+$ denotes the pseudoinverse of $M$.
Finally, let $\bg^t\coloneqq\nabla F(\bbundle^t)$ denote the gradient of the objective in the $t$th iteration.

\begin{theorem}
\label{thm:asympt:0}
Assume that $F$ attains a minimum and let $S\coloneqq S(F,\lambda)$ be a proper level set which satisfies the following conditions:
\begin{enumerate}[noitemsep]
\item $F$ is strongly convex and smooth on $S$ with respect to some positive-semidefinite matrices $A$ and $B$ such that $\cG(F)\subseteq\range(A)$, $\cG(F)\subseteq\range(B)$.
\item There exist non-negative constants $\sigmalow\le\sigmahigh\le\infty$ and $\ell<\infty$ such that whenever some iterate $\bbundle^{t_0}$
lies in $S$, then all the consecutive iterates with $t\ge t_0+\ell$ satisfy
\begin{align*}
      \sigmalow\ex{(\bg^t)\trans A^+\bg^t\BigGiven\bbundle^{t_0}}
&
      \le
      \ex{(\bg^t)\trans \cD(B)^+\bg^t\BigGiven\bbundle^{t_0}}
\enspace,
\\
      \ex{(\bg^t)\trans \cD(A)^+\bg^t\BigGiven\bbundle^{t_0}}
&
      \le
      \sigmahigh\ex{(\bg^t)\trans B^+\bg^t\BigGiven\bbundle^{t_0}}
\enspace,
\end{align*}
where the expectation is over the random choice of updates by the algorithm.
\end{enumerate}
Then if $F^\star<F(\bbundle^{t_0})<\lambda$ and $t\ge t_0+\ell$, we have
\[
\max\Braces{1-\frac{\sigmahigh}{\card{\cA}}\;,\; 0}
\le
\frac{\ex{\obj(\iterbbundle{t+1})\bigGiven\iterbbundle{t_0}} - \obj^\star}
     {\ex{\obj(\iterbbundle{t})\given\iterbbundle{t_0}} - \obj^\star}
\le
1-\frac{\sigmalow}{\card{\cA}}
\enspace.
\]
%
\end{theorem}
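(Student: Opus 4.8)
The plan is to combine a two-sided one-step progress bound for randomized block-coordinate descent --- a refinement of the analysis of \citet{Nest12} phrased with matrix seminorms and pseudoinverses --- with two bounds that sandwich the current suboptimality between multiples of a squared gradient seminorm. First I would record the structural facts used throughout. Block-coordinate descent is monotone, $F(\bbundle^{t+1})\le F(\bbundle^t)$, directly from the definition of $\Psi_\alpha$; hence if $\bbundle^{t_0}\in S$ then $\bbundle^t\in S$ for all $t\ge t_0$, and, since $F^\star<\lambda$, the minimizer $\bbundle^\star$ also lies in $S$, which is convex and closed. As $F$ is differentiable and attains its minimum on $\R^{NK}$, $\nabla F(\bbundle^\star)=\zero$, so $\bg^t=\nabla F(\bbundle^t)-\nabla F(\bbundle^\star)\in\cG(F)\subseteq\range(A)\cap\range(B)$; restricting $F$ to a single block shows likewise that $E_\alpha\trans\bg^t\in\range(A_{\alpha,\alpha})\cap\range(B_{\alpha,\alpha})$, which makes the pseudoinverse identities below exact.

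\textbf{One-step bounds.} Write $\bg_\alpha^t\coloneqq E_\alpha\trans\bg^t$. Smoothness of $F$ on $S$ with respect to $B$ gives, for $s\in[0,1]$, $F\bigParens{\bbundle^t-sE_\alpha B_{\alpha,\alpha}^+\bg_\alpha^t}\le F(\bbundle^t)-\bigParens{s-s^2/2}(\bg_\alpha^t)\trans B_{\alpha,\alpha}^+\bg_\alpha^t$; a short argument shows the whole segment $s\in[0,1]$ stays in $S$ (if it exited $S$ at some $s^\star<1$ then $F$ at that point would be $\le F(\bbundle^t)<\lambda$, contradicting that an exit point lies on the level-set boundary), so minimizing the left side over the block and taking $s=1$ yields $F(\bbundle^t)-F(\Psi_\alpha(\bbundle^t))\ge\tfrac12(\bg_\alpha^t)\trans B_{\alpha,\alpha}^+\bg_\alpha^t$. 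Dually, strong convexity of $F$ on $S$ with respect to $A$ gives $F(\bbundle^t+E_\alpha\bu)\ge F(\bbundle^t)+\bu\trans\bg_\alpha^t+\tfrac12\bu\trans A_{\alpha,\alpha}\bu$ for $\bbundle^t+E_\alpha\bu\in S$; evaluating at the block minimizer and maximizing the right side over $\bu$ via \Prop{pseudo} yields $F(\bbundle^t)-F(\Psi_\alpha(\bbundle^t))\le\tfrac12(\bg_\alpha^t)\trans A_{\alpha,\alpha}^+\bg_\alpha^t$. Averaging over the uniformly random block $\alpha$, and using $\sum_{\alpha\in\cA}(\bg_\alpha^t)\trans M_{\alpha,\alpha}^+\bg_\alpha^t=(\bg^t)\trans\cD(M)^+\bg^t$, gives
\[
  \tfrac{1}{2\card{\cA}}(\bg^t)\trans\cD(B)^+\bg^t
  \;\le\; \ex{F(\bbundle^t)-F(\bbundle^{t+1})\bigGiven\bbundle^t}
  \;\le\; \tfrac{1}{2\card{\cA}}(\bg^t)\trans\cD(A)^+\bg^t .
\]

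\textbf{Global bounds and combination.} Applying strong convexity with respect to $A$ with base point $\bbundle^t$ and displacement $\bbundle^\star-\bbundle^t$, then bounding the displacement term by its maximum over all directions via \Prop{pseudo}, gives $F(\bbundle^t)-F^\star\le\tfrac12(\bg^t)\trans A^+\bg^t$; applying smoothness with respect to $B$ along the full gradient step $\bbundle^t-B^+\bg^t$ --- which stays in $S$ by the same level-set argument, since $F(\bbundle^t)<\lambda$ --- and using $B^+BB^+=B^+$ gives $F(\bbundle^t)-F^\star\ge\tfrac12(\bg^t)\trans B^+\bg^t$. Now fix $t\ge t_0+\ell$ and take $\ex{\cdot\given\bbundle^{t_0}}$ of the one-step sandwich (tower rule). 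For the upper bound, chain the first spectral hypothesis with the $A$-bound: $\ex{F(\bbundle^{t+1})-F^\star\given\bbundle^{t_0}}\le\ex{F(\bbundle^t)-F^\star\given\bbundle^{t_0}}-\tfrac{1}{2\card{\cA}}\ex{(\bg^t)\trans\cD(B)^+\bg^t\given\bbundle^{t_0}}\le\bigParens{1-\sigmalow/\card{\cA}}\ex{F(\bbundle^t)-F^\star\given\bbundle^{t_0}}$. For the lower bound, symmetrically, $\ex{F(\bbundle^t)-F(\bbundle^{t+1})\given\bbundle^{t_0}}\le\tfrac{1}{2\card{\cA}}\ex{(\bg^t)\trans\cD(A)^+\bg^t\given\bbundle^{t_0}}\le\tfrac{\sigmahigh}{\card{\cA}}\ex{F(\bbundle^t)-F^\star\given\bbundle^{t_0}}$, so the ratio is at least $1-\sigmahigh/\card{\cA}$; it is also trivially at least $0$ because $F(\bbundle^{t+1})\ge F^\star$, which disposes of the case $\sigmahigh=\infty$ as well. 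Dividing by the positive quantity $\ex{F(\bbundle^t)\given\bbundle^{t_0}}-F^\star$ gives the claim.

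\textbf{Main obstacle.} The delicate part is the rank-deficiency bookkeeping: verifying that $\bg^t$ and each block restriction $E_\alpha\trans\bg^t$ lie in the ranges of $A$, $B$ and of $A_{\alpha,\alpha}$, $B_{\alpha,\alpha}$ (so that every use of \Prop{pseudo} is exact rather than vacuous), and pinning down the level-set argument that the full and block gradient steps from $\bbundle^t$ remain in $S$ --- which is why the conclusion is stated only for iterates strictly inside the proper level set. The remainder is the standard coordinate-descent argument with scalar Lipschitz / strong-convexity constants replaced by the matrix seminorms $A$, $B$ and their block diagonals $\cD(A)$, $\cD(B)$.
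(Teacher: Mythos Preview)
Your proof is correct and follows essentially the same route as the paper: quadratic upper and lower bounds on the one-step block progress from smoothness and strong convexity (yielding the $\cD(B)^+$ and $\cD(A)^+$ sandwich after averaging over $\alpha$), the same bounds applied to the ``full block'' to sandwich $F(\bbundle^t)-F^\star$ between $\tfrac12(\bg^t)\trans B^+\bg^t$ and $\tfrac12(\bg^t)\trans A^+\bg^t$, and then the spectral hypotheses plus the tower rule to combine.

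One small point: your justification ``restricting $F$ to a single block shows likewise that $E_\alpha\trans\bg^t\in\range(A_{\alpha,\alpha})$'' is not quite the right reasoning. What you actually need is the linear-algebra fact that for positive-semidefinite $A$, $E_\alpha\trans\range(A)\subseteq\range(A_{\alpha,\alpha})$; equivalently $\range(A)\subseteq\range(\cD(A))$. The paper isolates and proves this as a separate proposition (via $\range(A)=\range(A^{1/2})$ and an arithmetic--quadratic mean inequality). It does not follow from looking at the gradient space of the restricted function, because the hypothesis $\cG(F)\subseteq\range(A)$ is an assumption about $F$ and $A$ jointly, not something inherited automatically by block restrictions. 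Since you correctly flagged this as the delicate step, just be sure to prove (or cite) that range inclusion rather than appealing to an analogy with the full-space case.
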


The proof is deferred to a separate subsection below (\App{proof:asympt:0}). We next show
how to apply \Thm{asympt:0} to obtain bounds on local convergence rate.
For the lower bounds, we assume that the optimization problem is \emph{non-degenerate}
in the sense that the probability of reaching an iterate $\bbundle^t$ which attains a minimum is zero;
in other words, only approximate solutions are reached in a finite number of steps.
This condition holds for the potential function from the main paper under all-security dynamics (\ASD), whenever
there are at least three agents with distinct beliefs.

\begin{proposition}
\label{prop:t:t0}
Let $\Sin=S(F,\lambdain)$ and
$\Sout=S(F,\lambdaout)$ be level sets with $F^\star<\lambdain<\lambdaout$, and
assume that the conditions of \Thm{asympt:0} hold for the level set $\Sout$ with
matrices $A$ and $B$, and non-negative scalars $\sigmalow\le\sigmahigh\le\infty$ and $\ell<\infty$.
Then for every $\bbundle^{t_0}\in\Sin$ there exists $c>0$ such that for all $t\ge t_0$
\[
  \ex{F(\bbundle^t)\given\bbundle^{t_0}}-F^\star
  \le c\kappahigh^{t-t_0}
\enspace,
\]
where $\kappahigh=1-\sigmalow/\card{\cA}$. Furthermore, if the optimization problem
is non-degenerate then for every $\bbundle^{t_0}\in\Sin$, which has a non-zero probability of occurring,
and all $t_1\ge t_0$, there exists $c>0$ such that for all $t\ge t_1$
\[
  \ex{F(\bbundle^t)\given\bbundle^{t_1}}-F^\star
  \ge c\kappalow^{t-t_1}
\enspace,
\]
where $\kappalow=\max\set{1-\sigmahigh/\card{\cA},\,0}$.
\end{proposition}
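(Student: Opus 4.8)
The plan is to deduce everything from the per-step ratio bounds of \Thm{asympt:0} together with the fact that block-coordinate descent is monotone in the objective. First I would record that $F(\iterbbundle{t+1})\le F(\iterbbundle{t})$ for every realization — this is immediate from \eqref{eq:update}, since $\iterbbundle{t}$ is itself feasible for the block minimization — so once an iterate lands in a level set it never leaves it. In particular, starting from any $\iterbbundle{t_0}\in\Sin$ all later iterates lie in $\Sin\subseteq\Sout$, whence $\ex{F(\iterbbundle{t})\given\iterbbundle{t_0}}-F^\star\le\lambdain-F^\star$ for all $t\ge t_0$. If $F(\iterbbundle{t_0})=F^\star$ both claims are trivial for that starting point, so assume $F^\star<F(\iterbbundle{t_0})\le\lambdain<\lambdaout$. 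This is exactly the setting of \Thm{asympt:0} applied with $S=\Sout$: for all $t\ge t_0+\ell$,
\[
  \ex{F(\iterbbundle{t+1})\given\iterbbundle{t_0}}-F^\star
  \le\kappahigh\bigParens{\ex{F(\iterbbundle{t})\given\iterbbundle{t_0}}-F^\star},
  \qquad \kappahigh=1-\sigmalow/\card{\cA},
\]
and symmetrically the reverse inequality with $\kappalow=\max\set{1-\sigmahigh/\card{\cA},0}$.

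For the upper bound I would telescope this from $t=t_0+\ell$ (the bound persists trivially once the expected suboptimality reaches zero), obtaining $\ex{F(\iterbbundle{t})\given\iterbbundle{t_0}}-F^\star\le(\lambdain-F^\star)\kappahigh^{\,t-t_0-\ell}$ for $t\ge t_0+\ell$. Setting $c\coloneqq(\lambdain-F^\star)\kappahigh^{-\ell}$ gives the claimed $\ex{F(\iterbbundle{t})\given\iterbbundle{t_0}}-F^\star\le c\kappahigh^{\,t-t_0}$ for $t\ge t_0+\ell$, and for the finitely many $t$ with $t_0\le t<t_0+\ell$ the same inequality holds because $c\kappahigh^{\,t-t_0}\ge c\kappahigh^{\ell}=\lambdain-F^\star$. (If $\sigmalow=\card{\cA}$, so $\kappahigh=0$, the expected suboptimality is exactly zero from step $t_0+\ell+1$ onward and the statement holds for any positive base after a harmless enlargement of $t_0$.)

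For the lower bound, fix $\iterbbundle{t_0}\in\Sin$ occurring with positive probability and $t_1\ge t_0$; by monotonicity $\iterbbundle{t_1}\in\Sin\subseteq\Sout$. Applying \Thm{asympt:0} with starting iterate $\iterbbundle{t_1}$ and telescoping the reverse inequality from $t_1+\ell$ gives $\ex{F(\iterbbundle{t})\given\iterbbundle{t_1}}-F^\star\ge\kappalow^{\,t-t_1-\ell}\bigParens{\ex{F(\iterbbundle{t_1+\ell})\given\iterbbundle{t_1}}-F^\star}$ for $t\ge t_1+\ell$; choosing $c$ as the minimum of $\kappalow^{-\ell}\bigParens{\ex{F(\iterbbundle{t_1+\ell})\given\iterbbundle{t_1}}-F^\star}$ and the finitely many values $\ex{F(\iterbbundle{t})\given\iterbbundle{t_1}}-F^\star$ with $t_1\le t\le t_1+\ell$ then yields $\ex{F(\iterbbundle{t})\given\iterbbundle{t_1}}-F^\star\ge c\kappalow^{\,t-t_1}$ for all $t\ge t_1$, the degenerate case $\kappalow=0$ being handled separately (trivial for $t>t_1$, and $c=F(\iterbbundle{t_1})-F^\star$ at $t=t_1$).

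The only point requiring more than bookkeeping is to rule out the lower bound being vacuous: the ratio in \Thm{asympt:0} is meaningful only while the denominator is nonzero, and we need $c>0$, so all the conditional suboptimalities $\ex{F(\iterbbundle{t})\given\iterbbundle{t_1}}-F^\star$ appearing above must be strictly positive. This is precisely where non-degeneracy is used: since $\iterbbundle{t_0}$ has positive probability and $\iterbbundle{t_1}$ is reached from it, a minimizer is attained along this trajectory with probability zero, so $F(\iterbbundle{t})>F^\star$ almost surely for every $t\ge t_1$ and the conditional expectations stay bounded away from $F^\star$. I would also note that, since $\iterbbundle{t_1}$ is itself a random iterate, the lower-bound claim is to be read as holding for almost every realization of $\iterbbundle{t_1}$, with $c$ depending on that realization.
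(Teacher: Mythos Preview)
Your proof is correct and follows essentially the same route as the paper's: both arguments rely on monotonicity of the iterates to remain in $\Sout$, invoke \Thm{asympt:0} for the per-step ratio bounds once $t\ge t_0+\ell$ (resp.\ $t\ge t_1+\ell$), telescope by induction, and then choose $c$ large enough (resp.\ small enough) to absorb the finitely many initial steps. Your constant for the upper bound uses $\lambdain-F^\star$ where the paper uses $F(\iterbbundle{t_0})-F^\star$, and your lower-bound constant is a slightly different minimum than the paper's, but both choices work for the same reasons; your handling of the corner cases $\kappahigh=0$ and $\kappalow=0$ and your remark that the lower bound must be read realization-wise are a bit more explicit than the paper's treatment.
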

\begin{proof}
For the upper bound, if $F(\bbundle^{t_0})=F^\star$, the bound holds for any $c>0$.
Otherwise, set $c\coloneqq\kappahigh^{-\ell}\bigParens{F(\bbundle^{t_0})-F^\star}$.
Since the optimization does not increase the objective, this guarantees that the bound holds
for $t_0\le t\le t_0+\ell$. For $t\ge t_0+\ell$, \Thm{asympt:0} gives
\[
  \ex{F(\bbundle^{t+1})\given\bbundle^{t_0}}-F^\star
  \le
  \kappahigh\cdot\bigParens{\ex{F(\bbundle^t)\given\bbundle^{t_0}}-F^\star}
\enspace,
\]
and the upper bound now follows by induction.

For the lower bound, note that the non-degeneracy guarantees that
$F(\bbundle^t)>F^\star$ for all $t\ge t_0$. If $\kappalow=0$, the bound holds for any $c$.
Next consider the case when $\kappalow>0$. Pick $t_1\ge t_0$ and set
\[
  c\coloneqq \min_{t_1\le t\le t_1+\ell}\;\kappalow^{-\ell}\bigParens{\ex{F(\bbundle^t)\given\bbundle^{t_1}}-F^\star}
\enspace,
\]
which is non-zero, because $\kappalow\in(0,1]$ and $F(\bbundle^t)>F^\star$.
This guarantees that the bound holds up to $t=t_1+\ell$. For larger $t$,
it follows by \Thm{asympt:0} and induction.
\end{proof}

When the objective is \convplus, we can use the Hessian at the minimum of $F$ to obtain both a quadratic lower and
upper bound on $F$, yielding the following bounds on the local convergence rate:
\begin{theorem}
\label{thm:conv}
Let $F$ be a \convplus function attaining a minimum at $\bbundle^\star$. Let $H^\star\coloneqq\nabla^2 F(\bbundle^\star)$ and assume
there exists $\lambda > F^\star$
and non-negative constants $\sigmalow$, $\sigmahigh$, and $\ell$ such that
\begin{equation}
\label{eq:thm:conv}
      \sigmalow
      \le
      \frac{\ex{\;(\bg^t)\trans \cD(H^\star)^+\bg^t\;\bigGiven\; \bbundle^{t-\ell} \;}}
           {\ex{\;(\bg^t)\trans (H^\star)^+\bg^t\;\bigGiven\; \bbundle^{t-\ell} \;}}
      \le
      \sigmahigh
\end{equation}
whenever $F(\bbundle^{t-\ell}) \leq \lambda$,
where the expectation
is over the choice of updates by the algorithm.
%
Then, for all $\eps>0$, the local convergence rate is bounded
above by $\kappahigh=1-\sigmalow/\card{\cA}+\eps$. If the optimization problem
is non-degenerate, the local convergence is also bounded below by $\kappalow=1-\sigmahigh/\card{\cA}$.
\end{theorem}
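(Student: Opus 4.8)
The plan is to derive \Thm{conv} from the generic block-coordinate descent bound \Thm{asympt:0} together with its consequence \Prop{t:t0}, by exploiting that a \convplus function looks, arbitrarily close to its minimizer, like the fixed quadratic associated with $H^\star=\nabla^2 F(\bbundle^\star)$. Concretely, I would pick a parameter $\delta\in(0,1)$, find a proper level set on which $F$ is strongly convex with respect to $A\coloneqq(1-\delta)H^\star$ and smooth with respect to $B\coloneqq(1+\delta)H^\star$, check that the eigenvalue hypothesis \eqref{eq:thm:conv} translates into condition~2 of \Thm{asympt:0} for these $A,B$, and then let $\delta\to 0$.

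First I would establish the quadratic sandwich on a small level set. Since $F$ is \convplus, $\nabla^2 F$ is continuous on $\interior\dom F$, $\nabla^2 F(\bbundle^\star)=H^\star$, and $\range(\nabla^2 F(\bbundle))=\cG(F)=\range(H^\star)$ everywhere. By \Prop{convplus}, $F$ is invariant under the projection $P$ onto $\cG(F)$; since $F$ attains a minimum, the linear term in $\cG(F)^\perp$ must vanish, so $F$ is constant along $\cG(F)^\perp$ and its restriction $F_0$ to $\bbundle^\star+\cG(F)$ is strictly convex with bounded sublevel sets near $F^\star$. Hence, by continuity of $\nabla^2 F$, there is $\lambda_\delta\in(F^\star,\lambda]$ such that the convex level set $\Sout\coloneqq S(F,\lambda_\delta)$ satisfies $(1-\delta)H^\star\preceq\nabla^2 F(\bbundle)\preceq(1+\delta)H^\star$ for all $\bbundle\in\Sout$. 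Integrating the Taylor remainder $F(\bbundle+\bdelta)-F(\bbundle)-\bdelta\trans\nabla F(\bbundle)=\int_0^1(1-s)\,\bdelta\trans\nabla^2 F(\bbundle+s\bdelta)\,\bdelta\,ds$ along segments contained in $\Sout$ shows that $F$ is strongly convex on $\Sout$ with respect to $A$ and smooth on $\Sout$ with respect to $B$; moreover $\range(A)=\range(B)=\cG(F)$, so condition~1 of \Thm{asympt:0} holds.

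Next I would verify the eigenvalue condition. Since $\cD(A)=(1-\delta)\cD(H^\star)$, $\cD(B)=(1+\delta)\cD(H^\star)$, $A^+=(1-\delta)^{-1}(H^\star)^+$ and $B^+=(1+\delta)^{-1}(H^\star)^+$, hypothesis \eqref{eq:thm:conv} — which, conditioning on $\bbundle^{t_0}=\bbundle^{t-\ell}$ with $F(\bbundle^{t_0})\le\lambda$, asserts $\sigmalow\,\ex{(\bg^t)\trans(H^\star)^+\bg^t}\le\ex{(\bg^t)\trans\cD(H^\star)^+\bg^t}\le\sigmahigh\,\ex{(\bg^t)\trans(H^\star)^+\bg^t}$ — immediately yields the two inequalities of condition~2 of \Thm{asympt:0} on $\Sout$ with $\sigmalow'\coloneqq\sigmalow(1-\delta)/(1+\delta)$ and $\sigmahigh'\coloneqq\sigmahigh(1+\delta)/(1-\delta)$ (note $\sigmalow'\le\sigmahigh'$); here I use that $F$ is non-increasing along the iterates, so $\bbundle^{t_0}\in\Sout$ forces $F(\bbundle^{t-\ell})\le F(\bbundle^{t_0})\le\lambda_\delta\le\lambda$ for all $t\ge t_0+\ell$. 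Applying \Thm{asympt:0} on $\Sout$ and then \Prop{t:t0} with $\Sin\coloneqq S(F,\lambda_\delta')$ for any $\lambda_\delta'\in(F^\star,\lambda_\delta)$ — the iterates enter $\Sin$ with probability $1$ because $F(\bbundle^t)\to F^\star$ almost surely, a standard property of exact block-coordinate descent on a convex function bounded below — gives that $\kappahigh(\delta)\coloneqq 1-\sigmalow'/\card{\cA}=1-\tfrac{\sigmalow(1-\delta)}{(1+\delta)\card{\cA}}$ is an upper bound on the local convergence rate and, under non-degeneracy, $\kappalow(\delta)\coloneqq\max\{1-\sigmahigh'/\card{\cA},\,0\}=\max\{1-\tfrac{\sigmahigh(1+\delta)}{(1-\delta)\card{\cA}},\,0\}$ is a lower bound.

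Finally I would pass to the limit in $\delta$. Given $\eps>0$, choosing $\delta$ small enough makes $\kappahigh(\delta)\le 1-\sigmalow/\card{\cA}+\eps$, which proves the upper-bound claim. For the lower bound, $\kappalow(\delta)$ is a lower bound on the local convergence rate for every $\delta\in(0,1)$, i.e.\ every upper bound $\kappahigh$ satisfies $\kappahigh\ge\kappalow(\delta)$ for all $\delta$; letting $\delta\to 0$ gives $\kappahigh\ge\sup_{\delta\in(0,1)}\kappalow(\delta)=\max\{1-\sigmahigh/\card{\cA},0\}$, so $1-\sigmahigh/\card{\cA}$ is a lower bound (the statement being vacuous when this quantity is negative). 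The main obstacle is the first step: producing a genuine proper level set on which the Hessian is pinched between $(1\pm\delta)H^\star$ — this hinges on the invariance of $F$ under $P$ plus strict convexity of $F_0$ to guarantee that sublevel sets near $F^\star$ really collapse toward $\bbundle^\star+\cG(F)^\perp$ — together with the companion fact that the block-coordinate iterates reach such a level set, which rests on the convergence $F(\bbundle^t)\to F^\star$.
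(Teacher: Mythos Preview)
Your proposal is correct and follows essentially the same approach as the paper's proof: pinch the Hessian between $(1\pm\delta)H^\star$ on a small level set (the paper's \Prop{S:emax} and \Prop{AB}), translate \eqref{eq:thm:conv} into condition~2 of \Thm{asympt:0} with the scaled constants $\sigmalow(1-\delta)/(1+\delta)$ and $\sigmahigh(1+\delta)/(1-\delta)$, apply \Prop{t:t0}, invoke that every proper level set is reached almost surely (the paper's \Prop{all:reached}), and let $\delta\to 0$. The only points you leave implicit---the tower-property step passing from conditioning on $\bbundle^{t-\ell}$ to conditioning on $\bbundle^{t_0}$, and the proof that iterates reach arbitrary level sets---are exactly what the paper spells out in separate propositions.
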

%
%
%
We defer the proof of \Thm{conv} to a separate subsection (\App{proof:conv}).
The ratio bounded in \Eq{thm:conv} can be interpreted as a curvature of the quadratic form $\cD(H^\star)^+$ under the norm described
by the quadratic form $(H^\star)^+$. Larger values of the ratio (larger curvature) mean faster convergence. We will refer to the
bounds $\sigmalow$ and $\sigmahigh$
as \emph{lower and upper bounds on local strong convexity} of $F$ (under randomized block-coordinate descent updates).
Any non-trivial lower bound, i.e., $\sigmalow>0$, yields local linear convergence rate since it implies $\kappahigh<1$,
since $\eps$ can be chosen arbitrarily small.

If we know the Hessian $H^\star$, we can obtain a simple lower bound $\sigmalow$ and a linear convergence
by considering all directions in the span of gradients (and setting $\ell=0$).
The span of gradients
coincides with $\cG(F)$, because $\zero$ is a valid gradient (since $F$ attains a minimum). Thus, we can obtain
$\sigmalow$ by the following generalized eigenvalue calculation:
\begin{equation}
\label{eq:sigmalow}
   \sigmalow=\min_{\bu\in\cG(F)\wo\set{\zero}} \frac{\bu\trans \cD(H^\star)^+\bu}{\bu\trans (H^\star)^+\bu}
            =\lambda_{\min}\bigParens{(H^\star)^{1/2}\cD(H^\star)^+(H^\star)^{1/2}}
\enspace,
\end{equation}
where $\lambda_{\min}(\cdot)$ is the smallest positive eigenvalue.
This is a valid setting of $\sigmalow$, because for all $\bu\in\cG(F)$ we then have $\sigmalow \bu\trans (H^\star)^+\bu \leq \bu\trans \cD(H^\star)^+\bu$ and therefore,
when $F(\bbundle^{t-\ell})\le\lambda$, also
\[
\sigmalow  {\ex{\;(\bg^t)\trans (H^\star)^+\bg^t\;\bigGiven\; \bbundle^{t-\ell} \;}} \leq
{\ex{\;(\bg^t)\trans \cD(H^\star)^+\bg^t\;\bigGiven\; \bbundle^{t-\ell} \;}}.
\]
This value of $\sigmalow$ is non-zero, because $\cG(F)=\range (H^\star)^+\subseteq\range\cD(H^\star)^+$.
We pursue this style of analysis for single-securities dynamics (\SSD), where we derive $\sigmalow$ using \Eq{sigmalow}, but do so directly in terms of Hessians of functions $C$ and $T$
rather than the potential $F$. For all-securities dynamics, we consider $\ell>0$
and take advantage of the averaging effect of expectations in \Eq{thm:conv}, which yields a tighter lower bound $\sigmalow$
and a non-trivial upper bound $\sigmahigh$.

\subsection{Proof of \Thm{asympt:0}}
\label{app:proof:asympt:0}

\begin{proof}[Proof of \Thm{asympt:0}]
The vector of partial derivatives of $F$ within block $\alpha$ will be called \emph{partial gradient} and denoted
$\nabla_\alpha F(\bbundle)\coloneqq E_\alpha\trans\nabla F(\bbundle)$.
Since $F$ has a minimizer, its set of gradients contains a zero, and therefore its gradient space $\cG(F)$
coincides with the span of its gradients (see \Def{grad:space}). We also define the set $\cG_\alpha(F)\coloneqq E_\alpha\trans\cG(F)$,
which then coincides with the span of partial gradients of $F$. Note that by \Prop{D:range}, proved below,
any positive-semidefinite matrix $M$ satisfies $\range(M)\subseteq\range(\cD(M))$.
Since $\cG(F)\subseteq\range(A)$, we therefore obtain $\cG_\alpha(F)\subseteq\range(B_{\alpha,\alpha})$ by the following
reasoning:
\[
   \cG_\alpha(F)=E_\alpha\trans\cG(F)\subseteq E_\alpha\trans\range(A)\subseteq E_\alpha\trans\range(\cD(A))=\range(A_{\alpha,\alpha})
\enspace,
\]
and similarly obtain $\cG_\alpha(F)\subseteq\range(B_{\alpha,\alpha})$.

Assume $F^\star<F(\bbundle^{t_0})<\lambda$ and let $t\ge t_0+\ell$. Consider the bundle $\bbundle\coloneqq\bbundle^{t}$ and analyze
the value of the objective at the next iterate $\bbundle^{t+1}=\Psi_\alpha(\bbundle)$. First note that the objective does
not increase during optimization, so $F(\bbundle)<\lambda$ and in particular $\bbundle\in S$. From \Eq{update}, we then have
\begin{align}
\label{eq:conv:0a}
  \obj(\Psi_\alpha(\bbundle))
&
  =
  \min_{\bbundle'\in\Parens{\bbundle+\range(E_\alpha)}\cap S} \; F(\bbundle')
\\
\label{eq:conv:0b}
&
  \le
  \min_{\bbundle'\in\Parens{\bbundle+\range(E_\alpha)}\cap S}
  \Parens{F(\bbundle) +(\bbundle'-\bbundle)\trans\nabla F(\bbundle)+\frac12(\bbundle'-\bbundle)\trans B(\bbundle'-\bbundle)}
\\
\label{eq:conv:1}
&
  =
  \min_{\bbundle'\in\Parens{\bbundle+\range(E_\alpha)}}
  \Parens{F(\bbundle) +(\bbundle'-\bbundle)\trans\nabla F(\bbundle)+\frac12(\bbundle'-\bbundle)\trans B(\bbundle'-\bbundle)}
\\
\label{eq:conv:1a}
&
  =
  \min_{\bdelta\in\R^{\card{\alpha}}} \Parens{F(\bbundle) +\bdelta\trans\nabla_\alpha F(\bbundle)+\frac12\bdelta\trans B_{\alpha,\alpha}\bdelta}
\\
&
\label{eq:conv:2}
  =
  F(\bbundle) -\frac12\bigParens{\nabla_\alpha F(\bbundle)}\trans B^+_{\alpha,\alpha}\bigParens{\nabla_\alpha F(\bbundle)}
\enspace.
\end{align}
In \Eq{conv:0a} it suffices to consider minimization over $S$, because the objective does not increase during the optimization and $\bbundle\in S$.
In \Eq{conv:0b} we use the fact that $F$ is smooth on $S$ with respect to $B$.
\Eq{conv:1} follows, because the minimum in \Eq{conv:1} is actually attained in $S$. We show this by contradiction.
Let $F'(\bbundle')$ denote the function minimized in \Eq{conv:1} and assume that
the minimum of $F'$ over $\bbundle+\range(E_\alpha)$ is attained at $\bbundle'\not\in S$. Since
$\bbundle\in S$, the line connecting $\bbundle$ and $\bbundle'$ intersects the boundary of $S$ at some point $\bbundle''$, where $F(\bbundle'')=\lambda$
by continuity of $F$. From the foregoing, we then have
\[
  F'(\bbundle')\le F'(\bbundle)=F(\bbundle)<\lambda=F(\bbundle'')\le F'(\bbundle'')
\enspace.
\]
This however contradicts the convexity of $F'$ along the line connecting $\bbundle$ and $\bbundle'$, because $\bbundle''$ lies between $\bbundle$ and $\bbundle'$,
and thus we should have $F'(\bbundle'')\le\max\set{F'(\bbundle),F'(\bbundle')}$.
This means that the minimum in \Eq{conv:1} is indeed attained somewhere in $S$.
In \Eq{conv:1a}, we make the substitution $\bbundle'-\bbundle=E_\alpha\bdelta$, and \Eq{conv:2} then follows by \Prop{pseudo},
because $\nabla_\alpha F(\bbundle)\in\cG_\alpha(F)\subseteq\range(B_{\alpha,\alpha})$.

Taking expectation over the uniformly random choice of the block $\alpha$, we have
\begin{align}
\notag
\obj(\bbundle) - \Ex{\alpha}{ \obj(\Psi_\alpha(\bbundle)) }
&\geq \frac{1}{2\card{\cA}} \sum_{\alpha \in \cA}\bigParens{\nabla_\alpha F(\bbundle)}\trans B^+_{\alpha,\alpha}\bigParens{\nabla_\alpha F(\bbundle)}
\\
\label{eq:conv:3}
&
=\frac{1}{2\card{\cA}} \bigParens{\nabla F(\bbundle)}\trans \cD(B)^+\bigParens{\nabla F(\bbundle)}
\enspace.
\end{align}

We can also apply the lower bound on $F$ and obtain
\begin{align}
\notag
  \obj(\Psi_\alpha(\bbundle))
&
  \ge
  \min_{\bbundle'\in\Parens{\bbundle+\range(E_\alpha)}\cap S}
  \Parens{F(\bbundle) +(\bbundle'-\bbundle)\trans\nabla F(\bbundle)+\frac12(\bbundle'-\bbundle)\trans A(\bbundle'-\bbundle)}
\\
\label{eq:conv:4}
&
  \ge
  \min_{\bbundle'\in\Parens{\bbundle+\range(E_\alpha)}}
  \Parens{F(\bbundle) +(\bbundle'-\bbundle)\trans\nabla F(\bbundle)+\frac12(\bbundle'-\bbundle)\trans A(\bbundle'-\bbundle)}
\\
\notag
&
  =
  \min_{\bdelta\in\R^{\card{\alpha}}} \Parens{F(\bbundle) +\bdelta\trans\nabla_\alpha F(\bbundle)+\frac12\bdelta\trans A_{\alpha,\alpha}\bdelta}
\\
&
\label{eq:conv:5}
  =
  F(\bbundle) -\frac12\bigParens{\nabla_\alpha F(\bbundle)}\trans A^+_{\alpha,\alpha}\bigParens{\nabla_\alpha F(\bbundle)}
\enspace.
\end{align}
The steps are analogous as in analyzing the upper bound except for \Eq{conv:4}, which is now more straightforward, since
the minimum over a larger set cannot lie above a minimum over a smaller set.

Taking expectation over $\alpha$, we thus obtain
\begin{align}
\label{eq:conv:6}
\obj(\bbundle) - \Ex{\alpha}{ \obj(\Psi_\alpha(\bbundle)) }
&
\le
\frac{1}{2\card{\cA}} \bigParens{\nabla F(\bbundle)}\trans \cD(A)^+\bigParens{\nabla F(\bbundle)}
\enspace.
\end{align}

The same reasoning that gave us bounds on $F(\Psi_\alpha(\bbundle))$ can be also used to bound the optimal value $F^\star$
by noting that $F^\star=F(\Psi_{\alpha^\star}(\bbundle))$ where $\alpha^\star$ is the block containing all the coordinates,
i.e., $\alpha^\star=[N]\times[K]$. (Note that $\alpha^\star$ is not necessarily in $\cA$.)
Eqs.~\eqref{eq:conv:2} and~\eqref{eq:conv:5} thus become
\begin{align}
\label{eq:conv:Fstar:lower}
  F^\star
&\le
  F(\bbundle) -\frac12\bigParens{\nabla F(\bbundle)}\trans B^+\bigParens{\nabla F(\bbundle)}
\enspace,
\\
\label{eq:conv:Fstar:upper}
  F^\star
&\ge
  F(\bbundle) -\frac12\bigParens{\nabla F(\bbundle)}\trans A^+\bigParens{\nabla F(\bbundle)}
\enspace.
\end{align}

To finish the proof, we take the conditional expectations in \Eq{conv:3}, use the definition of $\sigmalow$ and
take the conditional expectations in \Eq{conv:Fstar:upper} to obtain
\begin{align}
\notag
\ex{\obj(\bbundle^t)\bigGiven\bbundle^{t_0}}
- \ex{\obj(\bbundle^{t+1})\bigGiven\bbundle^{t_0}}
&
\ge
\frac{1}{2\card{\cA}} \ex{(\bg^t)\trans \cD(B)^+\bg^t\BigGiven\bbundle^{t_0}}
\\[2pt]
\notag
&
\ge
\frac{1}{2\card{\cA}}\cdot\sigmalow\ex{(\bg^t)\trans A^+\bg^t\BigGiven\bbundle^{t_0}}
\\[4pt]
\label{eq:conv:final:low}
&
\ge
\frac{\sigmalow}{\card{\cA}}\BigParens{\ex{\obj(\bbundle^t)\bigGiven\bbundle^{t_0}} - F^\star}
\enspace.
\end{align}
Similarly,
\begin{equation}
\notag
\ex{\obj(\bbundle^t)\bigGiven\bbundle^{t_0}}
- \ex{\obj(\bbundle^{t+1})\bigGiven\bbundle^{t_0}}
\le
\frac{\sigmahigh}{\card{\cA}}\BigParens{\ex{\obj(\bbundle^t)\bigGiven\bbundle^{t_0}} - F^\star}
\enspace.
\end{equation}
Since the objective never increases, we can actually write
\begin{equation}
\label{eq:conv:final:high}
\ex{\obj(\bbundle^t)\bigGiven\bbundle^{t_0}}
- \ex{\obj(\bbundle^{t+1})\bigGiven\bbundle^{t_0}}
\le
\min\Braces{\frac{\sigmahigh}{\card{\cA}}\;,\;1}\cdot\BigParens{\ex{\obj(\bbundle^t)\bigGiven\bbundle^{t_0}} - F^\star}
\enspace.
\end{equation}
The theorem now follows by rearranging terms in
Eqs.~\eqref{eq:conv:final:low} and~\eqref{eq:conv:final:high}.
\end{proof}

It remains to prove the following proposition, which was used in the proof:
\begin{proposition}
\label{prop:D:range}
For any positive-semidefinite matrix $M$, we have $\range(M)\subseteq\range(\cD(M))$.
\end{proposition}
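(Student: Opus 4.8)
The plan is to pass to orthogonal complements and argue with null spaces. Since $M$ is symmetric, every principal submatrix $M_{\alpha,\alpha} = E_\alpha\trans M E_\alpha$ is symmetric, and hence so is the block-diagonal matrix $\cD(M) = \diag_{\alpha\in\cA} M_{\alpha,\alpha}$. For a symmetric matrix, range and null space are orthogonal complements, so the asserted inclusion $\range(M)\subseteq\range(\cD(M))$ is equivalent to $\Null(\cD(M))\subseteq\Null(M)$. I would prove this reversed inclusion instead.

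First I would record the elementary fact that for a positive-semidefinite matrix $P$ and a vector $\bw$, $\bw\trans P\bw = 0$ forces $P\bw = \zero$ (write $P = P^{1/2}P^{1/2}$, so $\norm{P^{1/2}\bw}^2 = 0$). Now take an arbitrary $\bv\in\Null(\cD(M))$ and write $\bv_\alpha\coloneqq E_\alpha\trans\bv$ for its restriction to block $\alpha$. Then $\cD(M)\bv = \zero$ says exactly that $M_{\alpha,\alpha}\bv_\alpha = \zero$ for every $\alpha\in\cA$. Each $M_{\alpha,\alpha}$ is a principal submatrix of the positive-semidefinite matrix $M$, hence itself positive-semidefinite, so in particular $\bv_\alpha\trans M_{\alpha,\alpha}\bv_\alpha = 0$, i.e.\ $(E_\alpha\bv_\alpha)\trans M (E_\alpha\bv_\alpha) = 0$. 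Applying the elementary fact to $M$ gives $M E_\alpha\bv_\alpha = \zero$ for every $\alpha$.

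Finally, since the blocks $\alpha\in\cA$ partition the coordinate set, we have $\sum_{\alpha\in\cA} E_\alpha E_\alpha\trans = I$, and therefore $\bv = \sum_{\alpha\in\cA} E_\alpha\bv_\alpha$. Hence $M\bv = \sum_{\alpha\in\cA} M E_\alpha\bv_\alpha = \zero$, so $\bv\in\Null(M)$, which establishes $\Null(\cD(M))\subseteq\Null(M)$ and thus the proposition.

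I do not anticipate a genuine obstacle: the argument is a short piece of linear algebra. The only two points that need care are (i) the passage between the range inclusion and the reverse null-space inclusion, which relies on symmetry of both $M$ and $\cD(M)$, and (ii) the positive-semidefiniteness fact that a vanishing quadratic form forces the matrix-vector product to vanish, together with the observation that a principal submatrix of a PSD matrix is PSD. Both are standard and can be dispatched in a line each.
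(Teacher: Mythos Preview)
Your proof is correct. It differs from the paper's argument, so a brief comparison is worthwhile.

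The paper does not work directly with null spaces. Instead it proves the quantitative quadratic-form inequality
\[
\bu\trans M\bu \le \card{\cA}\cdot \bu\trans \cD(M)\bu
\quad\text{for all }\bu,
\]
by writing $\bu\trans M\bu=\bigNorm{M^{1/2}\sum_\alpha P_\alpha\bu}^2$, applying the triangle inequality, and then the inequality between the arithmetic and quadratic means. The range inclusion then falls out: if $\bu\trans\cD(M)\bu=0$ then $\bu\trans M\bu=0$, hence $M\bu=\zero$ by positive semidefiniteness.

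Your route is more structural: you show that $\cD(M)\bv=\zero$ forces each block piece $E_\alpha\bv_\alpha$ to lie in $\Null(M)$ individually, and then sum. This avoids any inequality and is arguably the cleaner argument for the bare range inclusion. The paper's approach, on the other hand, yields the stronger PSD ordering $M\preceq\card{\cA}\,\cD(M)$ as a by-product, which is a potentially useful quantitative statement (even though the paper does not exploit it further). Both arguments ultimately rest on the same linear-algebra fact that $\bw\trans P\bw=0$ with $P$ PSD implies $P\bw=\zero$.
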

\begin{proof}
For $\alpha\in\cA$, let $P_\alpha\coloneqq E_\alpha E_\alpha^T$ be the projection matrix into
$\range E_\alpha$, and note that
\[
  \cD(\cM)=\diag_{\alpha} M_{\alpha,\alpha}=\sum_{\alpha} P_\alpha M P_\alpha
\enspace.
\]
Let $\bu\in\R^{NK}$ and let $x_\alpha=\norm{M^{1/2}P_\alpha\bu}$. Then
\begin{align}
\notag
  \bu\trans M\bu
&\;=\;
  \norm{M^{1/2}\bu}^2
 \;=\;
  \BigNorm{M^{1/2}\bigParens{\sum_{\alpha\in\cA} P_\alpha \bu}}^2
\\
\notag
&\;\le\;
  \bigParens{\sum_{\alpha\in\cA}\norm{M^{1/2}P_\alpha\bu}}^2
 \;=\;
   \bigParens{\sum_{\alpha\in\cA} x_\alpha}^2
\\
\label{eq:D:range:AQ}
&\;\le\;
   \card{\cA}\cdot\sum_{\alpha\in\cA} x_\alpha^2
\\
\notag
&\;=\;
   \card{\cA}\cdot\sum_{\alpha\in\cA} \bu\trans P_\alpha M P_\alpha\bu
 \;=\;
   \card{\cA}\cdot\bigParens{\bu\trans\cD(M)\bu}
\enspace,
\end{align}
where in \Eq{D:range:AQ} we used the inequality between the arithmetic and
quadratic mean. Thus any $\bu\in\range(M)$ is also in $\range(\cD(M))$.
\end{proof}

\subsection{Proof of \Thm{conv}}
\label{app:proof:conv}

This proof builds on several propositions proved in \App{support:conv} below.
Let $\eps_{\max}=\sqrt{\lambda-F^\star}$. For $n=1,2,\dotsc$ define the level sets $S(n)\coloneqq S(F,F^\star+\eps_{\max}^2/n^2)$. Thus,
by \Prop{S:emax}, for some $c>0$, we have
\[
    \nabla^2 F(\bbundle)\within(1\pm \underbrace{c\eps_{\max}}_{c_1}/n)H^\star
\quad
    \text{for all $\bbundle\in S(n)$}
\enspace,
\]
where we have introduced the notation $c_1\coloneqq c\eps_{\max}$. By \Prop{AB}, if $n>c_1$, then the function $F$ is strongly convex
and smooth on $S(n)$ with respect to $A(n)\coloneqq(1-c_1/n)H^\star$ and $B(n)\coloneqq(1+c_1/n)H^\star$.
Thus, $S(n)$, $A(n)$, and $B(n)$ satisfy condition (1) of \Thm{asympt:0} for $n>c_1$.

In the remainder, we only consider
the level sets $S(n)$ for $n>c_1$.
For each of them define
\[
    \sigmalow(n)\coloneqq\frac{n-c_1}{n+c_1}\sigmalow
\enspace,
\quad
    \sigmahigh(n)\coloneqq\frac{n+c_1}{n-c_1}\sigmahigh
\enspace.
\]
We next argue that they satisfy condition (2) of \Thm{asympt:0}.
It suffices to verify that condition (2) of \Thm{asympt:0} holds for $t=t_0+\ell$; let's call this limited variant condition $(2')$.
If $(2')$ is satisfied then also (2) is satisfied, because if $\bbundle^{t_0}\in S(n)$,
then also $\bbundle^{t_0+k}\in S(n)$ for any $k\ge 1$, and so we can apply condition~$(2')$ at $t=t_0+k+\ell$ and by taking
the conditional expectation we obtain the original condition~(2).

To prove that condition (2) holds for $t=t_0+\ell$, note that $S(n)\subseteq S(1)=S(F,\lambda)$, so
\Eq{thm:conv} holds whenever $\bbundle^{t_0}=\bbundle^{t-\ell}\in S(n)$. So, assuming that $n>c_1$ and $\bbundle^{t_0}\in S(n)$, we obtain
\begin{align*}
    \sigmalow(n)
&=
    \frac{n-c_1}{n+c_1}\sigmalow
\\
&\le
      \frac{n/(n+c_1)}{n/(n-c_1)}
      \cdot
      \frac{\ex{\;(\bg^t)\trans \cD(H^\star)^+\bg^t\;\bigGiven\; \bbundle^{t_0} \;}}
           {\ex{\;(\bg^t)\trans (H^\star)^+\bg^t\;\bigGiven\; \bbundle^{t_0} \;}}
=
      \frac{\ex{\;(\bg^t)\trans \cD(B(n))^+\bg^t\;\bigGiven\; \bbundle^{t_0} \;}}
           {\ex{\;(\bg^t)\trans A(n)^+\bg^t\;\bigGiven\; \bbundle^{t_0} \;}}
\\
    \sigmahigh(n)
&=
    \frac{n+c_1}{n-c_1}\sigmahigh
\\
&\le
      \frac{n/(n-c_1)}{n/(n+c_1)}
      \cdot
      \frac{\ex{\;(\bg^t)\trans \cD(H^\star)^+\bg^t\;\bigGiven\; \bbundle^{t_0} \;}}
           {\ex{\;(\bg^t)\trans (H^\star)^+\bg^t\;\bigGiven\; \bbundle^{t_0} \;}}
=
      \frac{\ex{\;(\bg^t)\trans \cD(A(n))^+\bg^t\;\bigGiven\; \bbundle^{t_0} \;}}
           {\ex{\;(\bg^t)\trans B(n)^+\bg^t\;\bigGiven\; \bbundle^{t_0} \;}}
\enspace.
\end{align*}

Now invoking \Prop{t:t0} with $\Sin=S(n+1)$ and $\Sout=S(n)$, we obtain that if
$\bbundle^{t_0}\in S(n+1)$ then there exists $c>0$ such that for all $t\ge t_0$
\[
  \ex{F(\bbundle^t)\given\bbundle^{t_0}}-F^\star
  \le c\kappahigh(n)^{t-t_0}
\enspace,
\]
where $\kappahigh(n)\coloneqq 1-\sigmalow(n)/\card{\cA}$. \Prop{all:reached}
now implies that $S(n+1)$ is reached with probability~1,
so $\kappahigh(n)$ is a valid upper bound on the local convergence rate.

Since the optimization starts at a deterministic point $\bbundle^0=\zero$, and the randomization
is among a finite set of choices, there is only a finite set of allocation
vectors $\bbundle^t$ that can be reached at any given iteration $t$. If the optimization problem
is non-degenerate, then none of these allocations (for any $t$) are the
actual minimizers of $F$. In that case, \Prop{t:t0} yields that if
$\bbundle^{t_0}\in S(n+1)$, and it has a non-zero probability of occurring,
then for all $t_1\ge t_0$, there exists $c>0$ such that for all $t\ge t_1$
\[
  \ex{F(\bbundle^t)\given\bbundle^{t_1}}-F^\star
  \ge c\kappalow(n)^{t-t_1}
\enspace,
\]
where $\kappalow(n)\coloneqq 1-\sigmahigh(n)/\card{\cA}$.
Since $S(n+1)$ is reached with probability 1, this means
that any valid upper bound must be greater than $\kappalow(n)$.

The theorem now follows, because $\kappahigh(n)\to 1-\sigmalow/\card{\cA}$,
and $\kappalow\to 1-\sigmahigh/\card{\cA}$, as $n\to\infty$.

\subsection{Supporting Propositions for \Thm{conv}}
\label{app:support:conv}


Throughout the propositions below,
let $F$ be a \convplus function attaining a minimum. Let $P$ be a projection on $\cG(F)$.
Since $F$ attains a minimum, its gradient set includes zero, and therefore in \Prop{convplus}
we have $\ba=\zero$. This means that
\[
  F(\bbundle)=F(P\bbundle)
\enspace,
\]
so we can assume, without loss of generality, that $\bbundle^\star\in\cG(F)$.
Let $H^\star\coloneqq\nabla^2 F(\bbundle^\star)$. Finally,
recall that $S(F,\lambda)$ denotes a level set (see \Def{level:set}).

\begin{proposition}
\label{prop:S0:compact}
For any $\lambda>F^\star$, $S(F,\lambda)=S_0+\cG(F)^\perp$ where $S_0\subseteq\cG(F)$ is compact.
\end{proposition}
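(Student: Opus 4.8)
The plan is to leverage the structural fact recorded immediately before the proposition: since $F$ is \convplus and attains a minimum, $F(\bbundle)=F(P\bbundle)$ where $P$ is the orthogonal projection onto $\cG(F)$; in other words, $F$ is constant along every translate of $\cG(F)^\perp$. Set $S_0\coloneqq S(F,\lambda)\cap\cG(F)$. First I would verify the decomposition $S(F,\lambda)=S_0+\cG(F)^\perp$. The inclusion $S_0+\cG(F)^\perp\subseteq S(F,\lambda)$ is immediate from the invariance of $F$ under translations along $\cG(F)^\perp$. For the reverse inclusion, given $\bbundle\in S(F,\lambda)$ write $\bbundle=P\bbundle+(\bbundle-P\bbundle)$ with $\bbundle-P\bbundle\in\cG(F)^\perp$; since $F(P\bbundle)=F(\bbundle)\le\lambda$ and $P\bbundle\in\cG(F)$, we get $P\bbundle\in S_0$, whence $\bbundle\in S_0+\cG(F)^\perp$. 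This reduces the proposition to showing that $S_0$ is compact.

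Compactness in $\R^{NK}$ amounts to closedness and boundedness. Closedness is routine: $F$ is closed, hence lower semicontinuous, so the sublevel set $S(F,\lambda)$ is closed, and $\cG(F)$ is a linear subspace and therefore closed, so $S_0$ is closed as an intersection of closed sets. The real content is boundedness of $S_0$, which is the step I expect to be the main obstacle, as it is where strict convexity of $F$ on $\cG(F)$ must be combined with the existence of a minimizer --- mere convexity would not suffice.

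For boundedness I would argue by contradiction. Recall we may take the minimizer $\bbundle^\star$ to lie in $\cG(F)$, and $\bbundle^\star\in D\coloneqq\interior\dom F$ since $\nabla^2 F(\bbundle^\star)$ is assumed to exist. Suppose $S_0$ were unbounded: pick $\bbundle^{(n)}\in S_0$ with $\norm{\bbundle^{(n)}-\bbundle^\star}\to\infty$, and, passing to a subsequence, let the unit vectors $\bv^{(n)}\coloneqq(\bbundle^{(n)}-\bbundle^\star)/\norm{\bbundle^{(n)}-\bbundle^\star}$ converge to some $\bv$ with $\norm{\bv}=1$; note $\bv\in\cG(F)$ because $\cG(F)$ is closed and contains $\bbundle^\star$ and every $\bbundle^{(n)}$. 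For each fixed $t>0$ and all large $n$, the point $\bbundle^\star+t\bv^{(n)}$ lies on the segment joining $\bbundle^\star$ and $\bbundle^{(n)}$, so convexity gives $F(\bbundle^\star+t\bv^{(n)})\le\max\set{F^\star,\,F(\bbundle^{(n)})}\le\lambda$; letting $n\to\infty$ and using lower semicontinuity, $F(\bbundle^\star+t\bv)\le\lambda$ for every $t\ge 0$. Now let $g(t)\coloneqq F(\bbundle^\star+t\bv)$, which is convex on $[0,\infty)$ and bounded above by $\lambda$. Since $D$ is open and $\bbundle^\star\in D$ while $\bv\in\cG(F)$, the point $\bbundle^\star+t\bv$ lies in $D\cap\cG(F)$ for all sufficiently small $t$, where $F$ is strictly convex by \Prop{convplus}; hence $g$ is strictly convex near $0$, and $g'(0)=\bv\trans\nabla F(\bbundle^\star)=0$ because $\bbundle^\star$ is the global minimizer, so $g'(t_0)>0$ for some small $t_0>0$. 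Convexity of $g$ on all of $[0,\infty)$ then forces $g(t)\ge g(t_0)+g'(t_0)(t-t_0)\to\infty$, contradicting $g\le\lambda$. Therefore $S_0$ is bounded, hence compact, which is the assertion of the proposition.
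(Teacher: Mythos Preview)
Your proof is correct. The decomposition and closedness steps match the paper's exactly. For boundedness, the paper takes a more direct route: by strict convexity of $F$ on $\cG(F)$, the minimum of $F$ on the unit sphere $\{\bbundle\in\cG(F):\norm{\bbundle-\bbundle^\star}=1\}$ is some $\lambda_1>F^\star$, and convexity along rays from $\bbundle^\star$ then yields a linear lower bound of the form $F(\bbundle)-F^\star\gtrsim\norm{\bbundle-\bbundle^\star}$ on $\cG(F)$, which immediately bounds $S_0$. Your recession-direction argument reaches the same conclusion by contradiction and is a bit longer, but it has the small advantage of only invoking strict convexity in an arbitrarily small neighborhood of $\bbundle^\star$, whereas the paper's sphere argument tacitly assumes the unit sphere sits inside $D\cap\cG(F)$ (harmless in the intended application where $\dom F=\R^{NK}$, but would otherwise require shrinking the radius).
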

\begin{proof}
Since $F(\bbundle)=F(P\bbundle)$, any level set $S$ can be written as $S=S_0+\cG(F)^\perp$, where $S_0=S\cap\cG(F)$.
Now if $\lambda>F^\star$ then $S$ is non-empty and closed and hence so is $S_0$. It remains to argue that it is bounded.
By \convexityplus, $F$ is strictly convex on $\cG(F)$, so the minimum of $F$ on
the sphere $\set{\bbundle\in \cG(F):\:\norm{\bbundle-\bbundle^\star}=1}$ must be some $\lambda_1>F^\star$.
By convexity, $F(\bbundle)-F^\star\ge\lambda_1\norm{\bbundle-\bbundle^\star}$ for all $\bbundle\in\cG(F)$.
Since $S_0\subseteq\cG(F)$ and $F(\bbundle)\le\lambda$ for $\bbundle\in S_0$, the set $S_0$ must be bounded.
\end{proof}

\begin{proposition}
\label{prop:S:emax}
Let $\eps_{\max}>0$. Then there exists a constant $c>0$ such that for all $0<\eps\le\eps_{\max}$, and all $\bbundle\in S(F,F^\star+\eps^2)$, we have
\[
  \norm{P\bbundle-\bbundle^\star}\le c\eps
\enspace,
  \nabla^2 F(\bbundle)\within(1\pm c\eps) H^\star
\enspace.
\]
\end{proposition}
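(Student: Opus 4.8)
The plan is to reduce everything to the gradient space $\cG(F)$, establish quadratic growth of $F$ near $\bbundle^\star$, and then exploit Lipschitz continuity of the Hessian. Since $F$ attains its minimum, \Prop{convplus} gives $\ba=\zero$, so $F(\bbundle)=F(P\bbundle)$ and, differentiating twice, $\nabla^2 F(\bbundle)=\nabla^2 F(P\bbundle)$; also $\bbundle^\star$ may be taken in $\cG(F)$, and by property~(5) of \convexityplus (gradients blow up towards the boundary of $D\coloneqq\interior\dom F$) the minimizer $\bbundle^\star$ lies in $D$, where the Hessian exists. By \Prop{convplus}, $F$ is strictly convex on $\cG(F)$, so $\bbundle^\star$ is the unique minimizer there and $H^\star=\nabla^2 F(\bbundle^\star)$ is positive definite on $\cG(F)$, with smallest positive eigenvalue $\lambda_{\min}(H^\star)>0$.

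First I would prove $\norm{P\bbundle-\bbundle^\star}\le c\eps$. Fix a small closed ball $N_\rho=\set{\bu\in\cG(F):\norm{\bu-\bbundle^\star}\le\rho}\subseteq D$. Using continuity of the third derivatives of $F$ and a Taylor expansion at $\bbundle^\star$ (with $\nabla F(\bbundle^\star)=\zero$), after shrinking $\rho$ if necessary we get $F(\bu)-F^\star\ge\tfrac14\lambda_{\min}(H^\star)\norm{\bu-\bbundle^\star}^2$ for all $\bu\in N_\rho$. By \Prop{S0:compact}, the set $S_0\coloneqq S(F,F^\star+\eps_{\max}^2)\cap\cG(F)$ is compact, hence so is $S_0\setminus N_\rho$, on which $F$ attains a minimum value $F^\star+\delta$ with $\delta>0$ (it cannot equal $F^\star$ since $\bbundle^\star\notin S_0\setminus N_\rho$). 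Thus for $\eps\le\sqrt{\delta}$, membership $\bbundle\in S(F,F^\star+\eps^2)$ forces $P\bbundle\in N_\rho$, and quadratic growth gives $\norm{P\bbundle-\bbundle^\star}\le 2\eps/\sqrt{\lambda_{\min}(H^\star)}$; for $\sqrt{\delta}<\eps\le\eps_{\max}$ we instead use boundedness of $S_0$, say $\norm{P\bbundle-\bbundle^\star}\le R$, whence $\norm{P\bbundle-\bbundle^\star}\le(R/\sqrt{\delta})\eps$. Taking $c_0\coloneqq\max\set{2/\sqrt{\lambda_{\min}(H^\star)},\,R/\sqrt{\delta}}$ gives the first bound with constant $c_0$.

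Next I would derive the Hessian bound. Since $F$ has continuous third derivatives, $\nabla^2 F$ is $C^1$ on $D$ and hence Lipschitz with some constant $L$ on the compact ball $N_\rho$ (the large-$\eps$ regime is again absorbed using that $\nabla^2 F$ is bounded on the compact set $S_0$, dividing by the lower bound $\sqrt\delta$ on $\eps$). Together with $\norm{P\bbundle-\bbundle^\star}\le c_0\eps$ this yields $\norm{\nabla^2 F(\bbundle)-H^\star}=\norm{\nabla^2 F(P\bbundle)-H^\star}\le Lc_0\eps$ in operator norm. As $\nabla^2 F(\bbundle)-H^\star$ is symmetric with range contained in $\cG(F)$, this upgrades to $-Lc_0\eps\,P\preceq\nabla^2 F(\bbundle)-H^\star\preceq Lc_0\eps\,P$, and since $H^\star$ is positive definite on $\cG(F)$ we have $P\preceq\lambda_{\min}(H^\star)^{-1}H^\star$. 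Hence $\nabla^2 F(\bbundle)\within\bigParens{1\pm Lc_0\lambda_{\min}(H^\star)^{-1}\eps}H^\star$, and the proposition holds with $c\coloneqq\max\set{c_0,\,Lc_0\lambda_{\min}(H^\star)^{-1}}$.

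The main obstacle is producing a single constant $c$ valid uniformly over all $\eps\in(0,\eps_{\max}]$: the quadratic-growth estimate only holds in the neighborhood $N_\rho$, so the "$\eps$ bounded away from $0$'' regime has to be handled separately by compactness of $S_0$ before merging the constants. Everything else is a routine Taylor expansion plus standard positive-semidefinite bookkeeping.
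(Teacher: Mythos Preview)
Your proof is correct. The paper takes a slightly more direct route that avoids your case split between small and large $\eps$: since $S_0=S(F,F^\star+\eps_{\max}^2)\cap\cG(F)$ is compact and $\lambda_{\min}(\nabla^2 F(\cdot),P)$ is continuous and strictly positive on it, there is a uniform strong-convexity constant $\sigma_1>0$ valid on all of $S_0$, not just on a neighborhood of $\bbundle^\star$. This immediately gives $F(P\bbundle)-F^\star\ge\tfrac12\sigma_1\norm{P\bbundle-\bbundle^\star}^2$ for every $\bbundle$ in the big level set, so $\norm{P\bbundle-\bbundle^\star}\le\eps\sqrt{2/\sigma_1}$ with a single constant for all $\eps\le\eps_{\max}$. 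The Hessian bound then follows exactly as you do (Lipschitz continuity of $\nabla^2 F$ on the compact $S_0$, then $P\preceq\lambda_{\min}(H^\star)^{-1}H^\star$). Your approach of first establishing quadratic growth locally and then handling far-away points by compactness is fine, but you identified the case split as ``the main obstacle'' when in fact it can be sidestepped entirely by working with the uniform Hessian lower bound on $S_0$ from the start. One small quibble: with $N_\rho$ taken closed, $S_0\setminus N_\rho$ is open in $S_0$ and need not be compact, so ``attains a minimum'' should be ``has infimum $F^\star+\delta$ with $\delta>0$'' (or simply take $N_\rho$ open).
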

\begin{proof}
Let $S\coloneqq S(F,F^\star+\eps_{\max}^2)$ and $S_0\coloneqq S\cap\cG(F)$, which is compact by \Prop{S0:compact}. Let
\[
  \sigma_1\coloneqq\min_{\bbundle\in S_0} \lambda_{\min}(\nabla^2 F(\bbundle), P)
\enspace.
\]
Note that $\lambda_{\min}(\nabla^2 F(\bbundle),P)>0$
on the compact set $S_0$, and $\nabla^2 F(\cdot)$ and $\lambda_{\min}(\cdot,P)$ are
continuous, so $\sigma_1>0$. Therefore, $F$ is strictly convex on $S$ with the strict convexity
constant $\sigma_1$. Using the fact that $\nabla F(\bbundle^\star)=\zero$, we then have
for any $\bbundle\in S$,
\begin{align*}
 F(\bbundle)
 =F(P\bbundle)
 \ge
 F^\star+\frac{1}{2}\sigma_1\norm{P\bbundle-\bbundle^\star}^2
\enspace.
\end{align*}
Therefore, if $\bbundle\in S(F,F^\star+\eps^2)\subseteq S$ then
\[
 \norm{P\bbundle-\bbundle^\star}\le \eps\sqrt{2/\sigma_1}
\enspace.
\]
For the bound on the Hessian, note that
since the third derivative of $F$ is continuous, it is upper bounded on $S_0$. Therefore,
the Hessian is Lipschitz with some constant $L$ on $S_0$, and
so $\norm{\nabla^2 F(\bbundle)-H^\star}\le L\norm{P\bbundle-\bbundle^\star}$.
Thus, since $\range(\nabla^2 F(\bbundle)-H^\star)\subseteq\cG(F)$, we have
\[
  \nabla^2 F(\bbundle)\within H^\star\pm L\norm{P\bbundle-\bbundle^\star}P
\enspace.
\]
Since $\sigma_2 P\preceq H^\star$ for some $\sigma_2>0$, we obtain
\[
  \nabla^2 F(\bbundle)\within (1\pm L\sigma_2^{-1}\norm{P\bbundle-\bbundle^\star})H^\star
\enspace.
\]
Thus, for $\bbundle\in S(F,F^\star+\eps^2)\subseteq S$ we have
\[
  \nabla^2 F(\bbundle)\within (1\pm \eps L\sigma_2^{-1}\sqrt{2/\sigma_1})H^\star
\enspace.
\]
Setting $c\coloneqq\max\set{\sqrt{2/\sigma_1},\,L\sigma_2^{-1}\sqrt{2/\sigma_1}}$ then
proves the proposition.
\end{proof}

\begin{proposition}
\label{prop:AB}
Let $S$ be a convex set and $A$ and $B$ positive-semidefinite matrices such that
$A\preceq \nabla^2 F(\bbundle)\preceq B$ for
all $\bbundle\in S$. Then $F$ is strongly convex and smooth on $S$ with respect to $A$ and $B$.
\end{proposition}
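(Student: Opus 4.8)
The plan is to reduce the two-sided inequality to a one-dimensional statement along line segments contained in $S$, and then invoke the second-order Taylor formula with integral remainder. Fix $\bbundle\in S$ and a displacement $\bdelta$ with $\bbundle+\bdelta\in S$. Since $S$ is convex, the entire segment $\set{\bbundle+t\bdelta:\:t\in[0,1]}$ lies in $S$, and since $F$ is \convplus it has continuous third (hence second) derivatives on the region where the Hessian bounds are assumed, so it is twice continuously differentiable along this segment. Introduce the scalar function $g(t)\coloneqq F(\bbundle+t\bdelta)$ on $[0,1]$, for which $g'(t)=\bdelta\trans\nabla F(\bbundle+t\bdelta)$ and $g''(t)=\bdelta\trans\nabla^2 F(\bbundle+t\bdelta)\,\bdelta$.

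Next I would write the exact expansion
\[
  F(\bbundle+\bdelta)-F(\bbundle)-\bdelta\trans\nabla F(\bbundle)
  = g(1)-g(0)-g'(0)
  = \int_0^1 (1-t)\,g''(t)\,dt
  = \int_0^1 (1-t)\,\bdelta\trans\nabla^2 F(\bbundle+t\bdelta)\,\bdelta\,dt .
\]
By hypothesis $A\preceq\nabla^2 F(\bbundle+t\bdelta)\preceq B$ for every $t\in[0,1]$, so $\bdelta\trans A\bdelta\le\bdelta\trans\nabla^2 F(\bbundle+t\bdelta)\,\bdelta\le\bdelta\trans B\bdelta$ pointwise in $t$. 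Integrating against the nonnegative weight $(1-t)$ and using $\int_0^1(1-t)\,dt=\tfrac12$ yields
\[
  \tfrac12\,\bdelta\trans A\bdelta
  \le F(\bbundle+\bdelta)-F(\bbundle)-\bdelta\trans\nabla F(\bbundle)
  \le \tfrac12\,\bdelta\trans B\bdelta ,
\]
which is precisely the definition of $F$ being strongly convex on $S$ with respect to $A$ and smooth on $S$ with respect to $B$.

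There is no real obstacle here: the argument is a routine application of Taylor's theorem, and the only point deserving a sentence of care is the regularity needed to justify the integral remainder. This is immediate from \convexityplus (continuous third, hence second, derivatives), together with convexity of $S$, which keeps the connecting segment inside the region where $A\preceq\nabla^2 F\preceq B$ is assumed to hold; note that treating $A$, $B$ and $\nabla^2 F$ as quadratic forms on all of $\R^{NK}$ automatically accommodates any component of $\bdelta$ orthogonal to $\cG(F)$. If one wanted to avoid even the $C^2$ hypothesis, the same bounds follow by applying the mean value theorem to $g'$ and integrating once more, but under the standing \convexityplus assumption the integral-remainder form is cleanest.
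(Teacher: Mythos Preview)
Your proof is correct and takes essentially the same approach as the paper: both use the second-order Taylor expansion along the segment from $\bbundle$ to $\bbundle+\bdelta$ (which lies in $S$ by convexity), then bound the Hessian term by $A$ and $B$. The only cosmetic difference is that you use the integral remainder $\int_0^1(1-t)g''(t)\,dt$ while the paper uses the Lagrange form $\tfrac12\bdelta\trans[\nabla^2 F(\bbundle')]\bdelta$ for some intermediate $\bbundle'\in S$; both yield the factor $\tfrac12$ immediately.
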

\begin{proof}
Let $\bbundle\in S$, $\bbundle+\vdelta\in S$. Then from the 2nd-order Taylor
expansion, we have
\[
  F(\bbundle+\vdelta)-F(\bbundle)-\vdelta\nabla F(\bbundle)
  =
  \frac12\vdelta\trans[\nabla^2 F(\bbundle')]\vdelta
\enspace,
\]
where $\bbundle'\in S$. The proposition now follows, because
$A\preceq[\nabla^2 F(\bbundle')]\preceq B$.
\end{proof}

\begin{proposition}
\label{prop:all:reached}
For any $\lambda>F^\star$, with probability 1, the randomized block-coordinate descent algorithm
with the objective $F$ reaches an iteration $t$ in which $\bbundle^t\in S(F,\lambda)$.
\end{proposition}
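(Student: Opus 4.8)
The plan is to reduce the statement to a first--hitting--time estimate and then run a supermartingale argument fuelled by a uniform lower bound on the expected one--step decrease of $F$. If $F(\zero)<\lambda$ there is nothing to prove, so I would assume $F(\zero)\ge\lambda$; since block-coordinate descent never increases the objective, every iterate satisfies $F(\bbundle^t)\le F(\zero)$. Let $\mathcal F_t$ be the $\sigma$-algebra generated by the first $t$ block choices and let $\tau\coloneqq\inf\set{t\ge 0:\,F(\bbundle^t)<\lambda}$, possibly $+\infty$. It suffices to show $\prob{\tau<\infty}=1$, since on $\set{\tau<\infty}$ we have $\bbundle^\tau\in S(F,\lambda)$.

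The heart of the argument is the following uniform estimate on the ``annulus'' $K_\lambda\coloneqq\set{\bbundle:\,\lambda\le F(\bbundle)\le F(\zero)}$: there is $\rho>0$ with $F(\bbundle)-\ex{F(\Psi_\alpha(\bbundle))}\ge\rho$ for all $\bbundle\in K_\lambda$, where $\alpha\in\cA$ is drawn uniformly at random. I would prove this in two moves. First, for $\bbundle\in K_\lambda$ we have $F(\bbundle)\ge\lambda>F^\star$, so $\bbundle$ is not a minimizer and $\nabla F(\bbundle)\ne\zero$; since the blocks partition all coordinates, $\nabla_\alpha F(\bbundle)=E_\alpha\trans\nabla F(\bbundle)\ne\zero$ for at least one $\alpha$, and for that block $F(\Psi_\alpha(\bbundle))<F(\bbundle)$ because the directional derivative of $F$ at $\bbundle$ along $-E_\alpha\nabla_\alpha F(\bbundle)$ is $-\norm{\nabla_\alpha F(\bbundle)}^2<0$. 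Hence the averaged decrease $g(\bbundle)\coloneqq\tfrac{1}{\card{\cA}}\sum_{\alpha\in\cA}\bigParens{F(\bbundle)-F(\Psi_\alpha(\bbundle))}$ is strictly positive on $K_\lambda$. Second, I would promote this to a uniform bound by exploiting invariance along $\cG(F)^\perp$: since $F$ attains a minimum, the linear term $\ba$ in \Prop{convplus} vanishes, so $F(\bbundle)=F(P\bbundle)$ and likewise $F(\Psi_\alpha(\bbundle))=\inf_{\bdelta}F(P\bbundle+PE_\alpha\bdelta)=F(\Psi_\alpha(P\bbundle))$, whence $g(\bbundle)=g(P\bbundle)$. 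By \Prop{S0:compact} the set $S(F,F(\zero))\cap\cG(F)$ is compact, so its closed subset $K_\lambda\cap\cG(F)$ is compact, and $g$ is continuous there (each value function $\bbundle\mapsto\inf_{\bdelta}F(\bbundle+E_\alpha\bdelta)$ is a partial minimization of a finite convex function, hence finite and convex, hence continuous). Thus $g$ attains a positive minimum $\rho$ on $K_\lambda\cap\cG(F)$, and therefore $g\ge\rho$ on all of $K_\lambda$.

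Given $\rho$, I would finish with a standard optional-stopping computation. Put $X_t\coloneqq F(\bbundle^t)$ and $Z_t\coloneqq X_{t\wedge\tau}+\rho\,(t\wedge\tau)$. On $\set{\tau>t}\in\mathcal F_t$ we have $\bbundle^t\in K_\lambda$, so $\ex{X_{t+1}\given\mathcal F_t}=X_t-g(\bbundle^t)\le X_t-\rho$; on $\set{\tau\le t}$ the stopped process is frozen, $Z_{t+1}=Z_t$. In either case $\ex{Z_{t+1}\given\mathcal F_t}\le Z_t$, and $Z_t$ is integrable since $F^\star\le X_{t\wedge\tau}\le F(\zero)$ and $0\le t\wedge\tau\le t$. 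Taking expectations, $F^\star+\rho\,\ex{t\wedge\tau}\le\ex{Z_t}\le\ex{Z_0}=F(\zero)$ for all $t$, so by monotone convergence $\rho\,\ex{\tau}\le F(\zero)-F^\star<\infty$; in particular $\tau<\infty$ almost surely, which is what we wanted.

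The main obstacle is the uniform lower bound $\rho$. The sublevel set $S(F,F(\zero))$ is not compact, so one cannot directly invoke continuity plus compactness; the fix is to first isolate the directions $\cG(F)^\perp$ along which $F$ (and hence the entire dynamics) is trivially invariant, argue that modulo these the relevant set is compact, and then combine continuity of the partial-minimization value functions with the strict positivity of $g$, which itself rests on the blocks partitioning all coordinates together with strict convexity of $F$ on $\cG(F)$. Everything after that --- the supermartingale bookkeeping and the reduction to a hitting time --- is routine.
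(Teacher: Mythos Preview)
Your proof is correct and takes a genuinely different route from the paper's. The paper leverages the linear-convergence machinery it has already built: it bounds the Hessian uniformly on the compact set $S(F,F(\bbundle^0)+1)\cap\cG(F)$ to obtain global strong-convexity and smoothness matrices $A=c_{\min}P$, $B=c_{\max}P$, invokes \Prop{t:t0} (itself resting on \Thm{asympt:0}) to get $\ex{F(\bbundle^t)}-F^\star\le c\kappa^t$ with $\kappa<1$, and then finishes with Markov's inequality plus Borel--Cantelli to show the level set $S(F,\lambda)$ is hit almost surely.

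Your argument is more elementary and self-contained: you bypass the rate-of-convergence theorem entirely and instead extract a uniform \emph{additive} lower bound $\rho>0$ on the expected one-step drop, by combining the invariance $g(\bbundle)=g(P\bbundle)$ with continuity of the partial-minimization value functions and compactness of $K_\lambda\cap\cG(F)$ (via \Prop{S0:compact}); then a direct supermartingale/optional-stopping computation gives $\ex{\tau}\le(F(\zero)-F^\star)/\rho<\infty$. This buys you a proof that does not lean on the surrounding convergence theory, and in fact yields the stronger conclusion $\ex{\tau}<\infty$ rather than just $\tau<\infty$ almost surely. The paper's approach, on the other hand, has the virtue of reusing results it needs elsewhere anyway, so in context it is essentially free.
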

\begin{proof}
Since the proposition holds for $\lambda\ge F(\bbundle^0)$, consider the
case $F^\star<\lambda<F(\bbundle^0)$, and in particular assume $F^\star<F(\bbundle^0)$.
We prove the proposition by applying \Thm{asympt:0}.

Let $S\coloneqq S(F,F(\bbundle^0)+1)$ and $S_0\coloneqq S\cap\cG(F)$, which is compact by
\Prop{S0:compact}. Let
\[
  c_{\min}\coloneqq\min_{\bbundle\in S_0} \lambda_{\min}(\nabla^2 F(\bbundle), P)
\enspace,
\quad
  c_{\max}\coloneqq\max_{\bbundle\in S_0} \lambda_{\max}(\nabla^2 F(\bbundle), P)
\enspace.
\]
Note that $\lambda_{\min}(\nabla^2 F(\bbundle),P)>0$
on the compact set $S_0$, and $\nabla^2 F(\cdot)$ and $\lambda_{\min}(\cdot,P)$ are
continuous, so $c_{\min}>0$. Similarly, since $\lambda_{\max}(\nabla^2 F(\bbundle), P)<\infty$
on $S_0$, the continuity yields $c_{\max}<\infty$.
Since $\nabla^2 F(\bbundle)=\nabla^2 F(P\bbundle)$, we have that for all $\bbundle\in S$
\[
   c_{\min} P
\preceq
   \nabla^2 F(\bbundle)
\preceq
   c_{\max} P
\enspace.
\]
Therefore, by \Prop{AB}, $F$ is strongly convex and smooth on $S$ with respect to $A\coloneqq c_{\min}P$ and
$B\coloneqq c_{\max}P$. Let
\[
  \sigmalow\coloneqq
  \lambda_{\min}\bigParens{\cD(B)^+,\,A^+}
  =\lambda_{\min}\bigParens{c_{\max}^{-1}\cD(P)^+,\,c_{\min}^{-1}P}
  =\frac{c_{\min}}{c_{\max}}\cdot\lambda_{\min}(\cD(P)^+,\,P)
\enspace,
\]
which is positive, because $\range(P)\subseteq\range(\cD(P))$ by \Prop{D:range}.

Now, by \Prop{t:t0}, with $\Sin=S(F,F(\bbundle^0))$, $\Sout=S(F,F(\bbundle^0)+1)=S$,
and the above matrices $A$ and~$B$, the scalar $\sigmalow$, and $\ell=0$, we obtain that
for some constant $c$ and $\kappa\coloneqq(1-\sigmalow/\card{\cA})<1$,
\[
  \ex{F(\bbundle^t)}-F^\star
  =\ex{F(\bbundle^t)\given \bbundle^0}-F^\star
  \le c\kappa^t
\enspace.
\]
To finish the proof, we will appeal to Borel-Cantelli lemma and show that the algorithm
must reach $S(F,\lambda)$ with probability 1. Specifically, note that by the Markov inequality
\begin{align*}
  \sum_{t=1}^\infty
  \prob{F(\bbundle^t)\ge\lambda}
&=
  \sum_{t=1}^\infty
  \prob{F(\bbundle^t)-F^\star\ge\lambda-F^\star}
\\
&\le
  \sum_{t=1}^\infty
  \frac{c\kappa^t}{\lambda-F^\star}
=
  \frac{c\kappa}{(1-\kappa)(\lambda-F^\star)}
<
\infty
\enspace,
\end{align*}
so with probability 1, only a finite number of the events $\set{F(\bbundle^t)\ge\lambda}$ will occur;
in other words, the level set $S(F,\lambda)$ is reached with probability 1.
\end{proof}

\subsection{Local Convergence of the Market}
\label{app:localmarketconv}

Throughout this section,
we assume that $C$ is \convplus, which implies that $F$ is \convplus as well.
Our key tool for the analysis of the convergence error of the market is \Thm{conv}.
Therefore, we need to analyze the gradient and Hessian of $F$.
%
We begin the analysis by deriving explicit expressions for
$\nabla F$ and $\nabla^2 F$ using the gradients and Hessians of $T$ and $C$. It will be convenient to do so for
trader-level blocks $\nabla_i$ and $\nabla^2_{ij}$.

Given an allocation vector $\bbundle\in R^{NK}$, the associated market price (the gradient of the cost)
will be denoted $\bprice_0(\bbundle)$ and the gradients of trader potentials will be denoted $\bprice_i(\bbundle)$:
\begin{align*}
  \bprice_0(\bbundle)
  &\coloneqq
  \nabla C_b\bigParens{\textstyle\sum_{i=1}^N \bbundle_i}
  =
  \nabla C\bigParens{\textstyle\sum_{i=1}^N \bbundle_i/b}
\\
  \bprice_i(\bbundle)
  &\coloneqq
  \nabla F_i(-\bbundle_i)
  =
  \nabla T(\bttheta_{i}-a_i\bbundle_i)
\qquad
\text{for $i\in[N]$}
\end{align*}
where $T$ is the log partition function.

The gradient of $F$ is composed of blocks
\begin{align}
\notag
  \nabla_i F(\bbundle)
   &=-\nabla F_i(-\bbundle_i)+\nabla C_b\bigParens{\textstyle\sum_i\bbundle_i}
\\
\label{eq:grad:i}
   &=-\bprice_i(\bbundle)+\bprice_0(\bbundle)
\enspace.
\end{align}

For the Hessian, first consider $\nabla_{ii} F$:
\begin{align}
\notag
  \nabla^2_{ii} F(\bbundle)
  &=\nabla^2 F_i(-\bbundle_i)+\nabla^2 C_b\bigParens{\textstyle\sum_i\bbundle_i}
\\
\notag
  &=a_i \nabla^2 T(\bttheta_{i}-a_i\bbundle_i)
    +
    \frac{1}{b} \nabla^2 C\BigParens{{\textstyle (\sum_i\bbundle_i)}/b}
\\
\label{eq:Hess:ii}
  &=a_i H_T(\bprice_i(\bbundle))
    +
    \frac{1}{b} H_C(\bprice_0(\bbundle))
\enspace.
\end{align}
Here, recall that for a \convplus function $f$, its Hessian at any given point is
only a function of the gradient at that point, which is denoted $H_f$.
In
\Eq{Hess:ii}, we express $\nabla^2 T$ and $\nabla^2 C$ using the respective functions
$H_T$ and $H_C$ and the definitions of $\bprice_i$ and $\bprice_0$.

For $i\ne j$, the block $\nabla^2_{ij} F$ is
\begin{align}
\notag
  \nabla^2_{ij} F(\bbundle)
   &=\nabla^2 C_b\bigParens{{\textstyle\sum_i\bbundle_i}}
    =\frac{1}{b} \nabla^2 C\BigParens{{\textstyle (\sum_i\bbundle_i)}/b}
\\
\label{eq:Hess:ij}
   &=\frac{1}{b} H_C(\bprice_0(\bbundle))
\enspace.
\end{align}

At any optimum $\bbundle^\star$, we have $\bprice_i(\bbundle^\star)=\bprice_0(\bbundle^\star)=\eqbprice$.
Thus, using the Kronecker product notation,
the Hessian of $F$ at $\bbundle^\star$ can be expressed as
\begin{equation}
\label{eq:Hess:rstar}
  \nabla^2 F(\bbundle^\star)
  = D\otimes H_T(\eqbprice)
    +\frac{\one\one\trans}{b}\otimes H_C(\eqbprice)
\enspace,
\end{equation}
where $D\coloneqq\diag_{i\in[N]} a_i$
and $\one$ is the $N$-dimensional all-ones vector

Using local Lipschitz property of $H_T$ and $H_C$ (\Prop{H:Lipschitz}) and the fact that $\norm{\eqbprice-\aggbprice}=O(b)$
(\Thm{bias:global}), we immediately obtain the following asymptotic expression for $\nabla^2 F(\bbundle^\star)$ as $b\to 0$:
\begin{proposition}
\label{prop:H}
Let $H^\star\coloneqq\nabla^2 F(\bbundle^\star)$ and $D\coloneqq\diag_i a_i$. Then
\[
  H^\star\within(1\pm O(b))\Parens{\;
     D\otimes H_T(\aggbprice)
     +(\one\one\trans)\otimes \frac{1}{b}H_C(\aggbprice)
     \;}
\enspace.
\]
\end{proposition}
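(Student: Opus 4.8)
The plan is to start from the exact formula for the Hessian at the optimum established in \Eq{Hess:rstar}, namely
\[
  \nabla^2 F(\bbundle^\star)
  = D\otimes H_T(\eqbprice) + \frac{\one\one\trans}{b}\otimes H_C(\eqbprice)
\enspace,
\]
and replace $\eqbprice$ by $\aggbprice$ throughout at the cost of a multiplicative $(1\pm O(b))$ factor. The two inputs are: (i) \Thm{bias:global}, which gives $\norm{\eqbprice-\aggbprice}\le cb$, so $\norm{\eqbprice-\aggbprice}=O(b)$ and $\eqbprice\to\aggbprice$ as $b\to 0$; and (ii) the local Lipschitz property of $H_T$ and $H_C$ from \Prop{H:Lipschitz}, available because $T$ and $C$ are \convplus. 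Since $\aggbprice=\nabla T(\bar{\btheta})\in\ri\cM$ is an interior point of the set of gradients of $T$, and likewise an interior gradient point of $C$ (as $\cM=\dom T^*\subseteq\dom C^*$ and the minima lie in $\ri\cM$, cf.\ the discussion in \App{convplus} and the proof of \Thm{bias:local}), for all sufficiently small $b$ the point $\eqbprice$ lies in a closed ball around $\aggbprice$ on which \Prop{H:Lipschitz} applies. This yields constants $c_T,c_C$ with $H_T(\eqbprice)\within(1\pm c_T\norm{\eqbprice-\aggbprice})H_T(\aggbprice)$ and $H_C(\eqbprice)\within(1\pm c_C\norm{\eqbprice-\aggbprice})H_C(\aggbprice)$, so by (i) both approximations hold with a common error $\eps = O(b)$.

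Next I would push these approximations through the Kronecker products. The elementary fact needed is that if $A\succeq 0$ and $X\preceq Y$ are symmetric, then $A\otimes X\preceq A\otimes Y$, because $A\otimes(Y-X)$ is a Kronecker product of positive-semidefinite matrices and hence positive-semidefinite. Applying this with $A=D\succeq 0$ and with $A=\one\one\trans/b\succeq 0$ gives
\[
  D\otimes H_T(\eqbprice)\within(1\pm\eps)\,D\otimes H_T(\aggbprice)
\enspace,
\qquad
  \frac{\one\one\trans}{b}\otimes H_C(\eqbprice)\within(1\pm\eps)\,\frac{\one\one\trans}{b}\otimes H_C(\aggbprice)
\enspace.
\]
Writing $M_T := D\otimes H_T(\aggbprice)\succeq 0$ and $M_C := (\one\one\trans/b)\otimes H_C(\aggbprice)\succeq 0$, these bounds say precisely that $\nabla^2 F(\bbundle^\star)-(M_T+M_C)$ is sandwiched between $-\eps(M_T+M_C)$ and $\eps(M_T+M_C)$, using that $M_T+M_C\succeq 0$ dominates each summand. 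This is exactly $\nabla^2 F(\bbundle^\star)\within(1\pm\eps)(M_T+M_C)$, which is the claimed statement with $\eps=O(b)$.

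The only part requiring a little care, rather than a routine calculation, is justifying that \Prop{H:Lipschitz} may be invoked at $\aggbprice$, i.e.\ that $\aggbprice$ is genuinely an interior point (within the relevant affine hulls) of the domains of both $H_T$ and $H_C$, so that a ball around it sits inside those domains and $\eqbprice$ falls in it for small $b$. As noted, this follows from $\aggbprice\in\ri\cM$ (via \Thm{eqprice-char:NIPS}, since $\aggbprice$ is a gradient of the log partition function) together with $\cM\subseteq\dom C^*$; the needed interiority and differentiability facts are already recorded in the treatment of \convplus conjugates in \App{convplus}. Everything else—the Kronecker monotonicity and the merging of the two $O(b)$ error constants while respecting the semidefinite order—is immediate, so no separate obstacle arises.
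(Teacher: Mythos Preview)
Your proposal is correct and matches the paper's approach exactly: the paper states that the proposition follows ``immediately'' from \Eq{Hess:rstar}, the local Lipschitz property of $H_T$ and $H_C$ (\Prop{H:Lipschitz}), and the bias bound $\norm{\eqbprice-\aggbprice}=O(b)$ (\Thm{bias:global}). You have simply spelled out the details---the interiority of $\aggbprice$, the Kronecker monotonicity step, and the combination of the two $(1\pm\eps)$ bounds---that the paper leaves implicit.
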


We next derive asymptotic formulas for matrices $\cD(H^\star)^+$ and $(H^\star)^+$, from which we will immediately
obtain a lower bound on strong convexity via \Eq{sigmalow}.

Recall that $\cA$ is the decomposition
of the coordinates $[N]\times[K]$, but for our two dynamics (\ASD and \SSD), this decomposition has
a special structure. This structure is described by a decomposition $\cB$ of $[K]$, which
is then applied to each trader, that is
$\cA=\bigSet{\set{i}\times\beta:\:i\in[N],\beta\in\cB}$. For $M\in\R^{K\times K}$, we use the notation $\cD_\cB(M)$
to describe $\diag_{\beta\in\cB} M_{\beta\beta}$. For $M\in\R^{NK\times NK}$, we continue writing $\cD(M)$ instead of a more
explicit $\cD_\cA(M)$.

In stating our results,
we use the following shorthands, some of which have been already introduced:
\[
  H^\star\coloneqq\nabla^2 F(\bbundle^\star),
  \quad
  H_T\coloneqq H_T(\aggbprice),
  \quad
  H_C\coloneqq H_C(\aggbprice),
  \quad
  D\coloneqq\diag_i a_i,
  \quad
  P=I_N-\one\one\trans/N.
\]
The matrix $P$ is the projection matrix on the set of centered vectors, i.e.,
vectors $\bu$ in $\R^N$ such that $\one\trans\bu=0$.
With this notation, the pseudoinverses $\cD(H^\star)^+$ and $(H^\star)^+$
are characterized in the following theorem:
\begin{theorem}
\label{thm:conv:1}
Let $M_1\coloneqq I_N\otimes \cD_\cB(H_C)^+$ and $M_2\coloneqq (PDP)^+\otimes H_T^+$.
Then, as $b\to 0$,
\begin{align}
\label{eq:conv:1:D}
  \cD(H^\star)^+
  &\within
  (1\pm O(b))\cdot bM_1
\enspace,
\\
\label{eq:conv:1:H}
  (H^\star)^+
  &\within
  M_2\pm O(b)M_1
\enspace.
\end{align}
Local strong convexity is bounded from below
by
\begin{align}
  \sigmalow
&
  = b\cdot\lambda_{\min}\Parens{\;(M_2^{1/2})^+ M_1 (M_2^{1/2})^+\;}
  -O(b^2)
\\
\label{eq:conv:1:sigmalow}
&
  = b\cdot\lambda_{\min}(PDP)\cdot\lambda_{\min}\bigParens{\;H_T^{1/2}\cD_\cB(H_C)^+ H_T^{1/2}\;}
  -O(b^2)
\enspace,
\end{align}
where $\lambda_{\min}(\cdot)$ denotes the smallest positive eigenvalue of a matrix.
\end{theorem}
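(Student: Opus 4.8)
The plan is to derive asymptotic (in $b$) formulas for the two pseudoinverses $\cD(H^\star)^+$ and $(H^\star)^+$ starting from the \emph{exact} expression for $H^\star$ in \Eq{Hess:rstar}, and then substitute them into the generalized Rayleigh-quotient characterization of $\sigmalow$ from \Eq{sigmalow}. For the block-diagonal part this is easy: since every block $\alpha\in\cA$ has the product form $\set{i}\times\beta$ with $\beta\in\cB$, reading off the $(\alpha,\alpha)$-block of $H^\star=D\otimes H_T(\eqbprice)+\tfrac1b\one\one\trans\otimes H_C(\eqbprice)$ gives the exact identity $\cD(H^\star)=\tfrac1b\,I_N\otimes\cD_\cB(H_C(\eqbprice))+D\otimes\cD_\cB(H_T(\eqbprice))$. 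Because $\dom T^*=\cM\subseteq\dom C^*$ we have $\cG(T)\subseteq\cG(C)$, which (blockwise, then combined) gives $\range\bigl(D\otimes\cD_\cB(H_T(\eqbprice))\bigr)\subseteq\range\bigl(I_N\otimes\cD_\cB(H_C(\eqbprice))\bigr)$. Writing $\cD(H^\star)=\tfrac1b\bigl(I_N\otimes\cD_\cB(H_C(\eqbprice))+b\,D\otimes\cD_\cB(H_T(\eqbprice))\bigr)$ and using that for symmetric PSD $A,X$ with $\range X\subseteq\range A$ one has $A\preceq A+bX\preceq(1+O(b))A$, \Prop{inv:ineq} yields $\cD(H^\star)^+\within(1\pm O(b))\,b\,I_N\otimes\cD_\cB(H_C(\eqbprice))^+$; finally \Thm{bias:global} ($\norm{\eqbprice-\aggbprice}=O(b)$) together with \Prop{H:Lipschitz} lets me replace $H_C(\eqbprice)$ by $H_C(\aggbprice)$ at the cost of another $(1\pm O(b))$ factor, which is \Eq{conv:1:D}.

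The work is in \Eq{conv:1:H}, and this is where I expect the main obstacle. Let $P=I_N-\one\one\trans/N$ and split $H^\star$ according to the orthogonal decomposition $\R^{NK}=(\Span\set{\one}\otimes\R^K)\oplus(\one^\perp\otimes\R^K)$ by conjugating with $\one\one\trans/N\otimes I$ and $P\otimes I$; using $\one\one\trans=N(\one\one\trans/N)$ this gives $H^\star=A+B+B\trans+C$ with $A=(\one\one\trans/N)\otimes\bigl(\tfrac{N}{b}H_C(\eqbprice)+\tfrac1N(\one\trans D\one)H_T(\eqbprice)\bigr)$, $C=PDP\otimes H_T(\eqbprice)$, and $B=(\one\one\trans/N)DP\otimes H_T(\eqbprice)$. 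I will verify the four range hypotheses of \Prop{woodbury}: $\range A$ and $\range C$ are orthogonal (so meet only at $\zero$); $\range B\subseteq\Span\set{\one}\otimes\cG(T)\subseteq\Span\set{\one}\otimes\cG(C)=\range A$ (using $\cG(T)\subseteq\cG(C)$ and $\range(\tfrac Nb H_C+\cdots)=\cG(C)$); $\range B\trans\subseteq\range P\otimes\cG(T)=\range C$ (as $D\succ0$); and $\range S=\range C$ for the Schur complement $S=C-B\trans A^+B$, which holds for small $b$ because $A^+=(\one\one\trans/N)\otimes(\tfrac Nb H_C(\eqbprice)+\cdots)^+=\tfrac bN(\one\one\trans/N)\otimes H_C(\eqbprice)^+ +O(b^2)=O(b)$, so $B\trans A^+B=O(b)$ with range inside $\range C$, hence $S=PDP\otimes H_T(\eqbprice)\pm O(b)$ keeps the range of $C$. \Prop{woodbury} then gives $(H^\star)^+=A^+ +(I-A^+B)S^+(I-A^+B)\trans$. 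Now $A^+=O(b)$ is supported in $\Span\set{\one}\otimes\cG(C)\subseteq\range M_1$; $A^+B=O(b)$, so $I-A^+B=I\pm O(b)$; and $S^+\within(1\pm O(b))(PDP)^+\otimes H_T(\eqbprice)^+$ by \Prop{inv:ineq}. Collecting the $O(b)$ corrections — each of which has range inside $\range M_1$ — and swapping $\eqbprice\to\aggbprice$ once more via \Prop{H:Lipschitz} and \Thm{bias:global} gives $(H^\star)^+\within M_2\pm O(b)M_1$. The delicate parts here are the range bookkeeping in \Prop{woodbury} (especially $\range S=\range C$) and the pseudoinverse perturbation expansions, which must be tracked carefully enough to land exactly the stated $\within$ bounds with the $O(b)M_1$ error terms.

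It remains to assemble $\sigmalow$. Substitute \Eq{conv:1:D} and \Eq{conv:1:H} into $\sigmalow=\min_{\bu\in\cG(F)\setminus\set{\zero}}\bigl(\bu\trans\cD(H^\star)^+\bu\bigr)/\bigl(\bu\trans(H^\star)^+\bu\bigr)$, noting $\cG(F)=\range(H^\star)=(\one^\perp\otimes\cG(T))\oplus(\Span\set{\one}\otimes\cG(C))$. For directions $\bu$ with a nonzero $\Span\set{\one}$-component the quotient is $\Theta(1)$: here I use $(H^\star)^+\succeq A^+$ (the extra Woodbury term is PSD) together with $A^+\succeq\Omega(b)\,(\one\one\trans/N)\otimes H_C(\aggbprice)^+$ to lower-bound the denominator by $\Omega(b)$ on those directions, while the numerator is $O(b)$. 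Hence the minimum is attained, up to a relative $O(b)$ error, over $\bu\in\one^\perp\otimes\cG(T)=\range(M_2)$, where $\cD(H^\star)^+\within(1\pm O(b))bM_1$ and $(H^\star)^+\within(1\pm O(b))M_2$ (the latter using that $M_1$ restricted to $\range(M_2)$ is $O(1)\cdot M_2$, both being bounded positive-definite operators on $\range M_2\subseteq\range M_1$). Therefore $\sigmalow=b\cdot\min_{\bu\in\range(M_2)\setminus\set{\zero}}\tfrac{\bu\trans M_1\bu}{\bu\trans M_2\bu}-O(b^2)=b\,\lambda_{\min}\bigl((M_2^{1/2})^+M_1(M_2^{1/2})^+\bigr)-O(b^2)$ by the generalized-eigenvalue identity of Appendix~\ref{app:matrices} (valid since $\range M_2\subseteq\range M_1$ and $M_1$ is positive definite on its range). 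Finally $(M_2^{1/2})^+=(PDP)^{1/2}\otimes H_T^{1/2}$, so $(M_2^{1/2})^+M_1(M_2^{1/2})^+=PDP\otimes H_T^{1/2}\cD_\cB(H_C)^+H_T^{1/2}$, and since the smallest positive eigenvalue of a Kronecker product of PSD matrices is the product of the smallest positive eigenvalues of the factors, this equals $\lambda_{\min}(PDP)\cdot\lambda_{\min}\bigl(H_T^{1/2}\cD_\cB(H_C)^+H_T^{1/2}\bigr)$, which is \Eq{conv:1:sigmalow}.
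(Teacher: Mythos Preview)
Your approach is essentially the paper's: the same Kronecker structure, the same $P$-versus-$(I_N{-}P)$ block split of $H^\star$, and \Prop{woodbury} for the pseudoinverse. The one cosmetic difference is that you swap the roles of $A$ and $C$ in \Prop{woodbury}: the paper takes $A=PDP\otimes H_T$ (so $A^+=M_2$ is the leading term and the Schur-complement piece is the $O(b)$ correction), whereas you put the $\one\one\trans$-block in $A$ (so $A^+=O(b)$ and it is $S^+$ that delivers $M_2$). Both orderings are valid and give the same bound.

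There is, however, a gap in your $\sigmalow$ step. Your two-case split asserts that whenever $\bu\in\cG(F)$ has a nonzero $\Span\{\one\}$-component, the Rayleigh quotient $\bu\trans\cD(H^\star)^+\bu\,\big/\,\bu\trans(H^\star)^+\bu$ is $\Theta(1)$. This is false for mixed directions $\bu=\bu_1+\eps\bu_2$ with $\bu_1\in\range(M_2)$, $\bu_2\in\Span\{\one\}\otimes\cG(C)$, and $\eps$ small: the denominator then contains $\bu_1\trans M_2\bu_1=\Theta(1)$ (your bound $(H^\star)^+\succeq A^+$ only contributes an additional $\Omega(b)\lVert\bu_2\rVert^2$, which is negligible), so the quotient is $\Theta(b)$, continuous with its value on $\range(M_2)$. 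Hence the case split does not isolate the minimizing directions. The correct reduction --- which underlies the paper's more direct inequality $\tfrac{\bu\trans M_2\bu}{\bu\trans M_1\bu}\le\lambda_{\min}^{-1}(M_1,M_2)$ for all $\bu\in\cG(F)$ --- is that $M_1=I_N\otimes\cD_\cB(H_C)^+$ and $M_2=(PDP)^+\otimes H_T^+$ are both block-diagonal for the orthogonal split $(\one^\perp\otimes\R^K)\oplus(\Span\{\one\}\otimes\R^K)$, and $M_2$ vanishes on the second block. Writing $\bu=\bu_1+\bu_2$ accordingly (with $\bu_1\in\range(M_2)$ whenever $\bu\in\cG(F)$) gives $\bu\trans M_2\bu=\bu_1\trans M_2\bu_1$ and $\bu\trans M_1\bu\ge\bu_1\trans M_1\bu_1$, whence $\tfrac{\bu\trans M_2\bu}{\bu\trans M_1\bu}\le\lambda_{\min}^{-1}(M_1,M_2)$ uniformly. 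Substituting this into
\[
\frac{\bu\trans\cD(H^\star)^+\bu}{\bu\trans(H^\star)^+\bu}
\;\ge\;
\frac{(1-cb)\,b\,\bu\trans M_1\bu}{\bu\trans M_2\bu+cb\,\bu\trans M_1\bu}
\;=\;
\frac{b(1-cb)}{\tfrac{\bu\trans M_2\bu}{\bu\trans M_1\bu}+cb}
\]
directly yields $\sigmalow\ge b\,\lambda_{\min}(M_1,M_2)-O(b^2)$, with no case analysis needed.
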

The matrices $M_1$ and $M_2$ in the statement of the theorem do not depend on the liquidity parameter $b$. The matrix $M_2$,
which is the dominant part of the Hessian pseudoinverse $(H^\star)^+$, is also independent of the trader dynamics and the
cost function.
On the other hand,
the matrix $M_1$ reflects the cost function and dynamics. The pseudoinverse $\cD(H^\star)^+$ approximately equals $bM_1$ as $b\to 0$.
The main implication is that $\sigmalow=\Omega(b)$. This yields linear convergence rate bound $\kappahigh=1-\Omega(b)$, which
suggests worse convergence as $b\to 0$. However, in the absence of a matching lower bound,
we cannot conclude that the actual convergence gets
worse as $b\to 0$. In \App{conv:ASD}, we derive a matching bound $\sigmahigh=O(b)$ for \ASD. Thus, for \ASD,
it is not possible to achieve a linear convergence rate better than $1-\Theta(b)$. (We conjecture similar behavior for \SSD.)
This means there is a tradeoff between convergence, which slows down as $b\to 0$, and the market-maker bias,
which gets smaller.

\begin{proof}[Proof of \Thm{conv:1}]
From \Prop{H}, we know that $H^\star\within(1\pm O(b))M$ where
\begin{equation}
\label{eq:def:M}
  M\coloneqq D\otimes H_T
     +(\one\one\trans)\otimes \frac{1}{b}H_C
\enspace.
\end{equation}
\Prop{H} also implies $\cD(H^\star)\within(1\pm O(b))\cD(M)$. Thus,
by \Prop{inv:ineq}, we obtain
\begin{align}
\label{eq:conv:DM}
   \cD(H^\star)^+
   &\within(1\pm O(b))\cD(M)^+
\enspace,
\\
\label{eq:conv:M}
   (H^\star)^+
   &\within(1\pm O(b)) M^+
\enspace.
\end{align}
The analysis will therefore focus on $M$ and convert to $H^\star$ only in the last step.

We begin with the analysis of $\cD(M)$. Specifically, consider the block $M_{\alpha\alpha}$
where $\alpha=\set{i}\times\beta$ for some $\beta\in\cB$. From the definition of $M$
\[
  M_{\alpha\alpha}=\frac{1}{b} H_{C,\beta\beta} + a_i H_{T,\beta\beta}.
\]
Since $\range(H_T)=\cG(T)\subseteq\cG(C)=\range(H_C)$, there is some constant $c_1$ such that $H_T\preceq c_1 H_C$, so we can write
\begin{equation}
\label{eq:convF:1}
  M_{\alpha\alpha}\within\Parens{\frac{1}{b}\pm a_i c_1} H_{C,\beta\beta}.
\end{equation}
Setting $c_2=(\max_i a_i)c_1$, and combining \Eq{convF:1} across all blocks $\alpha=\set{i}\times\beta$, we thus obtain
\[
  \cD(M)\within\Parens{\frac{1}{b}\pm c_2}\bigParens{\;I_N\otimes\cD_\cB(H_C)\;}
\enspace.
\]
Therefore, by \Prop{inv:ineq},
\begin{align*}
  \cD(M)^+
  &\within\frac{b}{1\pm b c_2}\bigParens{\;I_N\otimes\cD_\cB(H_C)^+\;}
\\
  &=b(1\pm O(b))\bigParens{\;I_N\otimes\cD_\cB(H_C)^+\;}
\enspace.
\end{align*}
The bound on $\cD(H^\star)^+$ now follows by \Eq{conv:M}.

We next bound $(H^\star)^+$ by analyzing $M^+$. First, decompose the matrix $D$ into blocks corresponding to
the ranges of the projection matrices $P$ and $I_N-P$.
Let $A=PDP$, $B=PD(I_N-P)$ and $X=(I_N-P)D(I_N-P)$. Thus,
\[
  D=A+B+B\trans+X
\enspace.
\]
Using the decomposition of $D$, we can decompose $M$ in order to carry out blockwise inversion:
\begin{align}
\notag
 M
 &=D\otimes H_T + \frac{N}{b}(I_N-P)\otimes H_C
\\
\label{eq:convF:2}
 &=A\otimes H_T + (B+B\trans)\otimes H_T + \Parens{X\otimes H_T +\frac{N}{b}(I_N-P)\otimes H_C}
\enspace.
\end{align}
We first analyze the Schur complement matrix, which appears in the blockwise inverse:
\begin{align*}
\notag
   S
&\coloneqq \Parens{X\otimes H_T +\frac{N}{b}(I_N-P)\otimes H_C}
   - (B\trans\otimes H_T)(A^+\otimes H_T^+)(B\trans\otimes H_T)
\\
&= \frac{N}{b}(I_N-P)\otimes H_C
    +
    \bigParens{X-B\trans A^+B}\otimes H_T
\enspace.
\end{align*}
As we argued before, $\range(H_T)\subseteq\range(H_C)$. Also $\range(X)\subseteq\range(I_N-P)$ and
$\range(B\trans A^+B)\subseteq\range(I_N-P)$, so for some $c_3>0$, we have
\[
  \bigParens{X-B\trans A^+B}\otimes H_T\within \pm c_3 (I_N-P)\otimes H_C
\enspace,
\]
and therefore
\begin{equation}
\label{eq:convF:2x}
  S\within\Parens{\frac{N}{b}\pm c_3}\bigParens{\;(I_N-P)\otimes H_C\;}
\enspace.
\end{equation}
We now apply blockwise inversion (\Prop{woodbury}) to \Eq{convF:2}, with the bounds of \Eq{convF:2x} on the Schur complement
to obtain
\[
 M^+
 \within
 A^+\otimes H_T^+
 +\frac{b}{N\pm b c_3} Y
\]
where
\[
 Y\coloneqq\BigParens{I_{NK}-(A^+\otimes H_T^+)(B\otimes H_T)}
   \bigParens{(I_N-P)\otimes H_C^+}
   \BigParens{I_{NK}-(A^+\otimes H_T^+)(B\otimes H_T)}\trans
\]
is a positive-semidefinite matrix. Finally, invoking \Eq{conv:M}, we obtain
\begin{align*}
 (H^\star)^+
 &\within
 (1\pm O(b))\bigParens{\;A^+\otimes H_T^+\;}
 +b(1\pm O(b)) Y
\\
 &\within
 \bigParens{\;A^+\otimes H_T^+\;}
 \pm O(b)\bigParens{\;A^+\otimes H_T^+ + Y\;}
\\
 &\within
 \bigParens{\;A^+\otimes H_T^+\;}
 \pm O(b)\bigParens{\;I_N\otimes \cD(H_C)^+\;}
\enspace,
\end{align*}
where in the last line we used that
$\range(A^+\otimes H_T^+ + Y)\subseteq\range(I_N\otimes H_C^+)\subseteq\range(I_N\otimes \cD_\cB(H_C)^+)$ because
$\range(H_T)\subseteq\range(H_C)\subseteq\range(\cD_\cB(H_C))$.

Finally, to prove \Eq{conv:1:sigmalow}, we use \Eq{sigmalow}. First, note that
$(H^\star)^+\within M_2\pm O(b)M_1$, so
\[
   \range(M_2)\subseteq\range(H^\star)=\cG(F)\subseteq\range(I_N\otimes H_C)\subseteq\range(M_1)
\enspace.
\]
Therefore, if $\bu\in\cG(F)\wo\range(M_2)$ we have $\bu\trans M_1\bu>0$, but $\bu\trans M_2\bu=0$, so
\[
\min_{\bu\in\cG(F)\wo\set{\zero}}\;
    \frac{\bu\trans M_1\bu}{\bu\trans M_2\bu}
=
\min_{\bu\in\range(M_2)\wo\set{\zero}}\;
    \frac{\bu\trans M_1\bu}{\bu\trans M_2\bu}
=
\lambda_{\min}(M_1,M_2)
>0
\]
and so
%
%
\[
   \frac{\bu\trans M_2\bu}{\bu\trans M_1\bu}\le\lambda_{\min}^{-1}(M_1,M_2)
\]
for any $\bu\in\cG(F)\wo\set{\zero}$.

From the bounds in \Eqs{conv:1:D}{conv:1:H}, there exists a constant $c$ such that for $b$ sufficiently small, and for all $\bu\in\cG(F)\wo\set{\zero}$,
\begin{align*}
  \frac{\bu\trans \cD(H)^+\bu}{\bu\trans H^+\bu}
&\ge
  \frac{(1-cb)\cdot \bu\trans(bM_1)\bu}
      {\bu\trans M_2\bu + cb (\bu\trans M_1\bu)}
\\
&=
  b\cdot
  \frac{1-cb}
       {\frac{\bu\trans M_2\bu}{\bu\trans M_1\bu} + cb}
\\
&\ge
  b\cdot
  \frac{1-cb}
       {\lambda_{\min}^{-1}(M_1,M_2) + cb}
\\
&\ge b\cdot(\lambda_{\min}(M_1,M_2)-O(b))
\enspace.
\end{align*}
The bound on $\sigmalow$ now follows by \Eq{sigmalow}, after noting that
\begin{equation}
\tag*{\qedhere}
  \lambda_{\min}(M_1,M_2)
  =
  \lambda_{\min}\Parens{\;(M_2^{1/2})^+ M_1 (M_2^{1/2})^+\;}
  =
  \lambda_{\min}(PDP)\cdot\lambda_{\min}\bigParens{\;H_T^{1/2}\cD_\cB(H_C)^+ H_T^{1/2}\;}
\enspace.
\end{equation}
\end{proof}

\subsubsection{Tighter Analysis of the All-securities Dynamics}
\label{app:conv:ASD}

In \Thm{conv:1} we derived a worst-case bound on the curvature, valid
across all possible directions that a gradient can take. In our tighter analysis of \ASD, we derive a tighter bound on the expected curvature,
exploiting the fact that the updates are chosen uniformly at random. While our analysis only applies to \ASD, we conjecture
that a similar style of analysis can also work for \SSD.

We will index blocks by $i$ rather than $\alpha$, since each block consists of
all the coordinates controlled by the trader $i$.

We begin by a detailed analysis of how the block-coordinate updates affect the
value of the gradient.
Let $\bbundle$ by the current iterate.
Consider the update $\Psi_i$, which optimizes over the coordinates controlled by trader $i$ (see Eq.~\ref{eq:update}), and let $\bbundle'=\Psi_i(\bbundle)$ be the new iterate. By the optimality of the update,
we have
\[
 \bprice_i(\bbundle')
 =
 \nabla F_i(-\bbundle_i')
 =
 \nabla C_b\bigParens{\textstyle\sum_j\bbundle_j'}
 =
 \bprice_0(\bbundle')
\]
and therefore, by \Eq{grad:i}, for all $j\in[N]$,
\[
  \nabla_j F(\bbundle')
  =-\bprice_j(\bbundle')+\bprice_0(\bbundle')
  =-\bprice_j(\bbundle')+\bprice_i(\bbundle')
\enspace.
\]
Thus, after the first update, the gradient $\nabla F(\bbundle)$ can
be expressed using pairwise differences of $\bprice_j(\bbundle)$. When the update $\Psi_i$ is performed, the value of $\bprice_i$
changes, whereas $\bprice_j$ for $j\ne i$ is unchanged. The amount of change in $\bprice_i$
will be denoted as $\bdelta_i$:
\[
  \bdelta_i(\bbundle)\coloneqq \bprice_i(\bbundle')-\bprice_i(\bbundle)
\text{ where }
  \bbundle'=\Psi_i(\bbundle)
\enspace.
\]
This is locally bounded as follows:
\begin{lemma}
\label{lem:delta}
There exists constants $b_0,c>0$ such that for every $b\le b_0$ there exists a proper level set $S$
such that if $\bbundle\in S$ then
\[
  \Norm{\bdelta_i(\bbundle)}\le c b\bigNorm{\bprice_i(\bbundle)-\bprice_0(\bbundle)}
\]
for all $i\in[N]$.
\end{lemma}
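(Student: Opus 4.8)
The plan is to make rigorous the intuition that with small liquidity $b$ the cost function $C_b$ is extremely ``stiff'' (a tiny trade moves the market price a lot) while each trader potential $F_i$ is ``soft'' (the same trade moves the trader's own price only by $O(b)$): when trader $i$ trades so as to equalize her price with the market price, the market price moves essentially all the way to $\bprice_i(\bbundle)$, so her own price moves only by an amount of order $b\norm{\bprice_i(\bbundle)-\bprice_0(\bbundle)}$. I would carry this out entirely in price space, using the conjugate structure of \convplus functions from \Prop{convplus:conj}. Fix $\bbundle\in S$, set $\bbundle'\coloneqq\Psi_i(\bbundle)$ and $\bz\coloneqq\bbundle_i'-\bbundle_i$, and abbreviate $\bprice_i\coloneqq\bprice_i(\bbundle)$, $\bprice_0\coloneqq\bprice_0(\bbundle)$, $\bq\coloneqq\bprice_i(\bbundle')=\bprice_0(\bbundle')$ (the last two coincide by optimality of the block update, as noted just before the lemma), so that $\bdelta_i(\bbundle)=\bq-\bprice_i$ and the goal is $\norm{\bq-\bprice_i}\le cb\norm{\bprice_i-\bprice_0}$.

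The first step is localization, which I expect to be the delicate part. The quantities entering the final bound are $H_T^+$ and $H_C^+$, and these blow up near the relative boundaries of $\cM$ and $\dom C^*$, so they cannot be controlled uniformly in $b$ on a fixed level set; but the lemma only requires \emph{some} proper level set, which may depend on $b$. Since the block update does not increase $F$, $\bbundle'\in S$ whenever $\bbundle\in S$, and $\bprice_i,\bprice_0,\bq$ are gradients of $T$ or $C_b$, hence lie in $\ri\cM$ or $\ri\dom C^*$; also $\eqbprice\in\ri\cM\subseteq\ri\dom C^*$, and by \Thm{bias:global} $\norm{\eqbprice-\aggbprice}=O(b)$, so for $b\le b_0$ small enough $\eqbprice$ stays in a fixed compact subset of $\ri\cM$. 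Taking $S=S(F,F^\star+\eta)$ and invoking \Thm{errconnect:convplus} gives $\norm{\bprice_i-\eqbprice}=O(\sqrt\eta)$ and $\norm{\bprice_0-\eqbprice}=O(\sqrt{\eta/b})$ (and the same for $\bq$); choosing $\eta$ small enough (possibly depending on $b$) confines $\bprice_i,\bprice_0,\bq$ and all segments between them to fixed compact sets $K_0\subseteq\ri\cM$ and $K_1\subseteq\ri\dom C^*$ on which, by \convexityplus, $H_T^+(\cdot)\succeq\sigma_0 P_T$ and $\norm{H_C^+(\cdot)}\le M_0$ for constants $\sigma_0>0$, $M_0<\infty$ independent of $b\le b_0$; here $P_T$ is the orthogonal projection onto $\cG(T)$.

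The second step is linear algebra. Writing $\bs=\sum_j\bbundle_j$, conjugacy gives $\bs/b\in\partial C^*(\bprice_0)$ and $(\bs+\bz)/b\in\partial C^*(\bq)$; by \Prop{convplus:conj} applied to $C$ (with associated \convplus function $R$, $\cG(R)=\cG(C)$, $\nabla^2 R=H_C^+$) we have $\partial C^*(\cdot)=\nabla R(\cdot)+\cG(C)^\perp$ on $\ri\dom C^*$, and since we may take $\bz\in\cG(C)$ (adding a vector of $\cG(C)^\perp\subseteq\cG(T)^\perp$ to $\bz$ changes neither $\bprice_0(\bbundle')$ nor $\bprice_i(\bbundle')$ nor the optimality of $\Psi_i$), subtraction gives $\bz/b=\nabla R(\bq)-\nabla R(\bprice_0)$. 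Likewise, from $\bttheta_i-a_i\bbundle_i\in\partial T^*(\bprice_i)$, $\bttheta_i-a_i\bbundle_i'\in\partial T^*(\bq)$, and \Prop{convplus:conj} applied to $T$ (with associated $G$, $\cG(G)=\cG(T)$, $\nabla^2 G=H_T^+$), projecting the difference onto $\cG(T)$ yields $a_i P_T\bz=\nabla G(\bprice_i)-\nabla G(\bq)$. Substituting the first identity into the second and applying the integral Mean Value Theorem (legitimate since $G,R$ are $C^2$ on $K_0,K_1$), and using $\bq-\bprice_0=(\bq-\bprice_i)+(\bprice_i-\bprice_0)$, one obtains
\[
  (\bar N+a_i b\,P_T\bar M)(\bq-\bprice_i)=-a_i b\,P_T\bar M(\bprice_i-\bprice_0),
\]
where $\bar N\coloneqq\int_0^1 H_T^+(\bprice_i+t(\bq-\bprice_i))\,dt$ and $\bar M\coloneqq\int_0^1 H_C^+(\bprice_0+t(\bq-\bprice_0))\,dt$. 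Since $\bq-\bprice_i\in\cG(T)$ (both are in $\ri\cM$) and, on $\cG(T)$, the operator $\bar N+a_i b\,P_T\bar M$ has quadratic form at least $\sigma_0\norm{\cdot}^2$ (because $\bar N\succeq\sigma_0 P_T$ and $\bar M\succeq 0$), it is invertible there with inverse norm at most $1/\sigma_0$; hence $\norm{\bdelta_i(\bbundle)}=\norm{\bq-\bprice_i}\le(a_i M_0/\sigma_0)\,b\,\norm{\bprice_i-\bprice_0}$, and $c\coloneqq(\max_i a_i)M_0/\sigma_0$ proves the lemma. The only subtlety beyond the localization is verifying that $\Psi_i$ indeed attains its minimum with $\bprice_i(\bbundle')=\bprice_0(\bbundle')$, which is already recorded in the text preceding the lemma.
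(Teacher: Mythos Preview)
Your argument is correct and takes a genuinely different route from the paper. The paper works entirely in \emph{bundle} space: it applies the mean-value theorem to $\nabla_i F$ to write $P_{\cG(C)}(\bbundle_i'-\bbundle_i)=-(\nabla^2_{ii}F(\bq))^+\nabla_i F(\bbundle)$ for some intermediate $\bq$, then applies the mean-value theorem to $\nabla F_i$ to turn this into $\bdelta_i=\nabla^2 F_i(-\bq_i')(\nabla^2_{ii}F(\bq))^+\nabla_i F(\bbundle)$, and finally bounds the two Hessians via $\nabla^2 F_i\preceq a_i(1+\eps)H_T(\aggbprice)$ and $(\nabla^2_{ii}F)^+\preceq \frac{b}{1-\eps}H_C^+(\aggbprice)$ near $\aggbprice$, obtaining $\norm{\bdelta_i}\le a_i b\,\tfrac{1+\eps}{1-\eps}\,\norm{H_T(\aggbprice)}\,\norm{H_C^+(\aggbprice)}\,\norm{\nabla_i F(\bbundle)}$. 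The stiffness intuition enters through the single inequality $\nabla^2_{ii}F\succeq \tfrac{1-\eps}{b}H_C(\aggbprice)$.

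You instead pass to \emph{price} space through the conjugate structure of \Prop{convplus:conj}, derive the identity $(\bar N+a_ib\,P_T\bar M)(\bq-\bprice_i)=-a_ib\,P_T\bar M(\bprice_i-\bprice_0)$, and solve via coercivity of $\bar N$ on $\cG(T)$. This makes the soft/stiff dichotomy completely explicit: $\bar N$ (of order $1$) dominates $a_ib\,P_T\bar M$ (of order $b$), so the solution is $O(b)$ times the right-hand side. The cost is a small amount of bookkeeping---the reduction to $\bz\in\cG(C)$ (valid because both $\bprice_i(\bbundle')$ and $\bprice_0(\bbundle')$ depend only on $P_{\cG(C)}\bz$, and $F$ must be constant along $\cG(C)^\perp$ in block $i$ or else $\Psi_i$ would not attain a minimum), and handling the non-symmetric operator $\bar N+a_ib\,P_T\bar M$ via its quadratic form. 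The paper's approach sidesteps both points by taking pseudoinverses directly in the primal. Localization is essentially the same in both: the paper anchors at $\aggbprice$ using \Prop{H:Lipschitz} and $\norm{\eqbprice-\aggbprice}=O(b)$, while you anchor at fixed compact subsets of $\ri\cM$ and $\ri\dom C^*$; either way the constants $\sigma_0,M_0$ (or their analogues) are independent of $b\le b_0$.
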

\begin{proof}
Let $\eps\in(0,1)$. Since $T$ and $C$ are \convplus, by \Prop{H:Lipschitz} it is possible to pick $\delta$ such that
\[
   H_T(\bprice)\within(1\pm\eps) H_T(\aggbprice)
\enspace,
\qquad
   H_C(\bprice)\within(1\pm\eps) H_C(\aggbprice)
\]
whenever $\norm{\bprice-\aggbprice}\le\delta$. Pick $b_0$ small enough such that $\norm{\eqbprice(b;C)-\aggbprice}\le\delta/2$
for all $b\le b_0$. Fix some $b\le b_0$ and pick a level set $S$ such that for all $\bbundle\in S$, $\norm{\bprice_i(\bbundle)-\eqbprice}\le\delta/2$
for all $i$ and $\Norm{\bprice_0(\bbundle)-\eqbprice}\le\delta/2$. Thus, for any $\bbundle\in S$, we have
\[
   H_T(\bprice_i(\bbundle))\within(1\pm\eps) H_T(\aggbprice)
   \text{ for all $i$},
\qquad
   H_C(\bprice_0(\bbundle))\within(1\pm\eps) H_C(\aggbprice)
\enspace.
\]

Now let $\bbundle\in S$ and let $\bbundle'=\Psi_i(\bbundle)$. By the optimality of $\bbundle'$, we know that $\nabla_i F(\bbundle')=\zero$. From the
mean value theorem applied to $\nabla_i F$, there exists some $\bq$ on the line segment connecting $\bbundle$
and $\bbundle'$ such that
\[
   \zero=\nabla_i F(\bbundle')=\nabla_i F(\bbundle)+\nabla^2_{ii} F(\bq)(\bbundle'-\bbundle)
\enspace.
\]
Since $\bbundle'$ and $\bbundle$ differ only in block $i$, we obtain
\begin{equation}
\label{eq:convBlock:1}
  P(\bbundle'_i-\bbundle_i)=-\Parens{\nabla^2_{ii} F(\bq)}^+\nabla_i F(\bbundle)
\end{equation}
where $P$ is the projection on $\range\bigParens{\nabla^2_{ii} F(\bq)}=\cG(C)$. Now applying the mean value theorem to $\nabla F_i$, we obtain
that for some $\bq'$ on the line segment connecting $\bbundle$ and $\bbundle'$, we have
\begin{align}
\notag
  \bdelta_i(\bbundle)
  &=\nabla F_i(-\bbundle'_i) - \nabla F_i(-\bbundle_i)
\\
\notag
  &=\nabla^2 F_i(-\bq'_i)(-\bbundle'_i+\bbundle_i)
\\
\label{eq:convBlock:2}
  &=\nabla^2 F_i(-\bq'_i)\Parens{\nabla^2_{ii} F(\bq)}^+\nabla_i F(\bbundle)
\enspace,
\end{align}
where in \Eq{convBlock:2} we used \Eq{convBlock:1}.
Now both $\bbundle$ and $\bbundle'$ are in the level set $S$ and so is the line segment connecting them, which includes the
points $\bq$ and $\bq'$. Therefore,
\begin{equation}
\label{eq:convBlock:3}
  \nabla^2 F_i(-\bq'_i)=  a_i H_T(\bprice_i(\bq')) \preceq a_i(1+\eps) H_T(\aggbprice)
\end{equation}
and
\begin{align*}
  \nabla^2_{ii} F(\bq)
  &=  a_i H_T(\bprice_i(\bq)) + \frac{1}{b} H_C(\bprice_0(\bq))
\\
  &\succeq (1-\eps)\Parens{a_i H_T(\aggbprice) +\frac{1}{b} H_C(\aggbprice)}
\\
  &\succeq \frac{1-\eps}{b} H_C(\aggbprice)
\enspace.
\end{align*}
Thus, also
\begin{equation}
\label{eq:convBlock:4}
  \Parens{\nabla^2_{ii} F(\bq)}^+
  \preceq
  \frac{b}{1-\eps} H_C^+(\aggbprice)
\enspace.
\end{equation}
Plugging \Eq{convBlock:3} and \Eq{convBlock:4} into \Eq{convBlock:2}, we obtain
\[
  \Norm{\bdelta_i(\bbundle)}
  \le
  a_i(1+\eps)
  \Norm{H_T(\aggbprice)}
  \cdot
  \frac{b}{1-\eps}
  \Norm{H_C^+(\aggbprice)}
  \Norm{\nabla_i F(\bbundle)}
\enspace,
\]
finishing the proof, since $\nabla_i F(\bbundle)=-\bprice_i(\bbundle)+\bprice_0(\bbundle)$.
\end{proof}

Using the lemma, we can now prove bounds for \ASD:

\begin{theorem}
\label{thm:conv:2}
Consider the all-securities dynamics.
Let $M_1'\coloneqq P\otimes H_C^+$ and $M_2\coloneqq (PDP)^+\otimes H_T^+$.
Then for every sufficiently small $b$, there exists a proper level set $S$ such that
\begin{align}
\label{eq:conv:2:D}
\ex{(\bg^{t+1})\trans \cD(H^\star)^+\bg^{t+1}\bigGiven\bbundle^t}
&\within
(\bg^t)\trans \bigParens{\; \Parens{1\pm O(b)}\cdot 2bM_1'\;}\bg^t
\\
\label{eq:conv:2:H}
\ex{(\bg^{t+1})\trans (H^\star)^+\bg^{t+1}\bigGiven\bbundle^t}
&\within
(\bg^t)\trans\bigParens{\;M_2\pm O(b)M_1'\;}\bg^t
\end{align}
whenever $\bbundle^{t-1}\in S$.
Local strong convexity is bounded from below and above
by
\begin{align*}
  \sigmalow & = 2b\cdot\lambda_{\min}(PDP)\cdot\lambda_{\min}\bigParens{\;H_T^{1/2} H_C^+ H_T^{1/2}\;}-O(b^2)
\\
  \sigmahigh & = 2b\cdot\lambda_{\max}(PDP)\cdot\lambda_{\max}\bigParens{\;H_T^{1/2} H_C^+ H_T^{1/2}\;}+O(b^2)
\enspace,
\end{align*}
where $\lambda_{\min}(\cdot)$ and $\lambda_{\max}(\cdot)$ denote
the smallest and the largest positive eigenvalue of a matrix.
%
\end{theorem}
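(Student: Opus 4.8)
The plan is to establish \Eq{conv:2:D} and \Eq{conv:2:H} first, and then feed them into \Thm{conv} with look-ahead $\ell=2$, identifying $\sigmalow$ and $\sigmahigh$ with the help of two eigenvalue identities. First I would record the one-step gradient recursion for \ASD. When trader $i$'s block is optimized, $\bbundle'=\Psi_i(\bbundle)$, optimality gives $\bprice_i(\bbundle')=\bprice_0(\bbundle')$, so by \Eq{grad:i} every block of the new gradient equals $\nabla_j F(\bbundle')=\bprice_i(\bbundle')-\bprice_j(\bbundle')$, which is $\zero$ for $j=i$ and $\bg_j-\bg_i+\bdelta_i(\bbundle)$ for $j\ne i$, where $\bg=\nabla F(\bbundle)$ and $\bdelta_i(\bbundle)\coloneqq\bprice_i(\bbundle')-\bprice_i(\bbundle)$ as in \App{conv:ASD}. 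Compactly, $\nabla F(\Psi_i(\bbundle))=(L_i\otimes I_K)\bg+(\one-\belem_i)\otimes\bdelta_i(\bbundle)$ with $L_i\coloneqq I_N-\one\belem_i\trans$, and \Lem{delta} bounds $\norm{\bdelta_i(\bbundle)}\le cb\norm{\bg_i}$ on a suitable proper level set $S$. I would also note the structural fact that when $\bbundle^{t-1}\in S$ the iterate $\bbundle^t$ is a post-update iterate, so $\bg^t$ has one zero block and every block $\bg^t_j=\bprice_i(\bbundle^t)-\bprice_j(\bbundle^t)$ (for the block $i$ just updated) is a difference of two gradients of $T$, hence lies in $\cG(T)$; thus $\bg^t\in\range(L_i)\otimes\cG(T)$. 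On any such subspace $M_1'=P\otimes H_C^+$ is positive definite, since $P$ is positive definite on $\range(L_i)$ (as $\one\notin\range(L_i)$) and $H_C^+$ is positive definite on $\cG(T)\subseteq\cG(C)=\range(H_C)$; this gives a uniform constant $c_0$ with $\norm{\bg^t}^2\le c_0\,(\bg^t)\trans M_1'\bg^t$.

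The second step is to average the quadratic forms. Expanding $L_i=I_N-\one\belem_i\trans$ yields the identities $\tfrac1N\sum_i L_i\trans L_i=2P$ and $\tfrac1N\sum_i L_i\trans M L_i=M$ for any symmetric $M$ with $M\one=\zero$ (in particular $M=(PDP)^+$, since $\ker(PDP)=\Span\set{\one}$), hence $\tfrac1N\sum_i(L_i\bg)\trans(I_N\otimes H_C^+)(L_i\bg)=2\,\bg\trans M_1'\bg$ and $\tfrac1N\sum_i(L_i\bg)\trans M_2(L_i\bg)=\bg\trans M_2\bg$. Combining this with the asymptotics $\cD(H^\star)^+\within(1\pm O(b))\,b\,(I_N\otimes H_C^+)$ and $(H^\star)^+\within M_2\pm O(b)(I_N\otimes H_C^+)$ from \Thm{conv:1} (for \ASD, $\cB=\set{[K]}$, so $\cD_\cB(H_C)=H_C$), substituting the recursion for $\bg^{t+1}$ (so the averaging identities turn the $I_N\otimes H_C^+$ terms into $M_1'$ terms), and using \Lem{delta} together with $\norm{\bg^t}^2\le c_0(\bg^t)\trans M_1'\bg^t$ to bound the $\bdelta$-cross-terms by $O(b)(\bg^t)\trans M_1'\bg^t$, I obtain \Eq{conv:2:D} and \Eq{conv:2:H}.

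Finally I would invoke \Thm{conv} with $\ell=2$. For a post-update gradient $\bg$ I would write $\bg=\sum_l\ba_l\otimes\bv_l$ with $\bv_l\in\cG(T)$, so that $\bg\trans M_1'\bg$ and $\bg\trans M_2\bg$ become Frobenius inner products $\langle A_P,W_C\rangle$ and $\langle A_Q,W_T\rangle$ of positive-semidefinite Gram matrices $A_P=(\ba_l\trans P\ba_m)$, $A_Q=(\ba_l\trans(PDP)^+\ba_m)$, $W_C=(\bv_l\trans H_C^+\bv_m)$, $W_T=(\bv_l\trans H_T^+\bv_m)$. The Loewner inequalities $\lambda_{\min}(PDP)(PDP)^+\preceq P\preceq\lambda_{\max}(PDP)(PDP)^+$ and, on $\cG(T)$, $\lambda_{\min}(H_T^{1/2}H_C^+H_T^{1/2})H_T^+\preceq H_C^+\preceq\lambda_{\max}(H_T^{1/2}H_C^+H_T^{1/2})H_T^+$ — the latter read off from the definition of generalized eigenvalues after the substitution $\bv=H_T^{1/2}(\cdot)$ — then sandwich $\bg\trans M_1'\bg/\bg\trans M_2\bg$ between $\lambda_{\min}(PDP)\lambda_{\min}(H_T^{1/2}H_C^+H_T^{1/2})$ and $\lambda_{\max}(PDP)\lambda_{\max}(H_T^{1/2}H_C^+H_T^{1/2})$. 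Applying \Eq{conv:2:D} and \Eq{conv:2:H} at iteration $t-1$ — whose hypothesis $\bbundle^{t-2}\in S$ is exactly the hypothesis of \Thm{conv} for $\ell=2$ — and then the tower property to condition down to $\bbundle^{t-2}$ converts the ratio in \Eq{thm:conv} into a quantity squeezed between $2b\,\lambda_{\min}(PDP)\lambda_{\min}(H_T^{1/2}H_C^+H_T^{1/2})-O(b^2)$ and $2b\,\lambda_{\max}(PDP)\lambda_{\max}(H_T^{1/2}H_C^+H_T^{1/2})+O(b^2)$, which are the claimed $\sigmalow$ and $\sigmahigh$. The main obstacle is the error-term bookkeeping: one must recognize that post-update gradients lie in the structured subspace $\range(L_i)\otimes\cG(T)$ — which hinges on the identity $\bprice_0=\bprice_i$ right after a block-$i$ update — so that $M_1'$ is definite there and the $O(b\norm{\bg^t}^2)$ corrections collapse into $\pm O(b)M_1'$; the tensor eigenvalue factorization and the conditional-expectation/level-set juggling (which is what forces $\ell=2$ rather than $\ell=1$) are the remaining delicate points.
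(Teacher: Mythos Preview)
Your proposal is correct and follows the same outline as the paper's proof: invoke \Lem{delta} on a suitable level set, exploit the post-update structure of $\bg^t$ (one zero block, all blocks in $\cG(T)$) to control the error terms, average the one-step quadratic form over the random block to produce the factor $2bM_1'$, and then feed the resulting sandwich bounds into \Thm{conv} with look-ahead $\ell=2$.

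The packaging differs in a way you may find worth comparing. Where you write the one-step recursion compactly as $\bg'=(L_i\otimes I_K)\bg+(\one-\belem_i)\otimes\bdelta_i$ and obtain the factor two from the matrix identity $\tfrac1N\sum_i L_i\trans L_i=2P$ (together with $L_i\trans(PDP)^+L_i=(PDP)^+$ for each $i$, since $(PDP)^+\one=\zero$), the paper works coordinatewise with the centered blocks $\bu_j=\bprice_j-\bpriceavg$ and the scalar $\rho=\max_j\norm{\bu_j}$, deriving the same factor from $\tfrac1N\sum_{i,j}(\bu_i-\bu_j)\trans H_C^+(\bu_i-\bu_j)=2\sum_i\bu_i\trans H_C^+\bu_i$; your device $\norm{\bg^t}^2\le c_0\,(\bg^t)\trans M_1'\bg^t$ on the post-update subspace plays exactly the role of the paper's bound $\rho^2\le\sigma^{-1}\bu\trans M_1\bu$. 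For the final eigenvalue step, your Gram-matrix/Frobenius argument is equivalent to the paper's direct appeal to the Kronecker factorization $\lambda_{\min}(M_1',M_2)=\lambda_{\min}(PDP)\cdot\lambda_{\min}(H_T^{1/2}H_C^+H_T^{1/2})$ (and similarly for $\lambda_{\max}$), after noting that the centered gradient lies in $\range(M_2)=\range(P)\otimes\cG(T)$.
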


As mentioned above, the key consequence of \Thm{conv:2} is the fact that \ASD converges at the rate
$1-\Theta(b)$. The key difference from \Thm{conv:1} is in the expression for $\cD(H^\star)^+$. While
$\cD(H^\star)^+\approx b(I_N\otimes H_C^+)$, as stated in \Thm{conv:1}, when we take an expectation over an update
in a single iteration, the action of $\cD(H^\star)^+$ is equivalent to that of the matrix $2b(P\otimes H_C^+)$. Thus, the averaging effect of an
expectation is to remove one rank from matrix $I_N$ and replace it by the centering matrix $P=I_N-\one\one\trans/N$ (while incurring
an extra factor of two). This has two consequences. First, the lower bound $\sigmalow$ is a factor of two larger. Second,
we can now obtain a non-trivial upper bound $\sigmahigh$, which would not be possible via an analog of \Eq{sigmalow},
because the range of $\cD(H^\star)^+$ is too large.


\begin{proof}[Proof of \Thm{conv:2}]
Consider $b_0$ and $c$ from \Lem{delta}, and let $b\le b_0$ and $S$ be the level set from the lemma. Assume that $\bbundle^{t-1}\in S$.
After the update in the iteration $t-1$, it is guaranteed that $\bprice_0(\bbundle^t)=\bprice_j(\bbundle^t)$ for some $j$. We analyze the update
in the following iteration, i.e., the iteration $t$.
We write $\bbundle$ for $\bbundle^t$ and
$\bbundle'$ for $\bbundle^{t+1}$. Let $\bg\coloneqq\nabla F(\bbundle)$, $\bprice_j\coloneqq\bprice_j(\bbundle)$ for $j\in [N]$, $\bprice_0\coloneqq\bprice_0(\bbundle)$,
and similarly define $\bg'$, $\bprice'_j$, $\bprice'_0$ for the iterate $\bbundle'$.

Recall from \Eq{grad:i} that the blocks of the gradient are
\[
  \bg_j=\bprice_j-\bprice_0
\enspace.
\]
Also recall that $P=I_N-\one\one\trans/N$. A key role in the analysis will be played by the \emph{centered} gradient vector $\bu\coloneqq(P\otimes I_K)\bg$ whose blocks are
\[
  \bu_j=\bprice_j-\bpriceavg
\]
where $\bpriceavg\coloneqq(\sum_j\bprice_j)/N$ is the average among $\bprice_j$. As the final part of the setup, let $\rho=\max_j\norm{\bprice_j-\bpriceavg}$,
and since $\bprice_0=\bprice_j$ for some $j\in[N]$, we also have $\rho\ge\norm{\bprice_0-\bpriceavg}$.

By \Thm{conv:1}, we have
\[
    \cD(H^\star)^+=\bigParens{1\pm O(b)}bM_1
\enspace,
\qquad
    (H^\star)^+=M_2\pm O(b)M_1
\enspace,
\]
where
\begin{align*}
  M_1 &= I_N\otimes H_C^+
\\
  M_2 &= (PDP)^+\otimes H_T^+
\enspace.
\end{align*}
For the first part of the theorem (Eqs.~\ref{eq:conv:2:D} and~\ref{eq:conv:2:H}),
it therefore suffices to show that
\begin{align}
\label{eq:M1:0}
 \ex{(\bg')\trans M_1\bg'\given\bbundle}
 &\within(1\pm O(b))\cdot\bg\trans\bigParens{\;2P\otimes H_C^+\;}\bg
\\
\label{eq:M2:0}
 \ex{(\bg')\trans M_2\bg'\given\bbundle}
 &\within \bg\trans M_2\bg \pm O(b)\cdot\bg\trans\bigParens{\;P\otimes H_C^+\;}\bg
\enspace.
\end{align}
Note that $P\otimes H_C^+=(P\otimes I_K) M_1 (P\otimes I_K)$ and $M_2=(P\otimes I_K) M_2 (P\otimes I_K)$,
so \Eqs{M1:0}{M2:0} are equivalent to
\begin{align}
\label{eq:M1}
\tag{\textup{$M_1$-bound}}
 \ex{(\bg')\trans M_1\bg'\given\bbundle}
 &\within(1\pm O(b))\cdot\bu\trans(2M_1)\bu
\\
\label{eq:M2}
\tag{\textup{$M_2$-bound}}
 \ex{(\bg')\trans M_2\bg'\given\bbundle}
 &\within \bu\trans M_2\bu\pm O(b)\cdot\bu\trans M_1\bu
\enspace,
\end{align}
which is what we will show next.

Assume that the $i$th block is chosen for an update in the iteration $t$ and write $\bdelta_i$ for $\bdelta_i(\bbundle)$. Note that
\[
  \bprice'_0 =\bprice'_i=\bprice_i+\bdelta_i
\enspace,
\qquad
  \bprice'_j =
\begin{cases}
   \bprice_j&\text{if $j\ne i$,}
\\
   \bprice_i+\bdelta_i&\text{if $j=i$.}
\end{cases}
\]
Thus, from \Eq{grad:i}, the blocks $\bg'_j$ can be written as
\[
  \bg'_j =
\begin{cases}
   \bprice_i-\bprice_j+\bdelta_i
   &\text{if $j\ne i$,}
\\
   \zero
   &\text{if $j=i$.}
\end{cases}
\]
Calculate:
\begin{align}
\notag
  (\bg')\trans M_1 \bg'
&=\sum_j (\bg'_j)\trans H_C^+\bg'_j
\\
\notag
&=\sum_{j\ne i} (\bprice_i-\bprice_j+\bdelta_i)\trans H_C^+(\bprice_i-\bprice_j+\bdelta_i)
\\
\notag
&=\sum_{j} (\bprice_i-\bprice_j+\bdelta_i)\trans H_C^+(\bprice_i-\bprice_j+\bdelta_i)
  -\bdelta_i\trans H_C^+\bdelta_i
\\
\notag
&=\sum_{j}\BigBracks{(\bprice_i-\bprice_j)\trans H_C^+(\bprice_i-\bprice_j)+2\bdelta_i\trans H_C^+(\bprice_i-\bprice_j)+\bdelta_i\trans H_C^+\bdelta_i}
  -\bdelta_i\trans H_C^+\bdelta_i
\\
\label{eq:convBlock:4:0}
&\within\sum_{j}\BigBracks{(\bprice_i-\bprice_j)\trans H_C^+(\bprice_i-\bprice_j)}
 \pm \BigBracks{
     4N\norm{\bdelta_i}\norm{H_C^+}\rho
     +
      N\norm{H_C^+}\norm{\bdelta_i}^2}
\\
\label{eq:convBlock:4:1}
&\within\sum_{j}\BigBracks{(\bprice_i-\bprice_j)\trans H_C^+(\bprice_i-\bprice_j)}
 \pm \BigBracks{
     8Nbc\norm{H_C^+}\rho^2
     +
     4Nb^2c^2\norm{H_C^+}\rho^2
     }
\\
\label{eq:convBlock:4:2}
&\within\sum_{j}\BigBracks{(\bprice_i-\bprice_j)\trans H_C^+(\bprice_i-\bprice_j)}
 \pm b c_1\rho^2
\enspace.
\end{align}
In \Eq{convBlock:4:0}, we used the fact that
\[
  \norm{\bprice_i-\bprice_j}\le\norm{\bprice_i-\bpriceavg}+\norm{\bprice_j-\bpriceavg}\le 2\rho
\enspace.
\]
In \Eq{convBlock:4:1}, we used \Lem{delta}, which implies that
\begin{equation}
\label{eq:convBlock:delta}
  \norm{\bdelta_i}\le b c\norm{\bprice_i-\bprice_0}
                  \le b c\BigParens{\norm{\bprice_i-\bpriceavg}+\norm{\bprice_0-\bpriceavg}}
                  \le 2 b c \rho
\enspace.
\end{equation}
And in \Eq{convBlock:4:1}, we set $c_1=8Nc\norm{H_C^+}+4N b_0 c^2\norm{H_C^+}$.

Recall that $\bu_j=\bprice_j-\bpriceavg$, so $\bprice_i-\bprice_j=\bu_i-\bu_j$. Thus, taking expectation over $i$, \Eq{convBlock:4:2} yields
\begin{align}
\notag
\Ex{i}{(\bg')\trans M_1\bg'}
&\within
  \frac{1}{N}\sum_{i=1}^N\sum_{j=1}^N\Bracks{(\bprice_i-\bprice_j)\trans H_C^+(\bprice_i-\bprice_j)}
\pm b c_1\rho^2
\\
\notag
&=
  \frac{1}{N}\sum_{i=1}^N\sum_{j=1}^N\Bracks{(\bu_i-\bu_j)\trans H_C^+(\bu_i-\bu_j)}
\pm b c_1\rho^2
\\
\notag
&=
  \frac{1}{N}\sum_{i=1}^N\sum_{j=1}^N\Bracks{\bu_i\trans H_C^+\bu_i-2\bu_i\trans H_C^+\bu_j+\bu_j H_C^+\bu_j}
\pm b c_1\rho^2
\\
\label{eq:convBlock:5}
&=
  \sum_{i=1}^N\Bracks{2\bu_i\trans H_C^+\bu_i}
\pm b c_1\rho^2
\\
\label{eq:convBlock:6}
&=
  2\bu\trans M_1\bu
\pm b c_1\rho^2
\enspace,
\end{align}
where \Eq{convBlock:5} follows because $\sum_i\bu_i=\zero$. To prove \eqref{eq:M1}, it remains to upper bound $\rho^2$.
Let $\sigma$ be the smallest eigenvalue of $H_C^+$ over $\cG(C)$; it must be greater than zero because $\range(H_C)=\cG(C)$. We can bound $\rho^2$ as
\begin{equation}
\label{eq:rho:bound}
  \rho^2\le\sum_{i=1}^N \norm{\bprice_i-\bpriceavg}^2
     \le\sigma^{-1}\sum_{i=1}^N (\bprice_i-\bpriceavg)\trans H_C^+(\bprice_i-\bpriceavg)
     =\sigma^{-1}\bu\trans M_1\bu
\enspace.
\end{equation}
Plugging this bound back into \Eq{convBlock:6} yields \eqref{eq:M1}.

We next prove \eqref{eq:M2}. Again, consider an update of the block $i$. Let $\bu'\coloneqq(P\otimes I_k)\bg'$ be the centered version of $\bg'$. Its blocks
are
\[
  \bu'_j=
  \begin{cases}
      \bu_j-\delta_i/N
  &\text{if $j\ne i$,}
\\
      \bu_j-\delta_i/N+\delta_i
  &\text{if $j=i$.}
  \end{cases}
\]
Thus,
\[
  \norm{\bu'-\bu}^2=(N-1)\cdot\frac{\norm{\delta_i}^2}{N^2}+\Parens{1-\frac{1}{N}}^2\norm{\delta_i}^2
  \le\norm{\delta_i}^2
  \le (2bc\rho)^2
\]
where the last inequality follows by \Eq{convBlock:delta}. Also, from the definition of $\rho$,
\[
  \norm{\bu}^2\le N\rho^2
\enspace.
\]
Now, we bound $(\bg')\trans M_2\bg'$. Since $M_2=(P\otimes I_K)M_2(P\otimes I_K)$, we can write
\begin{align}
\notag
(\bg')\trans M_2\bg'
&=
(\bu')\trans M_2\bu'
\\
\notag
&=
\bigParens{\bu+(\bu'-\bu)}\trans M_2\bigParens{\bu+(\bu'-\bu)}
\\
\notag
&\within
\bu\trans M_2\bu \pm\bigBracks{2\norm{\bu}\norm{M_2}\norm{\bu'-\bu}+\norm{M_2}\norm{\bu'-\bu}^2}
\\
\label{eq:convBlock:7}
&\within
\bu\trans M_2\bu \pm\bigBracks{4bc\sqrt{N}\norm{M_2}\rho^2 + 4b^2c^2\norm{M_2}\rho^2}
\\
\label{eq:convBlock:8}
&\within
\bu\trans M_2\bu \pm bc_2\rho^2
\enspace,
\end{align}
where in \Eq{convBlock:7} we used the previously derived bounds and in
\Eq{convBlock:8}, we set $c_2=4c\sqrt{N}\norm{M_2}+4b_0c^2\norm{M_2}$.
Finally, using the bound on $\rho^2$ from \Eq{rho:bound} in \Eq{convBlock:8}
yields \eqref{eq:M2}.

It remains to prove the bounds $\sigmalow$ and $\sigmahigh$. In particular, we will show that if $\bbundle^{t-1}\in S$ then
\begin{equation}
\label{eq:conv:2:bounds}
 \sigmalow
 \cdot
 \ex{(\bg')\trans (H^\star)^+\bg'\bigGiven\bbundle}
 \le
 \ex{(\bg')\trans \cD(H^\star)^+\bg'\bigGiven\bbundle}
 \le
 \sigmahigh
 \cdot
 \ex{(\bg')\trans (H^\star)^+\bg'\bigGiven\bbundle}
\end{equation}
then taking expectation over $\bbundle=\bbundle^t$, conditionally on $\bbundle^{t-1}$, we also obtain
\[
 \sigmalow
 \cdot
 \ex{(\bg')\trans (H^\star)^+\bg'\bigGiven\bbundle^{t-1}}
 \le
 \ex{(\bg')\trans \cD(H^\star)^+\bg'\bigGiven\bbundle^{t-1}}
 \le
 \sigmahigh
 \cdot
 \ex{(\bg')\trans (H^\star)^+\bg'\bigGiven\bbundle^{t-1}}
\]
which will yield the desired conclusion by \Thm{conv} (with the level set $S$ and $\ell=2$).

To prove \Eq{conv:2:bounds}, we first apply \Eqs{conv:2:D}{conv:2:H}:
\begin{align*}
\underbrace{
\frac
 {\ex{(\bg')\trans (H^\star)^+\bg'\bigGiven\bbundle}}
 {\ex{(\bg')\trans \cD(H^\star)^+\bg'\bigGiven\bbundle}}
}_{\eqqcolon z}
 &\within
 \Parens{1\pm O(b)}
 \cdot
 \frac
 {\bg\trans \bigParens{\; 2b(P\otimes I_K)M_1(P\otimes I_K)\;} \bg}
 {\bg\trans \bigParens{\;M_2\;\pm\;O(b)\cdot (P\otimes I_K)M_1(P\otimes I_K)\;} \bg}
\\
 &=
 \Parens{1\pm O(b)}
 \cdot
 \frac
 {\bu\trans \Parens{\; 2b M_1\;} \bu}
 {\bu\trans \bigParens{\;M_2\pm O(b) M_1\;} \bu}
\\
 &=
 \Parens{1\pm O(b)}
 \cdot
 \frac
 {2b}
 {\frac{\bu\trans M_2\bu}{\bu\trans M_1\bu}\;\pm\;O(b)}
\enspace.
\end{align*}
Now note that the blocks of $\bu$ take form $\bu_i=\bprice_i-(\sum_j\bprice_j)/N$, where $\bprice_i\in\cM\subseteq\range(H_T^+)$,
so $\bu\in\range(P\otimes H_T^+)=\range(M_2)\subseteq\range(M_1)$. This means that
\[
 0
 <
 \lambda_{\min}(M_1,M_2)
 \le
 \frac{\bu\trans M_1\bu}{\bu\trans M_2\bu}
 \le
 \lambda_{\max}(M_1,M_2)
\enspace,
\]
so we can write
\[
 z
 \within
 \Parens{1\pm O(b)}
 \cdot
 2b
 \cdot
 \frac{\bu\trans M_1\bu}{\bu\trans M_2\bu}
\enspace,
\]
and thus,
\begin{align*}
 2b\cdot\lambda_{\min}(M_1,M_2)-O(b^2)
 \;
 \le
 \;
 z
 \;
 \le
 \;
 2b\cdot\lambda_{\max}(M_1,M_2)+O(b^2)
\enspace.
\end{align*}
The theorem now follows, because
\begin{align*}
 \lambda_{\min}(M_1,M_2)
&=\lambda_{\min}\bigParens{\;(M_2^+)^{1/2}M_1(M_2^+)^{1/2}\;}
\\
&=\lambda_{\min}(PDP)\cdot\lambda_{\min}\bigParens{\;H_T^{1/2}H_C^+ H_T^{1/2}\;}
\enspace,
\end{align*}
and similarly for $\lambda_{\max}(M_1,M_2)$.
\end{proof}

\subsubsection{Tighter Relationship between Suboptimality and Convergence Error for \ASD}
\label{app:suboptimality:asd}

For \ASD, it is possible to establish a tighter relationship between the convergence error $\norm{\bprice^t-\eqbprice}$ and suboptimality $F(\bbundle^t)-F^\star$
than we proved in \App{subopt}. Specifically, we showed that $\norm{\bprice^t-\eqbprice}^2=O(F(\bbundle^t)-F^\star)$ for \convplus $C$. Under \ASD,
we obtain a matching lower bound for the one-step expectation, i.e., $\ex{\norm{\bprice^{t+1}-\eqbprice}^2\given\bbundle^t}=\Theta(F(\bbundle^t)-F^\star)$.

\begin{theorem}
\label{thm:errconnect:ASD}
Let $C$ be \convplus. Under \ASD, with probability one there exists $t_0$ such that for all $t\ge t_0$,
\[
  \ex{\norm{\bprice^{t+1}-\eqbprice}^2\given\bbundle^t}\within(1\pm O(b))\cdot \Theta(F(\bbundle^t)-F^\star)
\enspace.
\]
\end{theorem}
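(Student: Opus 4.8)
The plan is to reduce everything to the per-trader price errors $\bprice_i(\bbundle^t)-\eqbprice$, which already appear in the suboptimality estimates of \App{subopt}, and then to exploit the special structure of the \ASD update to compute the one-step expectation sharply. First I would fix the liquidity $b$ small enough that both \Lem{delta} applies, with its associated proper level set $S$, and the third-order Taylor expansion of the dual objective used in the proof of \Thm{errconnect:convplus} is valid on a proper level set $S'$. Since block-coordinate descent never increases $F$, \Prop{all:reached} yields, with probability one, a time $t_0\ge 1$ such that $\bbundle^t\in S\cap S'$ for all $t\ge t_0$; for such $t$, the bound of \Lem{delta} and the estimates in the proof of \Thm{errconnect:convplus} all hold at $\bbundle^t$, and moreover, because the update at step $t-1$ is an \ASD update, $\bprice_0(\bbundle^t)=\bprice_{j}(\bbundle^t)$ for some trader $j$, so that $\Norm{\bprice_0(\bbundle^t)-\eqbprice}^2\le\sum_{i}\Norm{\bprice_i(\bbundle^t)-\eqbprice}^2$.

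Next I would compute the one-step expectation. If trader $i$ is chosen at step $t$, then by \Eq{grad:i} and the definition of $\bdelta_i$ we have $\bprice^{t+1}=\bprice_0(\bbundle^{t+1})=\bprice_i(\bbundle^t)+\bdelta_i(\bbundle^t)$, so expanding the square and averaging over the uniform choice of $i\in[N]$,
\[
  \ex{\Norm{\bprice^{t+1}-\eqbprice}^2\given\bbundle^t}
  =\frac1N\sum_{i=1}^N\Norm{\bprice_i(\bbundle^t)-\eqbprice}^2
   +\frac1N\sum_{i=1}^N\Bracks{\,2(\bprice_i(\bbundle^t)-\eqbprice)\trans\bdelta_i(\bbundle^t)+\Norm{\bdelta_i(\bbundle^t)}^2}.
\]
By \Lem{delta}, $\Norm{\bdelta_i(\bbundle^t)}\le cb\bigParens{\Norm{\bprice_i(\bbundle^t)-\eqbprice}+\Norm{\bprice_0(\bbundle^t)-\eqbprice}}$, so by Cauchy--Schwarz and the bound on $\Norm{\bprice_0(\bbundle^t)-\eqbprice}$ from the first step (and the fact that $N$ is fixed) the correction term is $O(b)\sum_i\Norm{\bprice_i(\bbundle^t)-\eqbprice}^2$. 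Hence
\[
  \ex{\Norm{\bprice^{t+1}-\eqbprice}^2\given\bbundle^t}
  \within\bigParens{1\pm O(b)}\cdot\frac1N\sum_{i=1}^N\Norm{\bprice_i(\bbundle^t)-\eqbprice}^2.
\]

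It remains to relate $\sum_i\Norm{\bprice_i(\bbundle^t)-\eqbprice}^2$ to $F(\bbundle^t)-F^\star$. From \Eq{subopt:convplus} in the proof of \Thm{errconnect:convplus}, $F(\bbundle^t)-F^\star$ equals, up to a factor $(1\pm\tfrac12)$, the positive-definite quadratic form $\sum_i\tfrac{1}{2a_i}\bigParens{\eqbprice-\bprice_i(\bbundle^t)}\trans H_T^+(\eqbprice)\bigParens{\eqbprice-\bprice_i(\bbundle^t)}$ plus $b$ times a quadratic form in $\eqbprice-\bprice_0(\bbundle^t)$ given by $H_C^+(\eqbprice)$. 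Since $\eqbprice\in\ri\cM$ and $\eqbprice\to\aggbprice$ as $b\to 0$ (\Thm{bias:global}, \Prop{H:Lipschitz}), the smallest and largest positive eigenvalues of $H_T^+(\eqbprice)$ and $H_C^+(\eqbprice)$ are bounded away from $0$ and $\infty$ uniformly for small $b$, and the differences $\eqbprice-\bprice_i(\bbundle^t)$, $\eqbprice-\bprice_0(\bbundle^t)$ lie in $\cG(T)=\range(H_T^+(\eqbprice))$; thus the first form is $\Theta\bigParens{\sum_i\Norm{\bprice_i(\bbundle^t)-\eqbprice}^2}$ and the cost term is $\Theta\bigParens{b\Norm{\bprice_0(\bbundle^t)-\eqbprice}^2}$, which by the first step is at most $O(b)\sum_i\Norm{\bprice_i(\bbundle^t)-\eqbprice}^2$ and hence absorbed. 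Therefore $F(\bbundle^t)-F^\star=\Theta\bigParens{\sum_i\Norm{\bprice_i(\bbundle^t)-\eqbprice}^2}$, and combining this with the previous display proves the theorem.

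All the conceptual ingredients are already available in \Lem{delta}, \Thm{errconnect:convplus}, and \Prop{all:reached}; I expect the only delicate part to be the bookkeeping of the $O(b)$ relative error terms — in particular, controlling the cross term $(\bprice_i(\bbundle^t)-\eqbprice)\trans\bdelta_i(\bbundle^t)$ and confirming that $\Norm{\bprice_0(\bbundle^t)-\eqbprice}$ is dominated by $\bigParens{\sum_i\Norm{\bprice_i(\bbundle^t)-\eqbprice}^2}^{1/2}$, which is precisely the place where the defining "one security matches the market price" property of the \ASD update is used, and is also why the tighter lower bound fails for \SSD.
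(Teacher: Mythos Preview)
Your proposal is correct and follows essentially the same approach as the paper: both compute the one-step expectation via $\bprice^{t+1}=\bprice_i(\bbundle^t)+\bdelta_i(\bbundle^t)$, bound the $\bdelta_i$ contributions using \Lem{delta} and the \ASD property $\bprice_0(\bbundle^t)=\bprice_j(\bbundle^t)$ for some $j$ to control $\Norm{\bprice_0-\eqbprice}$ by $\bigParens{\sum_i\Norm{\bprice_i-\eqbprice}^2}^{1/2}$, and then invoke \Eq{subopt:convplus} to relate $\sum_i\Norm{\bprice_i-\eqbprice}^2$ to $F(\bbundle^t)-F^\star$. The paper packages the intermediate bounds through the scalar $\rho\coloneqq\max_j\Norm{\bprice_j-\eqbprice}$ rather than working directly with the sum, and you add the extra observation that $\eqbprice\to\aggbprice$ gives uniformity of the $\Theta(\cdot)$ constants in $b$, but these are stylistic differences only.
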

\begin{proof}
Recall that in the notation introduced in \App{subopt}, we have $\bprice^t=\bprice_0(\bbundle^t)$.
We use a similar notation as in the proof of \Thm{conv:2}. We write $\bbundle$ for $\bbundle^t$ and
$\bbundle'$ for $\bbundle^{t+1}$.
Let $\bprice_j\coloneqq\bprice_j(\bbundle)$ for $j\in [N]$, $\bprice_0\coloneqq\bprice_0(\bbundle)$,
and similarly define $\bprice'_j$, $\bprice'_0$ for the iterate $\bbundle'$. Finally, write $\bdelta_i$ for $\bdelta_i(\bbundle)$.

We assume that $t_0\ge 1$. Let $\rho=\max_j\norm{\bprice_j-\eqbprice}$,
and since $\bprice_0=\bprice_j$ for some $j\in[N]$, we also have $\rho\ge\norm{\bprice_0-\eqbprice}$. We will use the following loose bound
on $\rho^2$:
\[
  \rho^2\le\sum_{i=1}^N \norm{\bprice_i-\eqbprice}^2
\enspace.
\]
We start with the expression for the suboptimality in \Eq{subopt:convplus} and specialize it to \ASD:
\begin{align}
\notag
  F(\bbundle)-F^\star
&\within\Parens{1\pm\frac12}\sum_{i=1}^N
  \frac{1}{2a_i}\bigParens{\eqbprice-\bprice_i}\trans
     H_T^+(\eqbprice)\bigParens{\eqbprice-\bprice_i}
\\
\notag
&\qquad
  +
  b\Parens{1\pm\frac12}
  \frac{1}{2}\bigParens{\eqbprice-\bprice_0}\trans
     H_C^+(\eqbprice)\bigParens{\eqbprice-\bprice_0}
\\
\label{eq:suboptF:1}
&\within(c_1\pm c_2)\Parens{\sum_{i=1}^N\norm{\eqbprice-\bprice_i}^2+b\norm{\eqbprice-\bprice_0}^2}
\\
\label{eq:suboptF:2}
&\within(c_1\pm c_2)(1+b)\sum_{i=1}^N\norm{\eqbprice-\bprice_i}^2
\enspace.
\end{align}
In \Eq{suboptF:1}, we used the fact that $\lambda_{\min}(H_T^+(\eqbprice),P)>0$ and $\lambda_{\min}(H_C^+(\eqbprice),P)>0$, where $P$ is the projection
on the linear space parallel to $\cM$, which implies existence of constants $0\le c_2<c_1$ such that
\begin{gather*}
   (c_1-c_2)P\preceq \Parens{1\pm\frac12}\frac{1}{2a_i}H_T^+(\eqbprice)\preceq(c_1+c_2)P
\\
   (c_1-c_2)P\preceq \Parens{1\pm\frac12}\frac{1}{2}H_C^+(\eqbprice)\preceq(c_1+c_2)P
\enspace.
\end{gather*}
In \Eq{suboptF:2}, we used the upper bound $\norm{\eqbprice-\bprice_0}^2\le\rho^2\le\sum_{i=1}^N \norm{\bprice_i-\eqbprice}^2$. We now similarly
bound $\ex{\norm{\bprice^{t+1}-\eqbprice}^2\given\bbundle^t}$. Recall that in our notation $\bprice^{t+1}=\bprice_0(\bbundle^{t+1})=\bprice'_0$.
We assume that $b$ is sufficiently small, so \Lem{delta} applies, i.e., $b\le b_0$:
\begin{align}
\notag
  \ex{\norm{\bprice'_0-\eqbprice}^2\given\bbundle}
&=
  \frac1N \sum_{i=1}^N \norm{\bprice'_i-\eqbprice}^2
\\
\notag
&=
  \frac1N \sum_{i=1}^N \norm{\bprice_i+\bdelta_i-\eqbprice}^2
\\
\label{eq:suboptF:3}
&\within
  \frac1N \sum_{i=1}^N \BigBracks{ \norm{\bprice_i-\eqbprice}^2
                                   \pm 2\norm{\bdelta_i}\rho+\norm{\bdelta_i}^2 }
\\
\label{eq:suboptF:4}
&\within
  \frac1N \sum_{i=1}^N \BigBracks{ \norm{\bprice_i-\eqbprice}^2
                                   \pm 4cb\rho^2+4c^2b^2\rho^2 }
\\
\label{eq:suboptF:5}
&\within
  \Parens{\frac1N\pm c_3b}\sum_{i=1}^N \norm{\bprice_i-\eqbprice}^2
\enspace.
\end{align}
In \Eq{suboptF:3}, we applied the bound $\norm{\bprice_i-\eqbprice}\le\rho$. In \Eq{suboptF:4},
we used \Lem{delta} and applied the triangular inequality to obtain $\norm{\bprice_i-\bprice_0}\le\norm{\bprice_i-\eqbprice}+\norm{\bprice_0-\eqbprice}\le2\rho$.
Finally, in \Eq{suboptF:5}, we applied the bound $\rho^2\le\sum_{i=1}^N \norm{\bprice_i-\eqbprice}^2$ and set
$c_3=4c+4c^2b_0$. The theorem now follows by combining \Eq{suboptF:2} and \Eq{suboptF:5}. Note that, as before, we suppress the dependence on $N$ and $a_i$ within
the $O(\cdot)$ and $\Theta(\cdot)$ notation.
\end{proof}

\subsubsection{Summary of Local Convergence Results}

Here we summarize our local convergence results for \ASD and \SSD. Recall that $D\coloneqq\diag_{i\in[N]} a_i$,
$P\coloneqq I_N-\one\one\trans/N$, and
$\lambda_{\min}(\cdot)$ and $\lambda_{\max}(\cdot)$ to denote the smallest and the largest positive eigenvalues of a matrix.

\begin{theorem}
\label{thm:conv:summary}
Assume that $C$ is \convplus.
Let $H_T\coloneqq H_T(\aggbprice)$, $H_C\coloneqq H_C(\aggbprice)$, and $D_C$ be the diagonal matrix with the diagonal of $H_C$.
For all-securities dynamics, local strong convexity is bounded from below and above
by
\begin{align*}
  \sigmalow^{\ASD} & = 2b\cdot\lambda_{\min}(PDP)\cdot\lambda_{\min}\bigParens{\;H_T^{1/2} H_C^+ H_T^{1/2}\;}-O(b^2)\enspace,
\\
  \sigmahigh^{\ASD} & = 2b\cdot\lambda_{\max}(PDP)\cdot\lambda_{\max}\bigParens{\;H_T^{1/2} H_C^+ H_T^{1/2}\;}+O(b^2)
\enspace.
\intertext{%
For single-security dynamics, local strong convexity is bounded from below by}
  \sigmalow^{\SSD}
  &= b\cdot\lambda_{\min}(PDP)\cdot\lambda_{\min}\bigParens{\;H_T^{1/2} D_C^+ H_T^{1/2}\;}
  -O(b^2)
\enspace.
\end{align*}
\end{theorem}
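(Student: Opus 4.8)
The all-securities bounds $\sigmalow^{\ASD}$ and $\sigmahigh^{\ASD}$ are precisely the conclusion of \Thm{conv:2} (see \App{conv:ASD}), so nothing further is needed for them, and the only new claim is the single-security lower bound. The plan is to obtain $\sigmalow^{\SSD}$ as a direct specialization of \Thm{conv:1}.

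Recall that the single-security dynamics corresponds to the block decomposition in which every block is a single (trader, security) coordinate, i.e., $\cA=\bigSet{\set{i}\times\set{k}:\:i\in[N],\,k\in[K]}$ with $\card{\cA}=NK$, and the induced decomposition of the security index set $[K]$ is the partition into singletons, $\cB=\bigSet{\set{k}:\:k\in[K]}$. With this $\cB$, the operator $\cD_\cB$ retains only the diagonal of a $K\times K$ matrix, so $\cD_\cB(H_C(\aggbprice))$ is exactly the diagonal matrix $D_C$ whose diagonal equals that of $H_C=H_C(\aggbprice)$. Substituting $\cD_\cB(H_C)=D_C$ into \Eq{conv:1:sigmalow} of \Thm{conv:1} yields
\[
  \sigmalow^{\SSD}
  = b\cdot\lambda_{\min}(PDP)\cdot\lambda_{\min}\bigParens{\,H_T^{1/2} D_C^+ H_T^{1/2}\,}-O(b^2)
\enspace,
\]
which is the claimed formula.

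To carry this out rigorously I would check that the hypotheses of \Thm{conv:1} (and hence of the underlying \Thm{conv}) apply in the single-security case. Since $C$ is \convplus, the potential $F$ is \convplus as well (see the opening of \App{localmarketconv}), and it attains a minimum by \Thm{eq:short}; the asymptotic formulas for $\cD(H^\star)^+$ and $(H^\star)^+$ in \Thm{conv:1} hold verbatim for any block decomposition $\cA$ of the assumed form, in particular for the singleton decomposition above. The lower bound $\sigmalow$ is then read off from \Eq{sigmalow} by minimizing the relevant quadratic-form ratio over the gradient space $\cG(F)=\range\bigParens{(H^\star)^+}$ with $\ell=0$; here the needed inclusion $\cG(F)\subseteq\range\bigParens{I_N\otimes D_C^+}$ follows from $\range(H_T)\subseteq\range(H_C)\subseteq\range(D_C)$, the last step being \Prop{D:range} applied with singleton blocks. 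Combining this with the already-established \ASD bounds gives the theorem.

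The main ``obstacle'' here is bookkeeping rather than a new idea: confirming that \SSD genuinely fits the block framework of \App{trader:dynamics} (each block lies inside a single trader's coordinates, so the descent step optimizes $F$ along exactly that coordinate), and that, under the singleton partition $\cB$, the matrix $\cD_\cB(H_C)$ collapses to $D_C$ with $\range(H_C)\subseteq\range(D_C)$, so that $\lambda_{\min}\bigParens{H_T^{1/2}D_C^+H_T^{1/2}}$ is a well-defined positive quantity and hence $\sigmalow^{\SSD}>0$, giving a genuine linear rate. No nontrivial upper bound $\sigmahigh$ is claimed for \SSD, so the delicate averaging argument used for \ASD in \Thm{conv:2} is not needed here.
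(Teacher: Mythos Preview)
Your proposal is correct and matches the paper's approach exactly: the paper's proof is a one-liner stating that the theorem follows immediately from Theorems~\ref{thm:conv:1} and~\ref{thm:conv:2}, and you have correctly identified that the \ASD bounds are \Thm{conv:2} verbatim while the \SSD bound is \Thm{conv:1} specialized to the singleton partition $\cB=\{\{k\}:k\in[K]\}$, for which $\cD_\cB(H_C)=D_C$. Your additional bookkeeping checks (that \SSD fits the block framework, that $\range(H_C)\subseteq\range(D_C)$ via \Prop{D:range}, etc.) are sound and simply make explicit what the paper leaves implicit.
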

\begin{proof}
The theorem follows immediately from Theorems~\ref{thm:conv:1} and~\ref{thm:conv:2}.
\end{proof}

Recall that by \Thm{conv}, the bounds on local strong convexity translate into bounds on local convergence rate as
$\kappahigh=1-\sigmalow/N$ and $\kappalow=1-\sigmahigh/N$ for \ASD, and $\kappahigh=1-\sigmalow/NK$ for \SSD.

So, for \ASD, \Thm{conv:summary} proves linear convergence with the rate $\gamma=1-\Theta(b)$. This means that the convergence gets
worse as $b\to 0$, leading to a trade-off with the bias, which decreases as $b\to 0$.
Our numerical experiments
in \Sec{experiments} and \App{experimentsNIPS} show that these bounds on the convergence rate are empirically quite tight.
Below, we show an example when the two bounds match except for the $O(b^2)$ terms:
when all traders have identical risk aversions
and the cost function is \LMSR.

For \SSD, we only present a lower bound on the local strong convexity, which suffices to establish a linear convergence rate. This bound is worse by a factor of two than the bound for
\ASD. This we believe is only an artifact of a looser analysis and we expect that the reasoning that gave rise to a tighter analysis of \ASD
can be generalized to \SSD. Our experiments in \App{experimentsNIPS} also suggest that our \SSD analysis is looser than the \ASD analysis.

\begin{example}[Convergence of \LMSR under \ASD]
\label{ex:LMSR:ASD}
We next demonstrate the tightness of our bounds for \ASD.
Consider the setting when $N\ge 2$ and the risk aversion of all traders equals $a$. Then $PDP=aPI_NP=aP$, and since $P$ is a non-zero projection matrix,
we obtain $\lambda_{\min}(PDP)=a\lambda_{\min}(P)=a$ and similarly $\lambda_{\max}(PDP)=a$. Furthermore,
if the cost is \LMSR then $H_C=H_T$, and thus $H_T^{1/2} H_C^+ H_T^{1/2}=I_K-\one\one\trans/K$. Therefore, \Thm{conv:short} yields the bounds
$\sigmalow^{\ASD}=2ab-O(b^2)$ and $\sigmahigh^{\ASD}=2ab+O(b^2)$, whose main asymptotic terms match exactly. Thus,
the objective decreases at the rate $\gamma^t$ and the convergence error
at the rate $\gamma^{t/2}$ with $\gamma=1-2ab/N+O(b^2)$, and the linear term in $b$ cannot be improved.
\end{example}

\subsection{Proof of \Thm{conv:short}}

The theorem follows immediately from \Thm{conv:summary}, because $\kappahigh=1-\sigmalow/N$ and $\kappalow=1-\sigmahigh/N$ for \ASD (by \Thm{conv}).

\subsection{Proof of \Thm{conv:two}}

\begin{proof}[Proof of \Thm{conv:two}]

Fix the liquidity $b$ for \LMSR and $b'=b/\eta$ for \IND, where $\eta\in[1,2]$ (following \Thm{bias:two}).
We begin by deriving the relationship between the upper and lower bounds on the rate of convergence using \Thm{conv:short}. We will
write $\kappahigh$ and $\kappalow$ for the convergence rate bounds for \LMSR and $\kappahigh'$ and $\kappalow'$ for \IND.
We will start with \LMSR.

Following the same steps as in Example~\ref{ex:LMSR:ASD}, we have $\lambda_{\min}(PDP)=\lambda_{\max}(PDP)=a$, and since $H_\LMSR=H_T$, we obtain
\[
 \kappahigh=1-2ab/N+O(b^2)
\enspace,
 \kappalow=1-2ab/N-O(b^2)
\enspace.
\]
For \IND, we have $H_\IND=D_T$ where $D_T$ is the diagonal of $H_T$. By \Lem{two:conv} (see below), we obtain that
\begin{align*}
 \lambda_{\min}\bigParens{\;H_T^{1/2} H_\IND^+ H_T^{1/2}\;}
 &=
 \lambda_{\min}\bigParens{\;H_T^{1/2} D_T^+ H_T^{1/2}\;}
 \ge 1
\\
 \lambda_{\max}\bigParens{\;H_T^{1/2} H_\IND^+ H_T^{1/2}\;}
 &=
 \lambda_{\max}\bigParens{\;H_T^{1/2} D_T^+ H_T^{1/2}\;}
 \le 2
\enspace.
\end{align*}
Plugging these expressions, alongside $b'=b/\eta$, into \Thm{conv:short}, we obtain
\begin{align*}
 \kappahigh'&\le 1-2ab\,/\,\eta N+O(b^2) \le 1-ab/N+O(b^2)
\\
 \kappalow'&\ge 1-4ab\,/\,\eta N+O(b^2) \le 1-4ab/N-O(b^2)
\end{align*}

Next, note the following chain of inequalities which we will use to simplify our analysis. Let $\gamma=1-\alpha b+O(b^2)$, and let $t\ge t_0$ and $c>0$. Then
\begin{align*}
  c\gamma^{t-t_0}
&=   \exp\Braces{(\log c)+(t-t_0)\log\bigParens{1-\alpha b+O(b^2)}}
\\
&\le
  \exp\Braces{(\log c)-b(t-t_0)\bigParens{\alpha-O(b)}}
\\
&\le
  \exp\Braces{-bt\bigParens{\alpha-O(b)-\eps_t}}
\enspace,
\\
\intertext{%
where $\eps_t\to 0$ as $t\to\infty$. Similarly, for $\gamma=1-\alpha b-O(b^2)$, we can derive the following lower bound}
  c\gamma^{t-t_0}
&=   \exp\Braces{(\log c)+(t-t_0)\log\bigParens{1-\alpha b-O(b^2)}}
\\
&\ge
  \exp\Braces{(\log c)-b(t-t_0)\bigParens{\alpha+O(b)}}
\\
&\ge
  \exp\Braces{-bt\bigParens{\alpha+O(b)+\eps_t}}
\enspace,
\end{align*}
where $\eps_t\to 0$ as $t\to\infty$. In the remainder of the proof, we will write $\eps$, $\eps'$, $\eps''$ etc., to mean quantities that are $O(b)+\eps_t$
with some $\eps_t\to 0$ as $t\to\infty$.

We can now apply our convergence rate bounds to bound the suboptimality of the potential under both costs and liquidities.
For each of the two costs $C$,
let $F_C$ and $F_C^\star$ denote the corresponding potential and its optimal value, and $\bbundle^t_C$ and $\bprice^t_C$ be the corresponding iterates and
market prices. By \Prop{t:t0} and \Prop{all:reached}, we obtain that with probability 1, we will reach an iteration $t_0$ such that for all $t\ge t_0$
\begin{align*}
   \exp\Braces{-bt\bigParens{2a/N+\eps}}
&\le
   \ex{F_\LMSR(\bbundle^t_\LMSR)\given\bbundle_\LMSR^{t_0}}-F_\LMSR^\star
   \le
   \exp\Braces{-bt\bigParens{2a/N-\eps}}
\\
   \exp\Braces{-bt\bigParens{4a/N+\eps}}
&\le
   \ex{F_\LMSR(\bbundle^t_\IND)\given\bbundle_\IND^{t_0}}-F_\IND^\star
   \le
   \exp\Braces{-bt\bigParens{a/N-\eps}}
\enspace,
\end{align*}
where we used our bounds for $c\gamma^{t-t_0}$. By \Thm{errconnect:ASD}, we then also have, for some $\eps'$, and all $t>t_0$,
\begin{align*}
   \exp\Braces{-bt\bigParens{2a/N+\eps'}}
&\le
   \ex{\norm{\bprice^t_\LMSR-\eqbprice_\LMSR}^2\given\bbundle_\LMSR^{t_0}}
   \le
   \exp\Braces{-bt\bigParens{2a/N-\eps'}}
\\
   \exp\Braces{-bt\bigParens{4a/N+\eps'}}
&\le
   \ex{\norm{\bprice^t_\IND-\eqbprice_\IND}^2\given\bbundle_\IND^{t_0}}
   \le
   \exp\Braces{-bt\bigParens{a/N-\eps'}}
\enspace.
\end{align*}
Now writing $\Ex{t_0}{\cdot}$ instead of $\ex{\cdot\given\bbundle_\IND^{t_0}}$, $\ex{\cdot\given\bbundle_\IND^{t_0}}$, we obtain that
for a suitable $\eps''$
\begin{align}
\notag
   \Ex{t_0}{\bigNorm{\bprice^{2t(1+\eps'')}_\LMSR-\eqbprice_\LMSR}^2}
&\le
   \exp\Braces{-2bt\bigParens{2a/N-\eps'}(1+\eps'')}
\\
\notag
&=
   \exp\Braces{-bt\bigParens{4a/N-2\eps'+ 4a\eps''/N-2\eps'\eps''}}
\\
\label{eq:eps'':1}
&\le
   \exp\Braces{-bt\bigParens{4a/N+\eps'}}
\\[6pt]
\notag
&\le
   \Ex{t_0}{\norm{\bprice^t_\IND-\eqbprice_\IND}^2}
\\[6pt]
\notag
&\le
   \exp\Braces{-bt\bigParens{a/N-\eps'}}
\\
\label{eq:eps'':2}
&\le
   \exp\Braces{-bt\bigParens{a/N+\eps'/2-a\eps''/N-\eps'\eps''/2}}
\\
\notag
&=
   \exp\Braces{-b(t/2)\bigParens{2a/N+\eps'}(1-\eps'')}
\\
\notag
&\le
   \Ex{t_0}{\bigNorm{\bprice^{(t/2)(1-\eps'')}_\LMSR-\eqbprice_\LMSR}^2}
\enspace.
\end{align}
It remains to verify that we can choose a suitable $\eps''$ to guarantee inequalities~\eqref{eq:eps'':1}
and~\eqref{eq:eps'':2} for a sufficiently small $b$ and large $t$. One possibility is
to set $\eps''=3N\eps'/(2a)$, because then (for a sufficiently small $b$ and large $t$), we have
\begin{align}
\notag
  &\frac{4a\eps''}{N}
  \ge 3\eps'+ 2\eps'\eps''
\quad\implies
\quad
-2\eps'+ \frac{4a\eps''}{N}-2\eps'\eps''
\ge
  \eps'
\quad\implies
\text{\eqref{eq:eps'':1},}
\\
\tag*{\qedhere}
  &\frac{a\eps''}{N}
  \ge \frac32\eps'-\frac12\eps'\eps''
\quad\implies
\quad
-\eps'
\ge
  \frac{\eps'}{2}-\frac{a\eps''}{N}-\frac{\eps'\eps''}{2}
\quad\implies
\text{\eqref{eq:eps'':2}.}
\end{align}
\end{proof}

\begin{lemma}
\label{lem:two:conv}
Let $\bprice\in\R^K$ be a probability vector with non-zero entries, let $H\coloneqq\parens{\diag_{k\in[K]}\price_k}-\bprice\bprice\trans$ be the covariance
matrix
of the associated multinomial distribution, and $D\coloneqq\diag_{k\in[K]}\price_k(1-\price_k)$ be the
diagonal matrix consisting of the diagonal of $H$.
Then
\[
1\le\lambda_{\min}(H^{1/2}D^{-1}H^{1/2})
\quad
\text{and}
\quad
\lambda_{\max}(H^{1/2}D^{-1}H^{1/2})\le 2
\enspace.
\]
\end{lemma}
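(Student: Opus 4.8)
The plan is to strip away the weighting by diagonalization and reduce both bounds to an eigenvalue estimate for a diagonal matrix compressed onto a hyperplane.

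First I would apply the commutation identity \eqref{eq:lambda:commute} twice. Taking $A = D^{-1/2}H^{1/2}$ (so $A\trans A = H^{1/2}D^{-1}H^{1/2}$ and $AA\trans = D^{-1/2}HD^{-1/2}$) gives $\lambda_{\min}(H^{1/2}D^{-1}H^{1/2}) = \lambda_{\min}(D^{-1/2}HD^{-1/2})$, and likewise for $\lambda_{\max}$. A short computation — using that the diagonal of $H$ already equals $D$ since $H_{kk} = \price_k(1-\price_k) = D_{kk}$, and that the off-diagonal entry $H_{jk}=-\price_j\price_k$ factors — shows $D^{-1/2}HD^{-1/2} = E^{1/2}(I-\bq\bq\trans)E^{1/2}$, where $E \coloneqq \diag_{k\in[K]}(1-\price_k)^{-1}$ and $\bq$ is the unit vector with $q_k=\sqrt{\price_k}$. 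Writing $P_\bq \coloneqq I-\bq\bq\trans$ and applying \eqref{eq:lambda:commute} once more with $A = E^{1/2}P_\bq$, the claim reduces to $\lambda_{\min}(P_\bq E P_\bq)\ge 1$ and $\lambda_{\max}(P_\bq E P_\bq)\le 2$. Since $P_\bq E P_\bq$ kills $\bq$ and has rank $K-1$, its positive eigenvalues are exactly the values of the Rayleigh quotient $\bw\trans E\bw/\norm{\bw}^2$ taken over $\bw\in\bq^\perp$. The lower bound is then immediate: $\bw\trans E\bw \ge \bigParens{\min_k (1-\price_k)^{-1}}\norm{\bw}^2 \ge \norm{\bw}^2$.

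The upper bound is the main obstacle. It says $\bw\trans(2I-E)\bw\ge 0$ for all $\bw\perp\bq$, and the diagonal matrix $2I-E = \diag_k g_k$ with $g_k = (1-2\price_k)/(1-\price_k)$ is \emph{not} positive semidefinite in general: it has a negative entry exactly when some $\price_k>\tfrac12$, and since $\sum_k\price_k=1$ at most one index can do so. The cases where all $g_k\ge 0$ are trivial, so assume $g_1<0$ (hence $\price_k<\tfrac12$ and $g_k>0$ for $k\ge 2$). For $\bw\perp\bq$ eliminate $w_1$ via $\sqrt{\price_1}\,w_1 = -\sum_{k\ge2}\sqrt{\price_k}\,w_k$; the claim becomes
\[
  \sum_{k\ge2} g_k w_k^2 \;\ge\; \frac{\lvert g_1\rvert}{\price_1}\Bigl(\sum_{k\ge2}\sqrt{\price_k}\,w_k\Bigr)^{2}.
\]
By Cauchy--Schwarz, $\bigl(\sum_{k\ge2}\sqrt{\price_k}\,w_k\bigr)^2 \le \bigl(\sum_{k\ge2}\price_k/g_k\bigr)\bigl(\sum_{k\ge2}g_k w_k^2\bigr)$, so it suffices to establish the scalar inequality $\tfrac{\lvert g_1\rvert}{\price_1}\sum_{k\ge2}\tfrac{\price_k}{g_k}\le 1$.

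Finally I would set $f(p)\coloneqq p(1-p)/(1-2p)$ on $[0,\tfrac12)$ and check the bookkeeping $\price_k/g_k = f(\price_k)$ and $\price_1/\lvert g_1\rvert = f(1-\price_1)$ (the latter using $2\price_1-1 = 1-2(1-\price_1)$), which turns the scalar inequality into $\sum_{k\ge2} f(\price_k) \le f\bigl(\sum_{k\ge2}\price_k\bigr)$, legitimate because $\sum_{k\ge2}\price_k = 1-\price_1 < \tfrac12$. With the substitution $u=1-2p$ one has $f = \tfrac1{4u}-\tfrac u4$, which is convex, and $f(0)=0$; a convex function vanishing at $0$ is superadditive on nonnegative arguments keeping their sum in the domain, so the inequality follows by induction on the number of summands. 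This closes the proof; moreover Cauchy--Schwarz is tight (take $w_k\propto 1/g_k$ and $\price$ near a vertex), so the constant $2$ is sharp. I expect the only delicate points to be the two identifications with $f$ and the (easy) convexity check; everything else is routine.
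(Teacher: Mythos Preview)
Your proof is correct and takes a genuinely different route from the paper's. Both start by passing from $H^{1/2}D^{-1}H^{1/2}$ to $D^{-1/2}HD^{-1/2}$ via \eqref{eq:lambda:commute}, but then diverge. Your factorization $D^{-1/2}HD^{-1/2}=E^{1/2}P_{\bq}E^{1/2}$ and second application of \eqref{eq:lambda:commute} reduce both bounds to the Rayleigh quotient of $E$ over $\bq^\perp$, which makes the \emph{lower} bound immediate from $E\succeq I$; the \emph{upper} bound then goes through Cauchy--Schwarz and the superadditivity of the convex function $f(p)=p(1-p)/(1-2p)$ vanishing at $0$. The paper instead substitutes $\bv=D^{-1/2}\bu$ and compares $\bv\trans H\bv$ with $\bv\trans D\bv$ directly: its upper bound is the one-line identity $2\bv\trans D\bv-\bv\trans H\bv=\tfrac12\sum_{j\ne k}\mu_j\mu_k(v_j+v_k)^2\ge 0$, valid for \emph{all} $\bv$ with no subspace constraint, while its lower bound is a page-long algebraic manipulation that eliminates one coordinate via the constraint and eventually exhibits a nonnegative combination of squares. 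So the hard and easy halves are swapped between the two arguments: your factorization trivializes the lower bound at the cost of a more delicate upper-bound argument, whereas the paper gets the upper bound essentially for free but labors over the lower one. Your approach is more structural and makes the mechanism (a rank-one perturbation of a diagonal matrix restricted to a hyperplane) transparent; the paper's sum-of-squares identity for the upper bound is slicker and unconditional.
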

\begin{proof}
Let $\cL\coloneqq\range(D^{-1/2} H D^{-1/2})$. To prove the lemma, it suffices to show that for all $\bu\in\cL$
\begin{equation}
\label{eq:two:conv:toshow}
  \bu\trans\bu
  \le \bu\trans D^{-1/2} H D^{-1/2} \bu
  \le 2\bu\trans\bu
\enspace.
\end{equation}
This will imply that $1\le\lambda_{\min}(D^{-1/2} H D^{-1/2})$ and $\lambda_{\max}(D^{-1/2} H D^{-1/2})\le 2$. Thus, by \Eq{lambda:commute},
we will also have $1\le\lambda_{\min}(H^{1/2}D^{-1}H^{1/2})$ and $\lambda_{\max}(H^{1/2}D^{-1}H^{1/2})\le 2$.

Let $\bu\in\cL$ and $\bv=D^{-1/2}\bu$. \Eq{two:conv:toshow} can be rewritten as
\[
  \bv\trans D\bv
  \le \bv\trans H \bv
  \le 2\bv\trans D\bv
\enspace.
\]
We will next show that both inequalities hold.

\paragraph{Part 1: $\bv\trans D\bv\le \bv\trans H\bv$.}
We first rewrite the constraint $\bu\in\cL$ in terms of $\bv$.
To start, note that $\bu\in\cL=\range(D^{-1/2} H D^{-1/2})$ iff
$\bu\perp\Null(D^{-1/2} H D^{-1/2})$. Next, note that $\by\in\Null(D^{-1/2} H D^{-1/2})$ iff $D^{-1/2}\by\in\Null(H)$,
which is equivalent to $\by\in D^{1/2}\Null(H)$. Now, $\Null(H)=\set{c\one:\:c\in\R}$, where $\one$ is the all-ones vector.
So,
\[
  \Null(D^{-1/2} H D^{-1/2})=D^{1/2}\Null(H)=\BigSet{\by:\: y_j=c\sqrt{\mu_j(1-\mu_j)} \text{ for $j\in[K]$, for some $c\in\R$}}
\enspace.
\]
Therefore, $\bu\perp\Null(D^{-1/2} H D^{-1/2})$ iff
\begin{equation}
\label{eq:L:explicit}
  \sum_{j\in[K]} u_j\sqrt{\mu_j(1-\mu_j)}=0
\enspace.
\end{equation}
Since $\bv=D^{-1/2}\bu$, we have $\bu=D^{1/2}\bv$, i.e., $u_j=v_j\sqrt{\mu_j(1-\mu_j)}$ for $j\in[K]$. Substituting this expression
into \Eq{L:explicit} yields
\begin{equation}
\label{eq:v:constr}
  \sum_{j\in[K]} v_j\mu_j(1-\mu_j)=0
\enspace.
\end{equation}
When $K=1$, then $D=H=\zero$, so in this case indeed $\bv\trans D\bv\le \bv\trans H\bv$. Next consider $K>1$. We will use the following
identity for $v_1\mu_1$, implied by \Eq{v:constr}:
\begin{equation}
\label{eq:v1}
  v_1\mu_1=-\frac{1}{1-\mu_1}\sum_{j\ge 2} v_j\mu_j(1-\mu_j)
\enspace.
\end{equation}
We next argue that $\bv\trans H\bv-\bv\trans D\bv\ge 0$:
\begin{align}
\label{eq:twoc:1}
\bv\trans H\bv-\bv\trans D\bv
  &=-\sum_{ \substack{j,k\in[K] \\ j\ne k} }
    \mu_j\mu_k v_j v_k
\\[2pt]
\notag
  &=-\sum_{ \substack{j,k\ge 2 \\ j\ne k} }
    \mu_j\mu_k v_j v_k
    -2\sum_{k\ge 2}\mu_1 v_1 \mu_k v_k
\\
\label{eq:twoc:2}
  &=-\sum_{ \substack{j,k\ge 2 \\ j\ne k} }
    \mu_j\mu_k v_j v_k
    +\frac{2}{1-\mu_1}\sum_{j,k\ge 2}\mu_j(1-\mu_j) v_j \mu_k v_k
\\
\label{eq:twoc:3}
  &=-\sum_{ \substack{j,k\ge 2 \\ j\ne k} }
    z_j z_k
    +\frac{2}{1-\mu_1}\sum_{j,k\ge 2}(1-\mu_j) z_j z_k
\\
\label{eq:twoc:4}
  &=-\sum_{ \substack{j,k\ge 2 \\ j\ne k} }
    z_j z_k
    +\frac{1}{1-\mu_1}\sum_{j,k\ge 2}\bigParens{(1-\mu_j)+(1-\mu_k)} z_j z_k
\\[2pt]
\notag
  &=\frac{1}{1-\mu_1}
    \BigBracks{\;
    \sum_{ \substack{j,k\ge 2 \\ j\ne k} }
    z_j z_k
    \BigParens{
     -(1-\mu_1)+(1-\mu_j)+(1-\mu_k)
    }
    +
    \sum_{j\ge 2}
    2(1-\mu_j)z_j^2
    \;}
\\[2pt]
\notag
  &=\frac{1}{1-\mu_1}
    \BigBracks{\;
    \sum_{ \substack{j,k\ge 2 \\ j\ne k} }
    z_j z_k
    \BigParens{
     (1-\mu_1-\mu_j-\mu_k)+2\mu_1
    }
    +
    \sum_{j\ge 2}
    z_j^2
    \BigParens{
     2(1-\mu_1-\mu_j) + 2\mu_1
    }
    \;}
\\[2pt]
\notag
  &=\frac{1}{1-\mu_1}
    \BigBracks{\;
    2\mu_1\sum_{j,k\ge 2}
    z_j z_k
    +
    \sum_{ \substack{j,k\ge 2 \\ j\ne k} }
    (1-\mu_1-\mu_j-\mu_k)
    z_j z_k
    +
    \sum_{j\ge 2}
    2(1-\mu_1-\mu_j) z_j^2
    \;}
\\[2pt]
\label{eq:twoc:5}
  &=\frac{1}{1-\mu_1}
    \BigBracks{\;
    2\mu_1\sum_{j,k\ge 2}
    z_j z_k
    +
    \sum_{ \substack{j,k\ge 2 \\ j\ne k} }
    \sum_{ \substack{\ell\ge 2\\ \ell\ne j,k} }
    \mu_\ell
    z_j z_k
    +
    \sum_{j\ge 2}
    \sum_{ \substack{\ell\ge 2\\ \ell\ne j} }
    2\mu_\ell z_j^2
    \;}
\\[2pt]
\notag
  &=\frac{1}{1-\mu_1}
    \BigBracks{\;
    2\mu_1\sum_{j,k\ge 2}
    z_j z_k
    +
    \sum_{ j,k\ge 2 }
    \sum_{ \substack{\ell\ge 2\\ \ell\ne j,k} }
    \mu_\ell
    z_j z_k
    +
    \sum_{j\ge 2}
    \sum_{ \substack{\ell\ge 2\\ \ell\ne j} }
    \mu_\ell z_j^2
    \;}
\\[2pt]
\notag
  &=\frac{1}{1-\mu_1}
    \BigBracks{\;
    2\mu_1\sum_{j,k\ge 2}
    z_j z_k
    +
    \sum_{ \ell\ge 2 }
    \mu_\ell
    \!\!\sum_{ \substack{j,k\ge 2 \\ j\ne\ell,k\ne\ell} }\!\!
    z_j z_k
    +
    \sum_{j\ge 2}
    \sum_{ \substack{\ell\ge 2\\ \ell\ne j} }
    \mu_\ell z_j^2
    \;}
\\[2pt]
\label{eq:twoc:6}
  &=\frac{1}{1-\mu_1}
    \BigBracks{\;
    2\mu_1\BigParens{\sum_{j\ge 2}z_j}^2
    +
    \sum_{ \ell\ge 2 }
    \mu_\ell
    \BigParens{\sum_{\substack{j\ge 2\\j\ne\ell}} z_j}^2
    +
    \sum_{ \substack{j,\ell\ge 2\\ \ell\ne j} }
    \mu_\ell z_j^2
    \;}
  \ge 0
\enspace.
\end{align}
In \Eq{twoc:1}, we use the fact that $D$ is the diagonal of $H$, so the right-hand side only sums over off-diagonal entries of $H$.
In \Eq{twoc:2}, we replaced $\mu_1 v_1$ using \Eq{v1}. In \Eq{twoc:3}, introduce the substitution $z_j\coloneqq \mu_j v_j$.
In \Eq{twoc:4}, we use the fact that $2\sum_{j,k\ge 2}(1-\mu_j)z_j z_k=\sum_{j,k\ge 2}(1-\mu_j)z_j z_k+\sum_{j,k\ge 2}(1-\mu_k)z_j z_k$.
In \Eq{twoc:5}, we use the fact that $\sum_{\ell\in[K]}\mu_\ell=1$ and so $(1-\mu_1-\mu_j-\mu_k)=\sum_{\ell\ne 1,j,k}\mu_\ell$
and $(1-\mu_1-\mu_j)=\sum_{\ell\ne 1,j}\mu_\ell$.
Finally, the inequality in \Eq{twoc:6} follows because $\mu_1,\mu_\ell\ge 0$.

\paragraph{Part 2: $\bv\trans H\bv\le 2\bv\trans D\bv$.} We show by direct calculation that $2\bv\trans D\bv-\bv\trans H\bv\ge 0$:
\begin{align}
\label{eq:twoc:21}
2\bv\trans D\bv-\bv\trans H\bv\ge 0
  &=2\BigBracks{\;\sum_{j\in[K]}\mu_j v_j^2-\sum_{j\in[K]}\mu_j^2 v_j^2\;}
    -\BigBracks{\;\sum_{j\in[K]}\mu_j v_j^2-\sum_{j,k\in[K]}\mu_j\mu_k v_j v_k\;}
\\[4pt]
\notag
  &=\sum_{j\in[K]}\mu_j v_j^2-2\sum_{j\in[K]}\mu_j^2 v_j^2
    +\sum_{j,k\in[K]}\mu_j\mu_k v_j v_k
\\[4pt]
\label{eq:twoc:22}
  &=\sum_{j,k\in[K]}\mu_j\mu_k v_j^2-2\sum_{j\in[K]}\mu_j^2 v_j^2
    +\sum_{j,k\in[K]}\mu_j\mu_k v_j v_k
\\[4pt]
\notag
  &=\sum_{\substack{j,k\in[K]\\ j\ne k}} \mu_j\mu_k v_j^2
    +\sum_{\substack{j,k\in[K]\\ j\ne k}} \mu_j\mu_k v_j v_k
\\[2pt]
\label{eq:twoc:23}
  &=\sum_{\substack{j,k\in[K]\\ j\ne k}}\BigParens{\frac12 \mu_j\mu_k v_j^2+\frac12 \mu_j\mu_k v_k^2+\mu_j\mu_k v_j v_k}
\\
\label{eq:twoc:24}
  &=\sum_{\substack{j,k\in[K]\\ j\ne k}}\frac12\mu_j\mu_k\Parens{v_j+v_k}^2
  \ge 0
\enspace.
\end{align}
In \Eq{twoc:21}, we just use the definition of $H$ and $D$. In \Eq{twoc:22}, we use that $\sum_{k\in[K]}\mu_k=1$. In \Eq{twoc:23},
we use that by symmetry $\sum_{j\ne k} \mu_j\mu_k v_j^2=\sum_{j\ne k} \mu_j\mu_k v_k^2$ and so $\sum_{j\ne k} \mu_j\mu_k v_j^2=\sum_{j\ne k} \mu_j\mu_k (v_j^2+v_k^2)/2$.
The final inequality in \Eq{twoc:24} follows because $\mu_j,\mu_k\ge 0$.
\end{proof}

\ifnips
\section{Additional Numerical Experiments}
\label{app:experimentsNIPS}

In \Sec{experiments}, we demonstrated that our asymptotic theory closely matches simulations for all-securities dynamics and single-peaked beliefs. Here we include experiments for an additional set of beliefs (uniform beliefs, defined below) and single-securities dynamics (defined in \App{trader:dynamics}).
Once again, we consider a setting in which there is a complete market over $\nsec = 5$ securities with $\nbuyers = 10$ traders who have exponential utilities, exponential-family beliefs and risk aversion coefficients $a_i = 1$ for $i \in [\nbuyers]$. Similar to \Sec{experiments}, we fix the ground-truth
natural parameter $\btheta^\TRUE$ and independently sample the belief $\bttheta_i$ of each trader from $\textup{Normal}(\btheta^\TRUE,\sigma^2I_\nsec)$.
We consider two settings of the ground truth and beliefs:
\begin{itemize}
\item{\emph{Uniform Beliefs}: } All outcomes are equally likely.  We set $\btheta^\TRUE = \pmb{0}$ and $\sigma = 1$.
\item{\emph{Single-Peaked Beliefs}: } One outcome is more likely than the others. Here we set $\theta^\TRUE_1 = \log(1- \nu\cdot (k-1))$ and $\theta^\TRUE_k = \log(\nu)$ for $k \neq 1$. We use $\nu = 0.02$ and $\sigma = 5$.
\end{itemize}
\Fig{beliefs} shows the trader beliefs and market-clearing equilibrium prices (calculated via \Thm{eqprice-char:NIPS}) for both settings.
Note that in \Sec{experiments}, we gave results for the case of single-peaked beliefs and all-security dynamics whereas here we present results for all four combinations
of belief settings and trader dynamics.

\begin{figure}[t!]
\centering
\begin{subfigure}{.48\textwidth}
  \centering
  \includegraphics[width=.95\linewidth]{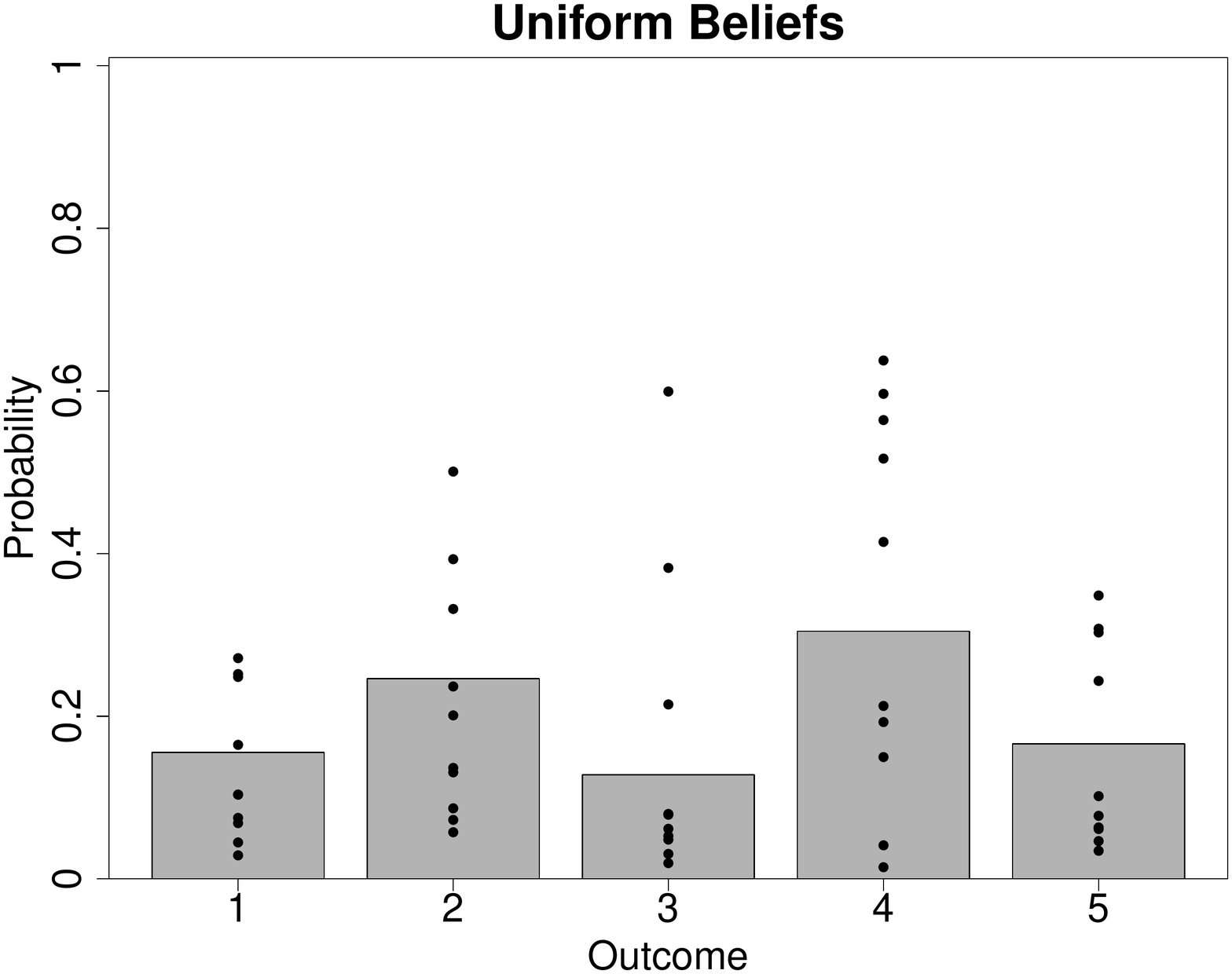}
 \end{subfigure}%
 \hfill
\begin{subfigure}{.48\textwidth}
  \centering
  \includegraphics[width=.95\linewidth]{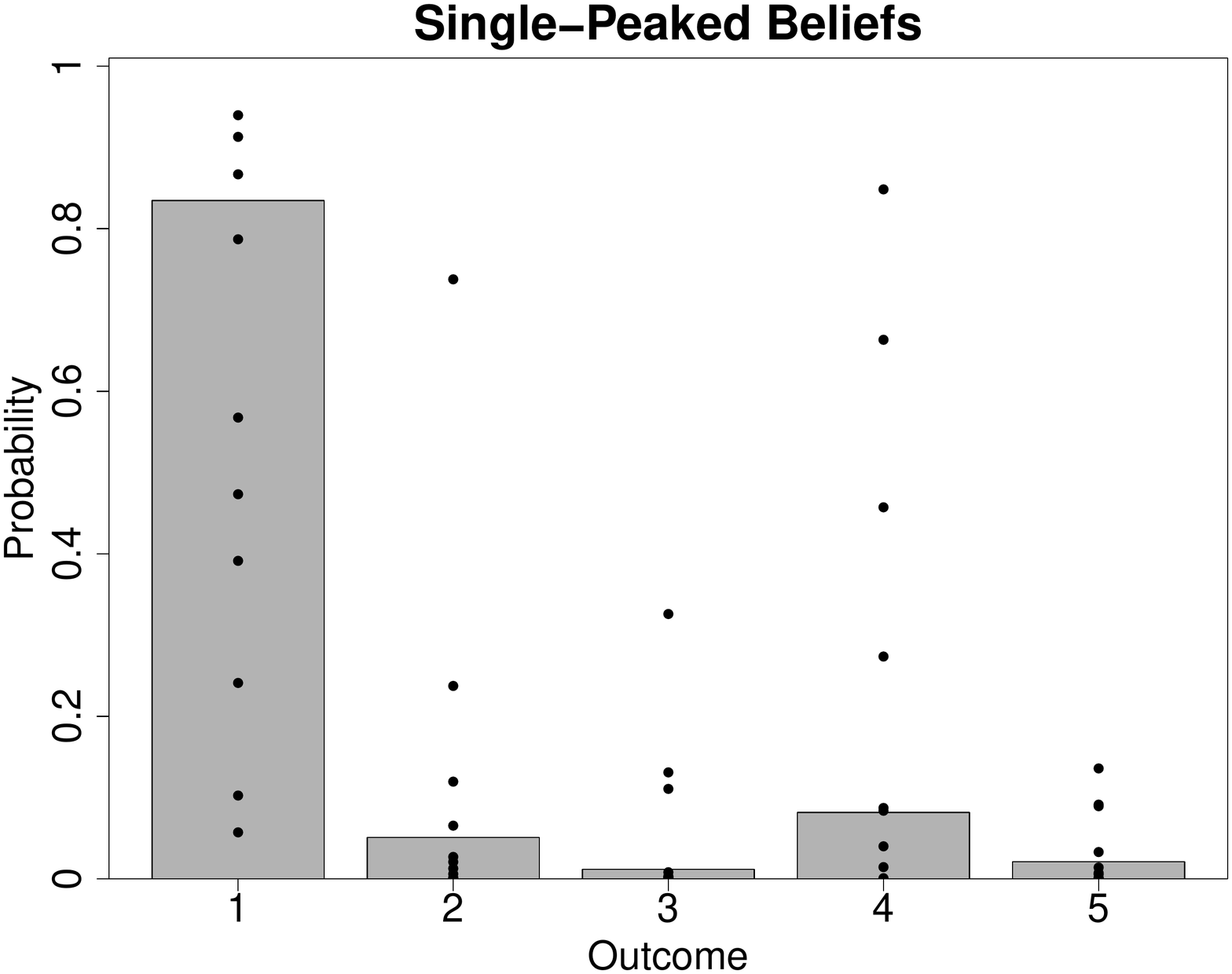}
 \end{subfigure}
\caption{%
Two sets of beliefs of the $\nbuyers = 10$ traders about
the $\nsec = 5$ outcomes. The beliefs were sampled once and then fixed in all experiments. The gray bars show the market-clearing equilibrium prices $\aggbprice$ as in \Cref{defn:market_clearing} and \Thm{eqprice-char:NIPS}.}
\label{fig:beliefs}
\end{figure}

\paragraph{Bias/convergence tradeoffs}
We first examine the tradeoff that arises between market-maker bias and convergence error as the liquidity parameter of the market is adjusted.  Since our main interest is in the effect of the cost function $C$ and liquidity parameter $\liq$ on error, we ignore the sources of error that do not depend on the choice of cost function, such as the sampling error.  \Fig{Uplots} shows the combined bias and convergence error,
$\norm{\aggbprice-\iterbprice{t}(\liq;C)}$, as a function of liquidity,
for different beliefs and cost functions under $\BCD$ after different numbers of trades have occurred. (Other choices of norm lead to similar results.) Similarly, we give results for $\SCD$ in \Fig{UplotsSSD}. The minimum point on each curve tells us the optimal value of the liquidity parameter $\liq$ for the particular setting and number of trades. When the market has not been running long, larger values of $\liq$ lead to lower error.  On the other hand, smaller values of $\liq$ are preferable as the number of trades grows, with the combined error approaching 0 for small $\liq$.  The combined error of \LMSR is similar to that of the sum of independent LMSRs (\IND) under uniform beliefs, but \LMSR produces lower combined error for single-peaked beliefs.

\begin{figure}[t!]
\centering
\begin{subfigure}{.5\textwidth}
  \centering
  \includegraphics[width=.95\linewidth]{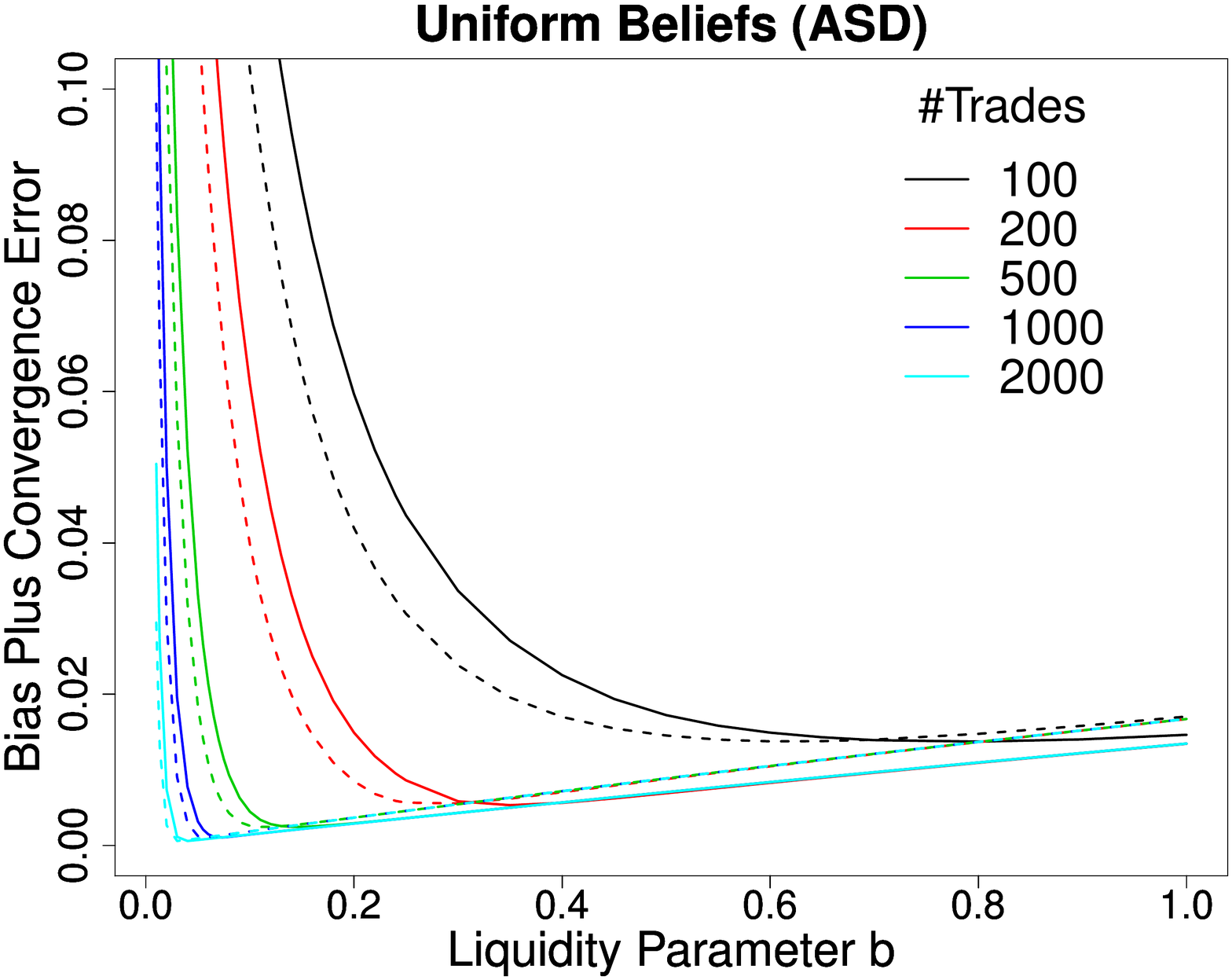}
\end{subfigure}%
\hfill
\begin{subfigure}{.5\textwidth}
  \centering
  \includegraphics[width=.95\linewidth]{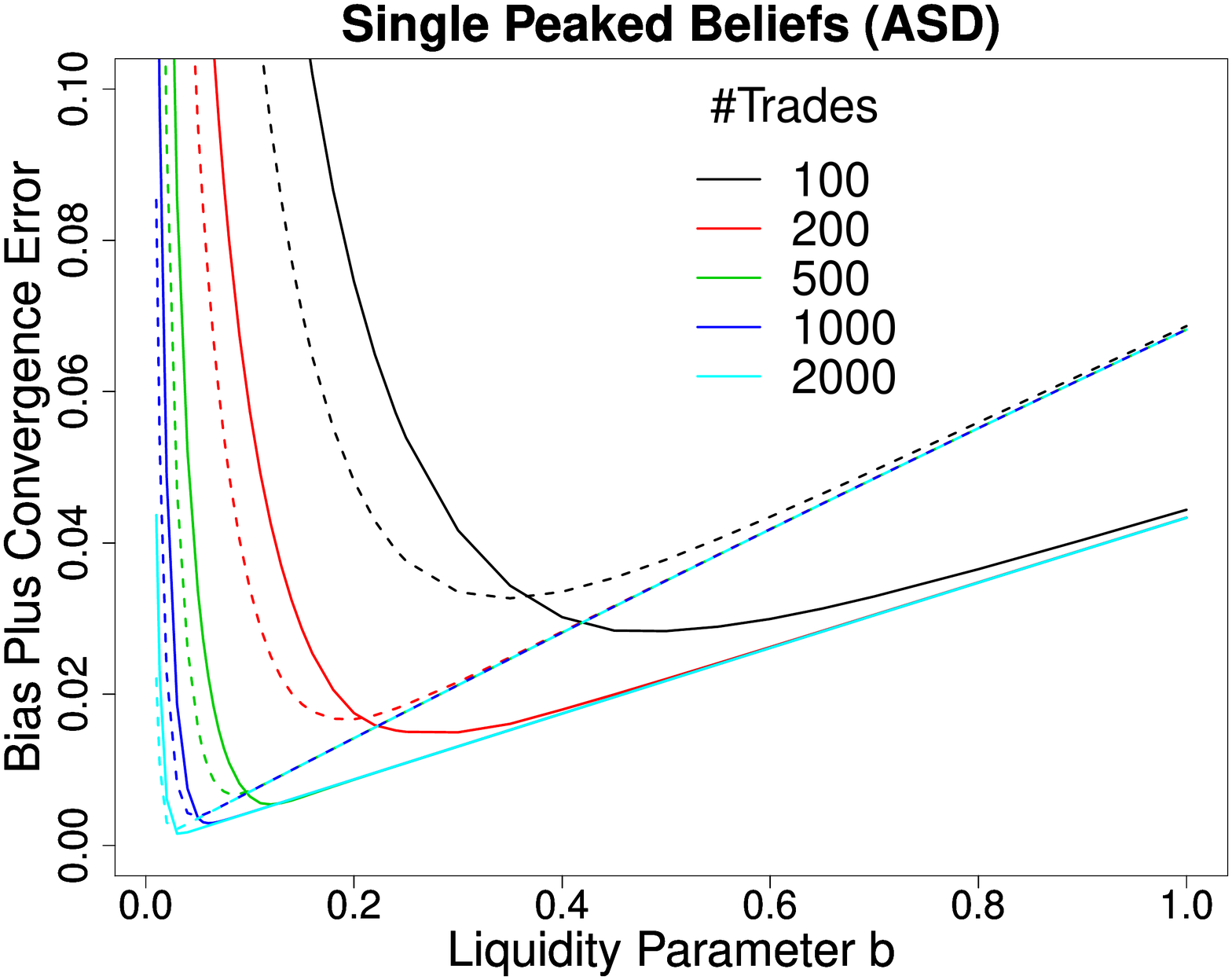}
\end{subfigure}
\caption{%
The tradeoff between marker-maker bias and convergence error for $\BCD$ with different beliefs and cost functions. Solid lines show the total bias and convergence error of \LMSR after various numbers of trades, averaged over 20 random trade sequences. Dotted lines show the same for \IND. \label{fig:Uplots}}
\end{figure}

\begin{figure}[h]
\centering
\begin{subfigure}{.5\textwidth}
  \centering
  \includegraphics[width=.95\linewidth]{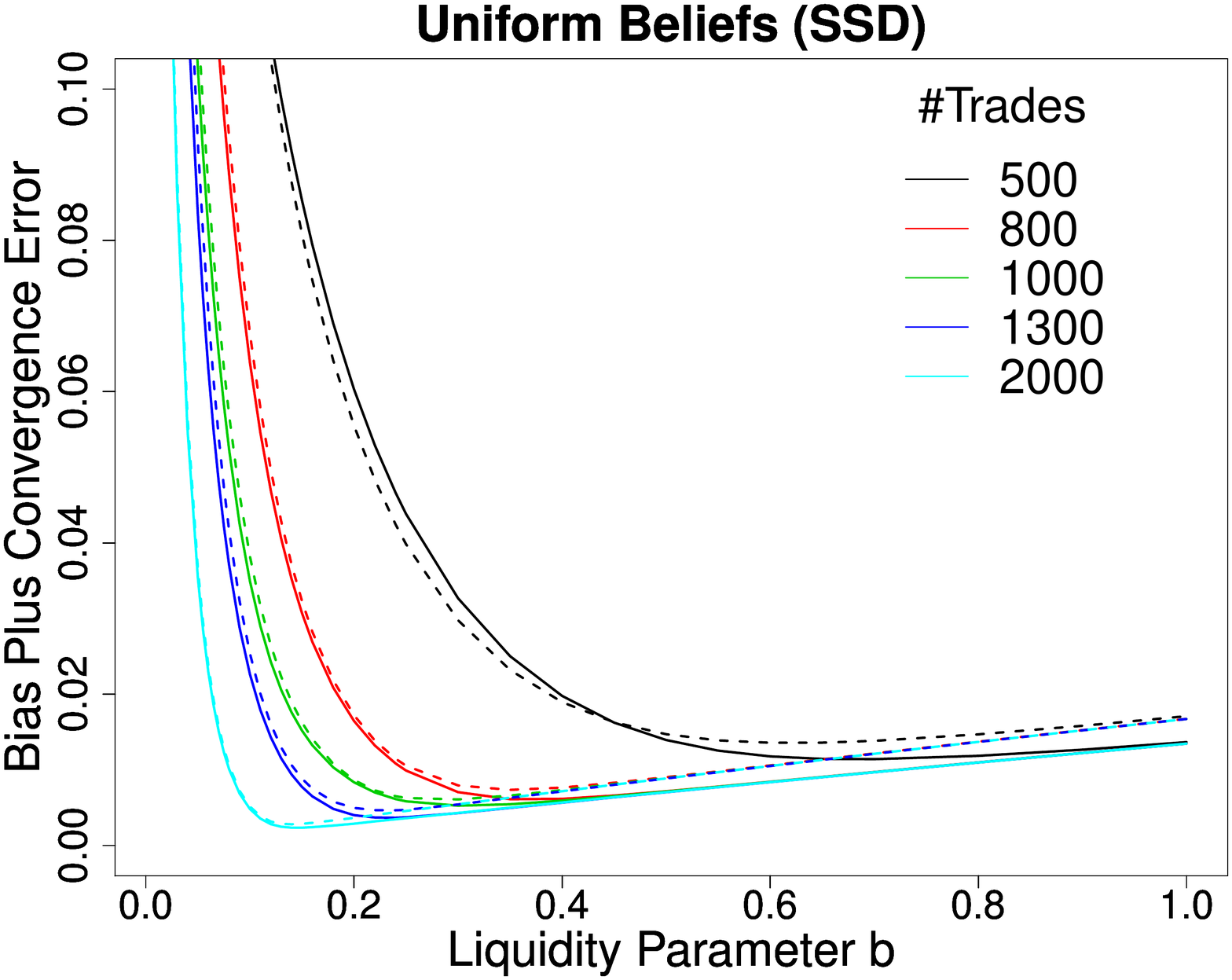}
\end{subfigure}%
\hfill
\begin{subfigure}{.5\textwidth}
  \centering
  \includegraphics[width=.95\linewidth]{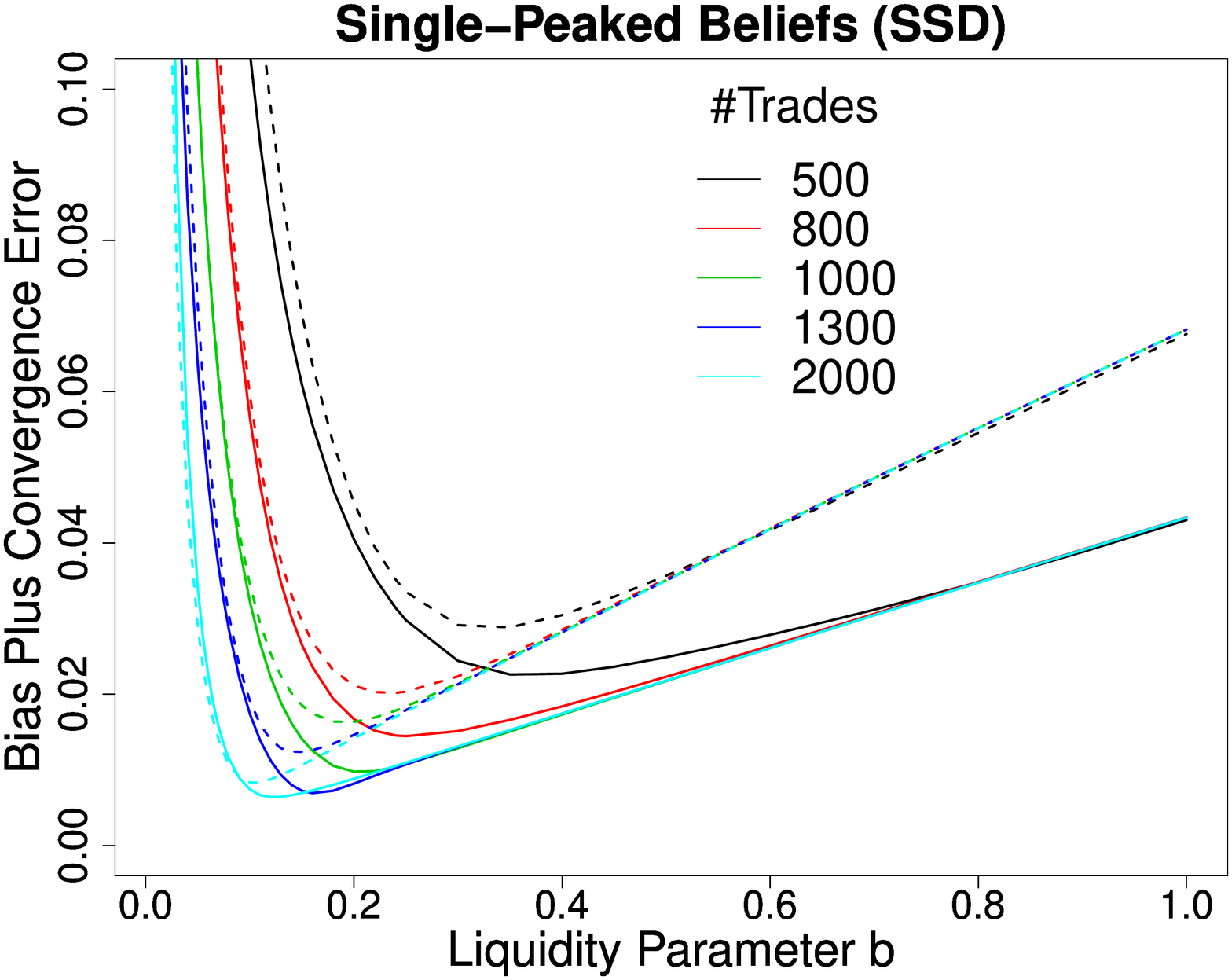}
\end{subfigure}
\caption{%
The tradeoff between marker-maker bias and convergence error for $\SCD$ with different beliefs and cost functions. Solid lines show the total bias and convergence error of \LMSR after various numbers of trades, averaged over 20 random trade sequences. Dotted lines show the same for \IND. \label{fig:UplotsSSD}}
\end{figure}

\paragraph{Market-maker bias}
We next focus in on the market-maker bias to empirically evaluate our bounds from Section~\ref{sec:bias}.  From Theorem~\ref{thm:bias:local}, we know that
$
\norm{\eqbprice(\liq;C)-\aggbprice} \approx b(\bar{a}/N)\norm{H_T(\aggbprice)\partial C^*(\aggbprice)}.
$
In \Fig{bias}, we plot the empirical bias $\norm{\eqbprice(\liq;C) - \aggbprice}$ as a function of $\liq$ for both \LMSR and \IND under uniform and single-peaked beliefs, and in each case compare this bias with the approximation implied by the theory.  We find that although \Cref{thm:bias:local} only gives an asymptotic guarantee as $\liq \to 0$, the approximation above is fairly accurate even for moderate values of $\liq$.
As \Thm{bias:two} shows,
the bias of \IND is higher than that of \LMSR at any fixed value of $b$, but by no more than a factor of two. The difference is greater for single-peaked beliefs than uniform beliefs.
Note that the bias is unaffected by the choice of trader dynamics.

\begin{figure}[t!]
\begin{subfigure}{.5\textwidth}
  \centering
  \includegraphics[width=.9\linewidth]{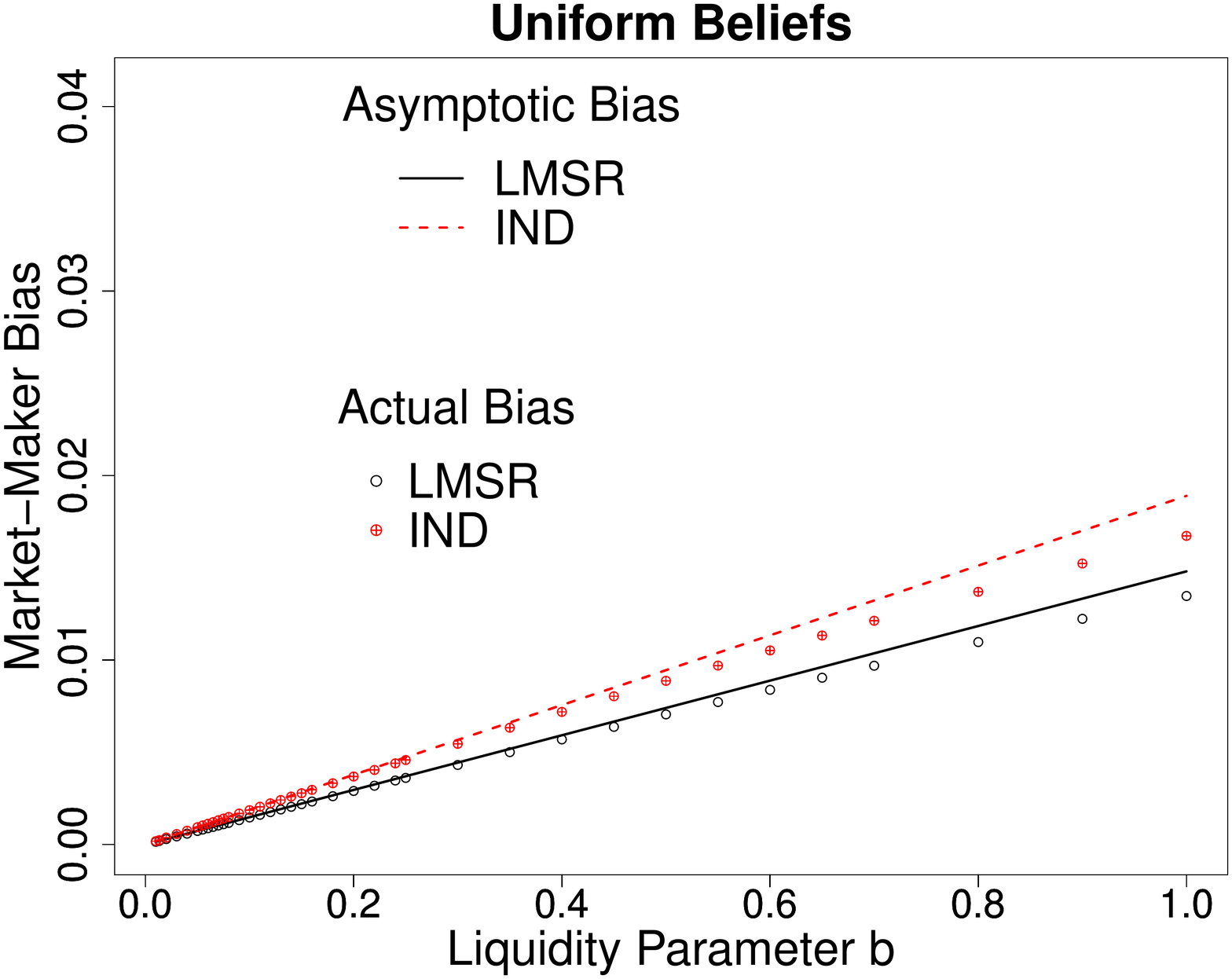}
\end{subfigure}%
\hfill
\begin{subfigure}{.5\textwidth}
  \centering
    \includegraphics[width=.9\linewidth]{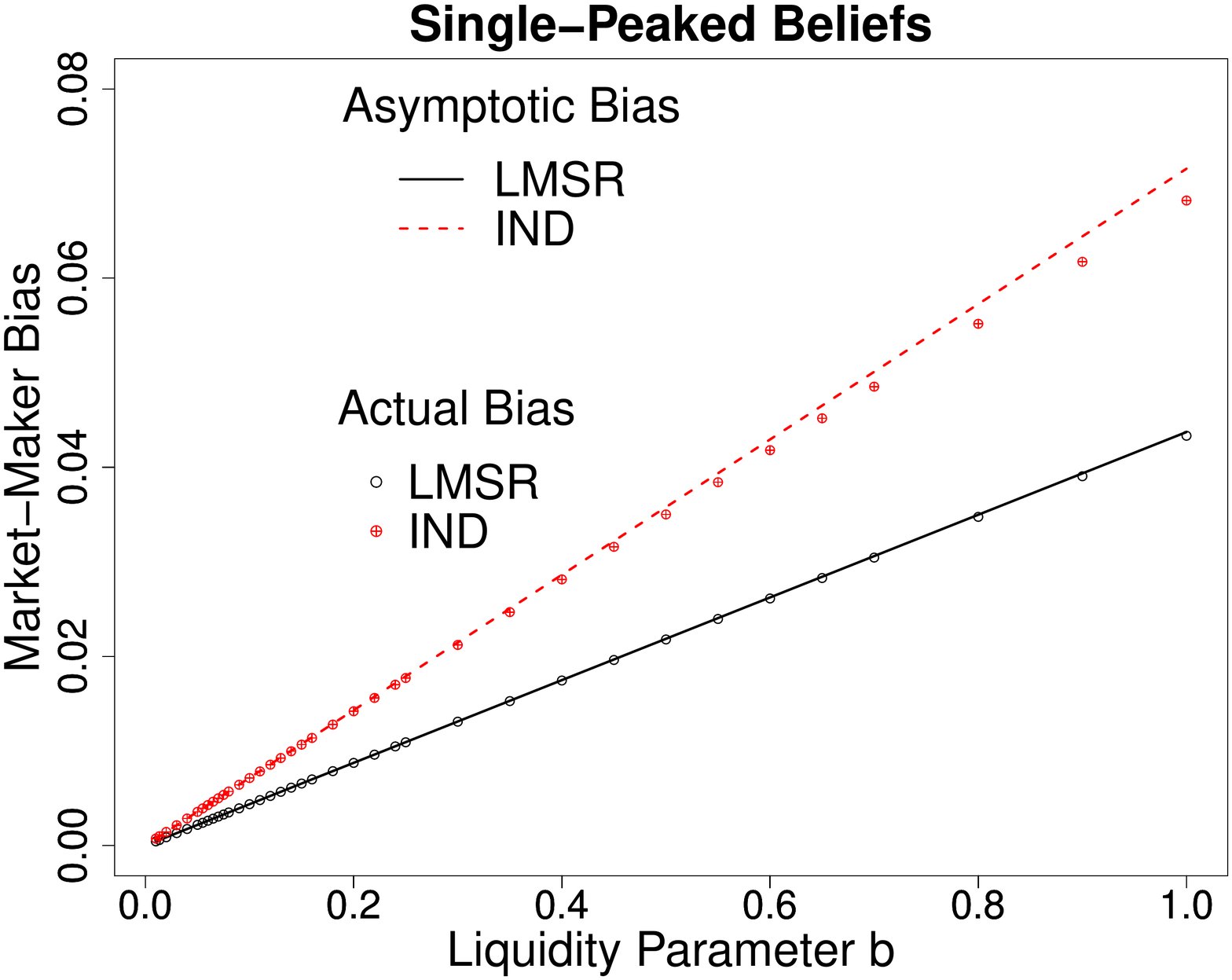}
\end{subfigure}
\caption{Market-maker bias as a function of $\liq$ for different beliefs and cost functions. \label{fig:bias}}
 \end{figure}

\paragraph{Convergence error}
Finally, we turn to the convergence error.  We first show that the local linear convergence rate kicks in very quickly---essentially from the start of trade in our simulations.  We then examine the tightness of our bounds on the local convergence rate from Section~\ref{sec:convergence}.

From \Thm{errconnect:convplus}, we know that $F(\bbundle^t)-F^\star$ is an upper bound on $\norm{\bprice^t-\eqbprice}^2$, and under \ASD we also have $F(\bbundle^t)-F^\star=\Theta(\norm{\bprice^t-\eqbprice}^2)$ (by \Thm{errconnect:ASD}), where we suppress the implicit dependence on $C$ and $\liq$.
Rather than examining the convergence of prices directly, we examine convergence of the objective, which will be more convenient in the discussion below.  \Fig{semi_logB} shows the empirical value of $\hatex{F(\bbundle^t)}-F^\star$,
where the expectation is the empirical average over the 20 random sequences,
as a function of the number of trades, plotted on a log scale, for our two belief sets and cost functions under the all-securities trade dynamics. In all settings, the log of convergence error appears linear, matching the local asymptotic analysis in Section~\ref{sec:convergence}.  In other words, there exist some $\hat{c}$ and $\hat{\gamma}$ such that, empirically, we have for all $t$,
$
\hatex{F(\bbundle^t)}-F^\star \approx \hat{c} \hat{\gamma}^t.
$

To examine the tightness of the bounds from Section~\ref{sec:convergence}, we dig more deeply into the value of this empirical constant $\hat{\gamma}$, which depends on $C$ and $\liq$.  Since this approximation holds for any sufficiently large $t$, we can define $\hat{\gamma}$ by choosing some $t_1$ and $t_2$ and setting
\[
\hat{\gamma} =
\left(\frac{\hatex{\obj(\iterbbundle{t_2})} - \obj^\star }{\hatex{\obj(\iterbbundle{t_1})} - \obj^\star  }\right)^{1/(t_2 - t_1)} .
\]
If $\hat{\gamma}$ is the correct asymptotic convergence rate, then from \Cref{thm:conv}, we should have
$
1 - \sigmahigh/N \leq \hat{\gamma} \leq 1 - \sigmalow/ N
$
for values of $\sigmahigh$ and $\sigmalow$ that satisfy \Eq{thm:conv}, since $|\cA| = N$ for $\BCD$.
Rearranging terms, we would expect that, for sufficiently large $t_1$ and $t_2$,
\begin{equation}
\sigmalow \leq N \left( 1 - \left(\frac{\hatex{\obj(\iterbbundle{t_2})} - \obj^\star }{\hatex{\obj(\iterbbundle{t_1})} - \obj^\star  }\right)^{1/(t_2 - t_1)}\right) \leq \sigmahigh.
\label{eq:emp_sc}
\end{equation}
We refer to this quantity that is upper and lower bounded by $\sigmahigh$ and $\sigmalow$ as the \emph{empirical strong convexity} $\hat{\sigma}$. Note that $\sigmalow$ and $\sigmahigh$ implicitly depend on $\liq$ and $C$.

\ignore{
\begin{equation}
\alpha(t_1,t_2; \liq,C,\cA) \stackrel{\defn}{=}1-\left(\frac{\ex{\obj(\iterbbundle{t_2}(\liq,C))} - \obj^\star(\liq,C) }{\ex{\obj(\iterbbundle{t_1}(\liq,C))} - \obj^\star(\liq,C)  }\right)^{1/(t_2 - t_1)}.
\label{eq:alpha}
\end{equation}
}

\begin{figure}[t!]
\centering
\begin{subfigure}{.5\textwidth}
  \centering
  \includegraphics[width=.8\linewidth]{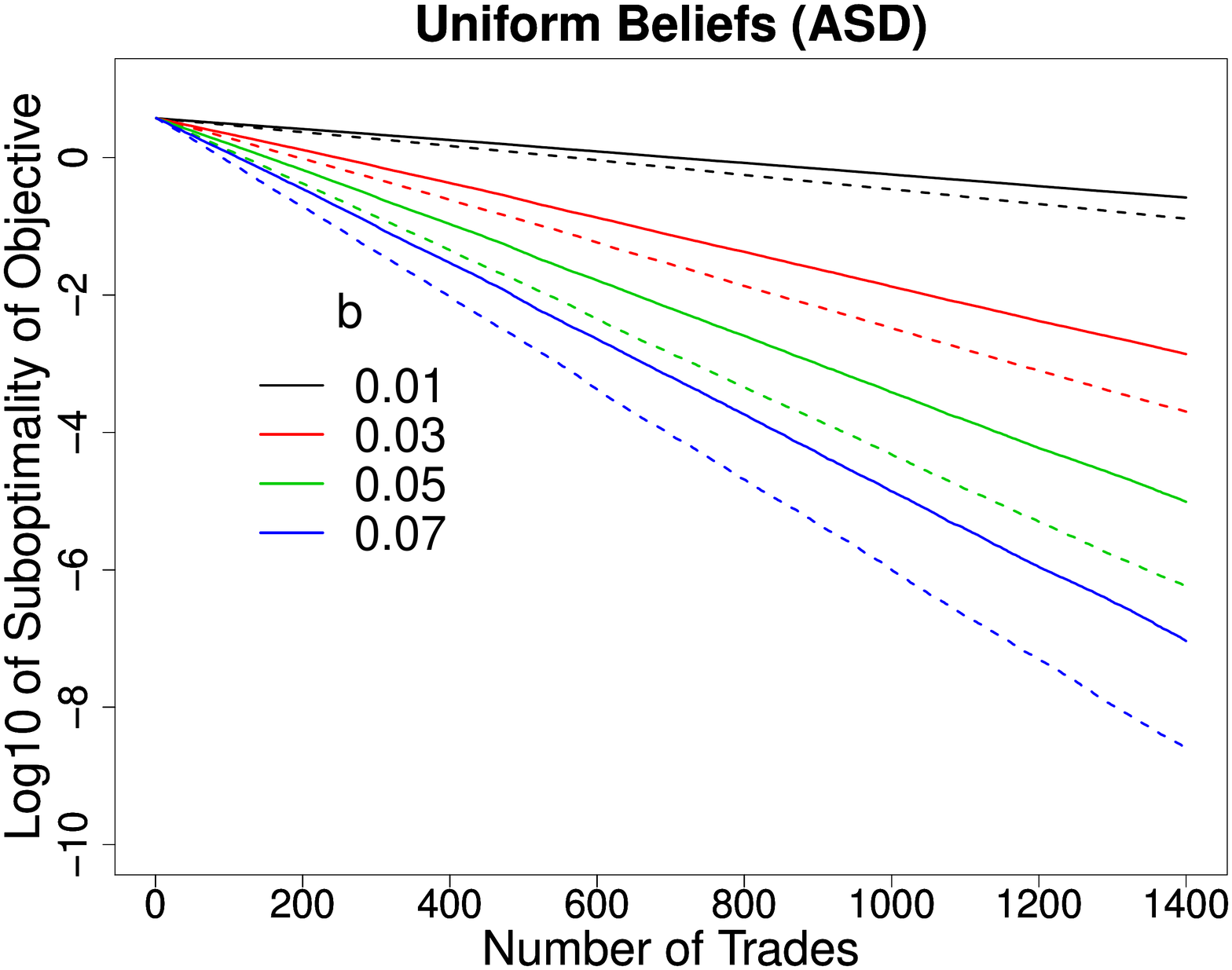}
\end{subfigure}%
\hfill
\begin{subfigure}{.5\textwidth}
  \centering
  \includegraphics[width=.8\linewidth]{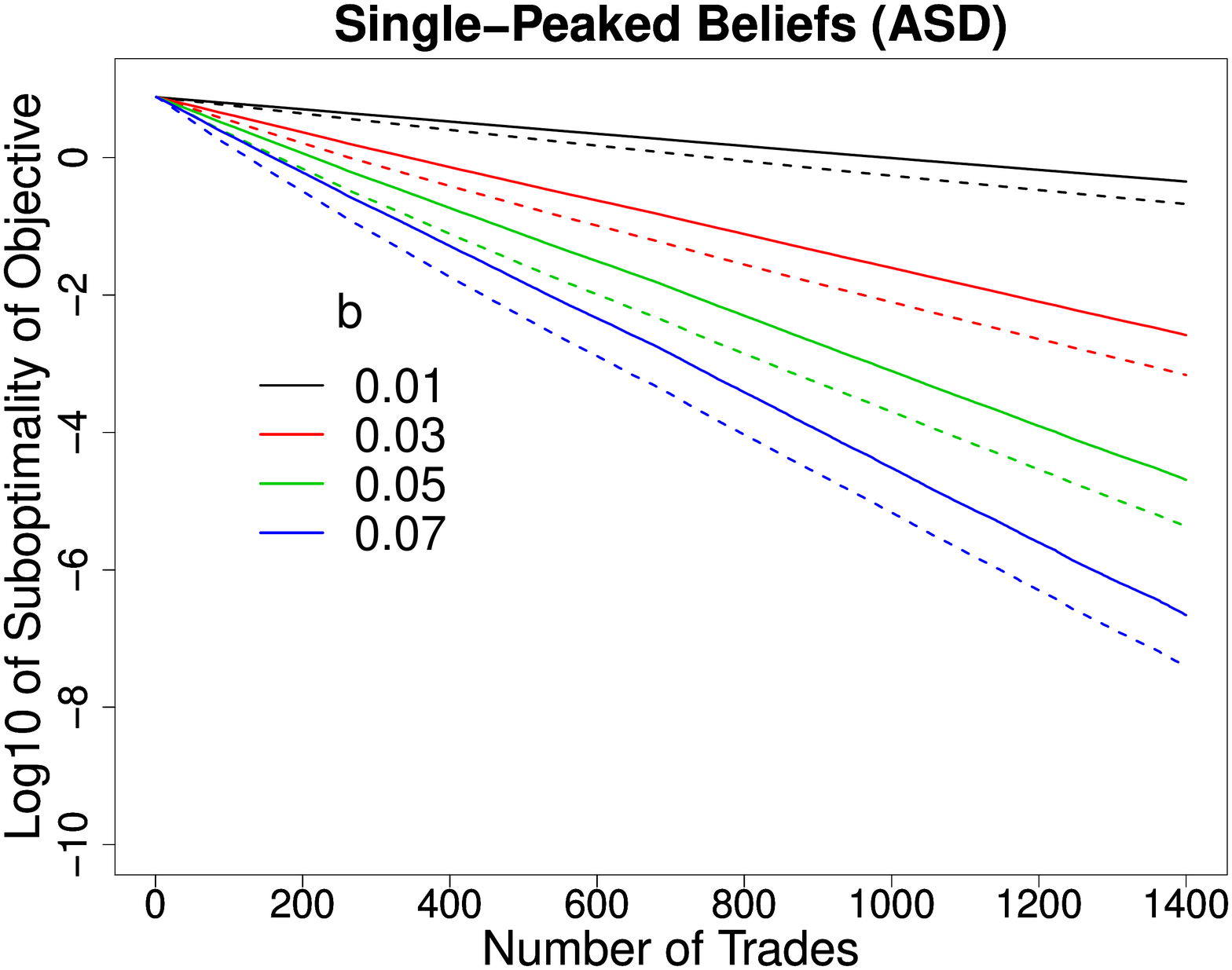}
\end{subfigure}
\caption{Convergence in the objective value for various trader beliefs, cost functions, and liquidity parameters under $\BCD$. Solid lines show the log error in objective for \LMSR, dotted lines for \IND.\label{fig:semi_logB}}
\end{figure}

\begin{figure}[t!]
\centering
\begin{subfigure}{.5\textwidth}
  \centering
  \includegraphics[width=.8\linewidth]{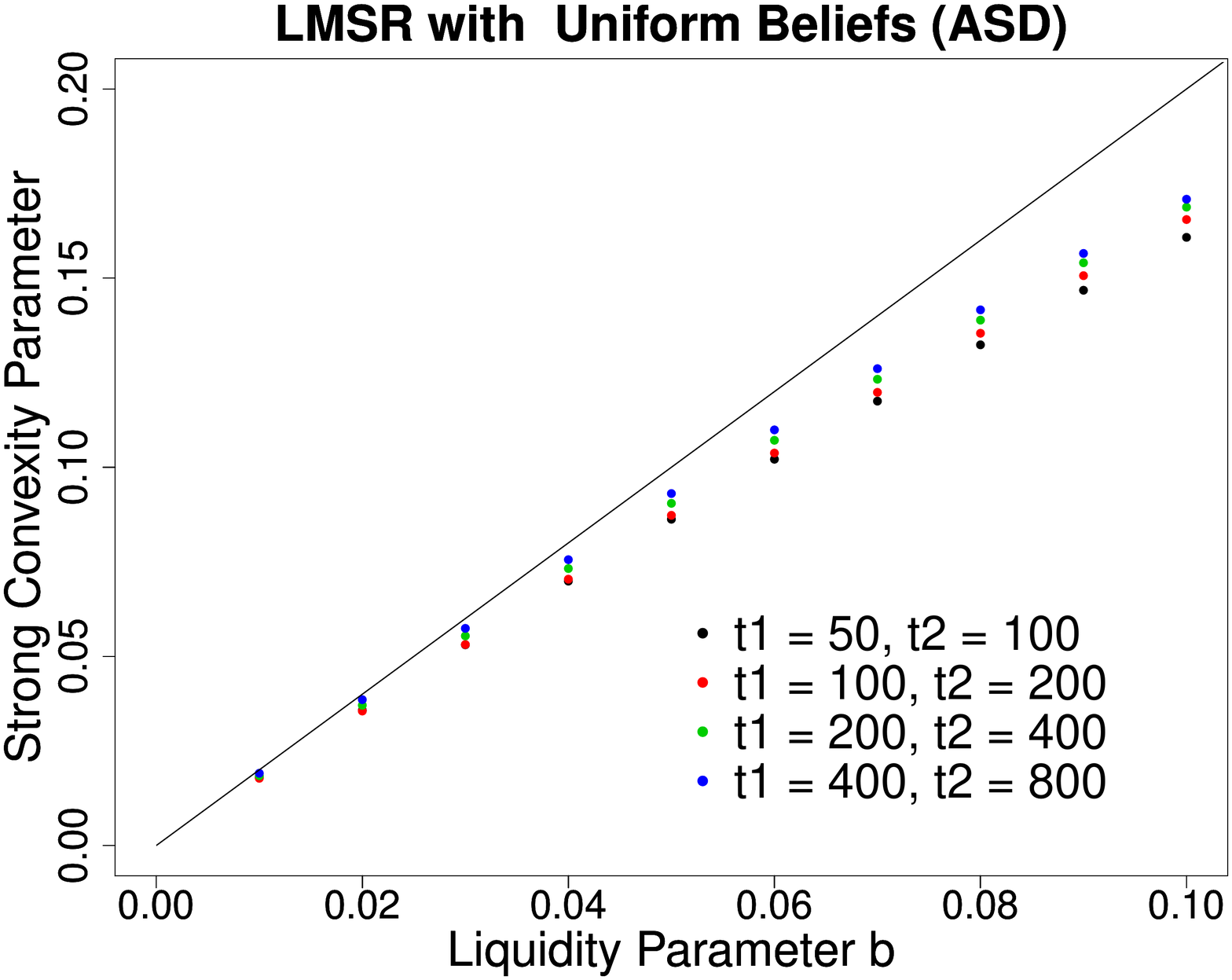}
  \includegraphics[width=.8\linewidth]{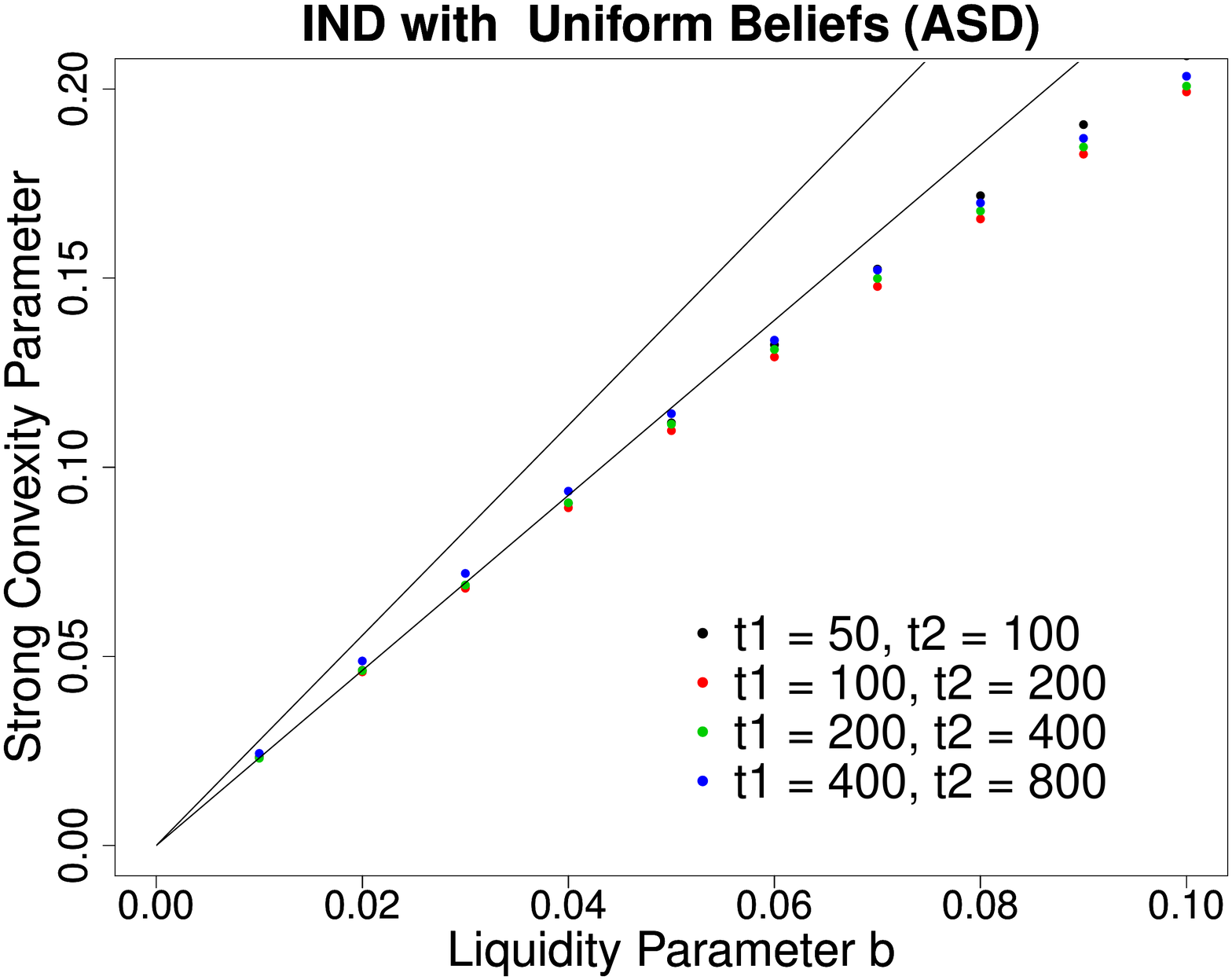}
\end{subfigure}%
\hfill
\begin{subfigure}{.5\textwidth}
  \centering
  \includegraphics[width=.8\linewidth]{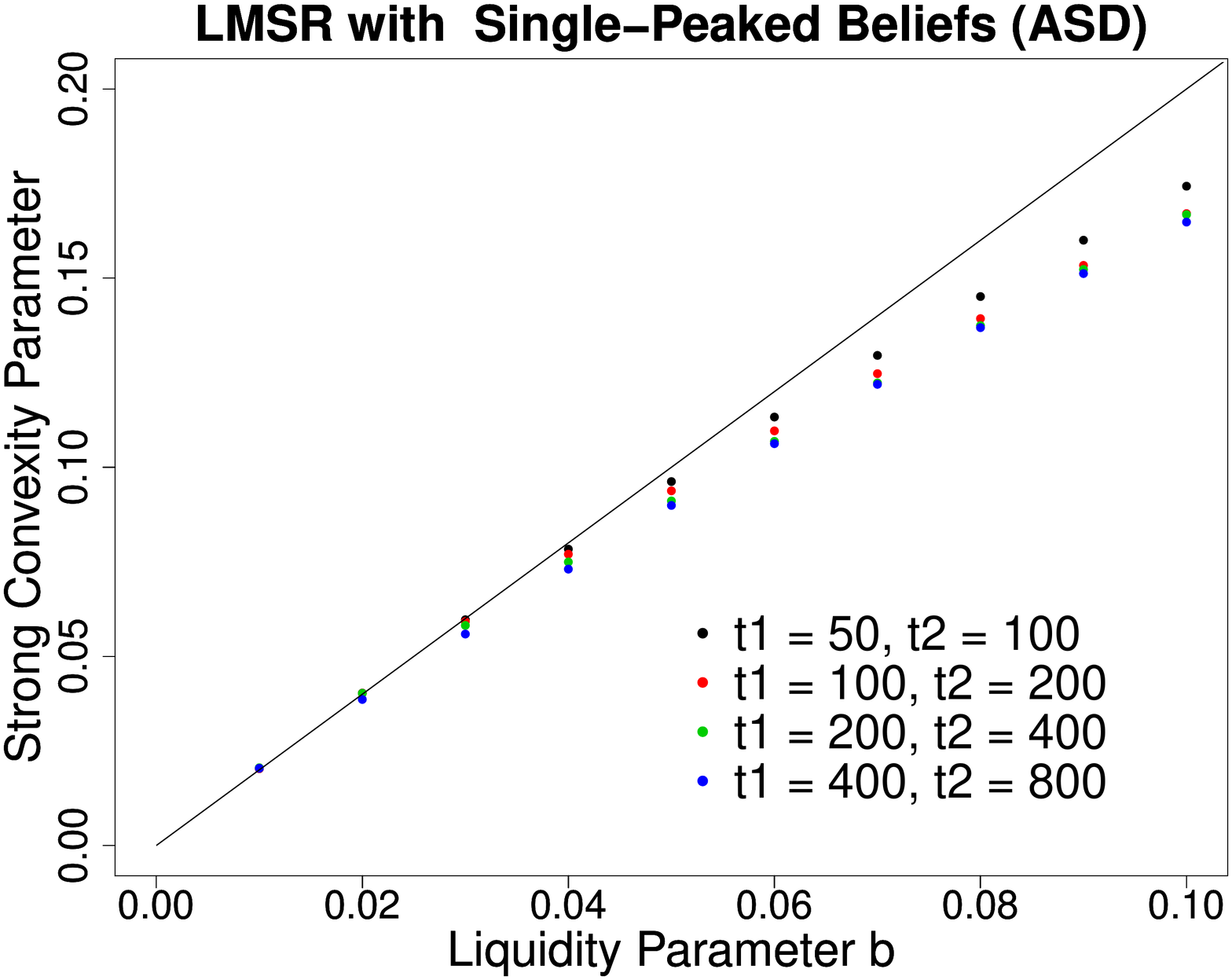}
  \includegraphics[width=.8\linewidth]{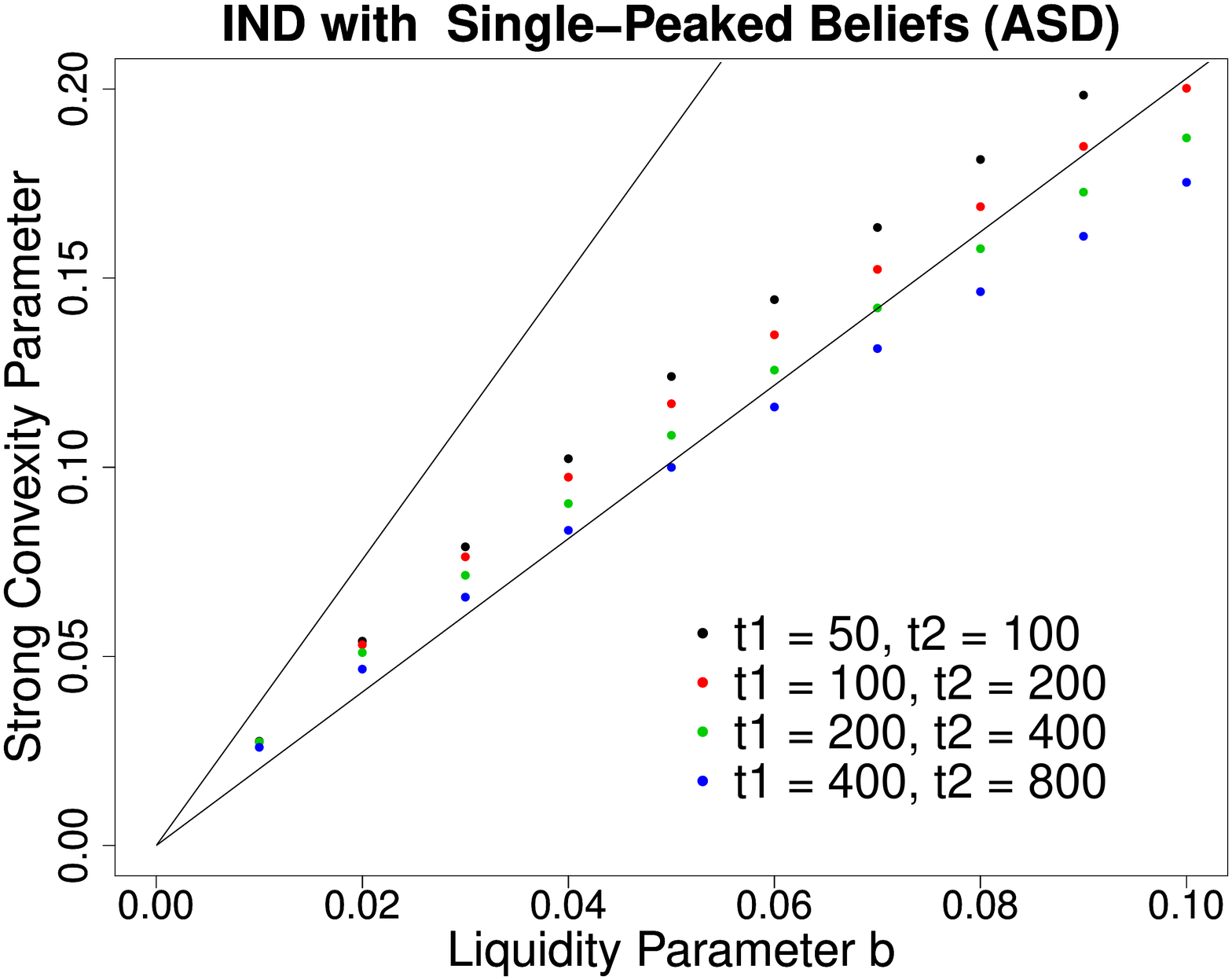}
\end{subfigure}
\caption{The empirical strong convexity from \Eq{emp_sc} under $\BCD$ for various values of $t_1,t_2$ represented as dots, and asymptotic bounds for $\sigmalow$ and $\sigmahigh$ from \Thm{conv:summary} (ignoring $O(\liq^2)$ terms) represented as lines.  \label{fig:alpha_bB}}
\end{figure}

We can now check how well our theoretical lower and upper bounds on local strong convexity bound the empirical strong convexity $\hat{\sigma}$.  In \Fig{alpha_bB}, we plot $\hat{\sigma}$ as a function of $\liq$ using different values of $t_1$ and $t_2$ and compare it with the {asymptotic bounds} of $\sigmalow$ and $\sigmahigh$ computed as in \Thm{conv:summary}, dropping the terms that are $O(\liq^2)$. We would expect to see $\sigmalow \leq \hat{\sigma} \leq \sigmahigh$ as $\liq$ goes to 0, and indeed this is the case.  For \LMSR, the values of $\sigmahigh$ and $\sigmalow$ coincide, and the empirical values for $\hat{\sigma}$ agree for small $\liq$.

We now turn to our results for single-security dynamics (\SSD).  \Fig{semi_logS} shows the empirical value of $F(\bbundle^t)-F^\star$,
averaged over 20 random sequences of trade,
as a function of the number of trades, plotted on a log scale, for our two belief sets and cost functions under $\SCD$.
The plots show the convergence error for \LMSR and \IND right on top of each other, suggesting that the main asymptotic
term is driven by the diagonal of $H_C(\aggbprice)$, which is the same for both costs,
and which appears in the lower bound of \Thm{conv:summary} (with a multiplier that could be possibly improved).
%
Similar to \ASD, we also evaluate the empirical strong convexity $\hat{\sigma}$.
In this case, we only have access to a lower bound (\Thm{conv:summary}), which our plots show to be a valid albeit a somewhat loose bound. All the bounds that we used
in the $\BCD$ and $\SCD$ strong convexity plots are summarized in \Cref{table:sigmas}.
%

\begin{figure}
\centering
\begin{subfigure}{.5\textwidth}
  \centering
  \includegraphics[width=.8\linewidth]{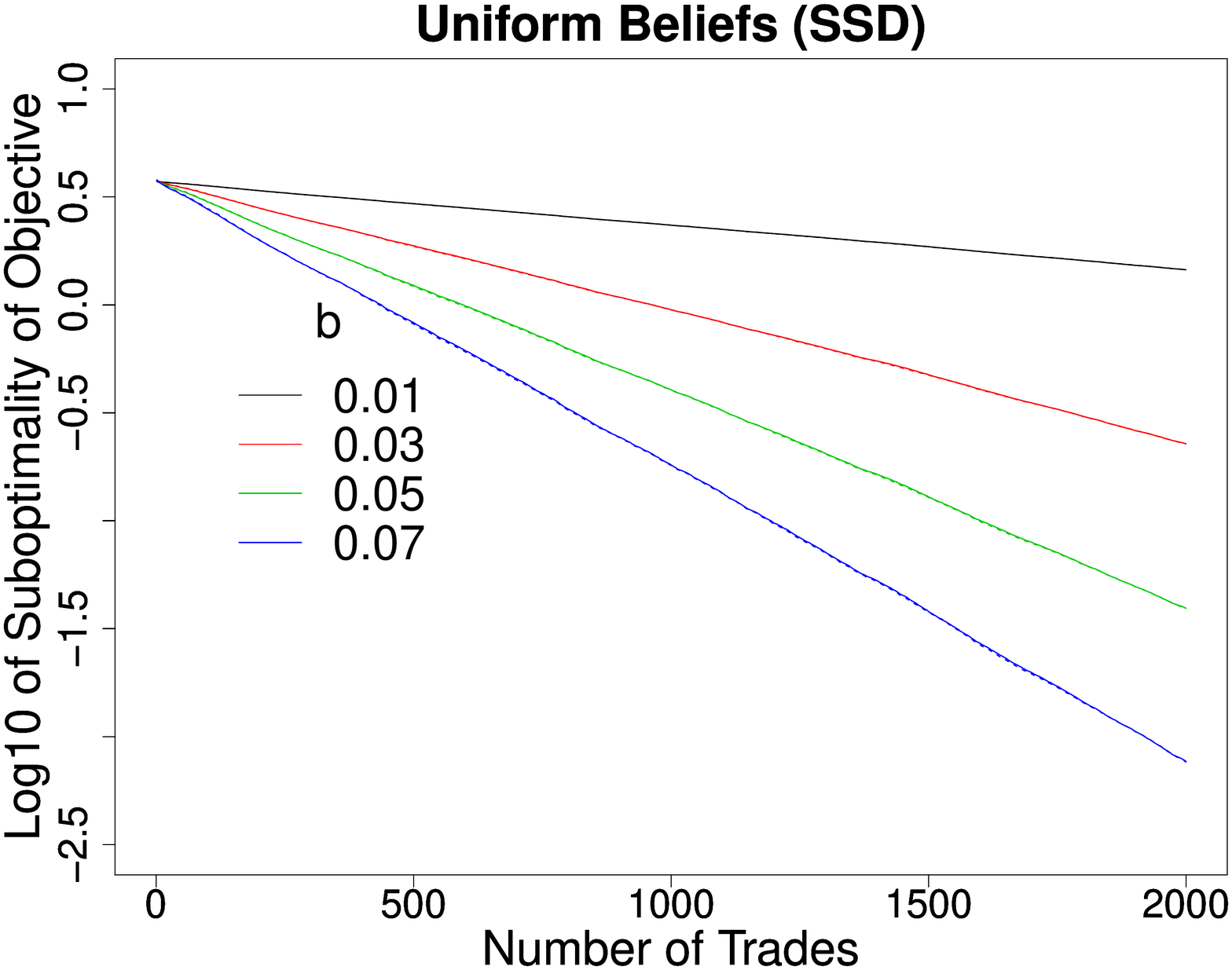}
\end{subfigure}%
\hfill
\begin{subfigure}{.5\textwidth}
  \centering
  \includegraphics[width=.8\linewidth]{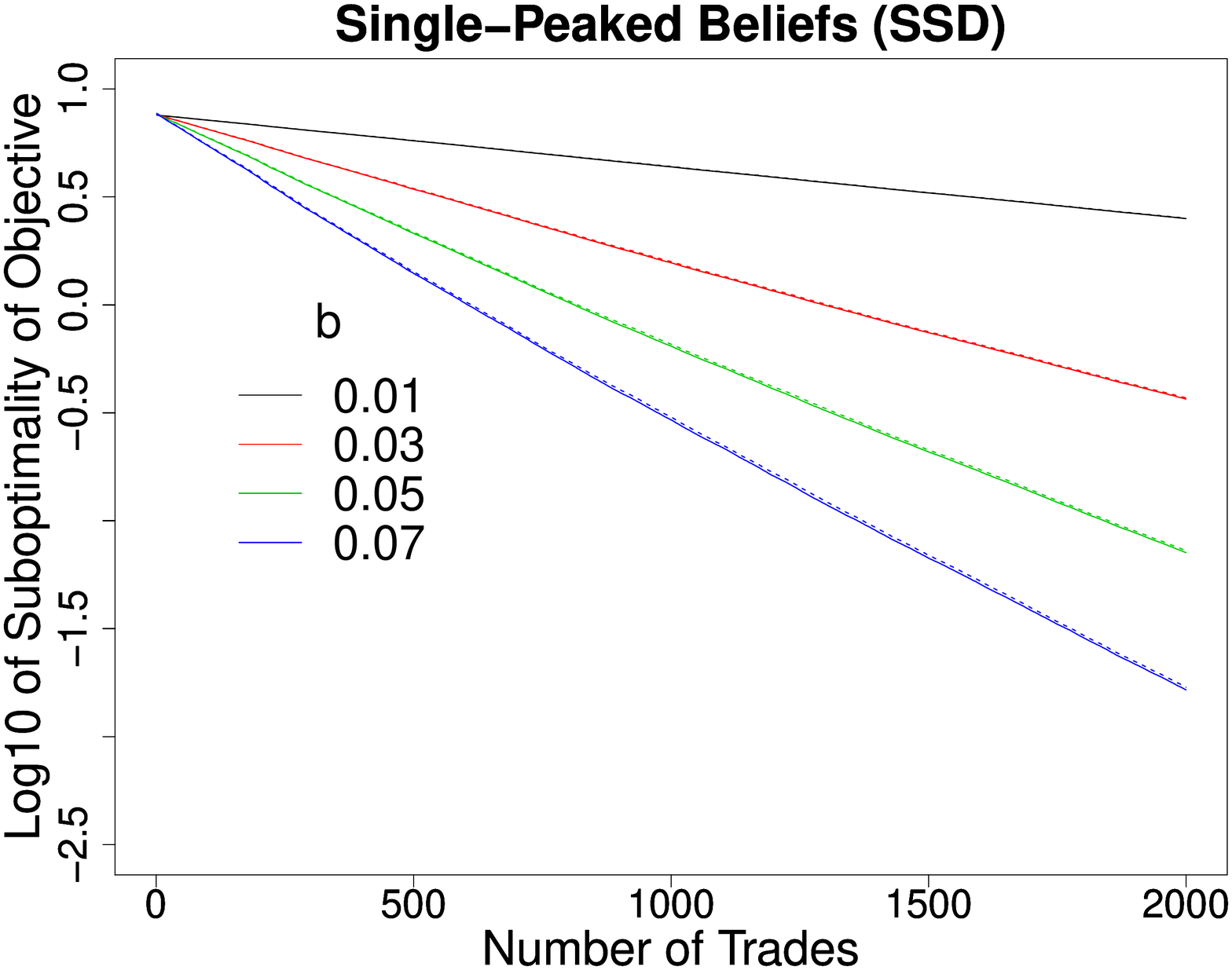}
\end{subfigure}
\caption{Convergence in the objective value for various trader beliefs, cost functions, and liquidity parameters under $\SCD$. Solid lines show the log error in objective for \LMSR, dotted lines for \IND;
the \IND and \LMSR lines are right on top of each other.\label{fig:semi_logS}}
\end{figure}

\begin{figure}
\centering
\begin{subfigure}{.5\textwidth}
  \centering
  \includegraphics[width=.8\linewidth]{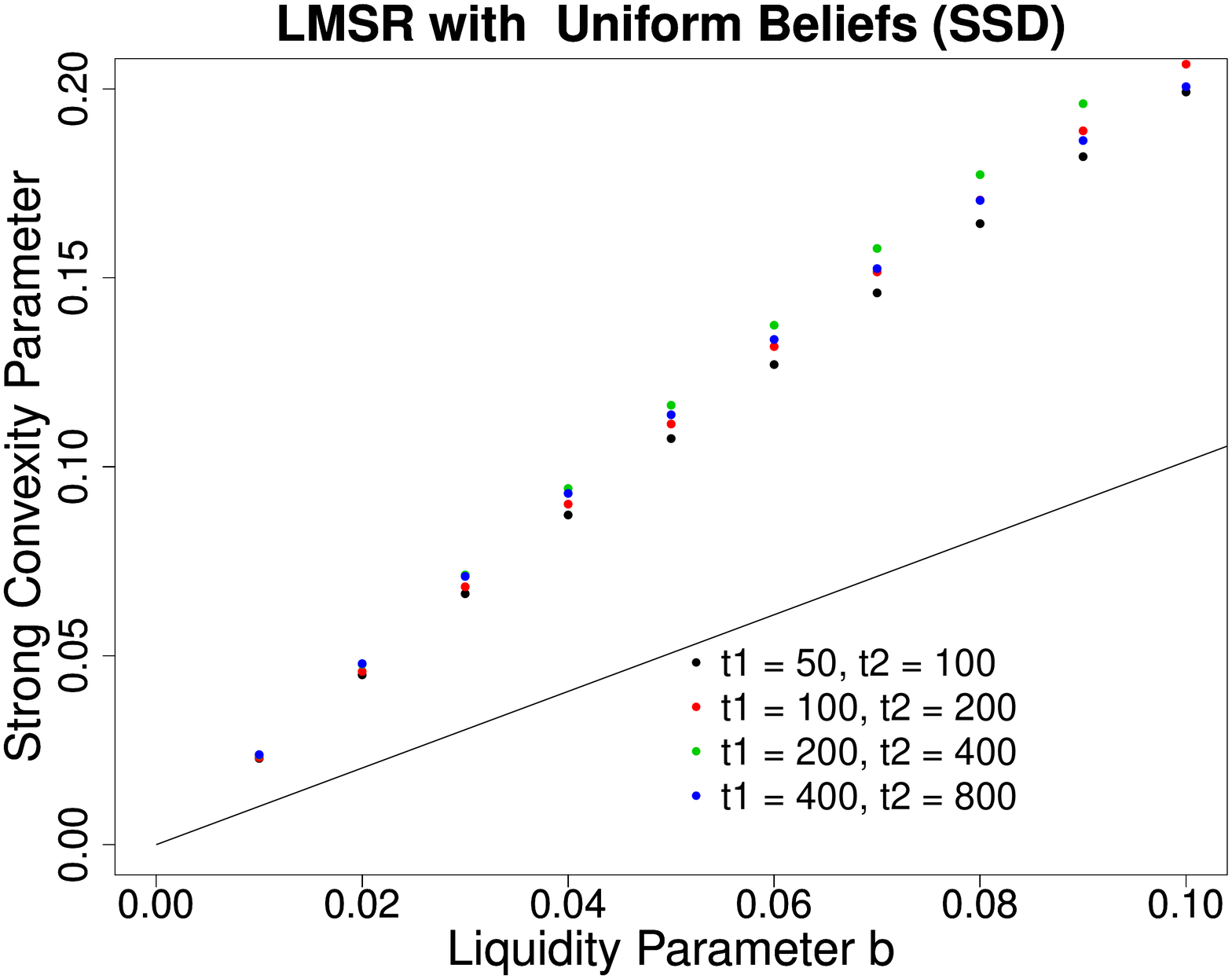}
   \includegraphics[width=.8\linewidth]{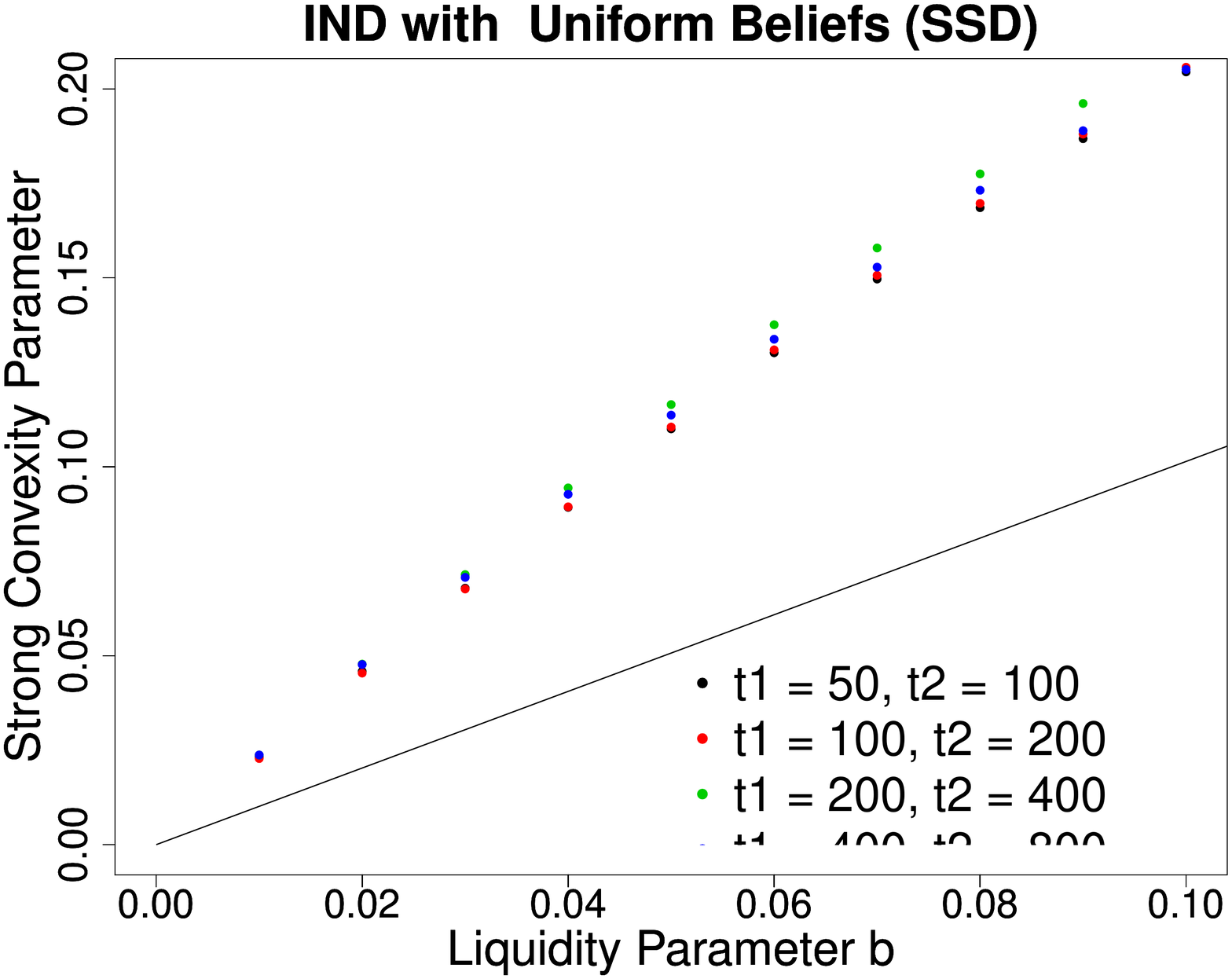}
\end{subfigure}%
\hfill
\begin{subfigure}{.5\textwidth}
  \centering
  \includegraphics[width=.8\linewidth]{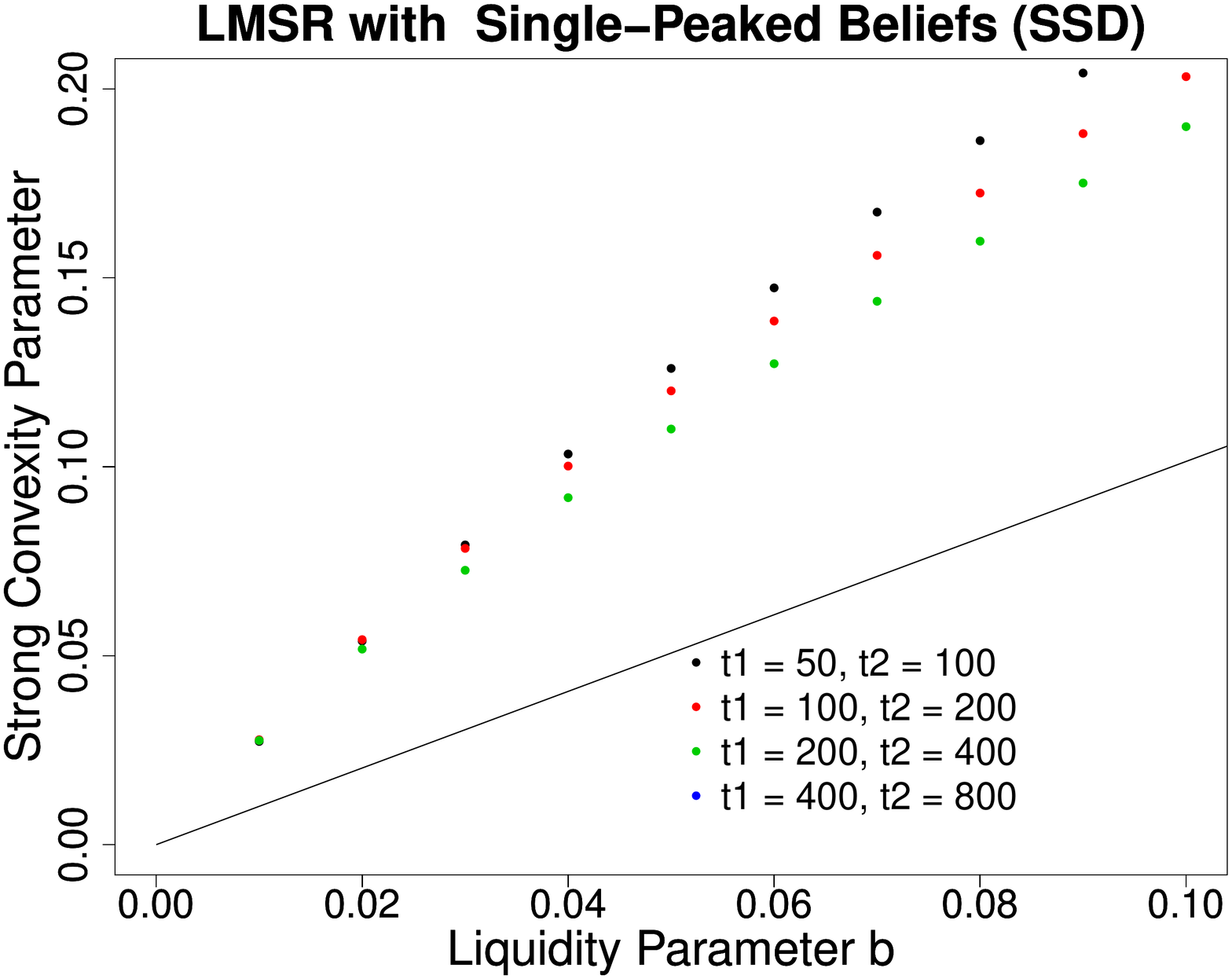}
  \includegraphics[width=.8\linewidth]{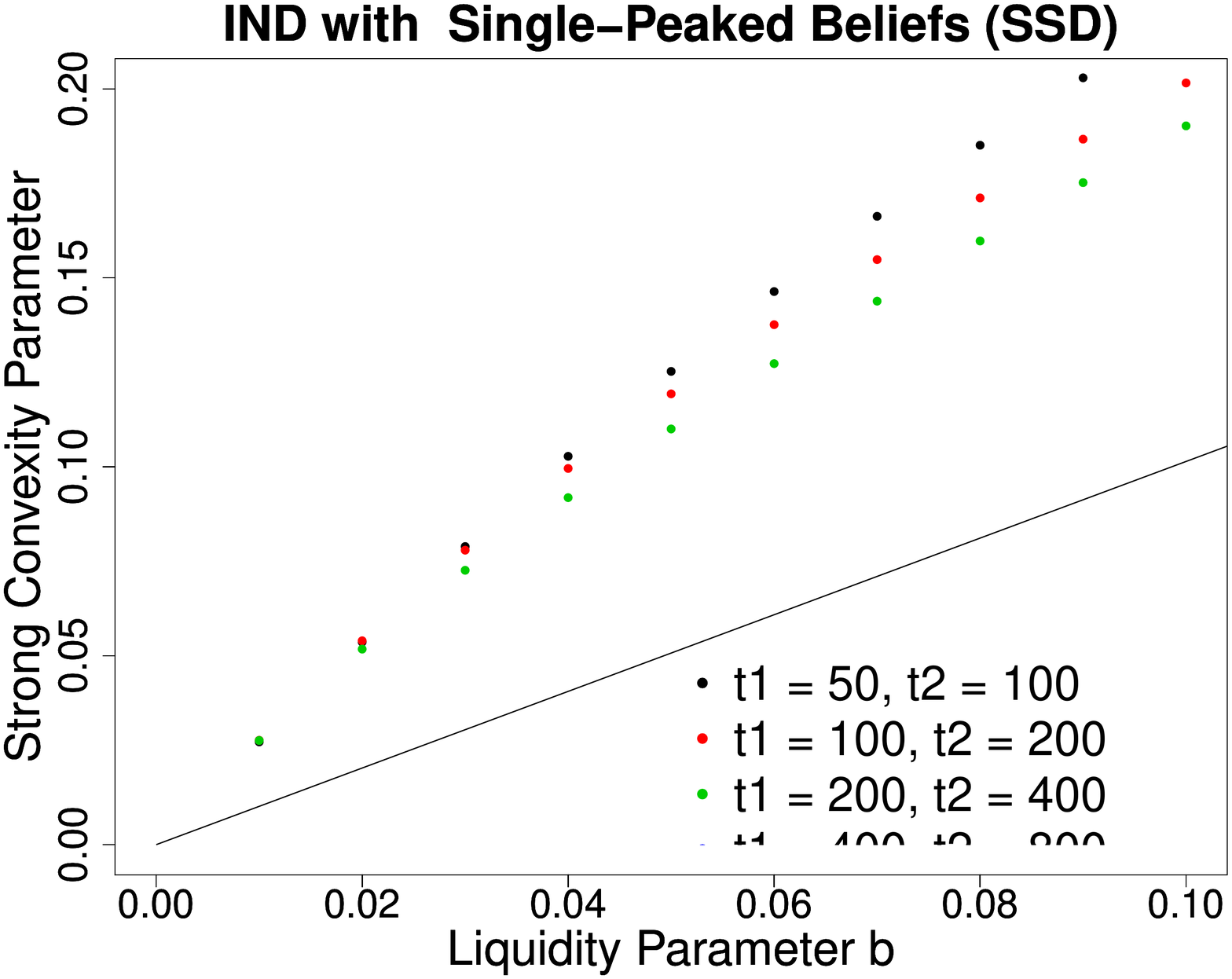}
\end{subfigure}
\caption{The empirical strong convexity from \Eq{emp_sc} under $\SSD$ for various values of $t_1,t_2$ represented as dots, and asymptotic bound for $\sigmalow$ from \Thm{conv:summary} (ignoring the $O(\liq^2)$ term) represented as the black solid line.}
\label{fig:alpha_bS}
\end{figure}

\begin{table}
    \caption{The bounds $\sigmalow$ and $\sigmahigh$ in the various cases we consider in our experiments, computed using \Thm{conv:summary}.\\
    \label{table:sigmas}}
    \centering
    \begin{tabular}{ | c | c | c | c | c   |}
    \hline
    {\bf Beliefs} &{\bf Dynamics } & $C$ & $\sigmalow$ & $\sigmahigh$  \\
    \hline
    \multirow{4}{*}{Uniform Beliefs} & \multirow{2}{*}{$\BCD$} & $\LMSR$ & $2\liq$ & $2\liq$ \\
    							\cline{3-5}
    							& & $\IND$  & $2.31\liq$ & $2.78\liq$ \\
							\cline{2-5}
							& \multirow{2}{*}{$\SCD$} & $\LMSR$ & $1.01\liq$ & ---  \\
							\cline{3-5}
    							& & $\IND$  & $1.01\liq$ & ---  \\
    \hline
    \multirow{4}{*}{Single-Peaked Beliefs} & \multirow{2}{*}{$\BCD$} & $\LMSR$ & $2\liq$ & $2\liq$\\
    							\cline{3-5}
    							& & $\IND$  & $2.03\liq$ & $3.78\liq$ \\
							\cline{2-5}
							& \multirow{2}{*}{$\SCD$} & $\LMSR$ & $1.01 \liq$& --- \\
							\cline{3-5}
    							& & $\IND$  & $1.01\liq$ & ---  \\
    \hline
    \end{tabular}
    \end{table}

\ignore{
We have illustrated that our local characterizations of both bias and convergence error are tight, which justifies the use of the local analysis. Our theoretical framework therefore provides a meaningful way to compare the detailed quantitative error tradeoffs inherent in different choices of cost functions and liquidity levels.}


\else
\section{Additional Results from Numerical Experiments}

\subsection{Tightness of the Empirical Strong Convexity for Single-Peaked Beliefs}
\label{sec:asdAPP}
Recall that in Figure~\ref{fig:alpha_bB}, we plotted the empirical strong convexity $\hat{\sigma}$ as a function of $\liq$ for all-securities dynamics and uniform beliefs. We then also present the same plot for $\hat\sigma$ but for single-peaked beliefs in \Cref{fig:alpha_bB1}.  

\begin{figure}[t!]
\centering
\begin{subfigure}{.5\textwidth}
  \centering
  \includegraphics[width=.8\linewidth]{Figures/BL_alpha_b_Plot1.pdf}
\end{subfigure}%
\hfill
\begin{subfigure}{.5\textwidth}
  \centering
  \includegraphics[width=.8\linewidth]{Figures/BI_alpha_b_Plot1.pdf}
\end{subfigure}
\caption{ The empirical strong convexity from \eqref{eq:emp_sc} under \ASD with single-peaked beliefs for various values of $t_1,t_2$ represented as dots, asymptotic bounds for $\sigmalow$ and $\sigmahigh$ from \Cref{thm:conv:short} (ignoring $O(\liq^2)$ terms) represented as lines.  \label{fig:alpha_bB1}}
\end{figure}
\input{app-experiments}
\fi

\end{document}